\documentclass[12pt,a4paper]{book} 
\pdfoutput=1


\usepackage{geometry}         
\geometry{
	textwidth=16cm,
	textheight=24cm,
	centering,
	headheight=15pt,
	footskip=30pt
}

\usepackage{amscd,amsmath,amssymb,amsfonts,xspace,mathrsfs,amsthm,dsfont,bbm}
\usepackage{bbm}
\usepackage{tocloft}
\usepackage[bbgreekl]{mathbbol}
\usepackage{qtree}
\usepackage{color}
\usepackage{pdfpages}
\usepackage[dvipsnames]{xcolor}

\usepackage{bbold}
\usepackage{latexsym}
\usepackage{graphicx}
\usepackage{euscript}
\usepackage{dsfont}
\usepackage{color}
\usepackage{longtable}
\usepackage[colorlinks=true, linkcolor=blue, citecolor=red]{hyperref}

\setlength{\cftchapnumwidth}{2.5em}
\setlength{\cftsecnumwidth}{3em}
\setlength{\cftsubsecnumwidth}{3.5em}

 \newtheorem{theorem}{Theorem}
  \newtheorem{proposition}{Proposition}
  \newtheorem{lemma}{Lemma}
  \newtheorem{conj}{Conjecture}

\theoremstyle{definition}
 \newtheorem{definition}{Definition}[chapter]
 




\numberwithin{equation}{section}

\def\lfig#1#2#3#4#5{
\begin{figure}[t]
 \centerline{\includegraphics[width=#3]{#2}}
 \vspace{#5}
  \caption{#1 \label{#4}}
 \end{figure}
}

\def\bea{\begin{eqnarray}}
\def\eea{\end{eqnarray}}
\def\be{\begin{equation}}
\def\ee{\end{equation}}
\def\ba{\begin{align}}
\def\ea{\end{align}}
\def\bse{\begin{subequations}}
\def\ese{\end{subequations}}

\newcommand{\nn}{\nonumber}

\def\det{\,{\rm det}\, }
\def\diag{{\rm diag}}
\def\sign{{\rm sgn}}

\def\Ch{{\rm Ch}}

\def\Sym{\,{\rm Sym}\, }

\def\Im{\,{\rm Im}\,}

\def\Ad{{\rm Ad}}
\def\ad{{\rm ad}}
\def\Aut{{\rm Aut}}
\def\Re{\,{\rm Re}\,}

\def\({\left(}
\def\){\right)}
\def\[{\left[}
\def\]{\right]}
\def\<{\left\langle}
\def\>{\right\rangle}
\def\hf{{1\over 2}}

\newcommand{\p}{\partial}

\def\SL{SL(2,\IZ)}
\def\cMw{\cM^{!}}

\newcommand{\kahler}{{K\"ahler}\xspace}
\newcommand{\hk}{{hyperk\"ahler}\xspace}
\newcommand{\qk}{{quaternion-K\"ahler}\xspace}
\newcommand{\sch}{{Schr\"oder}\xspace}
\newcommand{\vig}{{Vign\'eras}\xspace}
\newcommand{\pleb}{{Pleba\'nski}\xspace}

\newcommand{\scR}{\mathscr{R}}

\newcommand{\scJ}{\mathscr{J}}
\newcommand{\scK}{\mathscr{K}}
\newcommand{\scL}{\mathscr{L}}

\newcommand{\vth}{\vartheta}

\def\vph{\varphi}

\newcommand{\eps}{\epsilon}

\renewcommand{\d}{\mathrm{d}}
\newcommand{\de}{\mathrm{d}}
\newcommand{\De}{\mathrm{D}}
\newcommand{\I}{\mathrm{i}}

\newcommand{\rmR}{\mathrm{R}}

\newcommand{\esf}{\mathsf{e}}

\def\vph{\varphi}

\newcommand{\Asf}{{\sf A}}

\newcommand{\asf}{{\sf a}}

\def\euF{\EuScript{F}} 
\def\beuF{\bar \euF}

\newcommand{\cL}{\mathcal{L}}
\newcommand{\cD}{\mathcal{D}}

\newcommand{\cV}{\mathcal{V}}
\newcommand{\cC}{\mathcal{C}}
\newcommand{\cS}{\mathcal{S}}
\newcommand{\cG}{\mathcal{G}}
\newcommand{\cK}{\mathcal{K}}
\newcommand{\cM}{\mathcal{M}}

\newcommand{\cN}{\mathcal{N}}
\newcommand{\cE}{\mathcal{E}}
\newcommand{\cX}{\mathcal{X}}

\newcommand{\cR}{\mathcal{R}}
\newcommand{\cT}{\mathcal{T}}
\newcommand{\cJ}{\mathcal{J}}
\newcommand{\cZ}{\mathcal{Z}}
\newcommand{\cI}{\mathcal{I}}

\newcommand{\cQ}{\mathcal{Q}}
\newcommand{\cH}{\mathcal{H}}
\newcommand{\cU}{\mathcal{U}}
\newcommand{\cA}{\mathcal{A}}
\newcommand{\cB}{\mathcal{B}}

\newcommand{\pbbm}{\mathbbm{p}}
\newcommand{\xbbm}{\mathbbm{x}}
\newcommand{\ybbm}{\mathbbm{y}}
\newcommand{\vbbm}{\mathbbm{v}}
\newcommand{\wbbm}{\mathbbm{w}}

\newcommand{\zbbm}{\mathbbm{z}}

\newcommand{\kbbm}{\mathbbm{k}}

\newcommand{\abbm}{\mathbbm{a}}
\newcommand{\bbbm}{\mathbbm{b}}

\def\bbLambda{\mathbb{\Lambda}}

\newcommand{\C}{{\mathbb C}}

\newcommand{\Z}{{\mathbb Z}}

\def\scR{\mathscr{R}}

\def\Ev{\mathscr{E}}
\def\Fv{\mathscr{F}}

\def\Zv{\mathscr{Z}}

\newcommand{\bfg}{{\bf g}}

\newcommand{\bfa}{{\boldsymbol a}}
\newcommand{\bfb}{{\boldsymbol b}}

\newcommand{\bfv}{{\boldsymbol v}}

\newcommand{\bfe}{{\boldsymbol e}}

\newcommand{\bfk}{{\boldsymbol k}}

\newcommand{\bfn}{{\boldsymbol n}}
\newcommand{\bfp}{{\boldsymbol p}}
\newcommand{\bfq}{{\boldsymbol q}}
\newcommand{\bfr}{{\boldsymbol r}}
\newcommand{\bfs}{{\boldsymbol s}}
\newcommand{\bft}{{\boldsymbol t}}
\newcommand{\bfz}{{\boldsymbol z}}
\newcommand{\bfx}{{\boldsymbol x}}
\newcommand{\bfy}{{\boldsymbol y}}

\newcommand{\bftet}{{\boldsymbol \theta}}
\newcommand{\bfmu}{{\boldsymbol \mu}}
\newcommand{\bfnu}{{\boldsymbol \nu}}

\newcommand{\bfLam}{{\boldsymbol \Lambda}}

\newcommand{\bfgam}{{\boldsymbol \gamma}}

\newcommand{\IT}{\mathds{T}}
\newcommand{\IR}{\mathds{R}}
\newcommand{\IC}{\mathds{C}}
\newcommand{\IZ}{\mathds{Z}}
\newcommand{\IQ}{\mathds{Q}}
\newcommand{\IN}{\mathds{N}}
\newcommand{\IH}{\mathds{H}}

\newcommand{\IP}{\mathds{P}}

\newcommand{\Tr}{\mbox{Tr}}

\newcommand{\sgn}{\mbox{sgn}}

\newcommand{\tzeta}{\tilde\zeta}

\newcommand{\talp}{\tilde\alpha}

\newcommand{\tlh}{\tilde h}

\newcommand{\tell}{\tilde\ell}
\newcommand{\tmu}{\tilde\mu}
\newcommand{\tnu}{\tilde \nu}

\newcommand{\txi}{\tilde\xi}

\def\cl0{\tilde c_0}

\def\zl{z^\Lambda}

\def\xil{\xi^\Lambda}
\def\txil{\txi_\Lambda}
\def\tzetal{\tzeta_\Lambda}
\def\zetal{\zeta^\Lambda}

\newcommand{\alpi}{\alpha^{\[i\]}}
\newcommand{\talpi}{\talp^{\[i\]}}
\newcommand{\txii}{\txi^{\[i\]}}
\newcommand{\xii}{\xi_{\[i\]}}
\newcommand{\alpj}{\alpha^{\[j\]}}
\newcommand{\txij}{\txi^{\[j\]}}
\newcommand{\xij}{\xi_{\[j\]}}

\newcommand{\CP}{\IC P^1}

\def\bOm{\bar\Omega}
\def\btau{\bar \tau}
\def\bt{\bar t}
\def\bi{\bar \imath}
\def\bj{\bar \jmath}

\def\ba{\bar a}

\def\by{\bar y}
\def\bz{\bar z}
\def\bw{\bar w}
\def\bY{\bar Y}
\def\bZ{\bar Z}

\def\bW{\bar W}
\def\bX{\bar X}
\def\bF{\bar F}

\def\btau{\bar \tau}
\def\bOm{\bar\Omega}

\def\hq{\hat q}

\def\hzeta{\hat \zeta}
\def\hgam{\hat\gamma}

\def\bfhgam{\hat\bfgam}
\def\bfhmu{\hat\bfmu}

\def\hpartial{\hat \partial}

\def\hq{\hat q}

\def\hcT{\hat\cT}

\def\chg{\check g}

\def\chphi{\check\phi}

\def\gama{\hat\gamma}

\def\Win{W_{\rm inst}}
\def\bWin{\bar W_{\rm inst}}


\newcommand{\glueg}{\mathbbm{g}}

\def\bfg{{\boldsymbol{\gamma}}}
\def\hbfg{\hat\bfg}

\def\CY{\mathfrak{Y}}
\def\CYm{{\widehat{\mathfrak{Y}}}}

\def\ver{v}

\def\ST{\IT^{\rm S}}
\def\LT{\IT^{\ell}}

\def\hgam{\hat\gamma}

\def\hi#1{h^{(#1)}}

\def\hh{h^{(0)}}
\def\thh{\tlh^{(0)}}
\def\than{\tlh^{\rm (an)}}
\def\thh{\tlh^{(0)}}

\def\ai#1{{\alpha}^{[#1]}}

\def\hpert{h^{(0)}}
\def\zpert{z_0}
\def\bzpert{\bz_0}

\def\dIu{\cI}
\def\dId{\cI}

\def\opI{\scJ}
\def\Prz{\mathscr{P}}


\def\scLn#1{\scL^{(#1)}}
\def\bscLn#1{\bar{\scL}^{(#1)}}
\def\scKn#1#2{\cK^{(#1)}_{\bfg,#2}}

\def\cJn#1{\cJ^{(#1)}}
\def\cIn#1{\cI^{(#1)}}
\def\cGn#1#2{\dIu^{(#1)#2}}
\def\tcGn#1#2{\dId^{(#1)}_{#2}}
\def\cLn#1#2{\cL^{(#1)}_{#2}}
\def\bcLn#1#2{\bar{\cL}^{(#1)}_{#2}}


\def\ths#1{\theta^{(#1)}}
\def\vths#1{\vartheta^{(#1)}}
\def\vthls#1{\vartheta^{(#1)||}}
\def\tvths#1{\tilde\vartheta^{(#1)}}
\def\tvthls#1{\tilde\vartheta^{(#1)||}}

\def\vthps#1{\vartheta^{(#1)\perp}}

\def\vthA#1{\varTheta^{(#1)}}

\def\di#1{d_{\Nr_#1}}

\def\ellg#1{\ell_{#1}}

\def\cXsf{\cX^{\rm sf}}

\def\cXr{\mathsf{X}^{\rm ref}}
\def\cXz{\mathsf{X}^{\rm sf}}

\def\hcXr{\mathsf{X}}
\def\hcXrp{\mathsf{X}^{+}}
\def\hcXrm{\mathsf{X}^{-}}
\def\hcXrpm{\mathsf{X}^{\pm}}


\def\gi#1{g^{(#1)}}

\def\girf#1{g^{(#1)\rm{ref}}}

\def\chgirf#1{\chg^{(#1)\rm{ref}}}
\def\vwgi#1{\mathfrak{g}_{#1}}

\def\whvwg{\widehat{\mathfrak{g}}}

\def\cXr{\mathsf{X}^{\rm ref}}

\def\Er{\Ev^{\rm (ref)}}

\def\cJr{\cJ^{\rm (ref)}}
\def\cEf{\cE^{(0)}}

\def\Ef{\Ev^{(0)}}
\def\Efrf{\Ev^{(0){\rm ref}}}

\def\trmRi#1{\tilde \rmR^{(#1)}}

\def\trmRVWi#1{\tilde \frR^{(#1)}}

\def\rmRi#1{\rmR^{(#1)}}

\def\rmRirf#1{\rmR^{(#1)\rm ref}}

\def\scRrf{\scR^{\rm ref}}

\def\whgirf#1{\widehat g^{(#1)\rm{ref}}}

\def\whchgirf#1{\lefteqn{\widehat \chg}\hphantom{\chg}^{(#1)\rm{ref}}}

\def\vwh{h^{\rm VW}}

\def\whh{\widehat h}
\def\twhh{\widehat{\tilde h}}

\def\whg{\widehat g}
\def\whgi#1{\widehat g^{(#1)}}

\def\whG{\widehat G}

\def\bfLami#1{\bfLam^{(#1)}}

\def\bbLami#1{\bbLambda^{(#1)}}


\def\cvths#1{\lefteqn{\smash{\mathop{\vphantom{<}}\limits^{\;\circ}}}\vartheta^{(#1)}}

\def\Gi#1{G^{(#1)}}
\def\cGi#1{\cG^{(#1)}}
\def\Mi#1{M^{(#1)}}

\def\whGi#1{\whG^{(#1)}}

\def\cEPhi{\Phi^{\,\cE}}
\def\EPhi{\Phi^{\,\Ev}}

\def\rPhi{\Phi^{(\bfr)}}

\def\PhiR{\Phi_{\rm R}}
\def\Phid{\Phi_\delta}

\def\Phii#1{\Phi^{(#1)}}
\def\PhiRi#1{\PhiR^{(#1)}}
\def\Phidi#1{\Phid^{(#1)}}
\def\Fvi#1{\Fv^{(#1)}}
\def\whFvi#1{\widehat\Fv^{(#1)}}

\def\vu{\mathfrak{u}}

\def\cEp{\cE^{(+)}}

\def\Ep{\Ev^{(+)}}
\def\Eprf{\Ev^{(+){\rm ref}}}

\def\frm{\mathfrak{m}}

\def\frz{\mathfrak{z}}

\def\frJ{\mathfrak{J}}
\def\frV{\mathfrak{V}}
\def\frD{\mathfrak{D}}
\def\frr{\mathfrak{r}}

\def\frt{\mathfrak{t}}

\def\frS{\mathfrak{S}}
\def\frZ{\mathfrak{Z}}

\def\frR{\mathfrak{R}}
\def\frL{\mathfrak{L}}

\def\frLn#1#2{\frL^{(#1)}_{\bfg,#2}}
\def\bfrLn#1#2{\bar{\frL}^{(#1)}_{\bfg,#2}}

\def\bfrZ{\bar\frZ}

\def\cbfr{c_\bfr}

\def\cAr{\cA^{(\bfr)}}

\def\cCf{\cC}

\def\tleta{\tilde\eta}

\def\vnc#1#2{#1^{[#2]}}

\newcommand{\q}{\mbox{q}}

\def\Nr{r}

\def\rdcr{\hat r}                  

\def\Ms{s}

\def\Kr{K^{\rm (ref)}}
\def\frKr{\mathfrak{K}}
\def\Wr{W^{\rm (ref)}}

\def\cHr{\cH^{\rm ref}}
\def\cJr{\cJ^{\rm (ref)}}

\def\IS{\lefteqn{\textstyle\sum}\int}
\def\ker{\mathcal{K}}
\def\tIS#1{\resizebox{0.013\vsize}{!}{$\displaystyle{\IS_{#1}}$}}

\def\opI{\scJ}
\def\Prz{\mathscr{P}}

\def\prodi#1#2{\lefteqn{\prod_{#1}^{#2}}\mathop{\phantom{\prod}}\nolimits}

\def\Gact{\Gamma_{\rm act}}

\title{PhD Thesis in String Theory}
\author{Your Name}
\date{\today}

\begin{document}
\begin{titlepage}
\includepdf{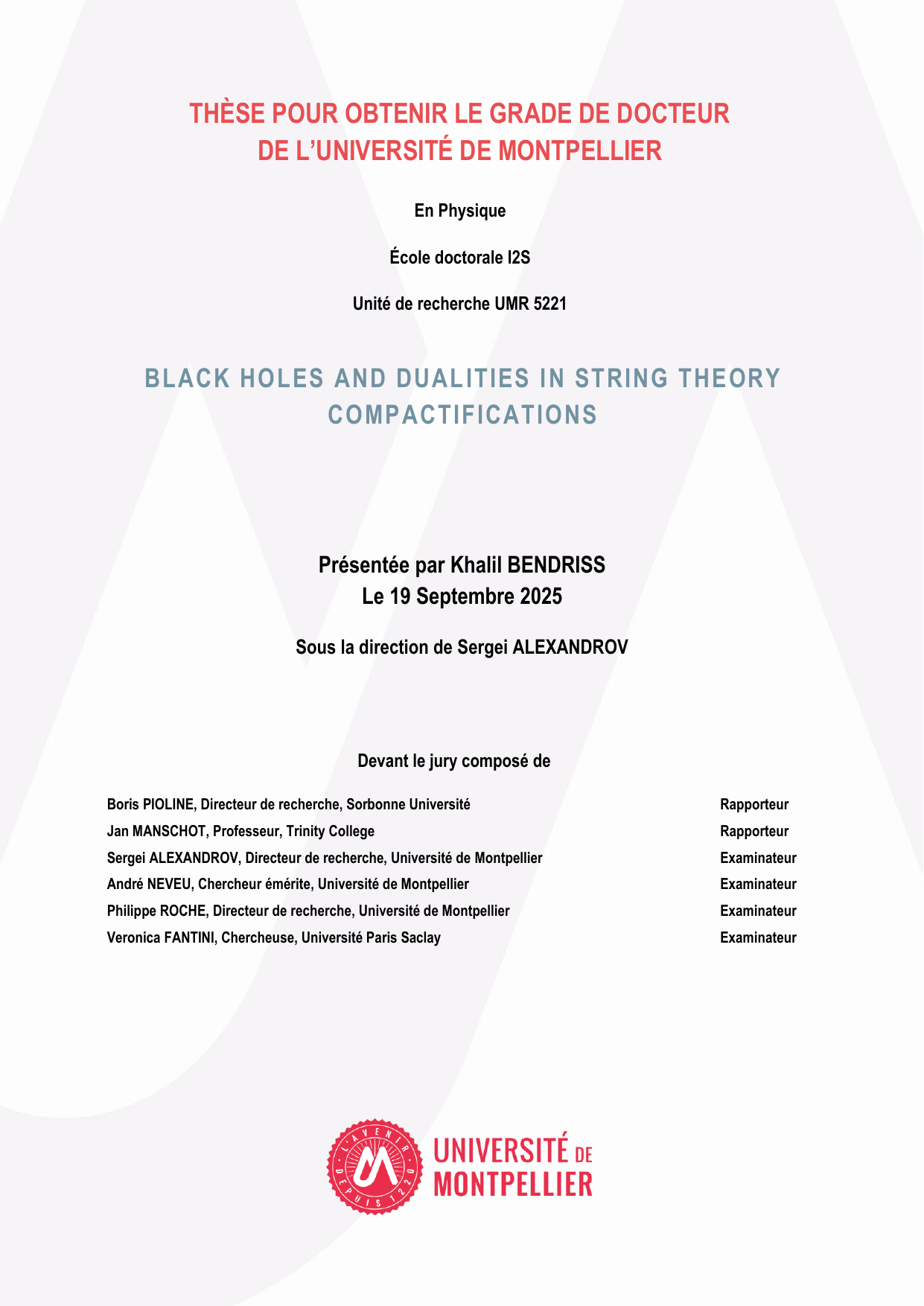}
%
\end{titlepage}

\chapter*{Abstract}
In this thesis we consider three problems appearing in, or related to, type II string theory compactified on a Calabi-Yau manifold.
In the first one we study the hypermultiplet moduli space (HM), by working on its twistor space. 
We use previously found data on the latter, obtained by applying mirror symmetry and S-duality, in order to compute the NS5-instanton corrections to the HM metric, in the one-instanton approximation. 

These corrections are weighted by D4-D2-D0 BPS indices, which coincide with rank 0 Donaldson-Thomas invariants and count the (signed) number of BPS black hole microstates. 
These invariants exhibit wall-crossing behavior and induce a Riemann-Hilbert problem. This problem can describe the D-instanton corrected twistor space of the hypermultiplet moduli space in type II string theory. But it is more general as it may describe other setups and is of independent interest for mathematicians. We consider a quantum deformation of this problem, induced by the refined BPS indices. Using a formulation of the problem in terms of a non-commutative Moyal star product, we provide a perturbative solution to it. By considering this solution in the adjoint form we identify an object that we interpret as a generating function for coordinates on the, still mysterious, quantum analog of the twistor space. 
In particular, its unrefined limit gives an all-order expression for the generating function of the coordinates on the classical twistor space. 

Finally, we study the modular properties of the D4-D2-D0 BPS indices, more precisely of their generating functions. It was previously argued, using S-duality, that the generating functions are higher depth mock modular forms. Moreover, they satisfy a modular completion equation, which fixes their shadow in terms of other (lower rank) generating functions. We start by bringing about a significant simplification to these equations and recovering subtle contributions that were overlooked. We do this using various types of trees and identities between generalized error functions. Then, we provide (a recipe for) solutions to these modular completion equations, up to all the holomorphic modular ambiguities that need to be fixed independently. In order to do this we perform a few extensions to the problem, including a refinement parameter and an extension of the lattice. Then, we use indefinite generalized theta series and Jacobi-like forms to write the solutions. For some cases we provide explicit solutions to the original problem; for others we provide solutions to the extended problem and conjecture that they do reduce to solutions to the original one.

\chapter*{Acknowledgments}
Diving deep into the laws of the universe and grappling with the beautiful mathematics that emerge has always been a dream, and I am very grateful for the opportunity to fulfill it.  

I reserve special thanks to my supervisor Sergei Alexandrov who made this journey extremely interesting. Our black board discussions have always been very rich and stimulating and often led to significant breakthroughs in our understanding of the different problems. Your intuition and creativity are impressive and it was a pleasure to witness them firsthand. Not to forget your willingness to tackle the, sometimes monstrous, computations. Furthermore, I highly value the personal bond we created during this time, namely through the multiple dinners we shared with speakers or other colleagues, the football match we played, and won, together and finally, the trip to Venice after the conference in Trieste. 

I hold very heartfelt appreciation to the member of my examination committee. I am extremely thankful to Boris Pioline and Jan Manschot for accepting to review my dissertation. I am also very humbled to have Andr\'e Neveu as an Examiner, as well as Philippe Roche and Veronica Fantini. Your participation is an essential pillar of the research community. 

I wish to thank my MSc thesis supervisor Guillaume Bossard, for preparing me to this beautiful PhD subject and for subsequent discussions whenever we met at a conference. I also thank Boris Pioline, Abhiram Kidambi, Severin L\"ust and Ruben Minasian for the discussions we had at the different conferences where we met and for shaping my scientific culture. 

I am grateful to "Ecole Doctorale: Information, Structures, Systèmes" for the financial support during this thesis. Particularly to Mme Lucyna Firlej for her responsiveness and help at various points.

On a more personal note, I want to thank the different people that welcomed me to this beautiful city of Montpellier. I start with my first officemate, Lo\"ic Fernandez, whom I only saw during the "happy" times when he was writing his dissertation but it was enough to start our friendship. Then to the one who changed cities mid-way, Romain Simon, know that Montpellier misses you! Special thanks go to the very good friends I made during this time and who made me feel at home: Giacomo Ferrante, Th\'eo Simon, Hugo Plombat, Julien Perradin, Adam Bernard, Simon Schreyer, Benjamin Benhamou Bui, Juliette Plo, Thomas Alfonsi, Michele Frigerio, Daniel Johnson and Elsa Teixeira. 

Finally, I want to thank my family for their unwavering support and for the sacrifices they made to stand with me in this journey. For Henda, Khaled, Meriam, Adam and Youssef, thank you a lot, I love you.

\setcounter{page}{1}
\setcounter{equation}{0}
\setcounter{footnote}{0}
\

\newpage
\setcounter{tocdepth}{2}
\tableofcontents

\chapter{Introduction}
The study of non-perturbative phenomena is a cornerstone of high energy physics, allowing to make profound observations beyond the reach of conventional perturbation theory. In String Theory this is particularly essential. Objects like instantons are often needed to get the whole picture, for example by resolving some singularities or restoring some symmetries, while black holes, are considered to be the main testing ground for the expected properties of a theory of quantum gravity. The underlying structure of string theory is rich in non-perturbative dualities and symmetries, allowing to probe these effects. These investigations often reveal a powerful interplay between elegant mathematical structures and deep physical results. Accordingly, in this chapter we will quickly introduce the physical and mathematical notions that will be treated in the rest of this dissertation.

In the following thesis we focus on type II string theories compactified on a Calabi-Yau (CY) threefold: a three (complex) dimensional K\"ahler manifold with a vanishing first Chern class. This yields $N=2$ supersymmetry in the non-compact dimensions, ensuring a nice blend of tractable computations and rich mathematical structure. In the absence of fluxes, the geometric and topological data of the CY manifold fully determine the properties of the resulting 4d theory. This is encoded in metric of the moduli space which fixes the low energy action at the two derivative level. The moduli space is the space of deformations of the CY metric and of massless excitations of the strings, and in the presence of $N=2$ supersymmetry it factorizes 
\be
\cM_{4d}=\cV \times \cQ,
\ee
where the first term denotes the vector multiplet (VM) and the second term the hypermultiplet (HM) moduli space. The metric on $\cV$ is tree-level exact and is well-known. On the contrary, the HM metric receives all possible $g_s$ corrections, including non-perturbative D-instantons and NS5-instantons corrections. In this dissertation we will focus on properties of $\cQ$ and related objects.

Our first project was to compute the NS5-instanton corrections to the metric on $\cQ$, in the one instanton approximation. The main difficulty in doing this comes from the fact that the HM moduli space is quaternion-K\"ahler (QK), making it very difficult to study. Unlike K\"ahler spaces its metric can't be expressed in terms of a single function. However, its properties are nicely encoded in its \textit{twistor space} $\cZ_\cQ$ \cite{MR664330,Alexandrov:2008nk} that is much more convenient to work with. 
First, the twistor space, which is a $\CP$-bundle over $\cQ$ with a canonical complex structure, is fully described by a set of holomorphic functions. These are called \textit{holomorphic transition functions} and they set up a Riemann-Hilbert problem whose solutions are the coordinates on $\cZ_\cQ$. 
In the presence of D-instantons, the induced deformations to $\cZ_\cQ$ were studied and the transition functions computed in \cite{Alexandrov:2008gh,Alexandrov:2009zh}, while for NS5-instantons, in the one instanton approximation, the same was achieved in \cite{Alexandrov:2010ca}.
On the other hand, once we obtain the twistor geometry we can apply a procedure, explained in \cite{Alexandrov:2008nk} and applied to D-instantons in \cite{Alexandrov:2014sya}, in order to get the metric on $\cQ$. We apply this procedure in \cite{Alexandrov:2023hiv} to the proposal from \cite{Alexandrov:2010ca} to retrieve the linear NS5-instanton corrections to the hypermultiplet metric. We conclude this work by performing two independent cross-checks on our resulting metric.

The non-perturbative contributions to the metric considered previously are weighted by BPS indices which count BPS black hole micro-states and coincide with the so-called generalized Donaldson-Thomas (DT) invariants of the CY. These indices are piecewise constant and exhibit \textit{wall-crossing} behavior described by the Kontsevich-Soivbelman formula \cite{ks}. Such jumps set up a Riemann-Hilbert problem \cite{Bridgeland:2016nqw} similar to the one encountered for D-instantons above. In fact these indices have refined counterparts that are obtained by adding a fugacity parameter $y$. These refined indices offer much more information about the BPS spectrum and they verify a refined version of the wall-crossing formula. In addition, the parameter $y$ effectively quantizes the moduli space and it is natural to look for a quantum Riemann-Hilbert problem on its twistor space. We propose a framework for this problem where the non-commutativity is ensured by a Moyal product involving the parameter $y$ and we find an asymptotic expansion for the piecewise holomorphic functions solving it \cite{Alexandrov:2025abc}.

Another project we worked on is solving an equation involving the generating functions of BPS indices. Despite their importance, little is known about how to compute these indices systematically. This is partly due to wall-crossing. A rather surprising result in that regard is that, in the large volume attractor point, generating functions of rank 0 DT invariants are \textit{higher depth mock} modular forms \cite{Alexandrov:2018lgp}! 
Higher depth mock modular forms are generalizations of the usual notion of mock modular forms \cite{Zwegers-thesis}, and we will define them later. 
It was by enforcing S-duality on the hypermultiplet moduli space on the type IIB side that Alexandrov and Pioline \cite{Alexandrov:2018lgp} derived the transformation rules of these functions under $SL(2,\Z)$, in the form of modular anomaly equations. The anomaly of each generating function is written in terms of a sum, indexed by trees, of indefinite theta series multiplied by lower charge generating functions. Whereas the kernels of the indefinite theta series are made of sums over different sets of trees and involving generalized error functions and derivatives thereof. We considerably simplify the kernels using several identities between generalized error functions \cite{Alexandrov:2024jnu}. It is worth noting that the constraint coming from each equation is only strong enough to fix the solution up to a holomorphic modular ambiguity. In \cite{Alexandrov:2024wla}, we restrict to one modulus CY and give a recipe to solve these refined modular anomaly equations up to all the (relevant) holomorphic modular ambiguities.

\section{Outline}

The first three chapters provide the necessary background needed to present our results. 

Chapter \ref{chap-string} begins by establishing the necessary background in string theory, motivating the core themes of this dissertation. Section \ref{sec-string-typeII} introduces the spectrum and symmetries of type II string theories. Section \ref{sec-string-CY} then discusses compactifications on Calabi-Yau threefolds, presenting their associated deformation moduli spaces. Building on this, section \ref{sec-string-Moduli} describes the moduli spaces for both type IIA and IIB theories. These spaces fully determine the low-energy effective action, detailed in section \ref{sec-string-EFT}. Finally, section \ref{sec-string-BHBPS} introduces BPS states and indices counting them. We also explain how they are related to instanton effects and to BPS black holes.

Chapter \ref{chap-twist} introduces the geometric framework of twistor spaces, which is a powerful tool used in this work in chapters \ref{chap-NS5HM} and \ref{chap-qRH}. The chapter start in section \ref{sec-twist-qk} with a presentation of \qk (QK) spaces. We provide a valuable description of four-dimensional QK spaces in terms of a single potential, due to Przanowski \cite{Przanowski:1984qq} in subsection \ref{subsec-twist-Prz}. Following this, we proceed in section \ref{sec-twistor} to the explicit construction of the twistor space associated with a QK manifold and, in section \ref{sec-twist-transition}, describe the properties of th twistor space and the essential data required to define it.

Chapter \ref{chap-mod} shifts focus to the mathematical tools of modular forms and their generalizations, which are fundamental notions for chapter \ref{chap-DTall}. We begin in section \ref{sec-mod-VV} with a review of classical modular forms and vector-valued modular forms. In section \ref{sec-mod-mock}, we extend this discussion to (higher-depth) mock modular forms. The chapter also covers Jacobi forms and slight variations thereof in section \ref{sec-mod-Jacobi}. Finally, in section \ref{sec-mod-theta}, we define the notion of theta series and special functions, namely generalized error functions, that will appear in the study of generating functions of BPS indices.

The following chapters shift focus from this foundational background to present the original contributions of this dissertation.

Chapter \ref{chap-NS5HM} presents the first result of this thesis, detailing the computation of NS5-brane instanton corrections to the hypermultiplet moduli space metric, based on our work in \cite{Alexandrov:2023hiv}. We begin in section \ref{sec-NS5-pert} by describing the perturbative metric, introducing the twistor space framework from the outset to prepare the reader for the more complex instanton analysis. In section \ref{sec-NS5-Instanton}, we present the main calculation of the NS5-instanton corrected metric. 
To validate our findings, the final sections of the chapter explore two crucial limits. In section \ref{sec-NS5-UHM}, we compare our result with the Przanowski description, relevant for rigid CY spaces, introduced earlier, while in section \ref{sec-NS5-smallgs}, we examine the small string coupling limit, matching the metric's form to expectations from string amplitude computations. This analysis culminates in a prediction for a specific string amplitude in the limit of small Ramond-Ramond fields.

Chapter \ref{chap-qRH} addresses our work on the quantum Riemann-Hilbert problem that is induced by refined BPS indices, as detailed in \cite{Alexandrov:2025abc}. We begin in section \ref{sec-qTBA-classical} by reviewing the classical Riemann-Hilbert problem and demonstrating how its solution can be derived from a TBA-like integral equation. In section \ref{sec-qTBA-qTBA}, we introduce a quantum deformation via the Moyal star product and then formulate the quantum Riemann-Hilbert problem. A central result is the explicit construction of a formal solution to this quantum problem, which correctly reduces to the classical solution in the unrefined limit. Furthermore, in section \ref{sec-qTBA-genfun}, we identify a function that acts as a refined generating function for Darboux coordinates, providing multiple all order perturbative expansions for it and for its unrefined limit.
Finally, \ref{sec-qTBA-uncoupled} provides an explicit computation of the formal solution, performed in the case of so-called \textit{uncoupled} BPS structure, which reproduces a result from \cite{Barbieri:2019yya}.

Chapter \ref{chap-DTall} delves into the modular properties of generating functions for BPS indices, presenting the findings from our papers \cite{Alexandrov:2024jnu, Alexandrov:2024wla}. The first part of the chapter, based on \cite{Alexandrov:2024jnu}, defines these generating functions and the equations they satisfy (\S\ref{sec-DTall-def}), introduces a tree-based formalism to define the coefficients\footnote{These are functions playing the role of coefficients in the completion equation.} in these equations (\S\ref{sec-DTall-Kernel}), and presents significant simplifications of these coefficients (\S \ref{sec-DTall-simplify}). 
The second part of the chapter, corresponding to the work in \cite{Alexandrov:2024wla}, specializes to Calabi-Yau spaces with $b_2=1$. 	Therein, our goal is to solve the completion equations mentioned above and thus find (at least the anomalous part of) the generating functions of BPS indices.
First, we introduce \textit{anomalous coefficients} that parametrize the ambiguity in the generating functions and show that they satisfy their own modular completion equations (\S \ref{sec-DTall-gi}). Our goal in this second half of the chapter is to solve for these anomalous coefficients. We first present two families of solutions (\S \ref{sec-DTall-HeckeVW}) based on Vafa-Witten theory and Hecke-like operators. Motivated by the limitations of these methods, we develop a more general strategy (\S\ref{sec-DTall-Extensions}) for solving the completion equations, using the simplest non-trivial case to motivate tractable extensions to the problem. We demonstrate how this framework solves this case (\S\ref{sec-DTall-sol2}) and conclude by presenting a general recipe to solve the completion equations for any anomalous coefficient (\S\ref{sec-DTall-soln}).

Finally, in chapter \ref{chap-concl} we provide both a summary of our results and a comprehensive outlook. This final chapter will first consolidate the principal achievements concerning NS5-instanton corrections, the quantum Riemann-Hilbert problem, and the generating functions of BPS indices. It will then map out future directions of research, identifying a set of specific and important problems.

\chapter{Background on Type II Superstring Theory}
\label{chap-string}

Having established in the previous chapter the main theme of this thesis, namely the study of the hypermultiplet moduli space arising in the context of Type II string theory compactifications, we now turn to laying the necessary theoretical groundwork. A thorough understanding of the HM, its geometric properties, and particularly the quantum corrections it receives, requires a clear exposition of its origins within string theory. 
This chapter provides a focused review of the essential aspects of 10d Type II string theories and their compactifications on Calabi-Yau threefolds, which give rise to the four-dimensional N=2 supergravity theories where the hypermultiplet moduli space appears.

We start in 10d, outlining the fundamental constituents and massless spectrum of type IIA and type IIB string theories.
We will present the fields belonging to the Neveu-Schwarz-Neveu-Schwarz (NS-NS) sector, common to both theories, and the Ramond-Ramond (R-R) sector, which distinguishes them. The introduction of D-branes as non-perturbative, dynamical objects carrying R-R charge will also be crucial as they play a vital role as sources for non-perturbative effects, such as the instanton corrections central to this work.

Subsequently, we will discuss compactifications of the theory, focusing specifically on Calabi-Yau threefolds. We will give the defining properties of these manifolds and explain their role in preserving precisely N=2 supersymmetry upon going from ten to four dimensions. 
This dimensional reduction yields a rich structure in the four-dimensional effective field theory, characterized by N=2 supergravity coupled to a specific spectrum of matter multiplets.
The central outcome of this process, for the purposes of this thesis, is the emergence of distinct moduli spaces associated with the massless fields of the four-dimensional theory. 
These scalars organize into vector multiplets and hypermultiplets. We will explain how each of these fields originates from the 10d spectrum and the moduli of the compactification manifold and draw a disctinction between the geometry of the VM moduli space and that of the HM.

This exposition provides an essential context for appreciating the challenges involved in the study of the hypermultiplet moduli space. It also motivates the techniques, such as twistor theory and the analysis of BPS states, explored in the subsequent chapters. We begin our exposition with foundational elements of Type II superstrings in 10d.

\section{Type II Superstring Theory}
\label{sec-string-typeII}

String theory naturally generalizes quantum field theory by considering particles as vibrating strings rather than point-like objects. These strings sweep out a two-dimensional surface called \textit{worldsheet}, and spacetime emerges as the target space of a nonlinear sigma model defined on this surface

We add supersymmetry to this theory and we obtain five consistent string theories in 10d with no instabilities.
Among them, there are two so-called Type II theories obtained using oriented closed strings. As is the case for any string theory, a graviton-like particle emerges from the excitations of the closed strings showing that gravity is necessarily present. In fact, the low energy limit of type IIA/B string theory gives type IIA/B supergravity in 10d. Therefore, in order to obtain the low energy effective description of compactified string theories, it is convenient to look at the corresponding 10d supergravity theories. We focus on the bosonic sector, as supersymmetry allows to recover the fermionic part. 

\subsection{The massless bosonic spectrum}
\label{subsec-BosSpectrum}

The bosonic spectrum of type II supergravity is subdivided into two parts, the Neveu-Schwarz Neveu-Schwarz (NS-NS) sector and the Ramond-Ramond (RR) sector. 

The NS-NS sector, identical for both type IIA and type IIB, contains three fields. First, the ten dimensional metric $\hat g_{XY}$ (where $X,Y=0\dots9$ are spacetime indices), which describes gravity. Next, there is the Kalb-Ramond field, a two-form $\hat B_2$ sourced by the fundamental string. We denote its field strength $H_3=\de B_2$. Finally, the ten dimensional dilaton $\hat \Phi$ which is a scalar field whose VEV determines the string coupling constant $g_s$. String interactions are governed by $g_s$ and in this sense they are perturbative, whereas non-perturbative effects scale exponentially in $-1/g_s^\alpha$ for some $\alpha>0$. Note that we use hats on ten dimensional fields to differentiate them from their four dimensional counterparts to be introduced later.  

In contrast, the RR sector consists of gauge potentials and is different between type IIA and type IIB. It comprises $p$-form potentials $\hat A_p$ with $p=1,3$ in the former and $p=0,2,4$ in the latter. The scalar $\hat A_0$ is often combined with the dilaton $\hat \Phi$ into the \textit{complexified} string coupling constant $\tau=\hat A_0 + \I/g_s$. In general, to these potentials correspond field strengths $F_{p+1}=\de \hat A_p$ that enter the 10d action of supergravity and therefore appear in the equations of motion. In type IIB an additional self-duality constraint is imposed on $F_5$. It is evident that these $p$-forms can be viewed as electromagnetic potentials and it turns out that they are sourced by D-branes.

\subsection{D-branes and NS5-branes}
A D$p$-brane is a solitonic solution of supergravity extending over $p$ spatial dimensions. It is a non-perturbative, dynamical object in string theory. D-branes were first encountered as surfaces on which endpoints of strings can end. It was later understood however that they are fundamental in string theory and supergravity. Indeed, they are essential to a number of dualities in the theory, as we will explain shortly. Moreover, they play the role of sources for the RR potentials \cite{Polchinski:1995mt}, akin to electric charges for the electromagnetic field in Maxwell's theory. More precisely, a D$p$-brane is charged under the potential $\hat A_{p+1}$ and sources the field strength $F_{p+2}$. The allowed branes differ between the theories, namely we have D$p$-branes with $p=0,2,4,6,8$ in type IIA and $p=-1,1,3,5,7,9$ in type IIB. Their non-perturbative nature can be seen in their tension being proportional to $1/g_s$. 

Complementary to this picture are Neveu-Schwarz branes (NS5-branes). These are ($5+1$)-dimensional objects, magnetic dual to the fundamental strings. It is known on general grounds that their tension is proportional to $1/g_s^2$ \cite{Becker:1995kb}, making them suppressed even compared to D-branes when $g_s\ll 1$. 

Upon compactification, branes can wrap cycles in the internal manifold resulting in many interesting effects. When Euclidean D$p$-branes (or NS5-branes) wrap a non-trivial $p+1$-cycle they are seen as point-like objects from the 4d theory perspective. Hence, this configuration gives instanton corrections which, among other things, contribute to the metric on the hypermultiplet moduli space. This is a crucial part of this thesis.

\subsection{Symmetries}
At last, string theory has a beautiful set of symmetries allowing to navigate between the different theories elegantly. Some of these dualities are extremely powerful as they relate the perturbative regime of the theory to the non-perturbative one. Focusing on type IIA and type IIB theories we can give a quick picture of the dualities involving them. 

First, we have T-duality which relates type IIA string theory compactified on a circle of radius $R$ to type IIB string theory compactified on another circle with radius $1/R$. A particular feat of this duality is that it offsets the dimension of a D$p$-brane by $1$ transforming it to a D$(p+1)$-brane if the duality is performed along a direction transverse to its surface or to a D$(p-1)$-brane if it performed along a direction parallel to it. 

Then, we have S-duality which acts within type IIB and exchanges the strong coupling regime with the weak coupling regime by sending $g_s\to 1/g_s$. More precisely, it acts on the complexified string coupling $\tau$ with an $SL(2,\IZ)$-group action
\be
\qquad 
\tau \to \frac{a\tau + b}{c\tau +d}, 
\qquad 
\begin{pmatrix}
	a& b\\
	c& d
\end{pmatrix}\in SL(2,\IZ).
\ee 
Actually, at the classical level the symmetry group is full $SL(2,\IR)$. However, it is completely broken by $\alpha'$ corrections and it's only upon including D(-1)-D1 instantons that we recover a symmetric action of $SL(2,\IZ)$ \cite{Alexandrov:2009qq,RoblesLlana:2006is}, which is expected to hold after all quantum corrections are incorporated. For instance D3-instantons are expected to transform into themselves while D5-instantons are mapped to NS5-instantons.

\section{Calabi-Yau compactifications}
\label{sec-string-CY}

To bridge the gap between the 10d string theory and the four-dimensional spacetime we live in, the extra six spatial dimensions must be compactified on a 6 (real) dimensional manifold $\CY$. The choice of this internal manifold $\CY$ profoundly influences the resulting four-dimensional effective field theory, determining its symmetries, field content and interactions. Since this thesis focuses on theories with $N=2$ supersymmetry in 4d, the manifold $\CY$ is required to be a Calabi-Yau threefold. 

\subsection{Definition and properties}
In general, a Calabi-Yau manifold $\CY$ is a $n$ (complex) dimensional compact complex manifold with a Ricci-flat K\"ahler metric. We focus specifically on the case $n=3$. It can be shown that $\CY$ has a vanishing first Chern class $c_1\(\CY\)$ and that its holonomy group is in $SU(3)$. This second property is crucial for preserving $N=2$ supersymmetry upon compactification because it allows the existence of a unique covariantly constant spinor. 

The topology of these manifolds is very important for our purposes. A particularly interesting and fundamental property is the Hodge diamond which provides the dimensions of the Dolbeault cohomology groups $H^{p,q}(\CY)$
\be
\begin{array}{ccccccc}
	&&& 1 &&& \\
	&& 0 && 0 && \\
	& 0 && h^{1,1} && 0 & \\
	1 && h^{2,1} && h^{2,1} && 1 \\
	& 0 && h^{1,1} && 0 & \\
	&& 0 && 0 && \\
	&&& 1 &&& \\
\end{array}
\ee
It implies a few immediate consequences: 
\begin{itemize}
\item There are no non-trivial one- or five-dimensional cycles in a $\CY$.
\item The Euler characterestic can be obtained as 
\be
\chi_\CY = \sum_{p,q}(-1)^{p+q} h^{p+q}(\CY) = 2\(h^{1,1}(\CY)-h^{2,1}(\CY)\).
\ee
\item There is a unique, up to a holomorphic factor, nowhere vanishing holomorphic $(3,0)$-form, commonly denoted by $\Omega$.
\end{itemize}

It is clear that $\Omega$ fixes completely the complex structure of $\CY$. In addition $\CY$ admits a $(1,1)$ K\"ahler form $J$ which encodes its K\"ahler structure.

\subsection{Moduli spaces of K\"ahler and complex structures}
\label{subsec-CplxKahlerModuli}
A significant part of the low energy effective theory is determined by the space of deformations of the compactification manifold. For a Calabi-Yau this is the space of deformations of its metric, which is completely determined by the complex and K\"ahler structures. In fact, from the string theory point of view it is worthwhile to consider the complexification of the K\"ahler structure by considering the combination $\hat B_2+\I J$ where $\hat B_2$ is the Kalb-Ramond field. 
Since the complex and the (complexified) K\"ahler structures are encoded in the $(3,0)$-form $\Omega$ and the complexified K\"ahler form, which are both closed, then their infinitesimal deformations are given by elements of the groups $H^{2,1}(\CY,\IC)$ and $H^{1,1}(\CY,\IC)$. Denoting the resulting moduli spaces respectively by $\cM_C(\CY)$ and $\cM_K(\CY)$, the full CY moduli space is given by 
\be
\label{CY-moduli}
\cM_\CY= \cM_C(\CY) \times \cM_K(\CY).
\ee

The geometry of both factors is such that they are special K\"ahler spaces. This means that they have a natural K\"ahler metric that can be expressed in terms of a single holomorphic function $F(X)$ called the prepotential. For a $n$-dimensional space, this function should be homoegeneous of degree two in the $n+1$ variables $X^\Lambda,\, \Lambda=0,\dots,n$. It defines the K\"ahler potential as 
\be
\label{prepot-to-pot}
K = -\log \[2\Im \(\bX^\Lambda F_{\Lambda }(X)\)\],
\ee
which in turn defines the metric as 
\be
\label{Kahler-metric}
g_{a\bar b}=\partial_{z^a}\partial_{\bz^{\bar b}} K(z,\bz).
\ee
The coordinates $X^\Lambda$ are related to the actual coordinates of the special K\"ahler space as $z^a=\frac{X^a}{X^0}$ with $a=1,\dots,n$. A metric on each of the factors \eqref{CY-moduli} can be determined by $\Omega, J$ respectively. 

Let's start with $\cM_C$: the holomorphic $(3,0)$-form $\Omega$ gives rise to the following K\"ahler potential,
\be
\label{Kahlerpot-cplx}
K = -\log\(\I \int_{\CY} \Omega \wedge \bOm\).
\ee
In order to construct the corresponding prepotential, we start by introducing holomorphic coordinates on $\cM_C$. For this, we consider the full de Rham cohomology group 
\be
\label{decomp-H3}
H^{3}(\CY)=H^{3,0} \oplus H^{2,1} \oplus H^{1,2} \oplus H^{0,3} 
\ee
and its dual $H_3(\CY)$. Then we choose a basis of $H_3$ given by three-cycles $\cA^\Lambda$ and $\cB_\Lambda$ such that the only non-vanishing intersection numbers are $\cA^\Lambda \ast \cB_\Sigma = \delta^\Lambda_\Sigma$. They can be thought of as $A$ and $B$ cycles respectively. With this we define, 
\be
X^\Lambda = \int_{\cA^\Lambda}\Omega, \qquad F_\Lambda = \int_{\cB_\Lambda}\Omega.
\ee
Of course, these $2(h^{2,1}+1)$ coordinates are not all independent since the dimension of $\cM_C$ is given by $h^{2,1}$, the dimension of the group $H^{2,1}(\CY)$. This is due to the fact that infinitesimal deformation of $\Omega\in H^{3,0}$ are given by elements of $H^{2,1}$. Therefore, one can always choose the A and B cycles such that the $X^\Lambda$ are all independent and the $F_\Lambda$ are functions thereof. In addition, the Riemann bilinear identity 
\be
\label{Riemann-bilinear}
\int_{\CY} \chi \wedge \psi =\sum_\Lambda \(\int_{\cA^\Lambda} \chi \int_{\cB_\Lambda} \psi - \int_{\cB_\Lambda}\chi \int_{\cA^\Lambda} \psi\) ,
\ee 
valid for closed 3-forms $\chi$ and $\psi$, applied to $\Omega$ and $\partial_\Lambda \Omega$ respectively, ensures that $F_\Lambda$ are derivatives of a homogeneous function which can be obtained as 
\be
\label{prepot-cplx}
F=\hf X^\Lambda F_\Lambda.
\ee 
Then, if we apply \eqref{prepot-to-pot} to it we get the K\"ahler potential \eqref{Kahlerpot-cplx}. In conclusion, $F$ is the prepotential on the space of complex structure moduli.

Next, we deal with $\cM_K$. For similar reasons as the previous paragraph, this space has dimension $h^{1,1}$. The 2-form $J$ will be used to construct the metric on this space. In this case, we are interested in the even cohomology group 
\be
\label{decomp-Heven}
H^{\rm even} = H^{0} \oplus H^2 \oplus H^4 \oplus H^6.
\ee
Both $H^0$ and $H^6$ are one dimensional and generated respectively by $1$ and the volume form $\omega_\CY$. We supplement the former by a basis of $H^2$ to get $\omega_I=(1,\omega_i)$ where $I=0,\dots,h^{1,1}$ and $i=1,\dots,h^{1,1}$. We supplement the latter by a basis of $H^4$ to get $\omega^I=(\omega_\CY,\omega^i)$ such that 
\be
\label{Heven-basis-wedge}
\omega_i\wedge \omega^j = \delta_i^j \omega_\CY, 
\qquad 
\omega_i\wedge \omega_j= \kappa_{ijk}\omega^{k},
\ee
where the first equality ensures a duality correspondence and the second defines the triple intersection numbers of the CY. Choosing the dual basis $\gamma^i$ of 2-cycles and $\gamma_i$ of 4-cycles, they can also be defined as $\kappa_{ijk}=\int_{\CY}\omega_i\wedge \omega_j\wedge \omega_j = \langle\gamma_i,\gamma_j,\gamma_k \rangle$ Then the natural metric on $\cM_K(\CY)$ is defined by 
\be
g_{i\bj}=\frac{1}{4\cV}\int_{\CY}\omega_i \wedge \star \omega_j = \partial_i \partial_{\bj} (-\log 8 \cV),
\ee 
where $\cV = \frac16 \int_{\CY} J\wedge J \wedge J$ is the volume of the CY.

The holomorphic coordinates on $\cM_K$ arise as 
\be
v^i = b^i + \I t^i = \int_{\gamma^i}(\hat B_2+\I J).
\ee
Then we define the homogeneous coordinates as $X^I = (1,v^i)$ and the prepotential on the moduli space of K\"ahler structure can be written as 
\be
\label{prepot-Kahler}
F(X)= -\frac16 \frac{\kappa_{ijk}X^{i} X^{j} X^{k} }{X^0}.
\ee
As we will see this cubic prepotential is only an approximation, valid in the limit of large volume, of the potential appearing in string compactificaitons.

\subsection{Mirror symmetry}
Before going to the moduli space of the compactified theory, let's look at an interesting duality. Mirror symmetry relates seemingly different theories compactified on different Calabi-Yau threefolds.

Given a CY space $\CY$ with at least one complex structure modulus, then there exists a mirror CY space $\hat \CY$ satisfying 
\be
h^{3-p,q}(\CY)=h^{p,q}(\hat\CY),
\ee
with the the only non-trivial relation being
\be
h^{2,1}(\CY)=h^{1,1}(\hat \CY).
\ee
We see that under this exchange the dimensions of the spaces $\cM_C(\CY)$ and $\cM_K(\hat \CY)$ coincide. This is indicative of a deeper connection, namely that the complex structure moduli space of $\CY$ is exchanged with the K\"ahler structure moduli space of $\hat \CY$ and vice versa.

In string theory mirror symmetry manifests through a duality between the type IIA theory compactified on $\CY$ with the type IIB theory compactified on $\hat \CY$. This duality is extremely interesting because it maps perturbative corrections on one side to non-perturbative on the other side. Furthermore, it exchanges D2-branes in type IIA with D(-1)-D1-D3-D5 branes in type IIB.

\section{Moduli spaces}
\label{sec-string-Moduli}
Now we turn to the full moduli space of the compactified string theory. It is obtained by taking the field content of the corresponding 10d theory and expanding it on a basis of harmonic forms on $\CY$. This means that we only take the lowest Kaluza-Klein modes of the compactification. Then, we dualize the eventual 2-forms and 3-forms. For even $p$-forms one can take the basis introduced below \eqref{decomp-Heven} whereas for odd ones the spaces $H^1,H^5$ are trivial and thus we consider only the basis of $H^3$. We choose for it a basis dual to the A and B cycles introduced in \eqref{decomp-H3} and we label it $\alpha_\Lambda,\beta^\Lambda$. We keep the notations of \ref{subsec-CplxKahlerModuli} where indices $a$ and $i$ run respectively over $h^{2,1}$ and $h^{1,1}$ values starting from $1$ and $\Lambda, I$ are their corresponding extensions starting from $0$. 

We consider only the bosonic spectrum since the fermionic contributions can in principle be recovered using supersymmetry. The treatment of type IIA and type IIB will be done in separate subsections and will follow closely \cite{Alexandrov:2011va}.

\subsection{Type IIA}
\label{4d-fields-A}
We start with the NS-NS sector. The 10d metric $\hat g_{XY}$, upon compactification, yields a 4d metric $g_{\mu\nu}$ and, as discussed in section \ref{sec-string-CY}, $h^{2,1}$ fields corresponding to the holomorphic coordinates $z^a$ on $\cM_C$ and $h^{1,1}$ fields corresponding to the real coordinates $t^i$ on $\cM_K$. The Kalb-Ramond field $\hat B_2$ gives a 4d two-form $B_2$ and $h^{1,1}$ scalars $b^i$ which combine with $t^i$ from the metric to give the complexified K\"ahler moduli $v^i = b^i+\I t^i$, as was also discussed in \ref{sec-string-CY}. We dualize the 2-form $B_2$ and trade it for the axion $\sigma$. Finally, the dilaton $\hat \Phi$ reduces to a single scalar $\phi$. 

Next, in the RR sector we start with the 1-form $\hat A_1$ which simply reduces to a 4d 1-form $A_1^0$. Then, the 3-form $\hat A_3$ gives a 4d form $A_3$, $h^{1,1}$ 1-forms $A_1^i$ and $2(h^{2,1}+1)$ real scalars $\zeta^\Lambda, \tzeta_\Lambda$. The 3-form can be dualized to a constant and thus can be safely ignored\footnote{It was shown in \cite{Louis:2002ny} that it can an important role by inducing a gauge charge for the NS-axion $\sigma$.}.

These fields organize into $N=2$ multiplets: 
\begin{align*}
	&\text{gravitational multiplet} &&(g_{\mu \nu}, A_1^0) \\
	&\text{(universal) hypermultiplet} &&(\sigma, \phi, \zeta^0, \tilde{\zeta}_0) \\
	&h^{2,1} \text{ hypermultiplets}&&(z^a, \zeta^a, \tilde{\zeta}_a) \\
	&h^{1,1} \text{ vector multiplets}&&(A_1^i, v^i \equiv b^i + it^i).
\end{align*}
The second multiplet is present regardless of the Hodge numbers of the manifold and is thus called the \textit{universal} hypermultiplet. When $h^{2,1}=0$ the latter represents the entirety of the HM moduli space and in that case its description can be achieved much more simply (as we will see in chapter \ref{chap-NS5HM}). 

\subsection{Type IIB}
\label{4d-fields-B}
In this case, as was seen in subsection \ref{subsec-BosSpectrum}, the 10d NS-NS sector is the same. The only difference in the 4d reduction is that we denote the scalar dual to the 4d 2-form $B_2$, $\psi$ instead of $\sigma$. 

For the RR sector we have the 0-form $\hat A_0$ which gives a scalar $c^0$ and the two form $\hat A_2$ which gives a 4d 2-form $A_2$ that can be dualized into a scalar $c_0$ and $h^{1,1}$ scalars $c^i$, whereas the 4-form requires a little bit more attention. Due to the self-duality condition on its field strength, it only gives rise to $h^{1,1}$ 2-forms $D_2^i$ and $1+h^{2,1}$ vectors $A_1^\Lambda$. After dualizing the 2-forms into scalars $c_i$, we rearrange the fields into $N=2$ multiplets 
\begin{align*}
	&\text{gravitational multiplet} &&(g_{\mu \nu}, A_1^0) \\
	&\text{(universal) hypermultiplet} &&(\psi, c_0, \phi, c^0) \\
	&h^{1,1} \text{ hypermultiplets}&&( b^i+\I t^i, c^i, c_i) \\
	&h^{2,1} \text{ vector multiplets}&&(A_1^a, z^a).
\end{align*}

This concludes our exposition of the different multiplets appearing in type IIA and type IIB. We can see that the spectrum is always made of one gravitational multiplet coupled to two types of matter multiplets: vector multiplets whose bosonic sector contains a gauge field and a complex scalar, and hypermultiplets each having 4 real scalars.

We summarize the dependence of the number of HM and VM on the Hodge numbers: there are $h^{1,1}$ vector multiplets in type IIA and $h^{2,1}$ in type IIB. The number of hypermultiplets in type IIA is given by $1+h^{2,1}$ while in type IIB it is given by $1+h^{1,1}$. This is in perfect agreement with the statement of mirror symmetry which exchanges the Hodge numbers of mirror Calabi-Yau manifolds. Since each hypermultiplet has four real degrees of freedom and each vector multiplet has two, we get the dimension of each moduli space by multiplying the corresponding number of multiplets by four and two respectively.

\section{Low energy effective action}
\label{sec-string-EFT}

The action of $N=2$ supergravity coupled to $n_V$ vector multiplets and $n_H$ hypermultiplets is, to a large extent, fixed by supersymmetry. When we restrict to the lowest order in $\alpha'$, in the Einstein frame, the bosonic part of the action reads 
\be
S_{\rm eff}=\hf \int R\star1 + S_{\rm VM}+S_{\rm HM}.
\ee
We see that at this level, the VM and HM moduli spaces are decoupled and we can look at the action associated to each of them separately. 

For the vector multiplet, which is a special \kahler space, the effective action is completely fixed by its holomorphic prepotential $F$ and the gauge fields \cite{Cecotti:1989qn,deWit:1984pk}.
Whereas, for the hypermultiplet, the effective action is given by the simple non-linear $\sigma$-model 
\be
\label{Seff-HM}
S_{\rm HM} = \int \de^4 x\, g_{\alpha \beta}(q) \partial^\mu q^\alpha \partial_\mu q^\beta,
\ee
where $q^\alpha$ denote the $4n_H$ scalars parametrizing the HM moduli space. The metric $g_{\alpha \beta}$ in this case is required by supersymmetry to be that of a quaternion-K\"ahler manifold \cite{Bagger:1983tt}, which is notoriously difficult to describe. This is in part why the hypermultiplet is the main interest of this thesis. 

In conclusion, the low energy effective theory is completely determined by the metric on the factorized moduli space 
\be
\label{fact-VM-HM}
\cM_{\rm 4d} = \cM_{\rm VM} \times \cM_{\rm HM},
\ee
where the first factor is required to be special K\"ahler and the second quaternion-K\"ahler. This follows entirely from $N=2$ supersymmetry and is thus expected to hold at the quantum level. 

The prepotential governing the geometry of the vector multiplet moduli space is generally tree level exact and is well known. For type IIB theory, it is also free of $\alpha'$-corrections and is therefore given by \eqref{prepot-cplx}. Whereas for type IIA the relevant function agrees with \eqref{prepot-Kahler} only in the large volume approximation. However, the physically relevant prepotential can be obtained from the one in type IIB using mirror symmetry \cite{Candelas:1990rm,Hosono:1993qy}, which maps the classical result of type IIB into perturbative and non-perturbative $\alpha'$ corrections in type IIA. Indeed, this symmetry implies that the moduli spaces of type IIA and type IIB compactified on mirror CY manifolds coincide and due to the factorization \eqref{fact-VM-HM} this implies 
\be
\qquad \cM_{\rm VM}^{\rm A/B}(\CY) = \cM_{\rm VM}^{\rm B/A}(\hat \CY),  \qquad \cM_{\rm HM}^{\rm A/B}(\CY) = \cM_{\rm HM}^{\rm B/A}(\hat \CY).
\ee
In conclusion, the situation for the vector multiplet moduli space is well under control.

In contrast, the hypermultiplet sector is much more complicated. There are two main sources for this difficulty. First, the geometry of the moduli space is quaternion-K\"ahler and there is, in general, no analog to the holomorphic prepotential that we used in the VM case. Second, the metric on this space gets all types of $g_s$ corrections both perturbative and non-perturbative. The latter are especially complicated since the rules of the string instanton calculus are not established for NS5-instantons so that the straightforward microscopic calculation cannot be performed. For the reasons above we will use twistor space techniques to get a better description of the HM moduli space.

\section{Black holes and BPS states}
\label{sec-string-BHBPS}
It has long been known in $N=2$ theories that BPS states provide important contributions to the non-perturbative sector. From understanding the microscopic origin of the Bekenstein-Hawking entropy of certain supersymmetric black holes to computing instanton contributions to the low energy effective theory of type II string theory compactified on a CY, the study of the BPS spectrum is key. 

In the context of $N=2$ theories, a BPS state is defined as a "short" representation of the supersymmetry algebra with an associated charge $\gamma$ under the gauge group of the theory. In fact it is annihilated by half of the super-generators and thus preserves half the supersymmetry. This property endows these states with stability against decays into non-BPS representations. Furthermore, they are generally invariant under continuous deformations of parameters. Therefore, one can study such states at weak coupling and deduce information about them at strong coupling. However, for some moduli this is not the case and we know that the BPS spectrum changes discontinuously across one-dimensional \textit{walls of marginal stability}, which will be defined later. 

In type II string theory D-branes wrapping non-contractible, calibrated cycles on the CY give BPS states charged under the 4d gauge fields discussed in subsections \ref{4d-fields-A} and \ref{4d-fields-B}. 
The charge vector $\gamma=(p^\Lambda,q_\Lambda)$ is an element of a lattice with Dirac-Schwinger-Zwanziger product $\gamma_{12}=\langle \gamma_1,\gamma_2\rangle$ that will be useful later. Additionally, its components $p^\Lambda,q_\Lambda$ give the decomposition of the cycle being wrapped by the brane in a certain basis of $H_3(\CY,\IZ)$ or $H_{\rm even}(\CY,\IZ)$. The BPS condition can then be written using the \textit{central charge} which defines a central extension of the supersymmetry algebra. This function $Z(\gamma;u)$ is linear in charges and holomorphic in the moduli $u$. In general, all states satisfy the inequality 
\be
\label{BPS-mass-cond}
M(\gamma;u)\geq |Z(\gamma;u)|,
\ee
and BPS states are those that saturate the bound. If a BPS state exists for a given moduli, we expect it to continue to exist at least in a neighborhood around it. 

Let us denote $\cH_{\gamma,u}^{\rm BPS}$ the Hilbert space of BPS states with charge $\gamma$ and at moduli $u$. The dimension of this space is finite and we can define on it an index that counts the (signed) number of BPS multiplets \cite[eqn (3.21)]{Alexandrov:2020qpb}
\be
\label{BPS-Trace}
\Omega(\gamma;u)=\,\Tr_{\cH'_{\gamma,u}}\[(-1)^{2J_3}\] \in\IZ,
\ee
where $J_3$ is the generator of rotations along a given axis and $\cH'_{\gamma,u}$ is the Hilbert space of BPS states with charge $\gamma$ and we factorized the center of mass degrees of freedom.
It conjecturally coincides with the so-called generalized Donaldson-Thomas (DT) invariants. These are topological invariants defined for a given Calabi-Yau and they foster a lot of interest by mathematicians and are quite hard to compute. The integral nature of these DT invariants is conjectural. It is supported however by the fact that all the examples that have been computed are integer.

We will start by looking at some examples of BPS states arising in type II string theories compactified on a Calabi-Yau $\CY$. There, we will also look at the role played by mirror symmetry in relating these different examples. Next, we talk about BPS bound states and wall crossing. And finally we will talk about refined BPS indices.

\subsection{BPS instantons and black holes}
\label{subsec-string-BPSBH}
The first example we will look at gives instanton contributions to the hypermultiplet moduli space metric, in which BPS indices appear as weights. This is relevant for chapters \ref{chap-NS5HM} and \ref{chap-qRH}.

In type IIA these states arise as even Euclidean D$p$-branes wrapping $p+1$ cycles. Obviously, since the compactification manifold is a Calabi-Yau, the only admissible case is $p=2$. So we get BPS states by wrapping D2-branes on special Lagrangian 3-cycles on $\CY$. By decomposing this cycle $\Gamma\in H_3(\CY,\IZ)$ in the basis of \eqref{decomp-H3} as $q_\Lambda \cA^\Lambda - p^\Lambda \cB_\Lambda$ we get the electromagnetic charge vector $\gamma=(p^\Lambda,q_\Lambda)$, with $\Lambda=0,\dots,h^{2,1}$, associated to the BPS state. These states, from the 4d low energy effective theory, are seen as points, hence they are instantons. 

In type IIB they come from odd Euclidean D$p$-branes wrapping even cycles. There, all the values $p=-1,1,3,5$ are admissible and the BPS state is given by any bound state of branes wrapping holomorphic even cycles on the $\CY$, $\Gamma\in H_{\rm even}$. The dimension of this group is $2(1+h^{1,1})$ and we can use the basis of even cycles $(\gamma_I,\gamma^I)$ defined below \eqref{Heven-basis-wedge} with $I=0,\dots,h^{1,1}$. We can see then that the charge associated to the branes should be $\gamma=(p^I, q_I)$.

The Hodge numbers giving the number of components of the charge vectors in both type IIA and type IIB point to the fact that these instanton contributions are mapped to each other under mirror symmetry. Indeed $h^{2,1}$ and $h^{1,1}$ are exchanged under this symmetry and more importantly the hypermultiplet moduli space of type IIA on $\CY$ and that of type IIB on $\hat \CY$ are equivalent and they should therefore receive the same instanton corrections.

There is another possibility in both type IIA and type IIB of getting instanton contributions from BPS states. It is given by NS5-branes wrapping the whole Calabi-Yau. This can form bound states with the other D-branes mentioned above and we just add an integer $k$ that denotes the NS5-charge, to $\gamma$. Computing these contributions is the main goal of chapter \ref{chap-NS5HM}.

The second example we will look at gives particle states that are extended in time. When the string coupling constant $g_s\gg1$ they can be interpreted as BPS black holes. This is relevant for chapter \ref{chap-DTall} and we use notations adapted to the corresponding paper.

In type IIA they arise as bound states of D6-D4-D2-D0 branes wrapping even cycles on the $\CY$. In a similar fashion to the even cycles encountered in type IIB above, we have a charge vector associated to the BPS state $\gamma=(p^\Lambda,q_\Lambda)$ with $2(h^{1,1}+1)$ components. The indices counting the number of these states are $\Omega(\gamma;z^a\equiv b^a+\I t^a)$ and depend on the K\"ahler moduli. When these states are interpreted as black hole solutions to the supergravity low energy theory, $z^a$ give the value of the moduli at spacial infinity.

\subsection{Wall crossing}
The BPS indices exhibit a very interesting behavior called \textit{wall-crossing} \cite{Denef:2000nb}. In fact, despite being protected against decay into non-BPS states, a BPS state can still decay into two or more BPS states. But there are stringent conditions on such a decay. 

First of all, for an irreducible charge vector there is no such wall-crossing behavior since it cannot decay into lighter constituents. Let's look at the next-simplest case, namely a charge $\gamma=\gamma_1+\gamma_2$ that can be decomposed into two primitive charge vectors. In some region of the moduli space the bound state is energetically favored and supersymmetric, therefore, it can be formed. The radius of this configuration is 
\be
\label{Radius-BPS12}
R_{12}=\hf \langle \gamma_1,\gamma_2\rangle \, \frac{|Z(\gamma_1;z^a)+Z(\gamma_2;z^a)|}{\Im\(Z(\gamma_1;z^a)\overline{Z(\gamma_2;z^a)}\)},
\ee
and can be obtained through supersymmetry constraints \cite{Denef:2000nb} as well as through solving the supergravity equations \cite{Bates:2003vx}. One can see that a necessary condition for the existence of such a configuration is that $R_{12}$ has to be positive, hence we get the Denef stability condition 
\be
\langle \gamma_1,\gamma_2\rangle\Im\(Z(\gamma_1;z^a)\overline{Z(\gamma_2;z^a)}\) >0.
\ee

In general, such a bound state does not decay, because the binding energy is negative. However, for special values of the moduli, the central charges of the two constituents align
\be
Z(\gamma_1,z^a)\,\overline{Z(\gamma_2,z^a)}\in \IR_+,
\ee 
and the binding energy becomes null. This condition defines the above-mentioned walls of marginal stability which separate the region where the configuration is stable and the one where it is not. This process can be seen by looking at how the radius \eqref{Radius-BPS12} varies when we approach such a wall. Indeed, the imaginary part in the denominator approaches zero and the radius therefore goes to $\infty$ signaling that the two particles are being separated.

In fact we can find precisely how many BPS states of charge $\gamma$ will disappear from the spectrum by using the primitive wall-crossing formula \cite{Denef:2007vg}
\be
\label{primitive-wc}
\Delta \Omega(\gamma,z^a)=(-1)^{\gamma_{12}}\gamma_{12} \, \Omega(\gamma_1,z^a)\Omega(\gamma_2,z^a),
\ee
where $\gamma_{12}=\langle \gamma_1,\gamma_2\rangle$ and the difference $\Delta \Omega$ is between the index in the stable region and the index in the unstable. 
More generally, increasingly complicated decays can happen at the wall of marginal stability when the total charge $\gamma$ can be decomposed in multiple ways. And it is remarkable that a formula allowing to compute BPS indices on one side of any wall in terms of the BPS indices on the other exists \cite{ks}. This is the Kontsevich-Soibelman wall crossing formula (KSWCF) and it is written elegantly in terms of an equality of a (possibly infinite) product of operators on each side of the wall.

\subsection{Refined BPS indices}
In this subsection we introduce the refined BPS indices. 

First, note that mathematically the unrefined BPS indices $\Omega(\gamma;z^a)$, introduced physically in \eqref{BPS-Trace}, are (roughly) defined as the Euler numbers of the moduli space $\cM_{\gamma,z^a}$ of semi-stable coherent sheaves 
\be
\label{BPS-Euler-Sheaves}
\Omega(\gamma,z^a)=\sum_{p=0}^{2d_\C(\cM_{\gamma,z^a})}(-1)^{p-d_\IC(\cM_{\gamma,z^a})}b_{p}(\cM_{\gamma,z^a}),
\ee
where $d_\C(\cM)$ is the complex dimension of $\cM$.
Their refined counterparts, are defined instead as Laurent polynomials giving access to the Betti numbers of $\cM_{\gamma,z^a}$
\be
\label{refined-BPS}
\Omega(\gamma,z^a,y)=P(\cM_{\gamma,z^a},-y),
\ee
where 
\be
\label{refBPS-Laurent-pol}
P(\cM,y)=\sum_{p=0}^{2d_\C(\cM)}y^{p-d_\IC(\cM)}b_{p}(\cM).
\ee

Physically, the refinement corresponds to turning on an $\Omega$-background \cite{Moore:1998et,Nekrasov:2002qd} and is realized through
adding a new parameter $y$ which may be thought of as a fugacity conjugate to the angular momentum $J^3$ carried by the BPS state in 4d. In this way, refined BPS indices contain more information about the spectrum by being sensitive to the angular momentum $J^3$. 
They can be obtained by replacing the formula \eqref{BPS-Trace} with
\be
\label{BPS-refined-Trace}
\Omega(\gamma,z^a;y)=\,\Tr_{\cH'_{\gamma,u}}\[(-y)^{2J_3}\] \in\IZ[y,y^{-1}].
\ee

Notice that in both definitions, \eqref{BPS-refined-Trace} and \eqref{refBPS-Laurent-pol}, we recover the unrefined invariants when we take $y\to1$. 

In fact, the equivalence of these two equations is conjectural\footnote{The status is the same for unrefined BPS indices.}.
The refined indices verify refined wall-crossing relations \cite{ks}. However, they have a serious problem. Namely, they turn out to be not protected by supersymmetry. This means that they can change under variation of the complex structure moduli of the hypermultiplet moduli space. The situation is resolved for some cases, namely for non-compact Calabi-Yau spaces where a $\IC^\times$ action was used to modify \eqref{BPS-refined-Trace} and get a definition that is protected by supersymmetry \cite{Gaiotto:2010be}. 
Unfortunately, for the cases which we are interested in, such as compact CYs, we do not have a deformation-invariant definition yet.

\chapter{Twistorial description}
\label{chap-twist}

This chapter presents the main technical details about twistor spaces of quaternion-K\"ahler manifolds. This is directly relevant for chapters \ref{chap-NS5HM} and \ref{chap-qRH}. 

Understanding the hypermultiplet moduli space of type II string theory compactified on a Calabi-Yau is the motivation for most of our work during this PhD. The geometry of this moduli space is quaternion-K\"ahler, which makes it very difficult to study. Although it is very rich and receives $g_s$ corrections stemming from one-loop, D-instantons and NS5-instantons, it is highly challenging to compute its metric directly. However, quaternion-K\"ahler spaces are in one-to-one correspondence with twistor spaces and one can retrieve all information on the former using the latter. In this chapter we show how this correspondence can be achieved and we present the main twistorial objects that will be useful later. Our discussion closely follows \cite{Alexandrov:2011va}.

\section{Quaternion-K\"ahler spaces}
\label{sec-twist-qk}
A quaternion-K\"ahler manifold is a $4n$ (real) dimensional manifold with a Riemannian metric $g_\cQ$ and holonomy group contained in $Sp(n)\times SU(2)$ \cite{quatman}. 

Unlike the name suggests, in general, such a manifold is not K\"ahler and does not have an integrable almost complex structure. 
Instead, it has a triplet of locally defined almost complex structures $J^i,\, i=1,2,3$ satisfying the quaternionic algebra 
\be
\label{qk-algebra}
J^i \,J^j =\varepsilon^{ijk} J^k-\delta^{ij},
\ee
with $\varepsilon^{ijk}$ the Levi-Civita symbol. These define the vector of quaternionic 2-forms 
\be
\vec\omega_\cQ(X,Y)=g_\cQ(\vec{J}X,Y),
\ee
which are locally defined, non-closed analogs of K\"ahler 2-forms. If we denote $\vec{p}$ the $SU(2)$ part of the Levi-Civita connection associated to $g_\cQ$, then $\vec \omega_\cQ$ is covariantly closed with respect to it 
\be
\label{closed2forms}
\de \vec \omega_\cQ + \vec p \times \vec \omega_\cQ = 0,
\ee
where we used the notation $(\vec a \times \vec b)^i = \varepsilon^{ijk} a^j \wedge b^k $. Furthermore, it is proportional to the $SU(2)$ part of the curvature
\be
\label{omega-curv}
\de \vec p + \hf\, \vec p \times \vec p = \frac{\nu}{2}\, \vec \omega_\cQ,
\ee  
where the proportionality coefficient is related to the Ricci scalar curvature through $R=4n(n+2) \nu$. In particular, when $\nu\to0$, $\cQ$ becomes hyperk\"ahler, but we are mostly interested in the case of negative curvature, $\nu<0$. 

The equation \eqref{omega-curv} allows to compute $\vec \omega_\cQ$ and thus $g_\cQ$ from $\vec p$. However, direct computations of any of these quantities is very difficult. The alternative is to go to the twistor space of $\cQ$ which can be described by a set of holomorphic functions.

\subsection{Four dimensional case}
\label{subsec-twist-Prz}

Before presenting the actual twistor space technique for \qk spaces, let's present another characterization relevant for the four dimensional case. This case can be dealt with quite nicely using the Przanowski description. In chapter \ref{chap-NS5HM} we perform a cross-check where we compare the metric computed from the latter description with the one computed using the twistor space. Therefore, it is relevant to explain the Przanowski description already in the present section. 

In \cite{Przanowski:1984qq} it was proven that locally, on a four-dimensional self-dual Einstein space
(which is a characterization of QK geometry in four dimensions) with a negative curvature,
one can always find complex coordinates $z^\alpha$ ($\alpha=1,2$) such that the metric takes the following form
\be
\de s^2_z[h] = -\frac{6}{\Lambda}\(h_{\alpha \bar \beta}\de z^\alpha \de \bz^\beta + 2 e^h |\de z^2|^2\),
\label{metPrz}
\ee
where $h_{\alpha}= {\partial h}/{\partial z^\alpha}$, etc. This metric is completely determined by
a single real function $h(z^\alpha, \bz^\alpha)$ that must satisfy the following non-linear partial differential equation
\bea
\Prz_z[h]\equiv h_{1\bar1}h_{2\bar2} - h_{1\bar2}h_{\bar1 2}+\(2h_{1\bar1}-h_{1}h_{\bar1}\)e^{h}=0.
\label{Prz-master-equation}
\eea

Of course, the equation is too complicated to be solved in general. However, the problem significantly simplifies
if one already knows a solution $\hpert$ describing some QK space and one is interested in linear deformations of this space.
The point is that such deformations are governed by the {\it linearization} of \eqref{Prz-master-equation}
around $\hpert$ \cite{Alexandrov:2006hx,Alexandrov:2009vj}
\be
\bigl(\Delta+1\bigr)\( r^2\delta h\)=0,
\label{linPrz}
\ee
where $\Delta$ is the Laplace-Beltrami differential operator defined by the perturbative metric and we expanded $h=\hpert+\delta h$ keeping only terms linear in $\delta h$. And this linear equation is much easier to solve. In the case we are interested in, the perturbative metric is given in \eqref{1lmetric} and defines the differential operator
\be
\Delta=\!
\frac{r^2}{r+2c}\[
(r+c)\p_r^2+\frac{16(r+2c)^2}{r+c}\, \p_{\sigma}^2
+\frac{2}{\tau_2}\left|(\tzeta+\tau\zeta)\p_\sigma+\tau\p_{\tzeta}-\p_\zeta\right|^2
\!\!-\frac{r+2c}{r}\, \p_r
\].
\label{Laplacereal}
\ee

\section{The twistor space}
\label{sec-twistor}
There is a very natural way to define the twistor space. One starts with the triplet $J^i$ on $\cQ$ which generate a two-sphere of almost complex structures (on $\cQ$) obtained as a normalized linear combination of the components of $\vec J$
\be
J(t,\bar t) = \frac{1-t\bar t}{1+ t \bar t} \, J^3+ \frac{t+\bar t}{1+t\bar t} \, J^2 + \frac{\I (\bar t -t)}{1+t\bar t} \, J^1, \qquad t\in \CP,
\ee
parametrized by $t\in \CP$. The twistor space $\cZ_\cQ$ is simply the canonical $\CP$-bundle over $\cQ$ with $t$ as the fiber coordinate. A point on $\cZ_\cQ$ is essentially a point on $\cQ$ with a choice of almost complex structure. 

The resulting bundle is a $2n+1$ complex dimensional K\"ahler space. In the following, we will work in a local patch on the base and we will omit the corresponding coordinates in the arguments of functions.

The main characteristic of the twistor space is that it admits a \textit{holomorphic contact} structure. This means that it has a holomorphic 1-form $\cX$ such that $\cX \wedge \(\de \cX\)^n$ is a nowhere-vanishing holomorphic top-form. This contact form will be of great importance to us, so let us construct it. We start with the globally defined (1,0)-form 
\be
\label{Dt-1form}
\De t=\de t + p^{+} - \I t p^{3} + t^2 p^{-},
\ee
where $p^{\pm}=-\hf (p^1\mp p^2)$ are certain combinations of the components of the $SU(2)$ connection. Then we define the holomorphic contact 1-form as
\be
\label{contact-1form-canonical}
\cX^{\[i\]} = \frac{4}{\I t}e^{\Phi^{\[i\]}} \De t,
\ee
where $\Phi^{\[i\]}(t)$ is determined by the requirement that $\cX^{\[i\]}$ is holomorphic and is locally defined hence the patch index $\[i\]$. We call this function $\Phi^{\[i\]}$ the \textit{contact potential} and it plays a central role in the description of the twistor space. 
A crucial feature of the contact structure is that, locally and under a proper choice of coordinates, it can be written in the canonical form 
\be
\label{cX-Darboux}
\cX^{\[i\]} = \de \alpi + \xii^\Lambda \,\de \txii_\Lambda,
\ee 
where $\Lambda=0\dots n-1$. The functions on the right hand side are a set of holomorphic coordinates on $\cZ_\cQ$ and they are called \textit{Darboux coordinates}. They are extremely important in our construction as they allow to determine the geometry of the twistor space and thus of the quaternionic base. They are only defined locally and develop singularities, namely branch cuts and poles, when extended beyond the patches where they are defined.

Another important feature of the twistor space is that it has a real structure. This makes it manifest that the resulting metric is real. This structure is defined by the action of the antipodal map $\tau (t)=-1/\bar t$, and it must be compatible with the other structures of $\cZ_\cQ$. Namely, it should preserve the holomorphic contact form in a sense we will see below. 

\section{Patches and transition functions}
\label{sec-twist-transition}
In order to facilitate working with the real structure we will choose a covering $\hat \cU_i$ of $\cZ_\cQ$ that is adapted to it.
Since we work in a specific patch on the base space, we will only consider the projection of $\hat \cU_i$ on $\CP$, that we denote $\cU_i$. We require that the antipodal map preserve the covering through the relation
\be
\label{rlty-patch}
\tau(\cU_i)=\cU_{\bi}.
\ee
Then, the compatibility of the real structure with the holomorphic contact structure is expressed
on the contact 1-form as 
\be
\label{rel-cond}
\overline{\tau\(\cX^{\[i\]}\)}=\cX^{\[\bi\]},
\ee
and similarly on Darboux coordinates.

In order to go from one patch to another we apply holomorphic contact transformations to the coordinates on $\cZ_\cQ$. These are transformations which preserve the contact structure and they are generated by holomorphic transition functions $H^{\[ij\]}\(\xii,\txij,\alpj\)$. These transition functions depend on initial "position" and final "momentum" $\(\xii, \txij, \alpj\)$ coordinates, which will make the contact transformations complicated. For Darboux coordinates the gluing conditions are written as \cite{Alexandrov:2009zh}
\be
\label{glu-Darboux}
\begin{split}
	\xij^\Lambda =& \xii^\Lambda -\partial_{\txij_\Lambda}H^{\[ij\]}+\xij^\Lambda \partial_{\alpj}H^{\[ij\]},
	\\
	\txij_\Lambda =& \txii_\Lambda + \partial_{\xii^\Lambda}H^{\[ij\]},
	\\
	\alpj =& \alpi + H^{\[ij\]} - \xii^\Lambda \partial_{\xii^\Lambda}H^{\[ij\]},
\end{split}
\ee
and they imply the transformation
\be
\label{CHP3-cX-O2}
\cX^{\[i\]}=f_{ij}^2\,\cX^{\[j\]},
\ee
on the contact 1-form, where $f_{ij}=1-\partial_{\alpj}H^{\[ij\]}$. 

Up to now, we only imposed that the transition functions be holomorphic but there are other constraints they need to satisfy. First, one expects the transition function $H^{\[ji\]}$ to generate the inverse contactomorphism to $H^{\[ij\]}$. Second, on the overlap of three patches $\cU_{i}\cup \cU_j\cup \cU_k$ one should get the same result by using either the transition function $H^{\[ik\]}$ or the combination $H^{\[ij\]}$ followed by $H^{\[jk\]}$. And finally, in order to satisfy \eqref{rel-cond} for Darboux coordinates, we need to have 
\be
\label{rel-cond-Hij}
\tau\(H^{\[ij\]}\) = H^{\[\bi\bj\]}.
\ee

The transition functions play an analogous role to the potential for K\"ahler spaces. Their knowledge is enough to reconstruct the geometry of the twistor space as well as that of the base space. In fact, given a covering of $\cZ_\cQ$, the corresponding transition functions and their behaviour around $t=0$ and $t=\infty$, it is in principle possible to solve the set of equations \eqref{glu-Darboux} and thus reconstruct the Darboux coordinates. In practice, it is much better to work with the integral form of the gluing conditions
\be
\label{glu-integral-Darboux}
\begin{split}
\xii^\Lambda (t) =& A^\Lambda + t^{-1}Y^\Lambda -t \bar Y^{\Lambda} -\hf \sum_{j}\oint_{C_j}\frac{\de t'}{2\pi \I t'}\frac{t'+t}{t'-t}\(\partial_{\txij_\Lambda}H^{\[ij\]}+\xij^\Lambda \partial_{\alpj}H^{\[ij\]}\),
\\
\txii_\Lambda(t) =& B_\Lambda + \hf \sum_{j}\oint_{C_j} \frac{\de t'}{2\pi \I t'}\frac{t'+t}{t'-t} \partial_{\xii^\Lambda}H^{\[ij\]},
\\
\alpi            =& B_\alpha +\hf \sum_{j}\oint_{C_j} \frac{\de t'}{2\pi \I t'}\frac{t'+t}{t'-t} (H^{\[ij\]}-\xii^\Lambda\partial_{\xii^\Lambda}H^{\[ij\]}) - 2\I c_\alpha \log t,
\end{split}
\ee
where $C_j$ is a contour going around $\cU_j$ in the counter-clockwise direction and the sum goes over all patches $\cU_j$. The real parameters $A^\Lambda,B_\Lambda,B_\alpha$ and complex $Y^\Lambda$ parametrize the $4n+1$ dimensional space of solutions.
They can be viewed as coordinates on the $4n$-dimensional base $\cQ$ after absorbing the extra degree of freedom by rotating $t$ such that $Y^0$ becomes real. The numerical parameter $c_\alpha$ is part of the so-called \textit{anomalous dimensions}\footnote{Anomalous dimensions originate from asymptotic conditions on the Darboux coordinates. For a complete discussion see \cite{Alexandrov:2008nk}.}
that modify the Darboux coordinates. These account for perturbative corrections to the metric $g_\cQ$ and in the setup of chapter \ref{chap-NS5HM} the $c_\alpha$ contribution will be enough to describe all of them \cite{Alexandrov:2008nk,Robles-Llana:2006ez}. Although these integral equations are very difficult to solve in general, one can follow an iterative approach by starting from a given expression and then computing corrections around it order by order \cite{Alexandrov:2009zh}. 

Importantly, we also have an integral representation for the contact potential
\be
\label{int-contact-pot}
\Phi^{\[i\]}(t) = \phi - \hf \sum_{j}\oint_{C_j} \frac{\de t'}{2\pi \I t'}\frac{t'+t}{t'-t} \log\(1-\partial_\alpj H^{\[ij\]}\),
\ee
wher $\phi$ is the $t$-independent part and it reads
\be
\label{contact-tindep}
\phi= \frac{\frac{1}{8\pi}\sum_j \oint_{C_j}\frac{\de t }{t}\(t^{-1}Y^\Lambda -t\bar Y^\Lambda\)\partial_{\xii^\Lambda}H^{\[ij\]} +c_\alpha}{2\cos\[\frac{1}{4\pi}\sum_j \oint_{C_j}\frac{\de t}{t} \log\(1-\partial_\alpj H^{\[ij\]}\)\]}.
\ee

In conclusion, the twistor data is given by open patches $\cU_i$ and a minimal set of transition functions between them $H^{\[ij\]}$, that verify some constraints. However, there is a potential problem when Darboux coordinates develop branch cut singularities that extend beyond multiple patches. In that case, the naive contours going around a single patch $\cU_i$ are not closed anymore and we need to devise complicated alternatives. Luckily, in \cite{Alexandrov:2008ds} it was shown that the formulas \eqref{glu-integral-Darboux} remain valid using the (now open) contours $C_i$ going around $\cU_i$.

\section{Procedure to get the metric}
\label{sec-twist-metric}
The general procedure to derive the metric on a QK manifold $\cM$ from the knowledge of Darboux coordinates \eqref{glu-integral-Darboux} and the contact potential \eqref{int-contact-pot} on its twistor space $\cZ_\cM$ was described in detail in \cite{Alexandrov:2008nk,MR1001707}.
Here we present it in the form adapted to the twistor description of $\cM_H$ given in the previous subsection.
It consists of several steps:
\begin{enumerate}
	\item
	At the first step, we redefine the coordinates $\txi_\Lambda$ and $\talp$ into\footnote{The transformation \eqref{redefDC}
		can be seen as a contact transformation defining Darboux coordinates in the patch around $t=0$.}
	\be
	\begin{split}
		\txi^{\[+\]}_\Lambda=&\, \txi_\Lambda-F_\Lambda(\xi),
		\\
		\ai{+}=&\,-\hf\( \talp +\xi^\Lambda\txi_\Lambda\)+F(\xi).
	\end{split}
	\label{redefDC}
	\ee
	The advantage of the new coordinates is that their expansion around the north pole $t=0$ of $\CP$ does not
	contain singular $t^{-1}$ terms.
	As a result, we can write the following Laurent expansion for the set of holomorphic coordinates on
	the twistor space and the contact potential \eqref{eqchip}
	\be
	\begin{split}
		\xi^\Lambda=&\,\xi^\Lambda_{-1}t^{-1}+\xi^\Lambda_0+\xi^\Lambda_1 t+O(t^2),
		\\
		\txi^{\[+\]}_\Lambda=&\,\txi^{\[+\]}_{\Lambda,0}+\txi^{\[+\]}_{\Lambda,1}\,t+O(t^2),
		\\
		\ai{+}=&\,4\I c\log t+\ai{+}_{0}+\ai{+}_1 t+O(t^2),
		\\
		\Phi=&\, \Phi_0+\Phi_1 t +O(t^2).
	\end{split}
	\label{expDc}
	\ee

	\item
	In terms of the new coordinates, the contact one-form \eqref{cX-Darboux} is given by
	\be
	\cX=\de\ai{+}+\xi^\Lambda\de\txi^{\[+\]}_\Lambda.
	\label{cXpole}
	\ee
	Substituting the expansions \eqref{expDc} into this expression
	and comparing it with the canonical form $Dt$ \eqref{Dt-1form} using \eqref{contact-1form-canonical}, one finds the components of
	the SU(2) connection
	\be
	\begin{split}
		p^+ &=\frac{\I}{4}\, e^{-\Phi_0}
		\, \xi^\Lambda_{-1}  \de\txi^{[+]}_{\Lambda,0}\, ,
		\\
		p^3 &= -\frac{1}{4}\, e^{-\Phi_0} \left( \de\alpha^{[+]}_0 +
		\xi^\Lambda_0   \de\txi^{[+]}_{\Lambda,0} +
		\xi^\Lambda_{-1}   \de\txi^{[+]}_{\Lambda,1}  \right) -\I \Phi_1 p^+\, .
	\end{split}
	\label{connection}
	\ee

	\item
	Then one computes the triplet of quaternionic 2-forms \eqref{omega-curv} with $\nu=-8$.
	In particular, for $\omega^3$ the formula reads
	\be
	\omega^3 = -4\({\rm d} p^3-2\I  p^+ \wedge p^-\).
	\label{Kform}
	\ee
	
	\item
	Next, one specifies the almost complex structure $J^3$ by providing a basis of (1,0) forms on $\cM$.
	Such a basis was found in \cite{Alexandrov:2008nk} and, after some simplifications, it takes the following form
	\be
	\label{defPi}
	\pi^a =\de \(\xi^a_{-1} /\xi^0_{-1} \) ,
	\qquad
	\tilde\pi_\Lambda= \de\txi^{\[+\]}_{\Lambda,0}  ,
	\qquad
	\tilde\pi_\alpha = \frac{1}{2\I}\,\de\ai{+}_0 +2c \,\de\log\xi^0_{-1} .
	\ee
	
	\item
	Finally, the metric is recovered as $g(X,Y) = \omega^3(X,J^3 Y)$.
	To do this in practice, one should rewrite $\omega^3$, computed by \eqref{Kform} in terms of differentials of
	coordinates on $\cM$, in the form which makes explicit that it is of (1,1) Dolbeault type.
	Using for this purpose a basis $\pi^X$, which can be taken to be $(\pi^a,\tilde\pi_\Lambda,\tilde\pi_\alpha)$, the final result should look like
	\be
	\omega^3=2\I g_{X\bY} \pi^X\wedge \bar\pi^{Y},
	\label{metom}
	\ee
	from which the metric readily follows as $\de s^2 =2 g_{X\bY} \pi^X \otimes \bar\pi^{Y}$.
	
\end{enumerate}

\chapter{Modular Forms}
\label{chap-mod}

Modular forms are holomorphic functions on the upper half-plane $\IH$ which satisfy a certain transformation property under the modular group $\SL$. In mathematics, their importance can hardly be overstated and it is nicely illustrated by the following quote attributed to Martin Eichler (1912-1992) : "There are five fundamental operations in mathematics: addition, subtraction, multiplication, division, and modular forms." They are also very important in string theory as they emerge from weak-strong dualities like S-duality in type IIB.

Every time modularity appears in physics should be seen as a great opportunity to get valuable information. Indeed, the theory of modular forms is very rigid and thus allows to determine exact expressions for functions with minimal initial data. In general, to fix a modular form we only need to know a finite number of its Fourier coefficients. Furthermore, modular forms have stringent conditions on the asymptotic growth of their coefficients which is often very useful in physics, for example in considerations about entropy. 

More broadly, there are numerous generalizations to modular forms and we will consider some of them in our work. These related notions offer the double advantage of maintaining the rigid structure of modularity while dealing with a bigger set of objects. In this chapter we will first introduce and define the standard notion of modularity. Then, we will introduce \textit{mock} modular forms that first appeared in Ramanujan's last letter to Hardy. Schematically, these are functions with an anomalous modular transformation that can be cured, through a specific procedure, at the price of introducing a holomorphic anomaly. Afterwards, we introduce \textit{Jacobi}, \textit{Jacobi-like} and \textit{mock Jacobi} forms. Finally, we provide a standard way to construct these objects using theta series and generalized error functions.

\section{Modular forms and Vector-valued modular forms}
\label{sec-mod-VV}
\subsection{Definition and examples}
There are two key objects essential for understanding and defining modular forms. First, the upper half-plane
\be
\label{def-upper-halfplane}
\IH=\{\tau\in\IC| \Im \tau>0\},
\ee
which contains all complex numbers with strictly positive imaginary part. A standard practice is to denote $\tau_2\equiv \Im \tau$ for $\tau\in\IH$. On this space we can define an action of the modular group
\be
\SL = \left\{g=\begin{pmatrix}
	a& b\\ c& d
\end{pmatrix} \bigg| a,b,c,d\in \IZ,  ad-bc=1 \right\}.
\ee
An important fact about this group is that it is generated by two transformations, namely the translation and inversion matrices given by
\be
\label{mod-TS}
T=\begin{pmatrix}
	1& 1\\ 0& 1
\end{pmatrix}
,\qquad 
S=\begin{pmatrix}
	0 & -1\\1& 0 
\end{pmatrix}.
\ee
The action of $\SL$ on the upper half-plane can be defined through
\be
\tau \to g(\tau)=\frac{a\tau+b}{c\tau+d}. 
\ee
We can then do a quick computation to show that $g(\tau)$ is indeed in $\IH$. We start with 
\be
	\Im \frac{a\tau+b}{c\tau+d}=\Im \frac{(a\tau+b)(c\btau+d)}{|c\tau+d|^2}
	= \Im \frac{ac|\tau|^2+bd+ad\tau+bc\btau}{|c\tau+d|^2},
\ee
where we multiplied by the conjugate of the denominator and expanded the numerator. Then, we drop the real terms and remain with
\be
\Im \frac{a\tau+b}{c\tau+d}
= \frac{(ad-bc)\Im \tau}{|c\tau+d|^2}
=\frac{\Im\tau}{|c\tau+d|^2}>0,
\ee
where we used the fact that $\det g= ad-bc=1$.

Now we get to the main definition. A modular form of weight $k\in \IZ$ is a holomorphic function on $\IH$, bounded when $\Im\tau\to\infty$, that transforms in the following way 
\be
\label{mod-transformation}
f\(\frac{a\tau+b}{c\tau+d}\)=(c\tau+d)^k f(\tau),
\ee
for any element $g\in\SL$. It follows immediately that a modular form is necessarily $1$-periodic and thus can be written as a Fourier series
\be
\label{modular-Fourier}
f(\tau) = \sum_{n=0}^{\infty}a_n \q^n,
\ee
where we used the standard notation $\q=e^{2\pi \I \tau}$ verifying $|q| <1$. 

We can see that constant functions are modular forms of weight $0$ and the zero function is a modular form for any weight.
A natural question to ask, however, is whether for a given weight $k$ a non-trivial modular form exists. Let us denote $\cM_{k}$ the set of modular forms of weight $k$. 
First, it is easy to see, by applying the transformation $\begin{pmatrix}
	-1&0\\0&-1
\end{pmatrix}$, that the set is trivial for odd weight $\cM_{2k+1}=\{0\}$. On the other hand, when $k<0$ there are also no modular forms in $\cM_k$ since their transformation property contradicts the condition that they are bounded when $\tau_2\to\infty$. Finally, when $k\geq0$ and even, a less trivial but well-known result is that the dimension of the vector-space $\cM_k$ is always finite and given by 
\be
\label{dim-modForms}
\dim \cM_{k}=
\begin{cases}
	\lfloor k/12\rfloor\, \qquad\qquad &{\rm if }\, k\equiv 2 \mod 12, \\
	\lfloor k/12\rfloor+1\, \qquad  &{\rm if }\, k\not\equiv 2 \mod 12.
\end{cases}
\ee
This already gives a peek into the the power of modular forms. We only need to compute a few Fourier coefficients \eqref{modular-Fourier} to determine completely a function that transforms with a given weight.

A family of important examples of modular forms, for any admissible weight, are the Eisenstein series $E_k$\footnote{There are numerous different normalization for the Eisentein series. We choose the one where the constant coefficient is normalized to $1$.}. 
For even $k\geq4$, they are defined as
\be
\label{mod-Eisenstein}
E_k(\tau)=1-\frac{2k}{B_k}\sum_{n=1}^{\infty}\sigma_{k-1}(n)\q^n,
\ee
where the $B_k$ appearing in the denominator are the Bernoulli numbers given by
\be
\label{mod-Bernoulli-nb}
\frac{t}{\exp(t)-1}=\sum_{k=0}^{\infty} \frac{B_k}{k!}t^k.
\ee
and $\sigma_k(n)=\sum_{\substack{d|n\\d>0}}d^k$ is the sum of divisors of $n$ to the power $k$. The function $E_k$ is a modular form of weight $k$.

The family of functions $E_k$ can actually be extended with the interesting case $k=2$ of the second Eisenstein series 
\be
\label{E2-def}
E_2(\tau)=1-24\sum_{n=1}^{\infty}\sigma_1(n)\q^n.
\ee
This function is not a modular form, but it is a \textit{quasi-modular} form. It has an anomalous transformation 
\be
\label{E2-anom}
E_2\(\frac{a\tau+b}{c\tau+d}\)=(c\tau+d)^2\(E_2(\tau)+\frac{6}{\pi\I}\frac{c}{c\tau+d}\),
\ee
that can be cured at the price of adding a non-holomorphic term. Namely,
\be
\widehat E_2(\tau,\btau)=E_2(\tau)-\frac{3}{\pi\Im(\tau)}
\ee
transforms as a modular form of weight 2. 

Two of the Eisenstein series 
\be
\begin{split}
E_4(\tau)=&1+240 q + 2160 q^2 + \dots,
\\
E_6(\tau)=& 1 - 504q -16632 q^2 + \dots,
\end{split}
\ee
play a special role in the theory of modular forms.
In fact these two elements are extremely interesting because they freely generate the full ring of modular forms $\oplus_{k=4}^{\infty}\cM_k$. Any function $f\in \cM_k$ can be written in a unique way as a sum of terms $E_4^a E_6^b$ where $4a+6b=k$. We can see that for a given $k$ the number of solutions $(a,b)$ is exactly equal to the dimension of $\cM_k$ and then it remains only to show that the different monomials $E_4^a E_6^b$ are not linearly dependent. 

The first weight for which we have more than one solution is $k=12$ which is solved by $(3,0)$ and $(0,2)$. This is directly related to a very important modular form, namely the \textit{discriminant} form $\Delta(\tau)$ defined as
\be
\label{discriminant-Delta}
\Delta(\tau)
=\frac{E_4^3-E_6^2}{1728}
=q\prod_{n=1}^{\infty}(1-q^n)^{24}.
\ee 
This is the appropriate point to say that modular forms can be defined with different growth conditions at infinity. In fact, \eqref{discriminant-Delta} is the first example of a modular form that vanishes when $\tau\to\I\infty$. It is thus called a \textit{cusp} form, and we denote the space of all cusp forms $S_k$. We can equivalently say that $S_k$ is the space of modular forms with vanishing  constant coefficient $c_0$ in \eqref{modular-Fourier}. In that case, the remaining Fourier coefficients have even more stringent growth conditions, namely the $a_n$ grow as $O(n^{k-1})$ for a standard modular form and as $O(n^{k/2})$ for cusp forms. 

Nonetheless, we are mostly interested in a bigger space of functions called \textit{weakly holomorphic} modular forms and denoted $\cMw_k$. A function $f$ is called weakly holomorphic if it transforms as a modular form and behaves as $f(\tau)=O(q^{-N})$ when $\tau\to\I\infty$, for some $N$. The Fourier expansion of such functions typically starts at $-N$
\be
f(\tau)=\sum_{n=-N}^{\infty}a_n q^n, \qquad f\in \cMw_k
\ee 
and the $a_n$ grow as $O(e^{c\sqrt{n}})$ asymptotically. Notice that we can have weakly holomorphic modular forms of negative, but still even, weight $k\in 2\IZ$.

\subsection{Vector Valued modular forms}
The first generalization we consider is when modular forms transform nicely only under a subgroup $\Gamma$ of the full $\SL$. 
We keep the transformation law \eqref{mod-transformation}, require $\Gamma$ to be of finite index\footnote{The index of a subgroup $\Gamma \subset \SL$ is the number of left (or right) cosets of $\Gamma$ in $\SL$.} 
in $\SL$ and impose further growth conditions at the cusps\footnote{The cusps can be defined as the orbit classes of the boundary $\partial \IH=\IQ\cup \{\infty\}$ under $\Gamma$. For $\Gamma=\SL$ there is only one (and we often take $\infty$ as its representative).}.
Unlike what we have seen above, this new class of modular forms for a subgroup $\Gamma$ can have odd or half-integral weight $k\in \hf\IZ$. 

However, we can trade this notion for that of \textit{Vector Valued} (VV) modular forms, which is the one we use in our work. 

Let $V$ be a vector space on $\IC$ of dimension $d$ and $f$ a vector valued function on it. We denote its components $f_\mu(\tau)$ where $\mu=1,\dots,d$. Let $M_{\mu\nu}(g)$ be a representation of the full modular group $\SL$ on $V$, we say that $f$ is a vector valued modular form of weight $k\in\hf\IZ$ and multiplier system $M_{\mu\nu}$ if we have the transformation rule
\be
f_\mu(g(\tau))=(c\tau+d)^k \sum_{\nu}M_{\mu \nu}(g) f_{\nu}(\tau),
\ee
under the full modular group. Since $\SL$ is generated by the two matrices $T,S$ \eqref{mod-TS}, it suffices to provide $M_{\mu\nu}(T),M_{\mu\nu}(S)$. The growth conditions now need to be applied only when $\tau\to\I\infty$, but on all of the components.

An interesting example of a modular form with a non-trivial multiplier system and half integer weight is the Dedekind eta function 
\be
\label{Dedekind-eta}
\eta(\tau)=q^{1/24}\prod_{n=1}^{\infty}(1-q^n),
\ee
which is a modular form of weight $1/2$ and multiplier system given by
\be
M^{(\eta)}(T)=e^{\frac{\pi \I}{12}},
\qquad
M^{(\eta)}(S)=e^{-\frac{\pi\I}{4}}.
\ee
It is related to the discriminant function \eqref{discriminant-Delta} through $\eta(\tau)=\Delta(\tau)^{1/24}$.

In what follows we will invariably call modular form any object that transforms appropriately under $\SL$, possibly with a non-trivial multiplier system. In particular, we will always allow the weight to be half integer	.

\section{Mock modular forms}
\label{sec-mod-mock}
Mock modular forms give a generalization of the notions we studied in the previous chapter while retaining a similarly rigid structure. They appear naturally in physics \cite{Alexandrov:2025sig}, including in this present work, and in many areas in mathematics. Therefore, they are a very nice tool to have.

A mock modular form is a holomorphic function that transforms in an almost modular way, it has an \textit{anomaly}. This failure to transform as in \eqref{mod-transformation} is completely determined by an ordinary holomorphic modular form called, \textit{shadow}. We will describe this more precisely in this section, drawing a distinction between \textit{pure} and \textit{mixed} mock modular forms. And then we will generalize this notion further to include \textit{higher depth} mock modular forms.

Mock modular forms can be found with all of the different growth conditions at the cusps seen above as well as with a non-trivial multipler system. This being said, we will define them in the simplest case of a trivial multiplier system and without choosing specific growth conditions. 

Let $g$ be a holomorphic modular form of weight $2-k$. An important object for what follows is the function $g^*$ defined as the former's so-called \textit{non-holomorphic Eichler integral}
\be
\label{mock-Integral-Eichler}
g^*(\tau,\btau) = (-2\pi\I)^{1-k}\int_{\btau}^{-\I \infty}\frac{\overline{g(\bz)}}{(\tau-z)^k}\de z.
\ee

A mock modular form $h$ of weight $k$ and shadow $g$ is a holomorphic function on the upper half-plane such that the combination 
\be
\label{mock-compl}
\hat{h}(\tau,\btau)=h(\tau)+g^*(\tau,\btau),
\ee
transforms according to \eqref{mod-transformation}. The above combination gives a \textit{completion} $\hat{h}$ of the mock modular form $h$. Starting from this completion one can apply the shadow operator 
\be
\label{mock-shadow-op}
(4\pi\tau_2)^k\partial_{\btau} \hat{h} = -2\pi\I \,\bar g,
\ee
where $\tau_2=\Im \tau$,
and obtain the (complex conjugate of the) shadow function $g$. 

More precisely, we just defined \textit{pure} mock modular forms. It turns out however that most known examples of pure mock modular forms have negative powers of $q$. For this reason and because many examples appearing in physics do not fit this definition, we will introduce \textit{mixed} mock modular forms\footnote{In fact in \cite{Dabholkar:2012nd} it is argued that mixed mock modular forms are the most natural objects to consider. See the discussion at the end of subsection 7.3 in that paper.}.

A mixed mock modular form $h$ is a holomorphic function such that there exist numbers $r_j$ and corresponding holomorphic modular forms $f_j, g_j$ with weights $k+r_j$ and $2+r_j$, such that the completion 
\be
\label{mock-compl-mixed}
\hat h = h+\sum_{j}f_j g_j^*,
\ee
transforms as a modular form of weight $k$ and $g_j^*$ are the non-holomorphic Eichler integrals corresponding to $g_j$. Acting with the shadow operator \eqref{mock-shadow-op} now gives
\be
\label{mock-drv-hdepth}
(4\pi\tau_2)^k\partial_{\btau} \hat{h}= -2\pi\I\sum_{j}(4\pi\tau_2)^{k+r_j}f_j \bar g_j.
\ee
From now on we take the term mock modular forms to mean this more general class of functions. 

Finally, we define a mock modular form of \textit{depth} n to be a holomorphic function whose modular completion can be constructed such that in \eqref{mock-drv-hdepth} the functions $f_j$ are replaced by completions $\hat f_j$ of mock modular forms of depth $n-1$ \cite{bringmann2017higher,Alexandrov:2025sig}.

\section{Jacobi forms}
\label{sec-mod-Jacobi}
Jacobi forms were introduced and studied by Eichler and Zagier \cite{MR781735}. In the first subsection, we start by defining the standard notion of Jacobi forms. Then, we provide some of their important properties and define mock Jacobi forms. We will not prove most of the claims, but we refer the interested reader to \cite{Dabholkar:2012nd}. Then, in subsection \ref{subsec-Jacobi-like} we will define Jacobi-like forms and prove a few propositions about them.

\subsection{Jacobi and mock Jacobi forms}
\label{subsec-Jacobi-mockJ}
Let $\vph(\tau,z)$ be a holomorphic function in $\tau,z\in\IH\times\IC$. We say that $\vph$ is a \textit{Jacobi form} of weight $k$ and index $m$ if it obeys the transformation properties,
\be
\label{Jacobi-modular-transf}
\vph\(\frac{a\tau+b}{c\tau+d},\, \frac{z}{c\tau+d}\)=(c\tau+d)^k e^{\frac{2\pi \I m c z^2}{c\tau+d}}\vph(\tau,z)
\ee
for any element of $\SL$, and 
\be
\label{Jacobi-elliptic-transf}
\vph(\tau, z+\lambda\tau+\mu)=e^{-2\pi \I m(\lambda^2\tau+2\lambda z)}\vph(\tau,z)
\ee
for any integers $\lambda,\mu\in\IZ$. Setting the elliptic variables $z = 0$, the above reduces to the definition of a modular form.

We see that $\vph$ is doubly periodic in $\tau,z$ and can be written using $q=e^{2\pi\I\tau},\, y=e^{2\pi \I z}$ as a Fourier series
\be
\label{Jacobi-Fourier}
\vph(\tau,z)=\sum_{n,r}c(n,r)\,q^n\,y^r.
\ee
This form allows to define the different types of Jacobi forms with different growth at the cusps. The function $\vph(\tau,z)$ is called a \textit{holomorphic Jacobi form} if 
\be
c(n,r)=0 \,\qquad \text{unless}\,\qquad 4mn\geq r^2.
\ee
It is called a \textit{Jacobi cusp form} if it satisfies the stronger condition 
\be
c(n,r)=0\,\qquad \text{unless}\,\qquad 4mn>r^2. 
\ee
We give it the name \textit{weak} Jacobi form if it satisfies instead the condition 
\be
c(n,r)=0 \,\qquad \text{unless}\,\qquad n\geq0,
\ee
and finally, we call it \textit{weakly holomorphic Jacobi form} if it satisfies
\be
c(n,r)=0 \,\qquad \text{unless}\,\qquad  n \geq n_0,
\ee
for some integer $n_0\in\IZ$. These conditions dictate the asymptotic growth of the Jacobi form as can be seen in \cite{Dabholkar:2012nd}.

The above definitions form a chain of implications, from the weakest to the strongest. Accordingly, they induce a series of spaces that can be ordered by inclusion. We will see that the space corresponding to weak Jacobi forms, which obviously has a ring structure, can be generated by a small number of functions that we will define later. Furthermore, it is clear that a weakly holomorphic Jacobi form, for a given $n_0$, can be lifted to a weak Jacobi form after a multiplication by an appropriate power of the discriminant form \eqref{discriminant-Delta}. More generally, like for modular forms, the asymptotic growth of Fourier coefficients of a Jacobi form is dictated by the growth at the cusps of the form itself and the precise relations can be found in \cite{Dabholkar:2012nd}.

We will now define mock Jacobi forms. In order to do that, we start by presenting a characterization of Jacobi forms that will prove instrumental, namely the theta expansions. 

The elliptic transformation property \eqref{Jacobi-elliptic-transf} is equivalent to the periodicity of the Fourier coefficients
\be
c(n,r)=C(\Delta,r),
\ee
where $\Delta=4mn-r^2$ is called the discriminant of the monomial $q^ny^r$ and $C(\Delta,r)$ depends only on $r\mod 2m$. Using this property with the Fourier expansion of $\vph$ \eqref{Jacobi-Fourier} allows us to find its theta expansion 
\be
\label{Jacobi-theta-expansion}
\vph(\tau,z)=\sum_{\ell\in\IZ/2m\IZ} h_\ell(\tau) \theta^{(m)}_{\ell}(\tau,z),
\ee
where 
\be
\label{Thetaexp-theta}
\theta^{(m)}_{\ell}(\tau,z) = \sum_{r\in2m\IZ+\ell} q^{\frac{r^2}{4m}}\,y^r
\ee
is an element of a very special type of Jacobi forms, namely the unary theta series that will be defined later. Meanwhile, the coefficients $h_\ell(\tau)$ are modular forms of weight $k-\hf$ and are weakly holomorphic, holomorphic or cuspidal if $\vph$ is a weak Jacobi form, a holomorphic Jacobi form or a Jacobi cusp form, respectively. More precisely, the vector of all $2m$ functions $h_\ell$ transforms like a modular form of weight $k-\hf$ under $\SL$. Conversely, any expression \eqref{Jacobi-theta-expansion} where the $h_\ell$ transform as a VV modular form of weight $k-\hf$, with an adapted multiplier system, under $\SL$ gives a Jacobi form of weight $k$ and index $m$. 

With this in mind, let's proceed to define (depth $n$) mock Jacobi forms. Let $\vph(\tau,z)$ be a holomorphic function in $\tau,z$, satisfying only the transformation property \eqref{Jacobi-elliptic-transf}. Hence, we relax the equation \eqref{Jacobi-modular-transf} and instead require that the coefficients $h_\ell(\tau)$ in the expansion \eqref{Jacobi-theta-expansion} be (depth $n$) mock modular forms of weight $k-\hf$. Then, $\vph$ is a (depth $n$) mock Jacobi form and its completion
\be
\label{Jacobi-mock-completion}
\hat\vph(\tau,\btau,z)=\sum_{\ell\in\IZ/2m\IZ} \hat h_\ell(\tau) \vth_{m,\ell}(\tau,z),
\ee
given in terms of the completions $\hat h_\ell$, transforms as in \eqref{Jacobi-modular-transf} and \eqref{Jacobi-elliptic-transf}.

Lastly, we provide a few well-known generalizations relevant to our work: we allow for multiple elliptic parameters, non-holomorphicity and vector valuedness. 

Let $\vph_\mu(\tau, \bfz)$ be a finite set of (in general, non-holomorphic\footnote{In the case of non-holomorphic Jacobi forms the theta expansion \eqref{Jacobi-theta-expansion} may not exist.}) functions, labelled by $\mu$, on $\IH \times \IC^n$, and $x \cdot y = \sum_{i,j=1}^n Q_{ij}x_i y_j$ denotes a bilinear form on $\IC^n$. Then $\varphi_\mu(\tau, \bfz)$ is a vector valued (multi-variable) Jacobi form of weight $(w, \bar{w})$ and (matrix valued) index $mQ_{ij}$ if it satisfies the following transformation properties:
\be
\label{Jacobi-VV-transf}
\begin{split}
\vph_\mu(\tau, \bfz + \bfa \tau + \bfb) =& e^{-2\pi im(\bfa^2 \tau + 2\bfa \cdot \bfz)} \vph_\mu(\tau, \bfz), \quad \bfa, \bfb \in \IZ^n,
\\
\vph_\mu \left( \frac{a\tau + b}{c\tau + d}, \frac{\bfz}{c\tau + d} \right) =& (c\tau + d)^w (c\bar{\tau} + d)^{\bw} e^{2\pi im\frac{c \bfz^2}{c\tau + d}} \sum_{\nu} M_{\mu\nu}(g) \vph_\nu(\tau, \bfz),
\end{split}
\ee
where we use boldface symbols for $n$-dimensional vectors, $M_{\mu\nu}(\rho)$ is a multiplier system and we omit the $\btau$ dependence. Notice that we allow in the second transformation property, the modular transformation, for an anti-holomorphic weight $\bw$.

The matrices $M_{\mu\nu}(\rho)$ must furnish a representation of the group $\SL$ generated by two transformations $T,S$ in \eqref{mod-TS} and thus in order to define the multiplier system, it is enough to specify it for these two generators. In summary, to characterize the modular behavior of a Jacobi form, it is sufficient to provide its modular weight $(w, \bw)$, index $m$ and two matrices $M_{\mu\nu}(T)$ and $M_{\mu\nu}(S)$.

\subsection{Jacobi-like forms}
\label{subsec-Jacobi-like}
In this subsection we will study a class of functions, that play a crucial role in \cite{Alexandrov:2024wla}, closely related to Jacobi forms but that have no elliptic transformation property. This class of functions is called Jacobi-like forms. They were first introduced in \cite{Zagier:1994,Cohen1997} and further studied in the mathematics literature in \cite{Lee2001,Lee2006,2010arXiv1007.4823C}. 

A Jacobi-like function can be defined as a formal power series in $X$, with the coefficients being functions on the upper half-plane 
\be
\label{Jacobi-like-Laurent}
\Phi(\tau,X)=\sum_{n=n_0}^{\infty}\phi_n(\tau)(2\pi\I X)^n,
\ee
with $n_0\in\IZ$, and they transform under $\SL$ as 
\be
\label{Jacobi-like-transformation}
\Phi\(\frac{a\tau+b}{c\tau+d},\frac{X}{(c\tau+d)^2}\)=(c\tau+d)^k\,e^{\frac{cX}{c\tau+d}}\,\Phi(\tau,X).
\ee

We can relate them more closely to Jacobi forms by the identification $X\to 2\pi\I m z^2$ which gives us a function $\phi(\tau,z)$ that obeys the second transformation in \eqref{Jacobi-modular-transf} with weight $k$ and index $m$. Building on this relation, we will procede in a way that is most convenient for our purposes, namely we will call Jacobi-like form any function $\phi(\tau,z)$ that satisfies the second transformation property of \eqref{Jacobi-VV-transf}, with potentially multiple elliptic parameters and a non-trivial multiplier system. Additionally, we allow for expansions in \textit{odd} powers of $z$\footnote{The original definition of Jacobi-like forms implies that they have an expansion in even powers of z. However, once one allows for a non-trivial multiplier system, there is no much sense keeping this condition. In practice, the functions appearing in the main text are functions of $(\tau,z)$ and $\bfz = (z_1, . . . , z_n)$ which behave as Jacobi-like forms with respect to $z$ with an expansion in even powers, up to an overall shift in the power, and as usual Jacobi forms with respect to $\bfz$.}.

Next, we are interested in modular properties of the expansion coefficients of a Jacobi-like form around a point where one of the variables, say $z_1$, vanishes. For simplicity, we restrict ourselves to the case n = 1 of only one elliptic variable and set $Q_{11}=1$, but the propositions below are trivially generalized to $n > 1$ provided the quadratic form is factorized, i.e. $Q_{1i} = 0$ for $i > 1$.

We start with a particularly simple and interesting example of Jacobi-like forms that can be constructed from the quasi-modular form $E_2(\tau)$ as 
\be
\label{E2-Jacobilike}
e^{-\frac{m}{3}\pi^2E_2(\tau)z^2}.
\ee
It has weight 0 and index $m$. The automorphy factor in the modular transformation \eqref{Jacobi-modular-transf} is given precisely by the modular anomaly of $E_2$! This combination is very helpful in constructing a modular differential operator and getting information about the modular properties of the coefficients of the Laurent expansion \eqref{Jacobi-like-Laurent} of a Jacobi-like form.

\begin{proposition}\label{prop-JacobiE2}
	Let $\vph(\tau,z)$ be a Jacobi-like form of modular weight $w$ and index $m$. Then
	\be
	\tilde \vph(\tau,z) = e^{\frac{m}{3}\pi^2 E_2(\tau) z^2} \vph(\tau,z)
	\label{def-tphi}
	\ee	
	is a Jacobi-like form of the same weight and vanishing index, and the coefficients of its Laurent expansion
	$\tilde \vph(\tau,z)=\sum_{n=n_0}^\infty h_n(\tau) z^n$
	are modular forms of weight $w+n$.
\end{proposition}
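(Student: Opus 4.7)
The plan is to verify directly that $\tilde\vph$ defined by \eqref{def-tphi} has the claimed transformation law, and then extract the modular properties of the Laurent coefficients by comparing $z$-powers on both sides.

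First I would compute how the exponential prefactor $e^{\frac{m}{3}\pi^2 E_2(\tau) z^2}$ transforms under $\SL$. This is the key step: I would substitute the anomalous transformation \eqref{E2-anom} of $E_2$ and the rescaling $z \mapsto z/(c\tau+d)$ into the exponent. The quasi-modular anomaly of $E_2$ produces an extra term proportional to $\frac{c}{c\tau+d}$ which, after multiplication by $\frac{m}{3}\pi^2 z^2/(c\tau+d)^2$ and simplification using $\frac{6}{\pi\I} \cdot \frac{m\pi^2}{3} = -2\pi\I m$, yields exactly a factor $e^{-\frac{2\pi\I m c z^2}{c\tau+d}}$. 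This is the inverse of the automorphy factor appearing in the Jacobi-like transformation \eqref{Jacobi-modular-transf} of $\vph$.

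Once this cancellation is in hand, combining the transformations of the prefactor and $\vph$ gives
\be
\tilde\vph\!\left(\tfrac{a\tau+b}{c\tau+d},\tfrac{z}{c\tau+d}\right) = (c\tau+d)^w\, \tilde\vph(\tau,z),
\ee
so $\tilde\vph$ is Jacobi-like of weight $w$ and index $0$, as claimed.

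For the second claim, I would simply expand both sides of the above identity as Laurent series in $z$. The left-hand side becomes $\sum_n h_n\!\left(\tfrac{a\tau+b}{c\tau+d}\right) (c\tau+d)^{-n} z^n$, while the right-hand side is $\sum_n (c\tau+d)^w h_n(\tau) z^n$. Identifying coefficients of $z^n$ yields the transformation $h_n(g\tau) = (c\tau+d)^{w+n} h_n(\tau)$, proving that each $h_n$ is modular of weight $w+n$ (with the multiplier system inherited from $\vph$).

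The only genuinely delicate point is the numerical cancellation in the first step: one must keep careful track of factors of $\pi$ and $\I$ to confirm that the coefficient $\tfrac{m}{3}\pi^2$ in front of $E_2(\tau) z^2$ is exactly the one that turns the $E_2$-anomaly into the index-$m$ automorphy factor of a Jacobi form. No further convergence or growth issues arise since everything is treated as a formal Laurent series in $z$.
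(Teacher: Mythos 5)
Your proof is correct and follows essentially the same route as the paper: the paper's (much terser) proof relies on the observation, stated just before the proposition, that $e^{-\frac{m}{3}\pi^2 E_2(\tau)z^2}$ is a Jacobi-like form of weight $0$ and index $m$ because the $E_2$-anomaly reproduces the index automorphy factor — which is exactly the cancellation you verify explicitly — and then matches $z$-coefficients as you do. Your numerical check $\frac{m}{3}\pi^2\cdot\frac{6}{\pi\I}=-2\pi\I m$ is right, so there is nothing to correct.
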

\begin{proof}
	The fact that the index of $\tilde \vph$ vanishes comes from the fact that we multiplied by precisely a function with the opposite index to that of $\vph$. Then it is clear, by applying the modular transformation to $\tilde \vph$ and writing its $z$-expansion on both sides, that each coefficient $h_n$ transforms in a modular way.
\end{proof}

This simple observation can be used to prove

\begin{proposition}
	\label{prop-Jacobi-n}
	Let $\vph(\tau,z)$ be a Jacobi-like form of modular weight $w$ and index $m$, and having a smooth limit at $z\to 0$.
	We define the following differential operator
	\be 
	\cD_m^{(n)}=\sum_{k=0}^{\lfloor n/2\rfloor}c_{n,k}E_2^k(\tau) \,\p_z^{n-2k},
	\label{defcDmn}
	\qquad
	c_{n,k}=\frac{n!\(\frac{2m}{3}\pi^2\)^k }{(2k)!!(n-2k)!}\, .
	\ee
	Then 
	\be
	\phi^{(n)}(\tau)\equiv \cD_m^{(n)}\vph(\tau,z)|_{z=0},	
	\label{coeffJac}
	\ee
	is a vector valued modular form of weight $w+n$.
\end{proposition}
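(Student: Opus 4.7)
The proof will proceed by showing that the operator $\cD_m^{(n)}$ is tailored exactly to extract the $n$-th Taylor coefficient of the modified form $\tilde \vph$ constructed in Proposition \ref{prop-JacobiE2}. The plan is to reduce the statement to Proposition \ref{prop-JacobiE2} by an explicit algebraic manipulation.

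First, by Proposition \ref{prop-JacobiE2}, the function $\tilde\vph(\tau,z) = e^{\frac{m}{3}\pi^2 E_2(\tau) z^2}\vph(\tau,z)$ is a Jacobi-like form of weight $w$ and vanishing index, whose Taylor coefficients $h_n(\tau)$ are (vector valued) modular forms of weight $w+n$. Inverting this relation gives $\vph=e^{-\alpha z^2}\tilde\vph$, where for brevity I set $\alpha(\tau)=\tfrac{m}{3}\pi^2 E_2(\tau)$.

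Next, I would compute the derivatives at $z=0$ using the Leibniz rule. Since $\partial_z^p(e^{-\alpha z^2})\big|_{z=0}$ vanishes for odd $p$ and equals $p!\,(-\alpha)^{p/2}/(p/2)!$ for even $p$, one finds
\be
\partial_z^j\vph\big|_{z=0}
=\sum_{q=0}^{\lfloor j/2\rfloor}\frac{j!\,(-\alpha)^q}{q!}\,h_{j-2q}(\tau).
\label{plan-derivs}
\ee
Plugging \eqref{plan-derivs} into the definition \eqref{defcDmn} of $\cD_m^{(n)}$ and collecting terms by the index $N=n-2k-2q$ of $h_N$, the contribution of $h_N$ is proportional to $E_2^\ell$ with $\ell=(n-N)/2$, times the sum
\be
\sum_{k=0}^{\ell}c_{n,k}\,(n-2k)!\,\frac{(-1)^{\ell-k}\bigl(\tfrac{m}{3}\pi^2\bigr)^{\ell-k}}{(\ell-k)!}.
\ee
Using $(2k)!!=2^k k!$ one sees that $c_{n,k}=\tfrac{n!}{k!(n-2k)!}\bigl(\tfrac{m}{3}\pi^2\bigr)^k$, so the $(n-2k)!$ and the powers of $\tfrac{m}{3}\pi^2$ collapse nicely, producing the binomial sum $\sum_{k=0}^\ell\binom{\ell}{k}(-1)^{\ell-k}=\delta_{\ell,0}$. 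Hence all $h_N$ with $N<n$ drop out, and one is left with $\cD_m^{(n)}\vph\big|_{z=0}=n!\,h_n(\tau)$, which is modular of weight $w+n$ by Proposition \ref{prop-JacobiE2}.

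The argument is essentially a combinatorial identity, so the only real subtlety is bookkeeping: one must track the precise form of $c_{n,k}$ (in particular the double factorial $(2k)!!$) carefully so that the $E_2$-dependence from $\alpha^{\ell-k}$ cleanly combines with $E_2^k$ from $\cD_m^{(n)}$ into the single power $E_2^\ell$ that then cancels via the binomial identity. The argument is componentwise in the vector-valued case and carries over verbatim, so no additional ingredient is needed there.
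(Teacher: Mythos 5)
Your proof is correct and follows essentially the same route as the paper: both reduce the statement to Proposition \ref{prop-JacobiE2} by identifying $\cD_m^{(n)}\vph|_{z=0}$ with (a multiple of) the $n$-th Taylor coefficient of $\tilde\vph=e^{\frac{m}{3}\pi^2E_2 z^2}\vph$. The paper simply runs the Leibniz expansion in the opposite direction — expanding $e^{+\alpha z^2}$ to show $\partial_z^n\tilde\vph|_{z=0}=\cD_m^{(n)}\vph|_{z=0}$ directly via $(2k-1)!!$ — whereas you expand $e^{-\alpha z^2}$ and invoke the binomial cancellation; the bookkeeping in both versions checks out.
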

\begin{proof}
	If $\phi_\mu$ is smooth at small $z$, the same is true for the function $\tilde \vph_\mu$ \eqref{def-tphi}
	and hence its expansion coefficients are given by the derivatives with respect to $z$ evaluated at $z=0$.
	According to Proposition \ref{prop-JacobiE2}, such derivatives $\partial_z^n\tilde \vph_\mu(\tau,0)$
	transform as modular forms of weight $w+n$.
	On the other hand, we have
	\be
	\partial_z^n\tilde \vph_\mu(\tau,0) = \sum_{k=0}^{\lfloor n/2 \rfloor} \frac{n!}{(2k)!(n-2k)!} \, 
	\left.\(e^{-\frac{x^2}{2}}\frac{\de^{2k}}{\de^{2k} x}\, e^{\frac{x^2}{2}}\)\right|_{x=0}\, 
	\(\frac{2m}{3} \pi^2 E_2(\tau)\)^k \partial_z^{n-2k} \vph_\mu(\tau,0).
	\ee
	Taking into account that $\left.\(e^{-\frac{x^2}{2}}\frac{\de^{2k}}{\de^{2k} x}
	\, e^{\frac{x^2}{2}}\)\right|_{x=0}=(2k-1)!!$, where $n!!=n(n-2)\dots1$,
	we conclude that
	\be
	\partial_z^n\tilde \vph_\mu(\tau,0) =\cD_m^{(n)}\vph_\mu(\tau,z)|_{z=0},
	\ee
	which proves the statement of the proposition.
\end{proof}

These two propositions, albeit simple, will be of great benefit in the construction of \cite{Alexandrov:2024wla}, explained in chapter \ref{chap-DTall}.

\section{Theta series}
\label{sec-mod-theta}
In the previous section we defined Jacobi forms and saw briefly how they can be related to a VV modular form through their theta expansion. In fact the theta series \eqref{Thetaexp-theta} providing that link is part of a bigger class of functions called generalized theta series. We will start in subsection \ref{subsec-Generalized-theta} by constructing these series from lattices and then providing a theorem conditioning their modular properties. Then, first we restrict to 1-dimensional lattices and find the standard family of unary theta series in subsection \ref{subsec-unary-theta}. Second, in \ref{subsec-basis-weakJacobi} we give a family of functions that generate the ring of weak Jacobi forms. Whereas in subsection \ref{subsec-theta-convergence} we discuss the issue of convergence when the lattice is not of negative definite signature. 
Finally, in the last part we define the generalized error functions, introduced in \cite{Alexandrov:2016enp}, that are very important for constructing modular completions of the theta series from the previous subsection.

\subsection{Generalized theta series}
\label{subsec-Generalized-theta}

Let us define
\be
\vth_{\bbmu}(\tau, \zbbm;\bbLambda, \Phi, \pbbm) = \sum_{\kbbm\in \bbLambda + \bbmu + \hf \pbbm} 
(-1)^{\pbbm\ast \kbbm} \Phi\(\sqrt{2\tau_2}\(\kbbm+\bbbeta \) \) \q^{-\hf \kbbm^2} 
e^{2\pi\I\zbbm\ast \kbbm},
\label{gentheta}
\ee
where $\q=e^{2\pi\I\tau}$, 
$\bbLambda$ is a $d$-dimensional lattice equipped with a bilinear form $\xbbm\ast \ybbm$ 
such that the associated quadratic form has signature $(n,d-n)$ and is integer valued, 
$\pbbm$ is a characteristic vector satisfying 
$\kbbm\ast(\kbbm+\pbbm) =0 \mod 2$ for $\forall \kbbm\in \bbLambda$, 
$\bbmu \in \bbLambda^{\star}/\bbLambda$, and $\zbbm=\bbalpha-\tau\bbbeta\in\IC^d$
with $\bbalpha,\bbbeta\in\IR^d$.
(We follow the convention to denote $d$-dimensional quantities by blackboard letters.)
The Vign\'eras theorem \cite{Vigneras:1977} asserts that if the kernel $\Phi(\xbbm)$ satisfies 
suitable decay properties as well as
the following differential equation 
\be
\[\p_\xbbm^2+2\pi (\xbbm\ast\p_\xbbm-\lambda)\]\Phi(\xbbm)=0,
\label{Vigdif}
\ee
where $\lambda$ is an integer parameter,
then the theta series is a vector valued (multi-variable) Jacobi form\footnote{More precisely, 
	the elliptic transformation \eqref{Jacobi-elliptic-transf} can generate an additional sign factor 
	$(-1)^{\pbbm\ast(\abbm+\bbbm)}$.} 
with the following modular properties: 
\be
\begin{split}
	w(\vth) &= \(\hf \,(d+\lambda), -\hf\, \lambda\),
	\qquad\qquad
	m(\vth) = - \hf\, \ast,
	\\
	\Mi{\vth}_{\bbmu \bbnu}(T) &= e^{-\pi \I\(\bbmu + \hf \pbbm\)^2 } \delta_{\bbmu\bbnu},
	\qquad
	\Mi{\vth}_{\bbmu \bbnu}(S) = \frac{e^{(2n-d)\frac{\pi\I}{4}}}{\sqrt{|\bbLambda^\ast/\bbLambda|}} \,
	e^{\frac{\pi \I}{2}\pbbm^2} e^{2\pi \I \bbmu \ast \bbnu},
\end{split}
\label{mult-genth}
\ee
where by $\ast$ in the formula for the index we mean the matrix representing the bilinear form.
The multiplier system here forms the Weil representation of the modular group, defined by the lattice $\bbLambda$.
A particularly interesting case is when the multi-variable Jacobi form is reduced to the usual
Jacobi form by choosing $\zbbm=\bbtheta z$ where $\bbtheta\in \bbLambda$.
Then the index is a scalar and is given by
\be 
m(\vth)=-\hf\, \bbtheta^2.
\ee

Note that if we restrict to modular forms (without elliptic parameter) the expression \eqref{gentheta} becomes 
\be
\vth_\bbmu(\tau)=\sum_{\kbbm\in\bbLambda+\bbmu} \Phi(\kbbm;\tau_2)\, e^{-\pi\I\tau \kbbm^2}.
\label{gentheta-noz}
\ee
This can be obtained simply by setting $\zbbm=0$ in the expression giving a Jacobi form, but we prefer to write here in order to illustrate its simplicity.

Another interesting case of Jacobi forms is when the lattice has a negative definite\footnote{Usually, this case corresponds to the positive definite quadratic forms. In our conventions it is negative due to the minus sign in the power of $\q$ in \eqref{gentheta}.}
quadratic form. 
In that case, the convergence is ensured by the powers of $\q$ and the kernel can be taken to be constant $\Phi(\xbbm)=1$. This choice of $\Phi$ trivially solves \eqref{Vigdif} with $\lambda=0$ and the resulting $\vth_\bbmu$ is then a (multi-variable) Jacobi form holomorphic in $\tau$.

The case of lattices with indefinite signature is more nuanced and we will treat it below. We start however by looking at the case of 1-dimensional negative definite lattices.

\subsection{Unary theta series}
\label{subsec-unary-theta}

Let us specialize \eqref{gentheta} to the case where $d=1$, $n=0$ and $\bbLambda= m\IZ$ so that 
the bilinear form is $\xbbm\ast\ybbm=-\xbbm\ybbm/m$.
We also take $\pbbm=-mp$ where $p$ is odd for odd $m$ and arbitrary integer otherwise, 
$\zbbm=-m z$ and $\Phi=1$ (hence $\lambda=0$).
Then the theta series reduces to 
\be
\label{Vignerasth}
\vths{m,p}_{\mu}(\tau,z)=
\!\!\!\!
\sum_{{k}\in \IZ+\frac{\mu}{m}+\frac{p}{2}}\!\!
(-1)^{mpk}\, \q^{m k^2/2}\, y^{mk},
\ee
where again $y=e^{2\pi\I z}$.
Its modular properties follow from \eqref{mult-genth} and are given by
\be
\begin{split}
	&w(\vths{m,p}) = 1/2,
	\qquad\qquad
	m(\vths{m,p}) = m/2,
	\\
	\Mi{m,p}_{\mu\nu}(T)&= e^{\frac{\pi\I}{m} \(\mu+\tfrac{mp}{2} \)^2}\,\delta_{\mu\nu}\, ,
	\qquad
	\Mi{m,p}_{\mu\nu}(S)=
	\frac{e^{-\frac{\pi\I}{2}\, m p^2}}{\sqrt{\I m}}\,
	e^{-2\pi\I\,\frac{\mu\nu}{m}}.
\end{split}
\label{eq:thetatransforms}
\ee

For even $m=2\kappa$, we can choose $p=0$. Then \eqref{Vignerasth} gives the function that appeared in \eqref{Thetaexp-theta},
\be
\ths{\kappa}_\mu(\tau,z)\equiv \vths{2\kappa,0}_{\mu}(\tau,z)
=\sum_{k\in 2\kappa\IZ+\mu}\q^{\frac{k^2}{4\kappa} }\,y^{k}.
\label{deftheta}
\ee
If $z=0$, we will simply drop the last argument and write $\ths{\kappa}_\mu(\tau)$.
The multiplier system \eqref{eq:thetatransforms} reduces to
\be
\Mi{\ths{\kappa}}_{\mu \nu}(T) =
e^{\frac{\pi \I}{2\kappa} \,\mu^2}\delta_{\mu \nu},
\qquad
\Mi{\ths{\kappa}}_{\mu \nu}(S)=
\frac{1}{\sqrt{2\I\kappa}} \, e^{-\frac{\pi \I}{\kappa}\,\mu \nu} .
\label{mult-theta}
\ee

On the other hand, specifying $m=p=1$ in \eqref{Vignerasth},
we reproduce the standard Jacobi theta function 
\be
\label{free-theta-1}
\theta_1(\tau,z) =\vths{1,1}_{0}(\tau,z)= \sum_{k\in Z+\hf} \q^{k^2/2} (-y)^k
\ee
whose modular properties are
\be
\begin{split}
	w(\theta_1) &= 1/2,
	\qquad\quad\ \
	m(\theta_1) = 1/2,
	\\
	\Mi{\theta_1}(T) &= e^{\pi \I/4} ,
	\qquad
	\Mi{\theta_1}(S) = e^{-3\pi \I/4}.
\end{split}
\label{multi-theta-N}
\ee
This function was used in \cite{Alexandrov:2024wla} to construct a suitable, more tractable, extension of the original problem. In particular, we used its behavior around $z=0$ where it vanishes and has the expansion,
\be
\theta_1(\tau,z) = -2\pi \eta(\tau)^3 z -4\pi^2 \I \eta'(\tau) \eta(\tau)^2 z^3 + O(z^5).
\label{theta1-z3}
\ee
$\eta(\tau)$ is the Dedekind eta function introduced in \eqref{Dedekind-eta}.

\subsection{Generating family of (scalar) weak Jacobi forms}
\label{subsec-basis-weakJacobi}
As we promised, we will define here a generating family of scalar weak Jacobi forms with a trivial multiplier system. There are mainly two great benefits to having such a set of functions. First, starting from the knowledge of a few Fourier coefficients of a weak Jacobi form, one can find a closed form expression for it. Second, given a function $\phi(\tau,z)$ holomorphic in both variables, one can check if it's a weak Jacobi form without computing its, often complicated, transformation behavior. 

We start by defining the four standard Jacobi theta functions,
\be
\label{Jacobi-theta}
\begin{split}
\vth_1(\tau,z) =& -\I \sum_{n=-\infty}^{\infty} (-1)^n q^{\hf (n-\hf)^2}\,y^{n-\hf},\\
\vth_2(\tau,z) =& \sum_{n=-\infty}^{\infty}q^{\hf(n-\hf)^2}\, y^{n-\hf},\\
\vth_3(\tau,z) =& \sum_{n=-\infty}^{\infty} q^{\hf n^2} \, y^n,\\
\vth_4(\tau,z) =& \sum_{n=-\infty}^{\infty} (-1)^n q^{\hf n^2}\, y^n.
\end{split}
\ee
These functions are very useful for the theory of Jacobi forms. The first one $\vth_1$ is the unary theta series with $p=m=1$ defined in \eqref{free-theta-1}. Additionally, after a rescaling of the arguments, $\vth_2$ and $\vth_3$ are the two components of the unary theta series \eqref{deftheta} defined with $\kappa=1$.

To these functions one often associates the set of \textit{Thetanullwerte}. These are obtained simply by taking $z=0$ in the above expressions and they are modular forms of weight $\hf$ for certain subgroups of $\SL$.  
For $\vth_1$ the resulting Thetanullwerte vanishes but for the other three we get
\be
\label{Thetanullwerte}
\begin{split}
	\vth_2(\tau,0) \equiv \vth_2(\tau) =& 2\sum_{n=0}^{\infty}q^{\hf(n+\hf)^2},\\
	\vth_3(\tau,0) \equiv \vth_3(\tau) =& 1+2\sum_{n=1}^{\infty} q^{\hf n^2},\\
	\vth_4(\tau,0) \equiv \vth_4(\tau) =& 1+2\sum_{n=1}^{\infty} (-1)^n q^{\hf n^2}.
\end{split}
\ee
Then we can define the weak Jacobi forms \cite{Dabholkar:2012nd} needed to generate the full ring,
\be
\label{Basis-phis}
\begin{split}
\varphi_{-1,2}(\tau,z) =& \,\frac{\vth_1(\tau,2z)}{\eta^3(\tau)},\\
\varphi_{-2,1}(\tau,z) =& \,\frac{\vth_1^2(\tau,z)}{\eta^6(\tau)},\\
\varphi_{0,1}(\tau,z)  =& \, 4\(\frac{\vth_2(\tau,z)}{\vth_2(\tau)}+\frac{\vth_3(\tau,z)}{\vth_3(\tau)}+\frac{\vth_4(\tau,z)}{\vth_4(\tau)}\).
\end{split}
\ee

Finally, the ring of weak Jacobi forms with integer weight and index is generated by complex polynomials
\be
\IC[E_4,E_6,\varphi_{-1,2},\varphi_{-2,1},\varphi_{0,1}],
\ee
where $E_4,E_6$ are the Eisenstein series defined previously. Note that this is not a basis as the three forms defined in \eqref{Basis-phis} are not independent instead they are related through \cite{Dabholkar:2012nd}
\be
432 \vph_{-1,2}^2=\vph_{-2,1}\(\vph_{0,1}^3-3E_4\,\vph_{-2,1}^2\vph_{0,1}+2E_6\vph_{-2,1}^3\).
\ee

\subsection{Convergence of indefinite theta series}
\label{subsec-theta-convergence}

Let us now consider theta series with a quadratic form of indefinite signature, namely $(n,d-n)$ with $0<n<d$.
In this case the kernel $\Phi(\xbbm)$ cannot be trivial since otherwise the theta series would be divergent.
On the other hand, a non-trivial kernel would spoil holomorphicity in $\tau$ unless $\Phi(\xbbm)$ is a piece-wise constant 
function\footnote{It is possible also to multiply it by a homogeneous polynomial in $\xbbm$ since 
	the non-holomorphic dependence can then be canceled by multiplying by a power of $\tau_2$.}.
Thus, the only way to get a convergent and holomorphic theta series is to take $\Phi(\xbbm)$ 
to be a combination of sign functions.
The following theorem from \cite{Alexandrov:2020bwg} (generalizing results of 
\cite{Nazaroglu:2016lmr,Alexandrov:2017qhn,funke2017theta}) 
provides the simplest choice of such kernel

\begin{theorem}
	\label{th-conv}
	Let the signature of the quadratic form be $(n,d-n)$ and
	\be
	\Phi(\xbbm)=\prod_{i=1}^n\Bigl(\sgn(\vbbm_{1,i}\ast \xbbm)-\sgn(\vbbm_{2,i}\ast\xbbm)\Bigr).
	\label{kerconverge}
	\ee
	Then the theta series \eqref{gentheta} is convergent provided:
	\begin{enumerate}
		\item
		for all $i\in \Zv_{n}=\{1,\dots,n\}$,
		$\vbbm_{1,i}^2,\vbbm_{2,i}^2\geq 0$;
		\item
		for any subset $\cI\subseteq \Zv_{n}$ and any set of $s_i\in \{1,2\}$, $i\in\cI$,
		\be
		\Delta_{\cI}(\{s_i\})\equiv \mathop{\det}\limits_{i,j\in \cI}(\vbbm_{s_i,i}\ast \vbbm_{s_j,j})\geq 0;
		\label{condDel}
		\ee
		\item
		for all $\ell\in\Zv_n$ and any set of $s_i\in \{1,2\}$, $i\in\Zv_n\setminus\{\ell\}$,
		\be
		\vbbm_{1,\ell\perp\{s_i\}}\ast \vbbm_{2,\ell\perp\{s_i\}}>0,
		\label{condscpr}
		\ee
		where $_{\perp\{s_i\}}$ denotes the projection on the subspace orthogonal to the span of
		$\{\vbbm_{s_i,i}\}_{i\in \Zv_n\setminus\{\ell\}}$;
		\item
		if $\vbbm_{s,i}^2=0$, then $\exists \alpha_{s,i}\in \IR$ such that $\alpha_{s,i}\vbbm_{s,i}\in \bbLambda$.
	\end{enumerate}
	\label{th-converg}
\end{theorem}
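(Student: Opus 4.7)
The plan is to reduce convergence of the indefinite theta series to an estimate showing that the kernel $\Phi$ vanishes outside a region where the indefinite quadratic form $-\kbbm^2$ is bounded above by a positive-definite form (up to an additive constant). Once such a bound is in place, the factor $|\q|^{-\hf\kbbm^2}$ decays like a Gaussian in $\kbbm$, which together with boundedness of the sign-product kernel gives absolute convergence of the series in \eqref{gentheta}. I would first expand the product kernel as
\be
\Phi(\xbbm)=\sum_{\{s_i\}\in\{1,2\}^n}(-1)^{|\{i:s_i=2\}|}\prod_{i=1}^n\sgn(\vbbm_{s_i,i}\ast\xbbm),
\nonumber
\ee
so it suffices to analyse, for each choice of $\{s_i\}$, the support region $R_{\{s_i\}}\subset\IR^d$ on which all $n$ sign differences $\sgn(\vbbm_{1,i}\ast\xbbm)-\sgn(\vbbm_{2,i}\ast\xbbm)$ are simultaneously nonzero.

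The core geometric step is to show that on $R_{\{s_i\}}$ any vector $\xbbm$ satisfies
\be
-\xbbm^2\;\le\;-c\,\|\xbbm_\perp\|^2+C\,\|\xbbm_\parallel\|^2
\nonumber
\ee
after decomposing $\xbbm=\xbbm_\parallel+\xbbm_\perp$ with $\xbbm_\parallel$ in $V=\Span\{\vbbm_{s,i}\}$, where $\|\cdot\|$ is any auxiliary Euclidean norm on $\bbLambda_\R$. Condition 1 guarantees that $V$ contains no vector of negative norm, while condition 2 applied to all subsets $\cI$ ensures $V$ itself is positive semi-definite, so the negative part of the total quadratic form is entirely captured by $\xbbm_\perp$. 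The sign constraints defining $R_{\{s_i\}}$ force the components of $\xbbm_\parallel$ along the $\vbbm_{s_i,i}$ to stay within a bounded set once $\xbbm_\perp$ is fixed, because they must lie between hyperplanes through the origin that are separated by $\vbbm_{s_i,i}\ast\xbbm_\perp$. Condition 3, expressed via the projections $\vbbm_{s,\ell\perp\{s_i\}}$, is exactly what converts this ``between hyperplanes'' statement into a linear bound $\|\xbbm_\parallel\|\le K\,\|\xbbm_\perp\|+K'$ with $K,K'$ finite. Combining this with $\xbbm_\perp^2$ being bounded below by a positive-definite form on the quotient $V^\perp/V^\perp\cap V$ gives the desired inequality.

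Condition 4 is needed precisely when some $\vbbm_{s,i}^2=0$, where the previous argument degenerates because a null direction inside $V$ is not detected by $\xbbm_\parallel^2$. In that case I would split off the one-dimensional sub-lattice generated by the integral representative of $\alpha_{s,i}\vbbm_{s,i}$; the restriction of the theta sum along this direction reduces to a standard Jacobi-type theta series (because $\vbbm_{s,i}\in\bbLambda_\Q$), which is manifestly convergent for $\tau\in\IH$ thanks to the remaining factor $\q^{-\hf\kbbm^2}$ along the orthogonal complement. Without the rationality of $\vbbm_{s,i}$, the support of $\Phi$ along that null direction would not be commensurate with the lattice and the partial sum would fail to converge, showing condition 4 cannot be dropped.

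The main obstacle will be the third step: carefully checking that the sign constraints in $R_{\{s_i\}}$, for every subset $\cI$ and every assignment $s_i\in\{1,2\}$, translate into the linear control $\|\xbbm_\parallel\|\le K\,\|\xbbm_\perp\|+K'$. I would handle this by induction on $n$, peeling off one factor $\sgn(\vbbm_{1,i}\ast\xbbm)-\sgn(\vbbm_{2,i}\ast\xbbm)$ at a time and using the projected vectors $\vbbm_{s,\ell\perp\{s_i\}}$ from condition 3 to reduce to a lower-rank problem on the orthogonal complement; the base case $n=1$ is elementary Gaussian reduction. The delicate point is ensuring that all $2^n\cdot 2^n$ subset/assignment pairs are simultaneously covered by the hypotheses, so that no residual unbounded direction survives in any cone of the decomposition.
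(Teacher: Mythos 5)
The paper itself does not prove Theorem \ref{th-conv}: it is imported verbatim from \cite{Alexandrov:2020bwg}, so your sketch can only be measured against the strategy of that reference, whose skeleton (induction on $n$ via orthogonal projections, a Zwegers-type base case, lattice-commensurability to tame null directions) your outline does broadly shadow. Two concrete problems remain, however. First, the opening reduction is invalid as stated: after expanding $\Phi$ into $2^n$ products of single sign functions, each summand $\prod_i\sgn(\vbbm_{s_i,i}\ast\xbbm)$ has essentially full support and its theta series diverges on its own, so nothing is gained by treating the terms separately; convergence comes only from the vanishing of the \emph{complete} product outside the intersection of the $n$ wedges where the two signs differ. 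The decomposition you actually need is of that intersection into $2^n$ convex cones according to whether $\vbbm_{1,i}\ast\xbbm\ge 0\ge\vbbm_{2,i}\ast\xbbm$ or the reverse holds; you drift into this picture later, but the stated reduction is not the one you use.

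Second, and this is the genuine gap, the inequality $\|\xbbm_\parallel\|\le K\|\xbbm_\perp\|+K'$ is asserted rather than derived, and the justification offered is vacuous: with $V=\Span\{\vbbm_{s_i,i}\}$ one has $\vbbm_{s_i,i}\ast\xbbm_\perp=0$ by construction, so the sign conditions attached to the selected vectors impose only homogeneous, conical constraints on $\xbbm_\parallel$ and bound nothing; the coupling to $\xbbm_\perp$ enters solely through the complementary vectors $\vbbm_{3-s_i,i}\notin V$. Moreover, even granting such a linear bound with unspecified constants, it does not yield Gaussian decay: $\kbbm^2\le C\|\xbbm_\parallel\|^2-c\|\xbbm_\perp\|^2$ together with $\|\xbbm_\parallel\|\le K\|\xbbm_\perp\|+K'$ gives a negative-definite upper bound only if $CK^2<c$, which is not automatic. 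The actual mechanism at $n=1$ is a Cauchy--Schwarz estimate on the wedge giving the sharp bound $\xbbm^2\le -\frac{\vbbm_1^2\,\vbbm_2^2}{(\vbbm_1\ast\vbbm_2)^2}\,(-\xbbm_{\perp\vbbm_1}^2)$, whose coefficient is positive precisely because $\vbbm_1^2,\vbbm_2^2>0$ and $\vbbm_1\ast\vbbm_2>0$; conditions 2 and 3 are exactly what propagates this sharp estimate through the inductive projections for $n>1$, and verifying that they survive each projection is the substance of the proof. Since you defer precisely this step as ``the main obstacle,'' your argument is complete, at best, for $n=1$, and only after replacing the generic-constant bookkeeping by the sharp estimate. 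Your reading of condition 4 is right in spirit---when $\vbbm_{s,i}^2=0$ the quadratic decay along that ray is lost and one survives only because $\vbbm_{s,i}\ast\kbbm$ ranges over a discrete set, hence is bounded away from zero when nonzero, so the sum along the null direction decays geometrically---though ``reduces to a standard Jacobi theta series'' is not quite the mechanism.
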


Note that the last condition requiring that the (rescaled) null vectors, i.e. satisfying $\vbbm_{s,i}^2=0$, 
that appear in the definition of the kernel belong to the lattice
is important. If such a null vector is present, it is also important to keep the elliptic variable $\zbbm$ generic
because the theta series has poles at the points where $\exists {\kbbm}\in \bbLambda+\bbmu+\hf\,\pbbm$ such that
$\vbbm_{s,i}\ast (\kbbm+\bbbeta)=0$. In particular, theta series involving null vectors are typically 
divergent in the limit $\zbbm\to 0$.

\subsection{Generalized error functions}
\label{subsec-gen-erf}

In the previous subsection, we provided a class of functions $\Phi(\xbbm)$
that define convergent and holomorphic indefinite theta series.
However, in contrast to the usual theta series with negative definite quadratic form, 
they are not modular. This can be seen, for example, from the fact that 
the discontinuities of the signs spoil the Vign\'eras equation \eqref{Vigdif}.
Nevertheless, there is a simple recipe to construct their modular completions 
\cite{Zwegers-thesis,Alexandrov:2016enp,Nazaroglu:2016lmr}.

This is achieved with help of the generalized error functions
introduced in \cite{Alexandrov:2016enp,Nazaroglu:2016lmr} (see also \cite{kudla2016theta}).
They are defined by
\bea
E_n(\cM;\vu)&=& \int_{\IR^n} \de \vu' \, e^{-\pi\sum_{i=1}^n(u_i-u'_i)^2} \prod_{i=1}^n \sign(\cM^{\rm tr} \vu')_i\, ,
\label{generr-E}
\eea
where $\vu=(u_1,\dots,u_n)$ is $n$-dimensional vector and $\cM$ is $n\times n$ matrix of parameters.
However, to get kernels of indefinite theta series we need functions depending on a $d$-dimensional vector
rather than $n$-dimensional one. To define such functions, let $\cV$ be $d\times n$ matrix 
which can be viewed as a collection of $n$ vectors, $\cV=(\vbbm_1,\dots,\vbbm_n)$, 
and it is assumed that these vectors span a positive definite subspace in $\IR^d$ endowed with the quadratic form $\ast$,
i.e. $\cV^{\rm tr}\ast\cV$ is positive definite. We also introduce a $n\times d$ matrix
$\cB$ whose rows define an orthonormal basis for this subspace.
Then we set
\be
\Phi_n^E(\cV;\xbbm)=E_n(\cB\ast \cV;\cB\ast\, \xbbm).
\label{generrPhiME}
\ee
The detailed properties of these functions can be found in \cite{Nazaroglu:2016lmr}.
Most importantly, they do not depend on $\cB$,
solve the Vign\'eras equation \eqref{Vigdif} with $\lambda=0$ and at generic, large $\xbbm$ reduce to
$\prod_{i=1}^n \sgn (\vbbm_i\ast\,\xbbm)$.
Thus, to construct a completion of the theta series whose kernel is a combination of sign functions,
it is sufficient to replace each product of $n$ sign functions by $\Phi_n^E$ with matrix of parameters $\cV$
given by the corresponding vectors $\vbbm_i$.

Finally, if one of the vectors is null, it reduces the rank of the generalized error function.
Namely, for $\vbbm_\ell^2=0$, one has
\be
\Phi_n^E(\{\vbbm_i\};\xbbm)=\sgn (\vbbm_\ell\,\ast\xbbm)\,\Phi_{n-1}^E(\{\vbbm_i\}_{i\in \Zv_{n}\setminus\{\ell\}};\xbbm).
\label{Phinull}
\ee
In other words, for such vectors the completion is not required.

The generalized error functions satisfy an important identity, which generalizes the one given in 
\cite[Eq.(D.13)]{Alexandrov:2018lgp}:
\begin{proposition}
	\label{prop-identPhiE}
	\be
	\Phi_{n}^E\(\{\bfv_i+\delta_{i,n}\bfv_{n-1}\};\bfx\)+\Phi_n^E\(\{\bfv_i+\delta_{i,n-1}\bfv_{n}\};\bfx\) 
	-\Phi_n^E\(\{\bfv_i\};\bfx\) = \Phi_{n-2}^E\(\{\bfv_i\};\bfx\),
	\label{identPhiE}
	\ee
	where for $n=2$ on the r.h.s. we set by definition $\Phi_0^E=1$.
\end{proposition}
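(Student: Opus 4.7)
The strategy is to work directly with the integral representation \eqref{generr-E}--\eqref{generrPhiME} and reduce the claim to an elementary identity for sign functions, followed by a Gaussian reduction in two dimensions. Observe first that the three modified sets of vectors appearing on the left-hand side all span the same $n$-dimensional positive definite subspace, since $\bfv_n \mapsto \bfv_n \pm \bfv_{n-1}$ is a unimodular change of basis. Since $\Phi_n^E$ is independent of the auxiliary orthonormal basis $\cB$ used in its integral representation, the same $\cB$ can be used for all three terms. Moreover, one can arrange the first $n-2$ rows of $\cB$ to form an orthonormal basis $\cB'$ of $\mathrm{span}(\bfv_1,\dots,\bfv_{n-2})$, with the last two rows completing it to a basis of the full $n$-dimensional span.

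With this choice, setting $\bfw_i := \cB\bfv_i \in \IR^n$, the sum of the three terms on the left-hand side becomes the single integral
\be
\int_{\IR^n} \de\vu'\, e^{-\pi|\cB\bfx-\vu'|^2}\,\prod_{i=1}^{n-2}\sgn(\bfw_i\cdot\vu')\,\Bigl[\sgn A\,\sgn(A+B)+\sgn B\,\sgn(A+B)-\sgn A\,\sgn B\Bigr],
\ee
with $A=\bfw_{n-1}\cdot\vu'$ and $B=\bfw_n\cdot\vu'$. A four-case check on the signs of $A$ and $B$ shows that the bracket equals $1$ on the full-measure set $\{A,B,A+B\neq 0\}$, so it may be replaced by $1$ inside the integral. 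Since $\bfw_1,\dots,\bfw_{n-2}$ have vanishing last two components by construction, the remaining sign factors depend only on the first $n-2$ components $\vu'_\parallel$ of $\vu'$; splitting $\vu'=(\vu'_\parallel,\vu'_\perp)$ and $\cB\bfx=(\cB'\bfx,\vu_\perp)$, the Gaussian factorises, the integration over $\vu'_\perp \in \IR^2$ contributes $1$, and what remains is precisely $E_{n-2}(\cB'(\bfv_1,\dots,\bfv_{n-2});\cB'\bfx) = \Phi_{n-2}^E(\{\bfv_i\}_{i=1}^{n-2};\bfx)$. The base case $n=2$ emerges naturally: the whole $\IR^2$ is the orthogonal complement, the Gaussian integration produces $1$, and the right-hand side is $\Phi_0^E=1$ by convention.

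The main obstacle is bookkeeping rather than substance: one must justify that the single basis $\cB$ can be used for all three terms on the left and that its truncation yields a legitimate basis for the lower-dimensional $\Phi_{n-2}^E$, both of which follow from the basis-independence of $\Phi_n^E$ recorded in \cite{Nazaroglu:2016lmr}. The identity itself is then the natural lift of the elementary sign relation $\sgn A\,\sgn(A+B)+\sgn B\,\sgn(A+B)-\sgn A\,\sgn B=1$ to the generalized error functions, and Eq.~(D.13) of \cite{Alexandrov:2018lgp} is recovered as the $n=2$ case.
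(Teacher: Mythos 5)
Your proof is correct. The paper itself states Proposition \ref{prop-identPhiE} without proof (it only remarks that the identity generalizes Eq.~(D.13) of \cite{Alexandrov:2018lgp}), so there is no argument in the text to diverge from; your route --- using basis-independence of $\Phi_n^E$ to evaluate all three terms with one orthonormal frame, reducing the bracket to the pointwise identity $\sgn A\,\sgn(A+B)+\sgn B\,\sgn(A+B)-\sgn A\,\sgn B=1$ on the full-measure set $\{A,B,A+B\neq 0\}$, and then marginalizing the Gaussian over the two directions orthogonal to $\mathrm{span}(\bfv_1,\dots,\bfv_{n-2})$ --- is the natural and essentially canonical argument. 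The two points that needed justification (that the unimodular replacements $\bfv_n\mapsto\bfv_n+\bfv_{n-1}$, $\bfv_{n-1}\mapsto\bfv_{n-1}+\bfv_n$ preserve the positive-definite span so that all three $\Phi_n^E$ are defined with the same $\cB$, and that $\cB$ can be adapted so its last two rows are orthogonal to the first $n-2$ vectors) are both addressed, so no gaps remain.
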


\subsection{A peek into mock modularity}
Let's look at a simple example that will appear in our work. Let's take a theta series $\vth_{\bbmu}(\tau, \zbbm;\bbLambda, \Phi, \pbbm)$ following \eqref{gentheta} with kernel 
\be
\label{Phi-ex}
\Phi(\xbbm)=E_1(\xbbm\ast \vbbm) - \sgn(\xbbm\ast\vbbm').
\ee
Assuming that the condition for convergence are satisfied, the theta series is modular if and only if $\vbbm'$ is a null vector belonging to the lattice. When $\vbbm'$ is not null, then the series can be considered as a term that completes a mock modular Jacobi form! Indeed, provided the lattice contains a null vector $\wbbm$, a mock Jacobi form is given by $\vth_{\bbmu}(\tau, \zbbm;\bbLambda, \Phi', \pbbm)$ with 
\be
\label{Phigi-ex}
\begin{split}
	\Phi'(\xbbm)
	=&\,\sgn(\xbbm\ast\vbbm') - E_1(\xbbm \ast \wbbm ),
	\\
	=& \,\sgn(\xbbm\ast\vbbm') - \sgn(\xbbm \ast \wbbm ),
\end{split}
\ee
where we used \eqref{Phinull}. Then, the fact that $\xbbm=\sqrt{2\tau_2}(\kbbm+\bbbeta)$ has non-holomorphic dependence on $\tau$ becomes irrelevant since inside the sign functions we can simply scale away the $\sqrt{\tau_2}$ dependence and thus ensures holomorphicity. Furthermore, the sum of the two kernels solves \eqref{Vigdif} and thus the sum of the two theta series is modular. Note that $\wbbm$ needs to belong to the lattice, otherwise, the indefinite theta series would diverge.

\chapter{The hypermultiplet metric}
\label{chap-NS5HM}

The geometry of the hypermultiplet (HM) moduli space in type II string theories compactified on a Calabi-Yau threefold serves as a main thread connecting the different parts of this thesis. As established in Chapter \ref{chap-string}, a complete understanding of its metric is very important, as it directly determines the two-derivative effective action of the underlying four-dimensional $N=2$ supergravity.
However, the structure of the HM moduli space, is notoriously complex. In contrast to the vector multiplet (VM) moduli space, whose metric receives no $g_s$ corrections, the HM metric is subject to one-loop as well as all types of non-perturbative corrections in the string coupling. While significant progress has been made in computing the perturbative one-loop correction and non-perturbative corrections arising from D-branes, the contributions from NS5-brane instantons have remained mysterious. The primary goal of this chapter is to elucidate these corrections, presenting an explicit computation of the one-instanton NS5 contribution to the HM metric and detailing the methods used to obtain it.

The main tool at our disposal for constraining these corrections is S-duality of type IIB string theory. A direct application of this symmetry on the \qk metric of the HM is, however, complicated. QK metrics are governed by a set of intricate differential constraints, making a direct construction very difficult. To circumvent this difficulty and efficiently impose the constraints of S-duality, we employ the twistor space formalism, which was presented in Chapter \ref{chap-twist}. This powerful technique allows one to encode the entire QK metric in a set of holomorphic transition functions defined on the corresponding twistor space $\cZ_\cM$. The procedure to reconstruct the metric from this twistorial data is systematic, and was reviewed in detail. It is crucial to note, however, that this powerful construction, as currently understood, presents inconsistencies at the multi-instanton level \cite{Alexandrov:2014mfa, Alexandrov:2014rca}. Our analysis is therefore necessarily restricted to the one-instanton approximation, where the method is well-defined and trustworthy.

To obtain the NS5-instanton corrections, we build upon the twistorial description of these instantons developed in \cite{Alexandrov:2010ca} and apply the systematic procedure of \cite{Alexandrov:2008nk} to derive the explicit deformation of the HM metric. The validity of this central result is then backed up by two non-trivial consistency checks. Specifically, in the case of a rigid Calabi-Yau, the HM moduli space simplifies to a four-dimensional manifold whose geometry can be described, in a framework developed by Przanowski \cite{Przanowski:1984qq}, by a single function, as explained in subsection \ref{subsec-twist-Prz}. Instanton effects in this framework are captured by solutions to the linearization of the governing non-linear differential equation. As shown in \cite{Alexandrov:2006hx}, this approach yields three families of solutions. We find that our general, twistorial-based result, when specialized to this rigid limit, coincides with the one given by the Przanowski description.

The second limit we take is that of small string coupling $g_s$.
We get inspiration from \cite{Alexandrov:2021shf} and show that at the leading order in $g_s$ the one-instanton contribution must have the following structure 
\be
\de s^2_{\rm NS5}
\simeq\sum_\bfgam C_\bfgam\, e^{-S_\bfgam}\( \cA_\bfgam^2+\cB_\bfgam \de S_\bfgam\),
\label{sqaurestr}
\ee
where $\bfgam$ is a charge vector labeling bound states of NS5 and D-branes, $S_\bfgam$ is an instanton action,
$C_\bfgam$ is a function of the moduli scaling as a power of $g_s$,
and $\cA_\bfgam$, $\cB_\bfgam$ are one-forms on $\cM_H$ such that the coefficients of $\cA_\bfgam$ are given by
certain string amplitudes in an NS5-background, while $\cB_\bfgam$ cannot be fixed by this analysis.
We show that the metric which we derived does exhibit the structure \eqref{sqaurestr}
with specific $C_\bfgam$ and one-forms $\cA_\bfgam$ and $\cB_\bfgam$. This provides a prediction for three-point sphere
correlation functions where one of the vertex operators corresponds to a hypermultiplet scalar
and two others represent fermionic zero modes of the background NS5-brane.
While for generic background fields the prediction is somewhat involved, it drastically simplifies
in the limit of small RR fields.
We hope that this prediction will help to understand how these amplitudes can be computed directly
using worldsheet techniques.

This chapter closely follows the work presented in \cite{Alexandrov:2023hiv} and is organized as follows. In section \ref{sec-NS5-pert}, we begin by reviewing the perturbative metric of the hypermultiplet moduli space. We choose to present the perturbative metric by passing through the corresponding twistor space in order to prepare the reader for the more involved D- and NS5-instanton corrected twistor space, which is treated in section \ref{sec-NS5-Instanton}. At the end of the treatment, we present the schematic form of the resulting one-instanton corrected metric. The subsequent sections are dedicated to verifying and analyzing this result in various physical limits. 
In Section \ref{sec-NS5-UHM}, we specialize our computation to the case of rigid Calabi-Yau threefolds and show that our result perfectly matches the metric deformation derived from the Przanowski formalism, confirming its validity. Finally, Section \ref{sec-NS5-smallgs} is devoted to a detailed analysis of the small string coupling $(g_s\ll1)$ limit. We first compute the expected form of the metric correction from a string amplitude perspective in subsection \ref{subsec-NS5-squareAmplitudes}. We then demonstrate in subsection \ref{subsec-NS5-gslimit} that our general metric indeed reduces to this expected form. A further simplification is explored in subsection \ref{subsec-NS5-smallRR}, where we consider the limit of small RR-axion fields, yielding a result that is simpler and more accessible to independent verification techniques.

\section{Perturbative hypermultiplet moduli space}
\label{sec-NS5-pert}
We start by describing the perturbative hypermultiplet moduli space in type IIA. Although the expression of its metric was found in \cite{Alexandrov:2007ec} (based on earlier works \cite{Antoniadis:1997eg,Gunther:1998sc,Antoniadis:2003sw,Robles-Llana:2006ez}) using the c-map procedure, we will present it here using the twistor space approach. This allows to then pass to the instanton corrected metric easily. 

We start by giving the data on the twistor space corresponding to the perturbative HM moduli space. Then, we describe the procedure that allows to extract the expression of the metric from the Darboux coordinates. Finally, we give explicitly the 1-loop corrected metric and discuss it briefly.

\subsection{Twistor space}
Here we will quickly give a reminder of the physical fields, already introduced in subsection \ref{4d-fields-A}, that constitute the hypermultiplet moduli space in type IIA setting. They also parametrize the base of the twistor space, which we describe afterwards, including one-loop $g_s$ corrections. 

For type IIA string theory, compactified on a Calabi-Yau threefold $\CY$, the hypermultiplet moduli space has $4(h^{2,1}(\CY)+1)$ (real) dimensions. One can see that even for rigid CYs the moduli space still exists and has $4$ dimensions, in this case it is called the {\it universal} hypermultiplet. The real fields parametrizing this particular HM are 
\begin{itemize}		
	\item
	RR fields $\zeta^0,\tzeta_0$ 
	arising as period integrals of the RR 3-form of type IIA string theory over a symplectic basis of cycles in $H_3(\CY,\IZ)$;
	
	\item
	four-dimensional dilaton\footnote{Throughout the thesis we use the name `dilaton' for its exponential given by the field $r$.}
	$r\equiv e^{-2\phi_{4}} \sim g_s^{-2}$;
	
	\item
	NS axion $\sigma$ which is dual to the $B$-field in four dimensions.
\end{itemize}
For the other hypermutliplets, we have the $4h^{2,1}$ fields given by the (real) RR fields $\zeta^a,\tzeta_a$ and the (complex) complex structure moduli $z^a$ with $a=1,\dots,h^{2,1}$. When we want to consider the whole vector of RR fields, our convention is to use, instead of the index $a$, the index $\Lambda=0,\dots,h^{2,1}$. 

The complex fields $z^a$ parametrize the complex structure moduli space $\cM_C\subset\cM_H$ of the Calabi-Yau. It is a special \kahler manifold with metric completely fixed by the holomorphic prepotential $F$ defined in \eqref{prepot-cplx} with the projective coordinates given by $X^\Lambda=(1,z^a)$. In fact the prepotential fixes completely the classical metric on the whole hypermultiplet moduli space! On the twistor space, this translates to the fact that $F(X)$ determines all transition functions. 

The precise twistor space description yielding the perturbative metric on the HM was found in \cite{Alexandrov:2008nk}. Working in a local neighborhood of a fixed point in the basis space, the covering of $\CP$ is given by three patches $\cU_+,\cU_-$ and $\cU_0$. The first two cover a neighborhood of $t=0$ and $t=\infty$ respectively, whereas $\cU_0$ covers the rest of the sphere. This covering is such that it respects the reality conditions with $\cU_+$ and $\cU_-$ exchanged under \eqref{rlty-patch} while $\cU_0$ is invariant.

There are two independent transition functions between these patches. They are defined as 
\be
H^{[+0]}=F(\xi), \qquad H^{[-0]}(\xi)=\bF(\xi),
\ee
where $F$ is the prepotential of the special \kahler space $\cM_C$. This gives the classical twistor space. The one-loop corrections \cite{Robles-Llana:2006ez}, however, are implemented by taking the anomalous parameter $c_\alpha = -2c$ where
\be
\label{oneloop-c}
c = -\frac{\chi_\CY}{192\pi}
\ee
was found in \cite{Robles-Llana:2006ez} and $\chi_\CY$ is the Euler characteristic of the CY. 
One can then use equations \eqref{glu-integral-Darboux} to find all the Darboux coordinates in the central patch $\cU_0$
\be
\label{pert-Darboux0}
\begin{split}
	\xi_{\rm pert}^\Lambda =& \zeta^\Lambda + \cR\,(t^{-1}z^\Lambda-t\bz^\Lambda),
	\\
	\txi_{\rm pert,\,\Lambda}^{[0]} =& \tzeta_\Lambda + \cR\(t^{-1}F_\Lambda(z)-t\bF_\Lambda(\bz)\),
	\\
	\talp_{\rm pert}^{[0]}=& \sigma + \cR\(t^{-1}W(z)-t\bW(\bz)\)-8\I c \log t,
\end{split}
\ee
where 
\be
W(z)=F_\Lambda \zeta^\Lambda - z^\Lambda\tzeta_\Lambda,
\ee
and we used the physical fields as coordinates to write the solution. Their relation to the coordinates used in \eqref{glu-integral-Darboux} is the following 
\be
A^\Lambda = \zetal, \qquad 
B_\Lambda = \tzetal - \zeta^\Sigma\Re F_{\Lambda \Sigma}(z), \qquad
B_\alpha = -\hf(\sigma + \zetal B_\Lambda), \qquad
Y^\Lambda=\cR \zl.
\ee
And the dilaton coincides with the $t$-independent part \eqref{contact-tindep} of the contact potential, and can be obtained from the variable $\cR$ 
\be
\label{NS5-Phi-pert}
r=e^{\Phi_{\rm pert}}=\frac14 \cR^2 K - c.
\ee

Furthermore, we used the shifted variable 
\be
\label{talp}
\talpi=-2\alpi - \xii^\Lambda\txii_\Lambda.
\ee
This redefinition ensures that $\talpi$ is invariant under the action of the symplectic group. On the other hand, $(\xil_{\[0\]},\txil^{\[0\]})$ form a symplectic vector. 
Moreover, the contact 1-form \eqref{cX-Darboux}, after this redefinition, reads
	\be
	\cX = -\hf \(\de \talp +\txi_\Lambda \de \xi^\Lambda -\xi^\Lambda \de \txi_\Lambda\).
	\ee
These properties show that the whole construction is symplectic invariant which agrees with expectations from type IIA string theory. One should note however that this symplectic invariance is not manifest in other patches.

\subsection{The metric}
Upon applying the procedure presented in chapter \ref{chap-twist}, we find that the one-loop corrected metric is given simply in terms of the 1-forms 
\be
\begin{split}
	\frZ_\Lambda=&\,\de\tzeta_\Lambda-F_{\Lambda\Sigma}\de\zeta^\Sigma,
	\qquad
	\frJ=z^\Lambda \frZ_\Lambda=z^\Lambda\de\tzeta_\Lambda-F_\Lambda\de\zeta^\Sigma,
	\\
	&\,\quad
	\frS=\frac{1}{4}\(\de \sigma + \tzeta_\Lambda \de \zeta^\Lambda - \zeta^\Lambda \de \tzeta_\Lambda+8c\cA_K \),
\end{split}
\label{def-pertforms}
\ee
as \cite{Alexandrov:2007ec}
\be
\begin{split}
	\de s^2_{\rm pert} =&\, \frac{r+2c}{r^2(r+c)}\, \de r^2
	-\frac{1}{r} \(N^{\Lambda\Sigma} - \frac{2(r+c)}{rK}\, z^\Lambda \bz^\Sigma\) \frZ_\Lambda\bfrZ_\Sigma
	\\
	&\,
	+\frac{r+c}{r^2(r+2c)} |\frS|^2
	+ \frac{4(r+c)}{r}\, \cK_{a\bar{b}} \de z^a \de \bz^b.
\end{split}
\label{1lmetric}
\ee
Here we denoted the matrix $N_{\Lambda\Sigma} = -2\Im F_{\Lambda\Sigma}$, with inverse $N^{\Lambda\Sigma}$. While 
\be
\cA_K =  \frac{\I}{2}\(\cK_a\de z^a-\cK_{\bar a}\de \bz^a\)=\Im\p\log K
\label{kalcon}
\ee
is the K\"ahler connection on $\cM_C$
with $\cK = -\log K $.

Let's say a few words about the metric \eqref{1lmetric}. It receives no further perturbative corrections and will be deformed only under the effect of instantons. Besides, \eqref{1lmetric} has three singularities at $r=0,\,-2c,\,-c$. The last two arise due to the one-loop correction and occur only when the Euler characteristic is positive. In fact, the first two turn out to be coordinate singularities while in the latter the quadratic curvature invariant diverges\footnote{At the level of the twistor space, the basis of holomorphic (1,0)-forms \eqref{defPi} becomes degenerate at the point $r=-2c$.} \cite{Alexandrov:2009qq}. It is believed that the inclusion of D-brane instantons and NS5-instantons, at all orders, should smooth it out. 

The hypermultiplet metric \eqref{1lmetric} carries an action of the symplectic group $Sp(2h^{2,1}+2,\IZ)$.
It leaves $r$ and $\sigma$ invariant and transforms $(X^\Lambda,F_\Lambda)$ and $(\zeta^\Lambda,\tzeta_\Lambda)$
as vectors. However, since a generic symplectic transformation affects the prepotential $F$, it is not a true isometry of $\cM_H$.
Only a subgroup of $Sp(2h^{2,1}+2,\IZ)$ which is realized as monodromies around singularities of the complex structure moduli space
is a true isometry.
The symplectic invariance can be seen as a characteristic feature of the type IIA formulation
and is expected to hold at the non-perturbative level.

It is also invariant under Peccei-Quinn symmetries acting by shifts on the RR fields and the NS axion
\be
T_{\eta^\Lambda,\tleta_\Lambda,\kappa}\ : \
(\zeta^\Lambda,\tzeta_\Lambda,\sigma)\mapsto
(\zeta^\Lambda+\eta^\Lambda,\tzeta_\Lambda+\tleta_\Lambda,\sigma+2\kappa-\tleta_\Lambda\zeta^\Lambda+\eta_\Lambda\tzeta_\Lambda).
\label{Heis}
\ee
At the perturbative level, the parameters $(\eta^\Lambda,\tleta_\Lambda,\kappa)$ can take any real value, whereas
instanton corrections break these isometries to a discrete subgroup with $(\eta^\Lambda,\tleta_\Lambda,\kappa)\in\IZ^{2h^{2,1}+3}$.
In particular, D-instantons break continuous shifts of the RR fields, but leave the invariance along $\sigma$, while
NS5-instantons break them all. The fact that the transformations \eqref{Heis} form the non-commutative
Heisenberg algebra plays an important role for description of NS5-instantons
(see, e.g., \cite{Witten:1996hc,Dijkgraaf:2002ac,Pioline:2009qt,Bao:2009fg}).

Finally, as we saw in section \ref{sec-string-EFT} mirror symmetry implies that $\cM_H$ in type IIA compactified on $\CY$ is identical to
the same moduli space in type IIB compactified on a mirror CY $\CYm$. Furthermore,
in this mirror type IIB formulation, $\cM_H$ must carry an isometric action of the S-duality group $SL(2,\IZ)$.
Its action at the level of the twistor space can be written efficiently \cite{Alexandrov:2014rca}.
In particular, S-duality was crucial to get NS5-instanton corrections from the D5-instanton ones \cite[Fig. I.1]{Alexandrov:2011va}.

\section{Instanton corrected hypermultiplet moduli space}
\label{sec-NS5-Instanton}
In this section we will describe the instanton corrected \qk manifold $\cM_H$. We follow the same logic as the previous section and we start by describing its twistor space and then giving the expression of the one-instanton corrected metric.

\subsection{Twistor space}

In this subsection we will start by giving a slight generalization of equations \eqref{glu-integral-Darboux}, which allows for open contours. Then, we will present the data determining the D-instanton and NS5-instanton corrected twistor space in succession.

\paragraph{New integral form}\

\noindent
The generalization we consider here consists in allowing open contours, with transition functions associated to them, in the integral equations determining the Darboux coordinates. The idea behind this generalization is that if $H^{\[ij\]}$ in \eqref{glu-integral-Darboux} have branch cuts then their integrals over closed contours $C_i$, can be replaced by integrals of their discontinuities $H_i$ along the open contours $\ell_i$. This is explained in more details, and was obtained in, \cite{Alexandrov:2009zh} and we will apply it here.

First, we define a notation for the integral combination 
\be
\opI[H]=\int_{\ell} \frac{\de t'}{t'} \,
\frac{t' + t}{t' - t}\, H\(\xi_{\rm pert}(t'), \txi^{\rm pert}(t'), \talp^{\rm pert}(t')\)
\label{def-frI}
\ee
which looks like the combinations used in \eqref{glu-integral-Darboux} except that now the function is evaluated on the perturbative Darboux coordinates \eqref{pert-Darboux0} and that the integration path is a line $\ell$ associated to the function $H$. The expression for the Darboux coordinates then becomes
\be
\begin{split}
	&
	\xi^\Lambda  = \xi_{\rm pert}^\Lambda
	+\frac{1}{4\pi \I} \sum_i \opI_i\[\hpartial_{\txi_\Lambda }H_i\] ,
	\qquad\qquad
	\txi_\Lambda =   \txi^{\rm pert}_\Lambda
	-\frac{1}{4\pi \I}   \sum_i \opI_i\[\hpartial_{\xi^\Lambda } H_i\],
	\\
	&
	\talp =  \talp^{\rm pert}
	+\cR\(t^{-1}\Win-t\bWin\)
	+\frac{1}{4\pi \I}   \sum_i \opI_i\[\( 2- \xi^\Lambda \hpartial_{\xi^\Lambda}-\txi_\Lambda \hpartial_{\txi_\Lambda}\)H_i\],
\end{split}
\label{genDC}
\ee
with
\be
\hpartial_{\xi^\Lambda }=\partial_{\xi^\Lambda }-\txi_\Lambda\p_{\talp}\, ,
\qquad
\hpartial_{\txi_\Lambda }=\partial_{\txi_\Lambda }+\xi^\Lambda \p_{\talp }\, ,
\label{def-hatder}
\ee
\be
\Win=\frac{1}{4\pi \I}\sum_i \int_{C_i} \frac{\de t'}{t'}
\( z^\Lambda\,\hpartial_{\xi^\Lambda }+F_\Lambda \,\hpartial_{\txi_\Lambda} \)H_i^{\rm pert}\, .
\ee
To get a real metric from \eqref{genDC}, one should impose an additional condition that
the set $\{ \ell_i\}$ is invariant under the antipodal map $\varsigma[t]=-1/\bt$, while the set $\{H_i\}$
is invariant under the combination of $\varsigma$ with complex conjugation, i.e. for each $i$ there is $\bi$
such that
\be
\varsigma[\ell_i]=\ell_{\bi},
\qquad
\overline{\varsigma[H_i]}=H_{\bi}.
\label{reality}
\ee

To complete the twistorial description of linear deformations, we give also a formula
for a modification of the contact potential $\Phi$ which is given by
\be
\Phi(t)=\phi+\frac{1}{2\pi \I} \sum_i \opI_i\[ \p_{\talp}H_i\],
\label{eqchip}
\ee
where the $t$-independent part reads as
\be
e^\phi=\frac{1}{4}\,\cR^2 K-c -\frac{\cR}{16\pi} \sum_i\int_{\ell_i}\frac{\de t}{t}
\[\(t^{-1} z^{\Lambda}-t \bz^{\Lambda} \)\hpartial_{\xi^\Lambda }
+\(t^{-1} F_{\Lambda}-t \bF_{\Lambda} \)\hpartial_{\txi_\Lambda }\]H_i^{\rm pert}\, .
\label{contpotconst}
\ee

\paragraph{D-instantons data}
\noindent\\
D-instantons have been incorporated into the twistorial description of $\cM_H$ at linear order in \cite{Alexandrov:2008gh}
and to all orders in the instanton expansion in \cite{Alexandrov:2009zh}.
Here we present the first simplified version which is sufficient at one-instanton level and fits the framework above.

As we described in subsection \ref{subsec-string-BPSBH}, in type IIA, each D-instanton is characterized by a charge vector $\gamma=(p^\Lambda, q_\Lambda)$.
It is integer valued and characterizes the 3-cycle wrapped by the D2-brane generating the instanton,
in the same basis of $H_3(\CY,\IZ)$ that is used to define RR fields $(\zeta^\Lambda,\tzeta_\Lambda)$.
Given the charge $\gamma$, we give the expression of the central charge function
\be
\label{defZ}
Z_\gamma(z) = q_\Lambda z^\Lambda- p^\Lambda F_\Lambda(z),
\ee
which was used to define the BPS bound \eqref{BPS-mass-cond}. We also have the generalized Donaldson-Thomas (DT) invariant $\Omega_\gamma$ defined in \eqref{BPS-Trace} and we remind that from the physical viewpoint it counts
the (signed) number of BPS instantons of a given charge. In the following we will mainly use its rational version
\be
\label{BPS-rational}
\bOm_\gamma = \sum_{d|\gamma}  \frac{1}{d^2}\,  \Omega_{\gamma/d} ,
\ee
which takes into account multi-covering effects and
allows to simplify many equations being more suitable for implementing S-duality \cite{Manschot:2010xp,Alexandrov:2012au}.
An important property of DT invariants is that $\Omega_{-\gamma}=\Omega_\gamma$.

Finally, we define the so-called BPS ray
\be
\ellg{\gamma}= \{ t :\,  Z_\gamma(z)/t \in \I\IR^{-} \},
\label{rays}
\ee
which joins the north and south poles of $\CP$ along the direction determined by the phase of the central charge,
and the following transition function assigned to $\ell_\gamma$
\be
H_\gamma
=\frac{\sigma_\gamma\,\bOm_\gamma}{4\pi^2}\, e^{-2\pi \I (q_\Lambda \xi^\Lambda-p^\Lambda\txi_\Lambda)} ,
\label{prepH}
\ee
where $\sigma_\gamma$ is the so-called quadratic refinement. This is a sign factor that must satisfy
$
\sigma_{\gamma+\gamma'} =
(-1)^{\langle \gamma, \gamma' \rangle}\, \sigma_\gamma\, \sigma_{\gamma'}$,
where $\langle\gamma,\gamma'\rangle=q_\Lambda p'^\Lambda-q'_\Lambda p^\Lambda$ is the skew-symmetric product of charges.
In the following it is chosen\footnote{How this choice is reconciled
	with symplectic invariance is explained in \cite[\$2.3]{Alexandrov:2014rca}.} to be $\sigma_\gamma=(-1)^{q_\Lambda p^\Lambda}$.
In \eqref{prepH} we don't indicate the patches where $\xi$ and $\txi$ are evaluated because at the one-instanton level this is irrelevant. The set of all $(\ell_\gamma,H_\gamma)$ for which DT invariants are non-vanishing comprise the twistor data of D-instantons.

Our one-instanton approximation corresponds to keeping only terms linear in $\bOm_\gamma$,
while we allow for D-instantons of different (in particular, proportional) charges.
Thus, it is not about extracting the dominant instanton contribution, but rather the linear response of the metric
to the change of the contact structure by $(\ell_\gamma,H_\gamma)$.
This approach allows us to get results independent of particular values of DT invariants and
to keep track of the charge dependence in the resulting instanton corrections.

\paragraph{NS5-instantons data}\

\noindent
The twistor data incorporating NS5-instantons in the one-instanton approximation (as it was defined in the previous paragraph)
have been found in \cite{Alexandrov:2010ca} by translating the above construction of D-instantons
to the mirror type IIB formulation and applying S-duality to D-instantons with a non-vanishing D5-brane charge.
The duality was applied at the level of the twistor space where it acts by a holomorphic transformation preserving
the contact structure. In particular, its action on the fiber coordinate $t$ and the Darboux coordinates
$(\xi^\Lambda,\txi_\Lambda,\talp)$ is well known \cite{Alexandrov:2008nk,Alexandrov:2012bu}
and therefore allows to get the contours and transition functions incorporating NS5-instantons as images of $(\ell_\gamma, H_\gamma)$
under this action.

Then, we take the twistor data encoding NS5-instantons obtained in \cite{Alexandrov:2010ca},
translate them back into type IIA language, and apply them as deformations of the perturbative twistor space.
The price to pay for using type IIB twistor data in type IIA is the absence of manifest symplectic invariance.
We also believe that the complicated structure of the resulting metric is partially a consequence of this hybrid approach
and there should exist a genuine type IIA formulation of NS5-instantons.
However, in the absence of such formulation, we have to proceed as just described, but we hope that
our results can shed light on this and other issues related to the geometry of $\cM_H$.

After these preliminary comments, let us describe the twistor data for NS5-instantons
after they have been translated (partially) to type IIA variables.
To this end, let us introduce an integer valued charge vector
$\bfg=(k,p,\hgam)$ with $k\ne 0$ and $\hgam=(p^a,q_a,q_0)$.
Here $k$ denotes NS5-brane charge, while the other components are related to bound D-branes.
In particular, the standard D-brane charge vector is obtained as
$\gamma=(p^0,\hgam)$ where $p^0=\gcd(k,p)$.
On the type IIB side, $p^0$ is D5-brane charge, while $\hgam$ encodes D3-D1-D(-1)-charges.

Given the charge $\bfg$, we define the contour $\ell_{\bfg}$ as a half-circle\footnote{More precisely,
	$\ell_{\bfg}$ is the image of the BPS ray $\ell_\gamma$ under
	some $SL(2,\IZ)$ transformation.
	In particular, the ordering of the original BPS rays is preserved.
}
stretching between the two zeros of $\xi^0_{\rm pert}(t)-p/k$
and the associated transition function
\be
\label{5pqZ2}
H_{\bfg}=\frac{\bOm_\gamma}{4\pi^2}\,e^{-\pi \I k \(\talp+(\xi^\Lambda-2n^\Lambda) \txi_\Lambda\)}
\Psi_{\hbfg}(\xi-n) ,
\ee
where
$\hbfg$ is a reduced charge and our results hold for arbitrary function $\Psi_{\hbfg}$ (provided it ensures
convergence of integrals along $\ell_{\bfg}$). Its concrete form \cite[Eqn. (2.26)]{Alexandrov:2023hiv} will be important only in deriving
the small string coupling approximation in section \ref{sec-NS5-smallgs}.

The (hidden) symplectic invariance ensures that it should be possible to rewrite the construction of NS5-instantons
in other ``frames" where a different combination appears in the exponent and in the argument of $\Psi$.
However, a map between different frames is expected to be non-trivial and
to involve an integral transform (see, e.g., \cite{Alexandrov:2015xir}),
similarly to a change between coordinate and momentum representations in quantum mechanics.

\subsection{The metric}

In the paper \cite{Alexandrov:2023hiv} we produced an expression for the instanton corrected metric. Its expression is quite cumbersome and we will present it here mainly for illustration purposes. Namely, we will not define all the functions that enter it. 

To this end, we use the 1-form notations \eqref{def-pertforms}
which arise naturally already in the perturbative metric \eqref{1lmetric}.
Besides, we also define a one-form labeled by D-instanton charge
\be
\cCf_\gamma= N^{\Lambda\Sigma}\(q_\Lambda-\Re F_{\Lambda\Xi}p^\Xi\)\(\de\tzeta_\Sigma-\Re F_{\Sigma\Theta}\de\zeta^\Theta\)
+\frac14\, N_{\Lambda\Sigma}\,p^\Lambda\,\de\zeta^\Sigma.
\label{connC}
\ee
Finally, we will use various functions labeled by charges $\gamma$ and $\bfg$ which are defined in \cite[Appendix C.]{Alexandrov:2023hiv}
as expansion coefficients of the integral transform \eqref{def-frI} of the transition functions and its derivatives. 
For D-instantons 
the integral transform reads 
\be
\cJ_\gamma(t)= \opI_\gamma\[e^{-2\pi \I\(q_\Lambda \xi^\Lambda-p^\Lambda \txi_\Lambda\)}\],
\label{defcJg}
\ee
and its value at $t=0$ is given by
\be
\begin{split}
	\cJn{0}_\gamma \equiv &\, \cJ_\gamma(0)= \int_{\ellg{\gamma}}\frac{\de t}{t}\, e^{-2\pi \I \Xi_\gamma(t)},
\end{split}
\label{cJn0}
\ee
where 
\be
\Xi_\gamma(t)\equiv q_\Lambda \xi_{\rm pert}^\Lambda-p^\Lambda \txi^{\rm pert}_\Lambda.
\ee                                                                                     
Other expansion coefficients $\cJn{\pm n}_\gamma$ correspond to the coefficients of $t^{\pm n}$ in $\cJ_\gamma(t)$ and are written in a similar form to \eqref{cJn0}. We will not write them as the goal here is simply to give an idea of how the solution looks.

For NS5-instantons the situation is more involved as we have multiple integral transforms and multiple expansion coefficients for each. 
The simplest transform reads
\be
\begin{split}
	\cI_\bfg(t)=&\, \opI_\bfg\[e^{2\pi\I k \alpha_n }\Psi_{\hbfg}\],
\end{split}
\label{NS5-cIt}
\ee
and its expansion coefficients are denoted $\cIn{\pm n}_\bfg$ with $n=1,2$. We also have many other functions obtained by adding to the integrand in \eqref{NS5-cIt}, derivatives with respect to the Darboux coordinates or combinations of the Darboux coordinates themselves. These functions, as well as their expansion coefficients will be written using different scripts of $\cL$ and $\cK$. One can recognize integrals related to NS5-corrections by the $\bfg$ index.

While for D-instantons, using integration by parts,
all relevant quantities can be expressed only through three such functions, $\cJn{n}_\gamma$ with $n=0,\pm1$,
for NS5-instantons this does not seem to be possible and we have to deal with many different functions.
Using all these definitions, the one-instanton corrected metric on $\cM_H$ is found to be
\be
\de s^2  =\de s^2_{\rm pert}
+\frac{\cR}{8 \pi^2 r}\sum_{\gamma}\sigma_\gamma \bOm_\gamma \, \frD_\gamma
+\frac{\cR}{8\pi^2r }\sum_{\bfg} \bOm_\gamma \,k\, \frV_{\bfg},
\label{ds2final2}
\ee
where $\frD_\gamma$ and $\frV_{\bfg}$ encode the D-brane and NS5-brane instanton contributions, respectively,
They are given by
\bea
\frD_\bfg &=&
\frac{\pi}{\cR}\cJn{0}_\gamma \[\cR^2 |Z_\gamma|^2 \(\(\frac{\de r}{r+c}+\de\log\frac{|Z_\gamma|^2}{K}\)^2
+\(\frac{\frS}{r+2c}+2\Im\p\log\frac{Z_\gamma}{K}\)^2\)+\de\Theta_\gamma^2-4\cC_\gamma^2
\]
\nn\\
&+&
\( Z_\gamma\cJn{1}_\gamma+\bZ_\gamma\cJn{-1}_\gamma\)\Biggl[
\frac{\I\frS^2}{4(r+2c)^2}-\frac{\I(r+c)\,\cA_K\frS}{r(r+2c)}
-\frac{\I\de r\,\de\log K}{4(r+c)}-\frac{\I |\frJ|^2}{2r K}
-\I\cK_{ab}\de z^a \de \bz^b
\nn\\
&&
+2\pi\cC_\gamma\(\frac{\frS}{r+2c}-2\,\cA_K\)
\Biggr]
+\(\frac{r+c}{r(r+2c)}\,\frS-4\pi \I\cC_\gamma \) \Bigl(\cJn{1}_{\gamma}\de Z_\gamma -\cJn{-1}_{\gamma}\de\bZ_\gamma\Bigr)
\nn\\
&+&
\frac{\I\de r}{2(r+c)}\, \de \Bigl(Z_\gamma \cJn{1}_{\gamma}+\bZ_\gamma \cJn{-1}_{\gamma}\Bigr)
- \frac{\I\cR}{r}\,\Im(\bZ_\gamma\,\frJ)\,\de \cJn{0}_{\gamma}.
\label{frDtot2}
\eea
and
{\footnotesize
\bea
\frV_\bfg &=&
\(\scLn{1}_\bfg+\bscLn{-1}_\bfg\) \Biggl[
\frac{\I}{8}\(\frac{(\de r)^2}{(r+c)^2}-\frac{\frS^2}{(r+2c)^2}\)
+\frac{\I}{2r}\,\cA_K\frS
+\frac{\I|\frJ|^2}{2r K}
+\frac{\I}{2}\,\frac{|\p K|^2}{K^2}+\I\cK_{ab}\de z^a \de \bz^b
\Biggr]
\nn\\
&-&
\frac{\I}{2}\(\cLn{1}{\bfg,\Lambda} \de z^\Lambda+\bcLn{-1}{\bfg,\Lambda}\de\bz^\Lambda\) \de \log K
-\(\cLn{1}{\bfg,\Lambda} \de z^\Lambda-\bcLn{-1}{\bfg,\Lambda}\de\bz^\Lambda\)\(\frac{\frS}{2r}+\cA_K\)
\nn\\
&-&
\frac{2(r+c)}{r(r+2c)}\,\frS\,\bigg[\frac{\I}{\cR}\, \cIn{0}_\bfg\(\frac{c\de r}{r+c}+r\de\log K\)
+\Re( \cIn{1}_\bfg\,\frJ)
\bigg]
\nn\\
&-&
2\pi k \Biggl\{
\frac{2}{\cR} \,\cIn{0}_\bfg\biggl(\frac{(r+2c)^2}{(r+c)^2}\,(\de r)^2
-\frS^2+\frac{2r(r+2c)}{r+c}\, \de r\, \de\log K
-4r\cA_K\frS-\frac{4(r+c)}{K}\,|\frJ|^2
\nn\\
&&\quad
+\frac{r^2(\de K)^2}{K^2}-4r^2\cA_K^2
\biggr)
+\cR\Im\(\(\scLn{2}_\bfg-\bscLn{0}_\bfg\)\frJ\) \(\frac{\frS}{r+2c}-2\cA_K\)
\nn\\
&&
+ 2\(\scLn{1}_\bfg+\bscLn{-1}_\bfg\)
\(\frac{\de r}{r+c}\(\frS-2c\cA_K\) -\de\log K\( \frac{c\frS}{r+2c}+2r\cA_K\)\)
\nn\\
&&
+4\(\frac{r+2c}{r+c}\, \de r+r\de\log K\)\biggl[\Im(\cIn{1}_\bfg\frJ)
-\cR^{-1}\Re\(N^{\Lambda\Sigma}\bcLn{0}{\bfg,\Lambda}\,\frZ_\Sigma\)
+\Im\(\cLn{1}{\bfg,\Lambda}\de z^\Lambda\)\biggr]
\nn\\
&&
+4\Re\frJ
\[\Im\( N^{\Lambda\Sigma}\bcLn{-1}{\bfg,\Lambda}\, \frZ_\Sigma\)
-\cR\Re\(\cLn{0}{\bfg,\Lambda}\de z^\Lambda\) \]
\\
&&
-4\Im\frJ
\[\Re\( N^{\Lambda\Sigma}\bcLn{-1}{\bfg,\Lambda}\frZ_\Sigma\)
+\cR\Im\(\cLn{0}{\bfg,\Lambda}\de z^\Lambda\) \]
\nn\\
&&
+\frac{\cR}{2}\,\scK_\bfg\Biggl[\frac{\frS^2}{4(r+2c)^2}-\frac{(\de r)^2}{4(r+c)^2}
+\frac{\de r\, \de\log K}{2(r+c)}-\frac{\cA_K\frS}{r+2c}
-\frac{d K^2}{4K^2}+\cA_K^2\Biggr]
\nn\\
&&
-\biggl(\frac{\frS}{r+2c}-2\cA_K\biggr)
\biggl[N^{\Lambda\Sigma}
\Re\biggl(\Bigl(z^\Xi\scKn{1}{\Xi\bar\Sigma}+\bz^\Xi\scKn{-1}{\bar\Xi\bar\Sigma}\Bigr)\frZ_\Lambda\biggr)
-\cR\Im\Bigl(\Bigl(z^\Lambda\scKn{2}{\Lambda\Sigma}-\bz^\Lambda\scKn{0}{\bar\Lambda\Sigma}\Bigr)\de z^\Sigma\Bigr)\biggr]
\nn\\
&&
+\Bigl(\scKn{0}{\Lambda\bar\Sigma}+\scKn{0}{\bar\Sigma\Lambda}\Bigr)
\biggl(\frac{1}{\cR}\,N^{\Lambda\Xi} N^{\Sigma\Theta} \frZ_\Theta \bar\frZ_\Xi-\cR\de z^\Lambda \de\bz^\Sigma\biggr)
+2\I N^{\Lambda\Sigma}\Bigl(\scKn{1}{\Sigma\Xi}\,\bar\frZ_\Lambda\de z^\Xi-\scKn{-1}{\bar\Sigma\bar\Xi}\,\frZ_\Lambda \de\bz^\Xi\Bigr)
\Biggr\}
\nn\\
&-&
\frac{\I}{2}\,\de\log K\(z^\Lambda\frLn{1}{\Lambda}+\bz^\Lambda\bfrLn{-1}{\Lambda}\)
-\frac{2\I}{\cR}\(\frac{c\,\frS}{r+2c}+2r \cA_K\)\de\cIn{0}_\bfg
\nn\\
&+&
\frac{\I}{\cR}\(\de\cGn{0}{\Lambda}_\bfg\de\tzeta_\Lambda-\de\tcGn{0}{\bfg,\Lambda} \de \zeta^\Lambda\)
- \frac{\cR}{r}\, \Re\[z^\Lambda\frLn{0}{\Lambda}\,\bar\frJ\].\nn
\label{frVtot1}
\eea}
Note that the variable $\cR$ appears only in the instanton terms and therefore in our approximation it can be expressed
through $r$ using the perturbative relation \eqref{NS5-Phi-pert}.

The result for the D-instanton corrections given by the second term in \eqref{ds2final2}
can be compared with the linearization (in DT invariants $\bOm_\gamma$)
of the metric found in \cite{Alexandrov:2014sya}. We do not provide any details of this simple exercise which
shows a perfect match between the two metrics.

The result for NS5-instanton corrections given by the last term in \eqref{ds2final2} is new.
Note that the specific form of the function $\Psi_{\hbfg}(\xi)$ appearing in \eqref{5pqZ2} has not been used and
it is valid for any such function ensuring convergence of the corresponding integrals.
The complicated nature of the result is expected since NS5-instantons break all continuous isometries
of the moduli space. It is also does not exhibit any fibration or other nice geometric structure.
Nevertheless, as will be shown in section \ref{sec-NS5-smallgs}, in the small string coupling limit, if one neglects terms
proportional to the differential of the instanton action, $\frV_\bfg$ reduces to the square of a one-form,
and this is precisely the structure expected from the analysis of string amplitudes.

\section{Universal Hypermultiplet}
\label{sec-NS5-UHM}

The case of a rigid CY, i.e. without complex structure moduli ($h^{2,1}(\CY)=0$), is special.
Therein, the spectrum contains only the {\it universal hypermultiplet} \cite{Strominger:1997eb}
comprising the dilaton $r$, NS axion $\sigma$  and a pair of RR fields\footnote{We drop
	indices on quantities labeled by $\Lambda,\Sigma,\dots$ since in this case that take a single value.}
$\zeta,\tzeta$, so that the moduli space is four-dimensional. Such QK manifolds are known to
have an alternative description due to Przanowski \cite{Przanowski:1984qq}
which we presented in subsection \ref{subsec-twist-Prz}.
This fact allows to test our one-instanton corrected metric which must be compatible with this description.
In particular, it must produce a solution of the linearized differential equation.
We showed that the metric is indeed consistent with the Przanowski description and
obtained the corresponding potential solving the differential equation.
In this section we will briefly explain the description due to Przanowski and how we approached the consistency check.

\subsection{Perturbative case}

For $\cM_H$, the perturbative metric has a well known Przanowski description \cite{Alexandrov:2006hx}
in terms of coordinates
\be
z^1=-\bigl(r+c\log(r+c)\bigr)-\frac{\I}{4}\, (\sigma+\zeta\tzeta+\tau\zeta^2),
\qquad
z^2=\frac{\I}{2}\, (\tzeta+\tau\zeta)
\label{pert-zz}
\ee
and the Przanowski potential
\be
\hpert=-\log\frac{2\tau_2 r^2}{r+c}\, ,
\label{pert-h0}
\ee
where $\tau\equiv \tau_1+\I\tau_2$
is a fixed complex parameter with $\tau_2>0$, related to the holomorphic prepotential in the rigid case \cite{Bao:2009fg}
\be
F(X) = -\frac{\tau}{2}\, X^2.
\label{prepUHM}
\ee
It is easy to check that with these definitions the equation \eqref{Prz-master-equation} is satisfied and
the metric \eqref{metPrz} reproduces \eqref{1lmetric} provided one sets the cosmological constant to be $\Lambda=-3/2$.

\subsection{Instanton corrected case}

Since we want to write the metric using the physical coordinates $\vph^{m}\equiv(r,\sigma,\zeta,\tzeta)$, we need to take into account not only the linear deformations of the real potential $h(z)$ but also deformations of the complex coordinates $z^i$. 

In such a situation, the variation of the metric \eqref{metPrz} is not the naive 
\be
\delta_h \de s^2=4\delta h_{\alpha\bar\beta} \de z^\alpha\de\bz^\beta+8 e^{\hpert} \delta h |\de z^2|^2,
\label{linPrzmet}
\ee
but gets an additional contribution. To find it, let $(z^\alpha(\vph),h(z))$
and $(\zpert^\alpha(\vph),\hpert(\zpert))$ denote the deformed and non-deformed complex coordinates
and Przanowski potential. Note that $\zpert^\alpha$ are given by \eqref{pert-zz}, where we only take the perturbative part of $r$.
By construction, we have $\Prz_z[h]=\Prz_{\zpert}[\hpert]=0$.
Using these functions, we further define
\be
\delta_\vph h=h(z(\vph))-\hpert(\zpert(\vph)).
\ee
Note that this variation of the Przanowski potential is different from the one used above
which reads $\delta h(z)=h(z)-\hpert(z)$ and satisfies the linearized Przanowski equation.
The relation between these two functions is obtained as follows
\be
\delta h=\delta_\vph h-\(\hpert(z(\vph))-\hpert(\zpert(\vph))\)
=\delta_\vph h -\hpert_\alpha \delta z^\alpha- \hpert_{\bar\alpha} \delta \bz^\alpha,
\label{res-deltah}
\ee
where we introduced $\delta z^\alpha(\vph)=z^\alpha(\vph)-\zpert^\alpha(\vph)$.

Now we can write the full deformation of the metric which keeps the coordinates $\vph^m$ fixed as
\be
\delta \de s^2=\de s^2_{z(\vph)}[h]-\de s^2_{\zpert(\vph)}[\hpert]
=\delta_h \de s^2+\delta_z\de s^2,
\label{deltamet}
\ee
where the first term is defined in \eqref{linPrzmet} and
\be
\delta_z \de s^2=\de s^2_{z(\vph)}[\hpert]- \de s^2_{\zpert(\vph)}[\hpert]
\label{var-metric}
\ee
is the deformation of the Przanowski metric defined by the {\it non-deformed} potential $\hpert$
under the variation of the complex coordinates as functions  of the fields $\vph^m$.
Using \eqref{metPrz} with $\Lambda=-3/2$ and taking into account that our non-perturbed potential \eqref{pert-h0} satisfies
$\hpert_\alpha=\hpert_{\bar \alpha}$,
$\hpert_{\alpha \bar \beta}=\hpert_{\alpha\bar\beta}$, etc., we find an explicit formula
\be
\begin{split}
	\delta_z \de s^2
	=&\, 8\hpert_{\alpha \beta\gamma}\Re(\delta z^\gamma)\de \zpert^\alpha \de \bzpert^\beta
	+8\hpert_{\alpha \beta}\Re(\de\delta z^\alpha \de\bzpert^\beta)
	\\
	&\, + 16 e^{\hpert}\( \hpert_\alpha \Re(\delta z^\alpha)|\de \zpert^2|^2
	+\Re(\de\delta z^2 \de \bzpert^2)\).
\end{split}
\label{coord-Przmt}
\ee

After this preliminary analysis, we give the data for the Przanowski description, which were shown in \cite{Alexandrov:2009vj} to follow directly from the twistorial construction for a generic four-dimensional QK space.
Although this description is not unique because there is a large ambiguity in the choice of coordinates $z^\alpha$
(which also affects the Przanowski potential $h$),
it was found that a particularly convenient choice is given by
\be
z^1 = \frac{\I}{2}\,\ai{+}_0 -2c \log \xi_{-1},
\qquad
z^2 = \frac{\I}{2}\,\txi^{\[+\]}_0,
\label{genz}
\ee
where in our case $\xi_{-1}=\cR$, $\ai{+}$ and $\txii_{+}$ are defined in \eqref{redefDC}.
The Przanowski potential should then be equal to
\be
h=-2\phi+2\log(\xi_{-1}/2)= -2\log \frac{2r}{\cR}\, .
\label{genh}
\ee
Thus, it is sufficient to plug in these identifications into \eqref{coord-Przmt} and \eqref{linPrzmet} and to verify that
the resulting metric reproduces \eqref{ds2final2}. Furthermore, one can check that \eqref{Prz-master-equation} is satisfied.

\section{Small string coupling limit}
\label{sec-NS5-smallgs}

In this section we extract the small string coupling limit to the hypermultiplet metric
that we calculated in section \ref{sec-NS5-Instanton}.
This is the limit where we expect to establish a connection with string amplitudes.
For D-instantons this has already been done in \cite{Alexandrov:2021shf},
therefore here we concentrate on NS5-instantons.
First, we obtain a general structure of the instanton corrected metric following from analysis
of the effective action and string amplitudes. Then we show that exactly the same structure emerges
in the small string coupling limit of the metric \eqref{ds2final2}, thereby providing predictions
for a certain class of string amplitudes in NS5-brane background.
This regime, however,  is still difficult to obtain directly in string theory, so in the last subsection we consider an additional limit of small RR fields which allows us to recover the standard action
and crucially simplifies our predictions for the amplitudes.

\subsection{Instantons from string amplitudes}
\label{subsec-NS5-squareAmplitudes}

The analysis of this subsection is very similar to the one in \cite[\$5 and \$6.1]{Alexandrov:2021shf}
and \cite{Alexandrov:2023hiv}.

Our goal is to relate the metric on the hypermultiplet moduli space to scattering amplitudes of physical fields.
Since the relevant fields are massless scalars, the first non-trivial amplitudes are 4-point functions.
Therefore, we need to generate a 4-point interaction vertex from a metric dependent term in the effective action.
The simplest possibility is to consider the kinetic term for hypermultiplet scalars $\vph^m$ parametrizing $\cM_H$
\be
-{1\over 2} \, \int \de^4 x\( g^{\rm pert}_{mn} +  \sum_\bfg \, e^{-S_\bfg}\( h^{(\bfg)}_{mn}(\vph)+\cdots \)\)
\p_\mu \vph^m \p^\mu \vph^n .
\label{escalar}
\ee
Here we substituted the expected form of the metric in the small string coupling limit where
it is equal to the perturbative metric plus instanton corrections proportional to the exponential of the instanton action.
We kept only NS5-instanton contributions, denoted NS5-instanton action by $S_\bfg$ and
the leading term in the expansion of the tensor multiplying the exponential by $h^{(\bfg)}_{mn}$.
In the limit $g_s\to 0$, we expect that $S_\bfg\sim g_s^{-2}$ and
assume that the fields $\vph^m$ are normalized so that they stay constant.

Let us now expand the fields around their expectation values $\phi^m$. If $\lambda^m=\vph^m-\phi^m$ denotes the fluctuations,
then the expansion of \eqref{escalar} generates infinitely many interaction vertices for these fluctuations. In particular,
the leading instanton contribution to the $\lambda^4$-term is obtained by bringing down
two factors of $\lambda^m$ from the expansion of the instanton action and is given by
\be
-{1\over 4} \sum_\bfg \int \de^4 x\,  e^{-S_\bfg(\phi)}\, \p_p S_\bfg(\phi) \, \p_q S_\bfg(\phi) \,
h^{(\bfg)}_{mn}(\phi)\,  \lambda^p \,
\lambda^q \, \p_\mu\lambda^m \p^\mu\lambda^n .
\label{eleadscalar}
\ee
This term induces an instanton contribution to 4-point functions of fields $\lambda^{m_i}$ which reads as
\be
(2\pi)^4 \delta^{(4)}\left(\sum_i p_{i}\right) e^{-S_\bfg}
\bigg[\p_{m_1} S_\bfg\,  \p_{m_2} S_\bfg \,  h^{(\bfg)}_{m_3 m_4}\,  p_{34}
+\ {\mbox{inequivalent perm.} \atop \mbox{of 1,2,3,4}}
\bigg]\, ,
\label{eampin}
\ee
where $ p_{i}^\mu$ is the momentum carried by $\lambda^{m_i}$ and $p_{ij}=\eta_{\mu\nu} p_{i}^\mu p_{j}^{\nu}$.

The amplitude \eqref{eampin} induced by the term \eqref{eleadscalar} in the effective action is to be compared with
the explicit computation of the instanton amplitude in string theory.
First, we note that NS5-instanton perturbation theory is similar to the one for D-instantons
\cite{Sen:2020cef}, but with open string diagrams ending on D-branes replaced by closed string diagrams
in the presence of an NS5-brane. In particular, the instanton action should be given by the sphere diagram
in the NS5-background which agrees with its scaling as $g_s^{-2}$.
The overall normalization factor should be given by the exponential of the torus diagram,
and each insertion of a closed string vertex operator,
at leading order in $g_s$, gives rise to a factor given by the sphere one-point function of this operator.

This is not the end of the story, however, since the instanton breaks half of the $N=2$ supersymmetry.
The four broken supercharges imply the existence of four Goldstino zero modes.
To get a non-vanishing result from integration over these modes, their vertex operators should
be inserted in the sphere diagrams composing our amplitude.
As a result, schematically, the NS5-instanton contribution to the 4-point function we are interested in is given by
\be
(2\pi)^4 \delta^{(4)}\left(\sum_i p_{i}\right)\bOm_\gamma \cN_\bfg\, e^{-S_\bfg}
\int \[\prod_{\alpha,\dot\alpha=1,2}\de\chi^\alpha\de\chi^{\dot\alpha}\] A_{\bfg,\alpha\dot\alpha\beta\dot\beta}^{m_1m_2m_3m_4}
\chi^\alpha\chi^{\dot\alpha}\chi^\beta\chi^{\dot\beta},
\label{gen4point}
\ee
where $\cN_\bfg$ is the normalization factor computed by torus with removed zero modes,
$\chi^\alpha$ are the fermionic zero modes, and  $A_{\bfg,\alpha\dot\alpha\beta\dot\beta}^{m_1m_2m_3m_4}$ is a sum of products
of four sphere diagrams, each with one closed string vertex operator corresponding to one of $\lambda^{m_i}$,
and four fermion zero modes distributed among the four spheres.
We also included the factor of $\bOm_\gamma$ which for primitive $\gamma$ counts the number of BPS instantons in a given homology class,
and for non-primitive charges takes also into account multi-covering effects. The fact that these effects combine to give the rational
DT invariant \eqref{BPS-rational} can be argued in the same way as for D-instantons \cite{Alexandrov:2021shf}.

This expression can be further simplified, because each sphere diagram must carry even number of fermion zero modes.
Hence only two situations are possible: either all four zero modes are inserted on one sphere, or
two spheres carry two zero modes each and two spheres are without them.
Moreover, one can argue that the former configuration does not contribute.
Indeed, the sphere one-point function of the vertex operator corresponding to $\lambda^m$,
without additional insertions of the fermion zero modes, is simply given by
the derivative of the instanton action $-\p_m S_\bfg$.
In particular, it does not depend on the momentum carried by the vertex operator.
Therefore, all momentum dependence in the case where all four fermion zero modes are
inserted on a single sphere comes from this sphere diagram.
However, the Lorentz invariance implies that it should be a function of $p^2$
where $p$ is the momentum carried by the vertex operator on this sphere. But since
$p^2=0$, this contribution does not depend on $p_{i}$'s at all and would give rise
to a potential term in the effective action.
Since instanton corrections should not generate any potential, this amplitude is expected to vanish.

Thus, the only surviving contribution is the one where we have two of the zero modes on
one sphere, two on another sphere, and two spheres without zero modes which, as we already noted, produce
the factors $-\p_m S_\bfg$. Let us estimate the sphere diagram with the zero mode insertions.
Note that to have a non-vanishing coupling with the momentum vector, one of the zero modes must carry dotted index
and the other one should carry undotted index. Then the full diagram can be represented as
\be
\I \cA^{(\bfg)}_m(\phi) \, p_\mu\, \gamma^\mu_{\alpha\dot\alpha}  ,
\label{eterm}
\ee
where $\cA^{(\bfg)}_m(\phi)$ is a function of background fields independent of the momentum.
Collecting all contributions, we find that
\be
A_{\bfg,\alpha\dot\alpha\beta\dot\beta}^{m_1m_2m_3m_4}=
-\p_{m_1} S_\bfg\,  \p_{m_2} S_\bfg \,
\Bigl(\cA^{(\bfg)}_{m_3}\, p_{3,\mu}\, \gamma^\mu_{\alpha\dot\alpha}\Bigr)
\Bigl( \cA^{(\bfg)}_{m_4} \, p_{4,\nu}\, \gamma^\nu_{\beta\dot\beta} \Bigr)
+\ {\mbox{inequivalent perm.} \atop \mbox{of 1,2,3,4.}}
\ee
Substituting this result into \eqref{gen4point}, integrating over the zero modes, and using the fact that
$\eps^{\alpha\beta}\eps^{\dot\alpha\dot\beta}\,\gamma^\mu_{\alpha\dot\alpha} \,\gamma^\nu_{\beta\dot\beta}
=-\Tr(\gamma^\mu\gamma^\nu)=-2\, \eta^{\mu\nu}$,
one obtains that the NS5-instanton contribution to the 4-point function has
the following form\footnote{We are sloppy here about numerical factors.
	Moreover, as shown in \cite{Alexandrov:2021shf}, there is also an additional factor that must be taken into account coming from
	a difference between the four-dimensional metric in the string frame used to calculate string amplitudes
	and in the frame used to write the effective action \eqref{escalar} with vector and hypermultiplets decoupled.
	We assume that all such factors have been absorbed into $\cN_\bfg$.}
\be
(2\pi)^4 \delta^{(4)}\left(\sum_i p_{i}\right)\bOm_\gamma \cN_\bfg\, e^{-S_\bfg}
\bigg[\p_{m_1} S_\bfg\,  \p_{m_2} S_\bfg \,  \cA^{(\bfg)}_{m_3}\cA^{(\bfg)}_{m_4}\,  p_{34}
+\ {\mbox{inequivalent perm.} \atop \mbox{of 1,2,3,4}}
\bigg].
\label{gen4point-final}
\ee
Comparing \eqref{gen4point-final} with \eqref{eampin}, one finds that they have exactly the same structure.
This allows to extract the metric $h^{(\bfg)}_{mn}(\vph)$:
\be
h^{(\bfg)}_{mn}=\bOm_\gamma \cN_\bfg \cA^{(\bfg)}_m\cA^{(\bfg)}_n.
\label{e56}
\ee

However, the above argument is not quite exact because it is insensitive to the terms in the action \eqref{escalar}
proportional to $\p_m S_\bfg$. Indeed, such terms can be generated either by integration by parts or by
a change of variables involving non-perturbative terms \cite{Alexandrov:2021shf}.
In either case the scattering amplitudes should not be affected and hence \eqref{e56} is valid only up to addition
of the gradient of the instanton action.

To recapitulate, it is convenient to use the language of differential forms.
Let us define $\cA_\bfg=\cA^{(\bfg)}_m(\vph)\de\vph^m$. Then the above analysis of string amplitudes shows that
in the small $g_s$ limit the NS5-instanton contribution to the hypermultiplet metric should be of the form
\be
\de s^2_{\rm NS5}\simeq
\sum_\bfg \bOm_\gamma \cN_\bfg\, e^{-S_\bfg}\( \cA_\bfg^2+\cB_\bfg \de S_\bfg\)
\label{predictNS5}
\ee
with some one-form $\cB_\bfg$ which this analysis cannot fix.
Below we verify that the metric \eqref{ds2final2} does fit this form and find all functions and one-forms
appearing in \eqref{predictNS5} explicitly. On one hand, this provides another non-trivial check on our results,
and on the other hand, gives a prediction for the amplitudes $\cA^{(\bfg)}_m$.

\subsection{Limit from twistor space result}
\label{subsec-NS5-gslimit}

\subsubsection{Definition of the limit}
Before extracting the small string coupling limit, we should definite it.
Namely, we should specify how various fields behave in this limit. Naively, it is enough to send
the variable $r$, related to the dilaton, to infinity and to keep all other variables fixed.
However, this naive limit suffers from a problem.
It is easy to see already for the classical metric obtained from \eqref{1lmetric} by setting $c=0$
that different terms have different scaling in $r$.
This makes it difficult even to formulate what is meant by the leading order metric in the
large $r$ limit.

On the other hand, in \cite{Alexandrov:2021shf} it was noticed that one does get a homogeneous scaling in $g_s$
for both the classical metric and the small string coupling limit of D-instanton corrections provided we take this limit
as
\be
r,\sigma \sim g_s^{-2},
\qquad
\zeta^\Lambda,\tzeta_\Lambda \sim g_s^{-1},
\qquad
z^a \sim g_s^0,
\qquad
g_s\to 0,
\label{scaling}
\ee
which also implies $\cR\sim g_s^{-1}$.
Besides, in this modified limit the D-instanton corrections have been shown to acquire essentially
the same quadratic structure as in \eqref{predictNS5}
and matched exactly against computations of string amplitudes.
This strongly suggests that \eqref{scaling} is the correct limit to consider for NS5-instantons as well.

In fact, the origin of the scaling \eqref{scaling} can be easily understood from the supergravity action in
ten dimensions\footnote{We thank Ashoke Sen for clarification of this issue.}
where the kinetic terms in the NS sector are multiplied by the factor $e^{-2\phi_{(10)}}\sim g_s^{-2}$. Since such a factor is absent in
the RR kinetic terms, the RR fields should scale as $g_s^{-1}$ so that the whole action scales uniformly.
Finally, the scaling of $\sigma$ follows from the dualization of the $B$-field.
Moreover, this rescaling of the RR fields is necessary to match them with their worldsheet counterparts \cite{Polchinski:1998rr}.
Therefore, it is also necessary to establish a correspondence between the small $g_s$ expansion of the effective action
and the genus expansion of string theory. Hence, if we want to derive predictions for any string amplitudes,
we must study the limit \eqref{scaling} rather than the naive one where only $r$ scales with $g_s$.
Below we do it for the NS5-instanton contribution to the metric \eqref{ds2final2}.

\subsubsection{Saddle point evaluation}

As an important preliminary step, let us evaluate in the small $g_s$ limit, as it is defined in \eqref{scaling},
the following integral
\be
\int_{\ell_\bfg}\frac{\de t}{t}\, f(t) \, e^{-2\pi\I k\cS_\bfg(t)},
\label{intft}
\ee
where $f(t)$ is a polynomial in $t$ and $t^{-1}$, and the precise action, including $\Psi_{\hbfg}$ contribution to \eqref{5pqZ2}, is given by
\be
\cS_\bfg(t)=\frac{1}{2} \(\talp+(\xi^\Lambda-2n^\Lambda) \txi_\Lambda\)
- F(\xi-n)-\frac{m_a(\xi^a-n^a) +Q}{k(\xi^0-n^0)}-\frac{m_0}{k}\, ,
\label{effS}
\ee
and all Darboux coordinates in \eqref{effS} are set to their perturbative expressions \eqref{pert-Darboux0}. The terms $m_\Lambda$ are  combinations of the charges and their precise definition is not important for the structure.
This type of integrals multiplies all terms in \eqref{frVtot1} with positive $k$
and thus encodes NS5-instanton corrections to the hypermultiplet metric.

In the limit, the ``effective action" $\cS_\bfg (t)$ can be expanded as $\cS_\bfg(t)=-4\I c  \log t +\sum_{\ell\ge 0} \cS_{\bfg,\ell}$
where $\cS_{\bfg,\ell}$ scales as $g_s^{\ell-2}$ and we extracted the only term having a logarithmic dependence on $t$.
Note that the expansion starts from the term scaling as $g_s^{-2}$,
as is expected for NS5-instantons.
For our purpose, it is sufficient to keep in the exponential only terms with non-positive scaling power, i.e. with $\ell=0,1,2$.
Then the resulting integral can be evaluated by saddle point.
It is easy to see that at the leading order the result is given by
\be
\frac{f(t_0)\, e^{- S_\bfg}}{t_0^{1+8\pi k c}\sqrt{\I k\cS''_{\bfg,0}}}\, ,
\qquad
S_\bfg=2\pi\I k\(\cS_{\bfg,0}+ \cS_{\bfg,1}+\cS_{\bfg,2}-\hf\, \frac{(\cS'_{\bfg,1})^2}{\cS''_{\bfg,0}}\),
\label{saddleresult}
\ee
where all $\cS_{\bfg,\ell}$ and their derivatives (denoted by primes) are evaluated at $t_0$ which is a solution of
the leading order saddle point equation $\cS'_{\bfg,0}=0$.

From \eqref{effS}, we find that
\begin{subequations}
	\bea
	\cS_{0}(t)&=& \hf\(\sigma+\zeta^\Lambda\tzeta_\Lambda\)-\cR^2\Re(\bz^\Lambda F_\Lambda(z))
	+ \cR\zeta^\Lambda \(t^{-1} F_\Lambda(z) -t \bF_\Lambda(\bz)\)
	\nn\\
	&&\qquad
	+\cR^2\(t^{-2} F(z)+t^2 \bF(\bz)\)-F(\xi(t)),
	\label{cS0}
	\\
	\cS_{\bfg,1}(t)&=& - n^\Lambda \[ \tzeta_\Lambda + \cR\(t^{-1}F_\Lambda(z) - t \bF_\Lambda(\bz)\) - F_{\Lambda}(\xi(t))\],
	\label{Sg1}
	\\
	\cS_{\bfg,2}(t)&=&-\frac{1}{2}\,n^\Lambda n^\Sigma F_{\Lambda \Sigma}(\xi(t))-\frac{m_a}{k}\, \frac{\xi^a(t)}{\xi^0(t)}-\frac{m_0}{k}\, .
	\eea
\end{subequations}
Note that we dropped the index $\bfg$ on $\cS_0$ because this part of the effective action does not depend on any charges.
Taking the first derivative of \eqref{cS0}, one finds that the equation on $t_0$ can be written as
\be
t_0^{-1} z^\Lambda \euF_\Lambda+t_0 \bz^\Lambda\beuF_\Lambda=-\I\cR K,
\label{eqt0}
\ee
where we introduced
\be
\begin{split}
	\euF_\Lambda &= F_\Lambda(\xi(t_0)) - \xi^\Sigma(t_0) F_{\Lambda\Sigma}(z),
	\\
	\beuF_\Lambda &= F_{\Lambda}(\xi(t_0)) - \xi^\Sigma(t_0) \bF_{\Lambda\Sigma}(\bz).
\end{split}
\label{reverseSaddle}
\ee
Note that for generic prepotential this equation is highly non-linear and cannot be solved explicitly, while in the rigid case
($h^{2,1}(\CY)=0$) where $F(X)$ is quadratic and given by \eqref{prepUHM} one finds
$\euF_0=0$, $\beuF_0=-2\I\tau_2 \xi(t_0)$ and $t_0=\zeta/\cR$.
The second derivative appearing in \eqref{saddleresult} is found to be
\be
\begin{split}
	t_0^2 \cS''_{0}(t_0)
	=&\,
	\cR\Bigl[ \zeta^\Lambda\( t_0^{-1} F_{\Lambda}(z)-t_0 \bF_\Lambda(\bz)\)  +4\cR \(t_0^{-2} F(z) +t_0^2 \bF(\bz)\)
	\\
	&\, -\(\zeta^\Lambda\(t_0^{-1}z^\Sigma-t_0\bz^\Sigma\) +2\cR\(t_0^{-2} z^\Lambda z^\Sigma+t_0^2 \bz^\Lambda\bz^\Sigma\)\)
	F_{\Lambda\Sigma}(\xi(t_0))\Bigr].
\end{split}
\ee
Finally, the instanton action defined in \eqref{saddleresult} is given by
\bea
S_\bfg &=&
2\pi\I k\Biggl[
\hf\(\sigma+ \zeta^\Lambda \tzeta_\Lambda\) - \cR^2 \Re (\bz^\Lambda F_\Lambda)
+\cR \zeta^\Lambda (t_0^{-1} F_\Lambda-t_0 \bF_\Lambda)  + \cR^2 \(t_0^{-2} F +  t_0^2 \bF\)  - F(\xi(t_0))
\nn\\
&&-n^\Lambda\( \tzeta_\Lambda + \cR\(t_0^{-1}F_\Lambda - t_0 \bF_\Lambda \)- F_{\Lambda}(\xi(t_0))\)
-\frac{1}{2}\,n^\Lambda n^\Sigma F_{\Lambda \Sigma}(\xi(t_0)) - \frac{m_a}{k}\,\frac{\xi^a(t_0)}{\xi^0(t_0)} -\frac{m_0}{k}
\nn\\
&&- \frac{\cR^2}{2t_0^2\, \cS''_{\bfg,0}(t_0)}\, \Bigl(n^\Lambda \Bigl( t_0^{-1}F_\Lambda +t_0\bF_\Lambda
- \(t_0^{-1}z^\Sigma +t_0 \bz^\Sigma\) F_{\Lambda \Sigma}(\xi(t_0))\Bigr)\Bigr)^2\Biggr] .
\label{exp-saddle-I0}
\eea

The result \eqref{exp-saddle-I0} appears to be quite complicated
and its physical significance is not clear to us. However, one can note that all complications come from keeping the RR fields large
so that $t_0$ remains finite and $F(\xi(t_0))$ does not reduce to $F(z)$.
Probably it is not too surprising that large RR fields lead to a weird instanton action since they couple to the self-dual 3-form
living on the world-volume of the NS5-brane, which makes the problem inherently quantum.
Below, in section \ref{subsec-NS5-smallRR} we show that making the background RR fields small,
one reduces \eqref{exp-saddle-I0} to the expected instanton action.
Nevertheless, even without taking this additional limit, we are able to show that NS5-corrections to the hypermultiplet metric
match the quadratic structure \eqref{predictNS5} predicted by the analysis of string amplitudes.

\subsubsection{The metric and its square structure}

Using the results of the previous subsection, we conclude that at the leading order in the limit \eqref{scaling} one has
\be
\cIn{0}_\bfg\approx
- \frac{k}{p^0}\,\frac{\xi^0(t_0)}{t_0^{1+8\pi k c}}\,\frac{ e^{-S_\bfg}}{\sqrt{\I k \cS_{0}''(t_0)}}\, ,
\label{saddleI0-neg}
\ee
while all other integral functions, in this limit, are proportional to it.

Let us now see how the one-instanton corrected metric \eqref{ds2final2} simplifies in our limit.
We will consider only NS5-corrections given by $\frV_\bfg$ \eqref{frVtot1} with positive $k$
and extract its leading order contribution.
Then we can use the leading order result \eqref{saddleI0-neg} and express all other integrals in terms of $\cIn{0}_\bfg$.
In addition, there are the following simplifications:
\begin{itemize}
	\item
	One can drop all terms proportional to the one-loop parameter $c$ since they are always of subleading order.
	\item
	The terms in the first three lines of \eqref{frVtot1} are subleading compared to the rest of the expression and thus can also be dropped.
	\item
	The variables $r$ and $\cR$ can be exchanged (even in the perturbative part of the metric)
	using the classical relation $r=\cR^2 K/4$.
\end{itemize}
As a result, the NS5 one-instanton contribution reduces to
\bea
\frV_\bfg&\approx&
-2\pi k\cIn{0}_\bfg \Biggl[
\frac{4}{\cR} \,\biggl(
(\de r)^2 -\frS^2 + r^2\,\frac{(\de K)^2}{K^2}-4r^2\cA_K^2
-\frac{2r}{K}\,|\frJ|^2
\biggr)
\nn\\
&&
+\frac{\cR}{2r^2} (z^\Lambda \euF_\Lambda) (\bz^\Sigma\beuF_\Sigma)
\biggl( (\de r)^2-\frS^2
-2r\de r\, \de\log K+4r\cA_K\frS
+r^2\,\frac{(\de K)^2}{K^2}-4r^2\cA_K^2\biggr)
\nn\\
&&
+ \frac{2}{r}\,\euF\Bigl(\de r\,\frS -2r^2\cA_K\de\log K\Bigr)
-\frac{\I\cR}{2r}\, \euF\(t_0^{-1}\frJ + t_0 \bar\frJ\) \(\frS-2r\cA_K\)
\nn\\
&&
-2\I\( \de r+r\de\log K\)\Bigl(t_0^{-1}\frJ  + t_0 \bar\frJ
-\I\cR^{-1} N^{\Lambda\Sigma} \(\beuF_\Lambda \frZ_\Sigma + \euF_\Lambda \bar\frZ_\Sigma\)
+t_0^{-1}\euF_\Lambda\de z^\Lambda + t_0 \beuF_\Lambda \de \bz^\Lambda \Bigr)
\nn\\
&&
+2\I N^{\Lambda \Sigma}\Bigl(t_0\beuF_\Lambda\frZ_\Sigma\bar\frJ +t_0^{-1}\euF_\Lambda\bar\frZ_\Sigma\frJ \Bigr)
-2\cR\Bigl(\euF_\Lambda \de z^\Lambda\bar\frJ+ \beuF_\Lambda \de \bz^\Lambda\frJ\Bigr)
\label{smallfrV}\\
&&
-\frac{1}{2r}\, \euF\(\frS-2r\cA_K\)
\Bigl(N^{\Lambda \Sigma}\(\beuF_\Sigma\frZ_\Lambda + \euF_\Sigma\bar\frZ_\Lambda \)
+\I \cR \(t_0^{-1}\euF_\Lambda \de z^\Lambda  + t_0 \beuF_\Lambda \de \bz^\Lambda \) \Bigr)
\nn\\
&&
+2\euF_\Lambda \beuF_\Sigma \Bigl(\cR^{-1}N^{\Lambda\Xi} N^{\Sigma\Theta} \frZ_\Theta \bar\frZ_\Xi-\cR \de z^\Lambda \de\bz^\Sigma\Bigr)
+2\I N^{\Lambda\Sigma} \Bigl(
t_0^{-1}\euF_\Sigma \euF_\Xi\bar\frZ_\Lambda\de z^\Xi
+t_0\beuF_\Sigma \beuF_\Xi\frZ_\Lambda \de\bz^\Xi
\Bigr))
\Biggr]
\nn\\
&-&
\[ \frac{\I}{2}\,\euF\,\de\log K
+\frac{4\I r}{\cR}\,\cA_K
-\frac{1}{\cR}\, N^{\Lambda \Sigma}\(\euF_\Lambda \bar\frZ_\Sigma - \beuF_\Lambda \frZ_\Sigma\)
+ \frac{\cR}{2r}\(z^\Lambda \euF_\Lambda \bar\frJ - \bz^\Lambda \beuF_\Lambda \frJ\) \]\de \cIn{0}_\bfg,
\nn
\eea
where we introduced another convenient notation
\be
\euF= t_0^{-1} z^\Lambda \euF_\Lambda- t_0 \bz^\Lambda \beuF_\Lambda.
\label{def-euF}
\ee

It is straightforward to verify that the expression \eqref{smallfrV} can be rewritten as
\be
\frV_\bfg \approx -\frac{\pi k}{\cR}\, \cIn{0}_\bfg \Bigl(\cA^2+\cB \de \cS_0 \Bigr),
\label{quadfrV}
\ee
where
\be
\begin{split}
	\cA =&\,
	2\de r + \frac{\cR\euF}{2r} \, \frS - \I\cR\(t_0^{-1}\frJ+ t_0 \bar\frJ \)
	- N^{\Lambda \Sigma}\(\euF_\Lambda \bar\frZ_\Sigma+ \beuF_\Lambda \frZ_\Sigma \)
	\\
	&\,
	+ 2r\de \log K - \cR\euF \cA_K
	-\I\cR \(t_0^{-1}\euF_\Lambda \de z^\Lambda +t_0 \bar\euF_\Lambda \de \bz^\Lambda \),
\end{split}
\label{cA1form-square}
\ee
\be
\begin{split}
	\cB =&\,
	\frac{\cR\euF}{2r}\,\de r-2 \frS
	+ \I N^{\Lambda \Sigma}\(\euF_\Lambda \bar\frZ_\Sigma- \beuF_\Lambda \frZ_\Sigma \)
	- \frac{4\I}{K}\(z^\Lambda \euF_\Lambda \,\bar\frJ- \bz^\Lambda \beuF_\Lambda\, \frJ\)
	\\
	&\,
	- \cR\(t_0^{-1}\frJ- t_0 \bar\frJ \)
	+ \frac{1}{2}\,\cR\euF\, \de\log K + 4r\cA_K
	- \cR \(t_0^{-1}\euF_\Lambda  \de z^\Lambda - t_0  \beuF_\Lambda \de \bz^\Lambda \)
\end{split}
\ee
and
\be
\begin{split}
	\de\cS_{0}=&\,
	-\[\frac{\cR\euF}{2r}\,\de r -2\frS
	+\I N^{\Lambda \Sigma}\(\euF_\Lambda \bar\frZ_\Sigma- \beuF_\Lambda \frZ_\Sigma \)\]
	\\
	&\,
	-\cR\(t_0^{-1} \frJ - t_0 \bar\frJ\)
	+\frac{1}{2}\,\cR\euF\,\de\log K+4r\cA_K
	-\cR \(t_0^{-1}\euF_\Lambda  \de z^\Lambda - t_0  \beuF_\Lambda \de \bz^\Lambda \) .
\end{split}
\label{fulldS0}
\ee
Note that the only dependence on the charge vector is in the overall coefficient, while $\cA$ and $\cB$ are charge independent.
One can also check that for negative $k$ the result is obtained by complex conjugation, namely, $\frV_{-\bfg}=-\bar\frV_\bfg$.
Therefore, combining \eqref{ds2final2}, \eqref{saddleI0-neg} and \eqref{quadfrV}, one finds that the full NS5-instanton correction
to the hypermultiplet metric in the small string coupling limit is given by
\be
\de s^2_{\rm NS5}\simeq
\frac{1}{4\pi r }\sum_{\bfg\; :\; k>0} \bOm_\gamma \, \frac{k^3}{p^0}\,
\Re \[\frac{\xi^0(t_0)}{t_0^{1+8\pi k c}}\,\frac{ e^{-S_\bfg}}{\sqrt{\I k \cS_{0}''(t_0)}}\,
\Bigl( \cA^2+\cB \de \cS_0\Bigr)\].
\label{resultNS5}
\ee
This is precisely the form \eqref{predictNS5} of the instanton contribution that we found from the analysis of string amplitudes.
Furthermore, comparing  \eqref{predictNS5} and \eqref{resultNS5}, we can identify (for positive $k$)
\be
\begin{split}
	& \cA_\bfg=f_\bfg \,\cA+g_\bfg\, \de \cS_0,
	\qquad
	\cB_\bfg=\frac{1}{2\pi\I k} \(f_\bfg^2\cB-2f_\bfg g_\bfg \cA-g_\bfg^2\de \cS_0\),
	\\
	&\hspace{3cm}
	\cN_\bfg=\frac{k^3}{8\pi r p^0}\,\frac{\xi^0(t_0)}{t_0^{1+8\pi k c}}\, \frac{f_\bfg^{-2}\, }{\sqrt{\I k \cS_{0}''(t_0)}}\, ,
\end{split}
\label{predict}
\ee
where $f_\bfg$ and $g_\bfg$ are {\it a priori} unknown functions of the moduli.
It is tempting to speculate that $g_\bfg=0$ and $f_\bfg$ is a constant.
But even keeping these functions arbitrary, the identifications \eqref{predict}
provide a large set of predictions for the amplitudes $\cA^{(\bfg)}_m(\phi)$ of one closed string vertex operator
and two fermion zero modes in the NS5-brane background.

\subsection{The limit of small RR fields}
\label{subsec-NS5-smallRR}

Since, at present, calculation of string amplitudes in a non-trivial RR-background appears to be an outstanding problem,
we find it natural to consider our results in the additional limit of small $\zeta,\tzeta$.

We start by analyzing the saddle point equation \eqref{eqt0}.
It is easy to realize that its solution is proportional to $\zeta^\Lambda/\cR$, which is an exact result for a quadratic prepotential.
Therefore, in the limit of small RR fields, the first and third terms in the expression for $\xi_{\rm pert}^\Lambda $ \eqref{pert-Darboux0}
are suppressed by two orders comparing to the second term. Expanding around it, we find
\be
\begin{split}
	\euF_\Lambda =&\,
	-\frac{t_0}{2\cR}\, F_{\Lambda\Sigma\Theta}(z) \(\zeta^\Sigma-\cR t_0 \bz^\Sigma\)\(\zeta^\Theta-\cR t_0 \bz^\Theta\)
	+O(g_s^4\zeta^5),
	\\
	\beuF_\Lambda =&\,  -\frac{\I\cR}{t_0}\, N_{\Lambda\Sigma} z^\Sigma -\I N_{\Lambda\Sigma}\(\zeta^\Sigma-\cR t_0 \bz^\Sigma\)
	+O(g_s^2\zeta^3).
\end{split}
\ee
Substituting these expansions into \eqref{eqt0}, it is easy to solve the resulting equation on $t_0$. This gives
\be
t_0=\frac{N_{\Lambda\Sigma}\bz^\Lambda \zeta^\Sigma}{\cR N_{XY}\bz^X \bz^Y}+O((g_s\zeta)^3),
\label{t0-small}
\ee
consistently with the expectation that $t_0\sim \zeta/\cR$.

We can now perform the same expansion in the instanton action \eqref{exp-saddle-I0}.
One can observe that ignoring the next order term in \eqref{t0-small} corresponds to ignoring the terms of order $O(g_s^2\zeta^{2+n})$
with $n=0,1,2$ in contributions that scale as $g_s^{-n}$ in the limit \eqref{scaling}.
Dropping such terms and taking into account that
\be
\cS''_{0}(t_0)\approx \I \cR^2 N_{\Lambda\Sigma}\bz^\Lambda\bz^\Sigma,
\ee
it is straightforward to check that $S_\bfg$ reduces to the NS5-brane instanton action
\be
S_\bfg^{(0)}=4\pi k r +\pi\I k \left(  \sigma + \zeta^\Lambda \tzeta_\Lambda
-2 n^\Lambda \tzeta_\Lambda
- \bar\cN_{\Lambda\Sigma} (\zeta^\Lambda-n^\Lambda) (\zeta^\Sigma-n^\Sigma)\right)
-2\pi \I m_\Lambda z^\Lambda,
\label{NS5instact}
\ee
(plus a trivial constant term $4\pi k c$). 
In particular, this reproduces the result from \cite{Alexandrov:2010ca} where the naive limit $r\to\infty$ without scaling other fields was used
to extract the same expression.
On one hand, this expression reproduces the result of a classical analysis of instanton solutions in $N=2$ supergravity \cite{deVroome:2006xu}
and on the other hand, it makes contact with the Gaussian NS5-partition function obtained by holomorphic factorization.
This establishes a link with the known results about these instantons and shows that they should emerge
from string amplitudes only in the approximation of small RR fields.

This double limit procedure might seem equivalent to the naive single limit where all fields are
fixed and only $r$ scales, but this is not the case.
Indeed, there are terms in the metric, that survive the naive limit but are dropped
in the limit \eqref{scaling} even before taking the RR fields to be small.
Had we scaled only $r$ from the start, these terms would have remained relevant and would change our results.
For example, we would have to change the saddle point \eqref{t0-small} by replacing $\zeta^\Lambda$
by $\zeta^\Lambda-n^\Lambda$.
The reason why it is the double limit rather than the naive one that should be considered is
our interest in predictions for string amplitudes.
The point is that the first limit \eqref{scaling} evaluated in the previous subsection ensures a relation between
various expansion terms and string amplitudes, while the second limit of small RR fields is supposed to be taken
already in each such term separately. In this way it simply gives the corresponding string amplitudes
in a particular region of the moduli space.
Instead, the naive limit mixes contributions from different string diagrams.
For example, in \eqref{NS5instact} the terms linear in $n^\Lambda$ originate
from $\cS_{\bfg,1}$ \eqref{Sg1} scaling as $g_s^{-1}$ and therefore are expected to capture
disk amplitudes with boundary on D-branes bound to the NS5-brane, while in the naive limit
they have a trivial scaling and are mixed with sphere contributions from $\cS_0$.

Finally, we evaluate the limit of small RR fields for the one-forms \eqref{cA1form-square}-\eqref{fulldS0},
which according to our reasoning should provide predictions for the same limit of the sphere three-point functions.
To this end, it is also useful to note that for the function defined in \eqref{def-euF} one obtains
a very simple result
\be
\euF=\I\cR K+O(g_s^3\zeta^4).
\ee
Then keeping only terms that are at most quadratic in the RR fields and using notation
\be
\hzeta^\Lambda=\zeta^\Lambda-\frac{N_{\Sigma\Theta}\bz^\Sigma \zeta^\Theta}{N_{XY}\bz^X \bz^Y}\, \bz^\Lambda,
\ee
one gets\footnote{See \cite[Footnote 14]{Alexandrov:2023hiv} for some subtleties in the expression of $\cB$.}
\be
\begin{split}
	\hspace{-0.4cm}
	\cA \approx &\,
	2\de r + 2\I \frS+\I\zeta^\Lambda\frZ_\Lambda -\I\cR t_0 \bz^\Lambda(\frZ_\Lambda+\bar\frZ_\Lambda)
	-\frac{\I}{2}\,F_{\Lambda\Sigma\Theta}\hzeta^\Lambda\hzeta^\Sigma \de z^\Theta
	- \cR t_0 N_{\Lambda\Sigma}\hzeta^\Lambda \de \bz^\Sigma,
	\\
	\hspace{-0.4cm}
	\cB \approx&\,
	2\I \de r -2 \frS+\frac{2\cR}{t_0}\, \frJ -\zeta^\Lambda\frZ_\Lambda +\cR t_0 \bz^\Lambda(\frZ_\Lambda+\bar\frZ_\Lambda)
	-\frac{1}{2}\,F_{\Lambda\Sigma\Theta}\hzeta^\Lambda\hzeta^\Sigma \de z^\Theta
	- \I\cR t_0 N_{\Lambda\Sigma}\hzeta^\Lambda \de \bz^\Sigma,
	\\
	\hspace{-0.4cm}
	\de\cS_{0}\approx &\,
	-2\I \de r +2 \frS +\zeta^\Lambda\frZ_\Lambda -\cR t_0 \bz^\Lambda(\frZ_\Lambda-\bar\frZ_\Lambda)
	-\frac{1}{2}\,F_{\Lambda\Sigma\Theta}\hzeta^\Lambda\hzeta^\Sigma \de z^\Theta
	- \I\cR t_0 N_{\Lambda\Sigma}\hzeta^\Lambda \de \bz^\Sigma.
\end{split}
\label{limit-onrforms}
\ee
It is interesting that both $\cA$ and $\cB$ are very similar to $\de \cS_0$.
In particular, one has a very simple relation
\be
\cA\approx\I\de\cS_0-2\cR t_0 \(\I\bar\frJ+ N_{\Lambda\Sigma}\hzeta^\Lambda \de \bz^\Sigma\).
\ee

Combining the one-forms \eqref{limit-onrforms} with the identifications \eqref{predict}, one obtains predictions
for the sphere three-point functions in the NS5-background in the limit of small RR fields.
This can be viewed as one of the main results of our work.

\chapter{Quantum Riemann-Hilbert problem}
\label{chap-qRH}

The D-instanton corrected Darboux coordinates, or their Fourier modes, can be obtained as solutions to a Riemann-Hilbert (RH) problem induced by the BPS indices. Such a RH problem was introduced in \cite{Gaiotto:2008cd}, in the context of four-dimensional $N=2$ gauge theories. 
Its solutions describe the geometry of the moduli space of circle compactifications of 4d $N=2$ super-Yang-Mills theory. The same problem also describes the hypermultiplet moduli space in Calabi-Yau string compactifications \cite{Alexandrov:2008gh, Alexandrov:2011va}.
Furthermore, a variation of the problem, obtained by taking the conformal limit, was studied in \cite{Gaiotto:2014bza,Bridgeland:2016nqw}. Its solutions give rise to interesting geometric structures \cite{Bridgeland:2019fbi, Alexandrov:2021wxu}, encoded in the Joyce and \pleb potentials which satisfy isomonodromy and heavenly equations, respectively. 
These functions are related to instanton generating functions and to the contact potential \cite{Alexandrov:2018lgp, Alexandrov:2021wxu}. Crucially, this classical problem can be recast into a more tractable form as a Thermodynamic Bethe Ansatz (TBA)-like integral equation \eqref{TBAeq-sh}.

This chapter investigates a quantized version of this problem. A similar deformation of the original problem was already considered in \cite{Alexandrov:2019rth, Cecotti:2014wea, Barbieri:2019yya, Chuang:2022uey}. Our work here complements and extends these previous efforts in formulating, analyzing, and solving this problem. 
Non-commutativity is introduced by turning on a refinement parameter, which is known in some cases to effectively quantize the moduli space \cite{Gaiotto:2010be,Cecotti:2014wea,Alexandrov:2019rth}. At the same time, the BPS invariants are replaced by their refined counterparts \eqref{refBPS-Laurent-pol}.
While these refined invariants are not, in general, deformation invariant, they crucially satisfy the refined Kontsevich-Soibelman (KS) wall-crossing formula. This property is sufficient to set up a well-defined quantum Riemann-Hilbert problem, where the refinement parameter introduces a fundamental non-commutativity. One can regard the solutions to this problem as refined analogues of Darboux coordinates on the quantized moduli space.

Previous work in \cite{Alexandrov:2019rth} proposed an integral equation based on the non-commutative Moyal star product as a candidate for describing this quantum system. However, the solutions to that equation did not directly solve the quantum RH problem and, critically, lacked a well-defined classical (unrefined) limit. Despite these shortcomings, they proved useful for defining quantum analogues of the Joyce and \pleb potentials, that solve deformed versions of the isomonodromy and heavenly equations \cite{Alexandrov:2021wxu}. A central result of this chapter is to resolve these issues. 

We introduce a new set of variables, constructed from the solutions of the integral equation in \cite{Alexandrov:2019rth}, that do solve the quantum RH problem. We prove that these new solutions possess a regular unrefined limit and demonstrate that they correctly reduce to solutions of the classical RH problem in that limit. We further provide multiple perturbative expansions for these functions, which in turn give a new representation for the classical solutions themselves.

Interestingly, these new solutions can be derived from the action of a single, charge-independent potential. We propose that this object should be interpreted as a generating function for the refined Darboux coordinates. While this potential itself does not have a smooth unrefined limit, we show that its logarithm does. We prove that in the unrefined limit, this logarithm reduces to the classical generating function for Darboux coordinates, whose expression was previously known only up to the second order in DT invariants \cite{Alexandrov:2017qhn}. Our work not only verifies this agreement at second order but also provides an explicit expression for this unrefined generating function valid to all orders in the instanton expansion.

Finally, we restrict to the case of the so-called uncoupled BPS indices that was also studied in \cite{Barbieri:2019yya}. In this case, the expression of the solutions to the quantum RH problem gets significant simplifications that allow us to compute it explicitly and provide the result in terms of the modified Gamma function. This explicit expression matches the one found in \cite{Barbieri:2019yya}. 

This chapter follows closely \cite{Alexandrov:2025abc} while focusing on one setup among the three explored in that paper.

\section{Classical Riemann-Hilbert problem}
\label{sec-qTBA-classical}

We will first define the notion of (unrefined) BPS structure, which axiomatizes the main ingredients needed to pose the Riemann-Hilbert problem. Then, we introduce the setup and formulate the RH problem. Finally, we give a perturbative solution of this problem.

\subsection{Unrefined BPS structure}
\begin{definition}
\label{definition-BPS}
A BPS structure is given by 
\begin{enumerate}
	\item a finite-rank free abelian group $\Gamma\simeq \IZ^{\oplus 2n}$ with a skew-symmetric bilinear form:
	\be
	\langle -,-  \rangle : \Gamma \times \Gamma \xrightarrow{} \IZ,
	\ee
	
	\item a homomorphism of abelian groups 
	\be
	Z: \Gamma \to \IC,
	\ee
	
	\item a map $\Omega:\Gamma \to \IQ$,
\end{enumerate}
which have to satisfy the following conditions 
\begin{itemize}
	\item symmetry: $\Omega(\gamma)=\Omega(-\gamma)$ for all $\gamma\in\Gamma$, 
	\item support property: given a norm $||\cdot||$ on the vector space $\Gamma\otimes_\IZ \IR$, there is a constant $C>0$ such that 
	\be
	\Omega(\gamma)\neq 0 \implies |Z(\gamma)|>C\cdot ||\gamma||.
	\ee
\end{itemize}
\end{definition} 

The BPS indices considered in this thesis, satisfy this definition.
In the language of the previous chapters, which we will continue using throughout this chapter, the lattice $\Gamma$ corresponds to the charge lattice, the function $Z$ to the central charge and the skew-symmetric product $\langle -,- \rangle$ to the Dirac product. The map $\Omega$ corresponds to the BPS indices \eqref{BPS-Trace}. Despite them being valued in $\IQ$, we will assume that this map is always integer\footnote{For mathematicians, these indices are defined in a different way and are only conjectured to be integer.}, which is the case for \eqref{BPS-Trace}. We also have rational invariants defined from $\Omega(\gamma)$ using \eqref{BPS-rational}.
Finally, a charge $\gamma\in\Gamma$ will be called \textit{active} if $\Omega(\gamma)\neq0$.

\subsection{Riemann-Hilbert problem and TBA equations}
\label{subsec-qTBA-TBA}

The problem that we will define is the RH problem introduced in \cite{Bridgeland:2016nqw}.
In this case, we start with a $4n$-dimensional complex \hk manifold $\cM$, given by a torus fibration over the space of stability conditions of the Calabi-Yau \cite{Bridgeland:2019fbi}.
We take complex coordinates on this space $\theta^a$ and $z^a$ indexed by $a=1,\dots,2n$.
Then, given the lattice $\Gamma$ we have two symplectic vectors $\theta, z$, parametrized by these coordinates, such that $z^a=Z_{\gamma^a}$ and $\theta^a=\theta_{\gamma^a}$ and for any charge $\gamma\in\Gamma$
\be
\label{thet-Z}
\theta_\gamma=\langle \gamma,\theta\rangle,
\qquad
Z_\gamma=\langle \gamma,z\rangle.
\ee


Using these vectors we define the functions
\be
\cXsf_\gamma(t) =  e^{2\pi\I(\theta_{\gamma}-Z_\gamma/t)}.
\label{cXsf}
\ee 	
Then, we take the BPS rays $\ell_\gamma$ defined in \eqref{rays} and
for each one, we denote
\be 
\Gamma_\ell=\{\gamma\in\Gamma_\star\ :\ \I Z_{\gamma}\in \ell, \ \Omega(\gamma)\ne 0\},
\label{setGl}
\ee  
where $\Gamma_{\!\star}=\Gamma\backslash\{0\}$ and we see that for an active ray $\Gamma_\ell$ is non-empty. Finally, we define the Riemann-Hilbert problem,

{\bf RH problem:} {\it  Find piece-wise holomorphic functions $\cX_\gamma(t)$ such that
	\begin{enumerate}
		\item 
		$\cX_\gamma\cX_{\gamma'}=\cX_{\gamma+\gamma'}$;
		\item 
		at $t\to 0$ the functions $\cX_\gamma$ reduce to $\cXsf_\gamma$ and at $t\to \infty$ they behave polynomially in $t$;
		\item 
		they jump across active BPS rays in such a way that, if  
		$\cX_\gamma^\pm$ are values on the clockwise and anticlockwise sides of $\ell$, respectively,
		then they are related by the Kontsevich-Soibelman (KS) symplectomorphism 
		\be 
		\cX_\gamma^-=\cX_\gamma^+ \prod_{\gamma'\in\Gamma_\ell}
		\(1-\sigma_{\gamma'} \cX_{\gamma'}^+\)^{\Omega(\gamma')\langle\gamma',\gamma\rangle},
		\label{KSjump}
		\ee 
		where $\sigma_\gamma$ is a sign factor known as quadratic refinement and defined by the relation
		$\sigma_\gamma\, \sigma_{\gamma'} = (-1)^{\langle\gamma,\gamma'\rangle} \sigma_{\gamma+\gamma'}$.
	\end{enumerate} 
}

Geometrically, the functions $\cX_\gamma(t)$ play the role of Darboux coordinates
on the twistor space over the moduli space $\cM$, where the variable $t$ parametrizes 
the $\IP^1$ fiber of the twistor fibration. Hence, solving the RH problem gives a complete description of $\cM$.
The first condition means that there are only $2n$ independent functions $\cX_\gamma(t)$.
The second condition constrains the asymptotic behavior of the solutions. 
Finally, the third condition is the KS wall-crossing formula and it illustrates the relation to BPS indices. Notice that if all of the latter vanish, the jumps become trivial and the functions \eqref{cXsf} would solve the RH problem. 
In that case the $\cXsf_\gamma(t)$ determine the geometry of $\cM$, which is given by a torus fibration $T^{2n}\to \cM\to\cB$, hence
the index `sf' meaning `semi-flat'.

The above RH problem can be reformulated as a system of TBA-like equations.
These equations are simplified if one trades the integer BPS indices for the rational ones \eqref{BPS-rational}, where they can be written in the following concise form
\be
\cX_0=\cX_0^{\rm sf}
\exp\[\IS_1 K_{01} \cX_1 \].
\quad
\label{TBAeq-sh}
\ee
Here we used notations introduced in \cite{Alexandrov:2021wxu}:
$\cX_i=\cX_{\gamma_i}(t_i)$, and 
\be
\IS_i=\sum_{\gamma_i\in\Gamma_{\!\star} }
\frac{\sigma_{\gamma_i}\bOm(\gamma_i)}{2\pi\I}\int_{\ell_{\gamma_i}}\frac{\de t_i}{t_i^2}\, ,
\qquad
K_{ij}=\gamma_{ij}\,\frac{t_i t_j}{t_j-t_i}\,,
\label{short3}
\ee
with $\gamma_{ij}=\langle\gamma_i,\gamma_j\rangle$.
The equation \eqref{TBAeq-sh} is equivalent to the the RH problem in all three setups considered in \cite{Alexandrov:2025abc}. In each of them, the form of the TBA equation remains unchanged while the weighted sum-integral \tIS{i} and the kernel $K_{ij}(t_i,t_j)$ are modified. 
Notice that the latter has a simple pole at $t_i=t_j$.
The idea is that crossing a BPS ray, one picks up the residue at the pole and that produces the KS jump \eqref{KSjump}.

\subsection{Perturbative Solution}
The TBA equation \eqref{TBAeq-sh} can be solved by iterations. This results in a formal expansion in powers of
BPS indices given by \cite{Filippini:2014sza}
\be
\label{Hexpand}
\cX_0 = \cXsf_0 \, \sum_{n=0}^{\infty} \left(
\prod_{i=1}^n  \IS_i\,  \cXsf_i \right) \sum_{\cT\in \IT_{n+1}^{\rm r}}\,
\frac{\prod_{e\in E_T}K_{s(e)t(e)}}{|\Aut(\cT)|}\, ,
\ee
where $\IT_n^{\rm r}$ is the set of rooted trees with $n$ vertices, each vertex
of the tree is labeled with a charge $\gamma_i\in\Gamma_{\!\star} $, with  $\gamma_1$ associated
to the root vertex, $E_T$ is the set of edges of $T$, 
and $s(e)$, $t(e)$ denote the source and target\footnote{We always orient the edges away from the root.} vertices of edge $e$, respectively. 
One of the results of our paper \cite{Alexandrov:2025abc} is the simpler expression \eqref{compact-hcXr-lim} for the solution $\cX_0$, that will be given later and which is valid in both setup 1 and setup 3 defined above.

In \cite{Bridgeland:2019fbi}, the Joyce and \pleb potentials were introduced as functions that encode the geometry of the complex \hk space $\cM$.
Although they are not directly needed in our work, they constitute an additional motivation for studying the Riemann-Hilbert problem, thus we will give some properties of the \pleb potential for illustration. The solutions of the classical RH problem can be used \cite{Alexandrov:2021wxu} to define a function on $\cM$ 
\be 
W= \IS_1 \cX_1\( 1-\hf\, \IS_2 K_{12} \cX_2\),
\label{Wfull-short}
\ee 
which satisfies the heavenly equation given by 
\be
\label{Heavenly-eqn}
\frac{\partial^2 W}{\partial z^a \partial \theta^b}-\frac{\partial^2W}{\partial z^b\partial\theta^a} =
\frac{1}{(2\pi)^2}\sum_{c,d=1}^{m} \omega^{cd} \frac{\partial^2 W}{\partial \theta^a\partial \theta^c}\frac{\partial^2 W}{\partial \theta^b \partial \theta^d},
\ee
where $\omega^{ab}=\langle\gamma^a,\gamma^b\rangle$ and the $\gamma^a$ form a basis of $\Gamma$. 
It coincides with the \pleb potential introduced\footnote{In the paper \cite{Bridgeland:2020zjh}, the function $W$ is called \textit{Joyce potential}.} 
in \cite{Bridgeland:2020zjh}.
This potential encodes the \hk geometry on $\cM$ \cite{Bridgeland:2019fbi}.
In the context of integrable systems, and in relation to the TBA-like equation \eqref{TBAeq-sh}, it is the conformal limit of the critical value of the Yang-Yang functional \cite{Alexandrov:2010pp}. 
In another setup describing the large volume limit of D3-instanton correction to the hypermultiplet moduli space in type IIB string theory on a CY, it coincides with the instanton generating potential introduced in \cite{Alexandrov:2018lgp} but does not solve the corresponding \pleb heavenly equation.

\section{Quantum Riemann-Hilbert problem}
\label{sec-qTBA-qTBA}

We start this section by defining the Moyal product and formulating the quantum RH problem. Then, we give the new variables that solve it and provide a perturbative expression for them. While these variables are defined in terms of the $\cXr_\gamma$ that already appeared in \cite{Alexandrov:2019rth}, we propose an integral equation that can be used to define them directly. Finally, we give an important lemma and illustrate how it was used in our work to prove many equations and identities. 

Throughout this section we work in Setup 1 from section \ref{subsec-qTBA-TBA}. We have analogous results for Setup 3 and they can be found in \cite{Alexandrov:2025abc}. 

\subsection{Formulating the problem}
First, we give the definition of a refined BPS structure. In fact, this is almost the same as definition \ref{definition-BPS} and the only change is the third statement which becomes
\begin{itemize}
	\item[3.]$\Omega$ is a polynomial
\be 
\label{LaurentOm}
\Omega:\Gamma \to \IQ\[y^{\pm1}\],\qquad \Omega(\gamma)= \sum_{n\in\IZ}\Omega_n(\gamma) \,y^{n}.
\ee
\end{itemize}
This makes contact with our discussion in chapter \ref{chap-string}, where refined BPS indices $\Omega(\gamma,y)$ were given as Poincar\'e polynomials in the refinement parameter $y$, that reduce to the original BPS invariants when $y=1$. 
Switching on $y$ induces a non-commutative structure \cite{Gaiotto:2010be,Cecotti:2014wea,Alexandrov:2019rth} 
which can be formalized as a quantum RH problem. Similarly to the unrefined case, we have rational refined BPS indices $\bOm(\gamma,y)$ defined in terms of $\Omega(\gamma,y)$ as in equation \eqref{BPS-rational}.

The non-commutative Riemann-Hilbert problem can be introduced by considering automorphisms on the quantum torus algebra \cite{Barbieri:2019yya}, but our approach follows \cite{Alexandrov:2019rth} where the Moyal star product ensures the correct commutation relations between a set of functions that play the role of algebra generators. The full dictionary between this approach and the one using automorphisms can be found in \cite{Alexandrov:2025abc}, while here we will directly formulate the problem using the Moyal product.

Representing the refinement parameter as  $y=e^{2\pi\I\alpha}$,
for any two functions $f,g$ on $\cZ=\IP^1\times \cM$, we define the Moyal star product
\be
f \star g =  f \exp\[ \frac{\alpha}{2\pi\I}\sum_{a,b}\omega_{ab}\,
\overleftarrow{\p}_{\!\theta^a}\overrightarrow{\p}_{\!\theta^b} \] g.
\label{starproduct}
\ee
We will use the script $\mathsf{X}$ for refined variables and keep $\cX$ for unrefined ones. Then, we take\footnote{While here this relation is a mere change of script, for other setups in \cite{Alexandrov:2025abc} it appears with a non-trivial rescaling.} $\cXz_\gamma(t)=\cXsf_\gamma(t)$ which satisfy 
the quantum torus algebra \cite{Barbieri:2019yya} with respect to the star product
\be
\cXz_{\gamma}(t)\star\cXz_{\gamma'}(t) =
y^{\langle\gamma,\gamma'\rangle}\cXz_{\gamma+\gamma'}(t).
\label{qtorus-alg-x}
\ee
Note that the previous relation can be generalized for functions at different parameter $t$,
\be
\cXz_{\gamma}(t)\star\cXz_{\gamma'}(t') =
y^{\langle\gamma,\gamma'\rangle}\cXz_{\gamma}(t)\cXz_{\gamma'}(t').
\label{starXX}
\ee 

Now we can write down the quantum RH problem:\\
{\bf qRH problem:} {\it  Find functions $\hcXr_\gamma(t)$ such that
	\begin{enumerate}
		\item 
		they satisfy
		\be 
		\hcXr_\gamma(t)\star \hcXr_{\gamma'}(t)=y^{\langle\gamma,\gamma'\rangle}\,\hcXr_{\gamma+\gamma'}(t);
		\label{starXXh}
		\ee 
		\item 
		they reduce to $\cXz_\gamma(t)$ at small $t$ and grow at most polynomially in $t$ when $t\to\infty$;
		\item 
		they jump across active BPS rays in such a way that, if  
		$\hcXrpm_\gamma$ are values on the clockwise and anticlockwise sides of $\ell$, respectively,
		then they are related by
		\be 
		\hcXrm_\gamma=\prodi{\gamma'\in\Gamma_\ell}{}_{\,\star}
		\exp_\star\Bigl(\sigma_{\gamma'}\bOm(\gamma',y)\,\kappa(\langle\gamma,\gamma'\rangle)\, 
		y^{\langle\gamma,\gamma'\rangle}\hcXr_{\gamma'} \Bigr)\star  \hcXrp_\gamma,
		\label{qKSjump-Xstar}
		\ee 
		where $\prod_\star$ denotes the star product and $\exp_\star(x)=\sum_{n=0}^\infty x\star \cdots \star x/n! $.
	\end{enumerate} 
}
\noindent
Note that, whereas the commutation relation \eqref{starXX} holds for arbitrary $t$ and $t'$,
the relation \eqref{starXXh} is imposed only at equal $t$'s.

\subsection{Perturbative solution}
\label{subsec-qTBA-sol}

As explained in the introduction, we will rely on the variables introduced in \cite{Alexandrov:2019rth}. In order to write the integral equation giving them, we will use the symbol \tIS{i} as in subsection \ref{subsec-qTBA-TBA} just with $\bOm(\gamma_i)$
replaced by $\bOm(\gamma_i,y)$, whereas the integration kernel taken to be  
\be
\Kr_{ij}=\frac{\ker_{ij}}{y-1/y}\, , 
\qquad
\ker_{ij}=\frac{t_i t_j}{t_j-t_i}\, .
\label{defkerref}
\ee
We then define the functions 
\be
\cXr_0 =\cXz_0\star\(1+\IS_1 \Kr_{01}\cXr_1\).
\label{inteqH-star}
\ee
We note first that \eqref{inteqH-star} does not have an unrefined limit since the integration kernel diverges at $y\to1$. Nonetheless, for generic $y$ one can solve iteratively it and find\footnote{There is a subtle but important ambiguity that appears in expressions with multiple nested integrals. It does not appear in the unrefined case of section \ref{sec-qTBA-classical} and is resolved by taking a certain symmetric prescription explained in \cite{Alexandrov:2025abc}.} 
\be  
\cXr_0=\cXz_0\sum_{n=0}^\infty \(\prod_{k=1}^n\IS_k \Kr_{k-1,k}\,\cXz_k\) y^{\sum_{j>i=0}^n\gamma_{ij}}.
\label{expXref}
\ee 
Naturally, these functions do not have an unrefined limit and thus cannot be the solutions we are looking for. Despite this, if we take the combination 
\be 
\Wr = \IS_1\, \cXr_1,
\label{Wref}
\ee
it turns out to be regular at $y=1$, where it reduces to \eqref{Wfull-short}, and to solve a $\star$-deformed version of the \pleb heavenly equation \cite{Strachan:1992em,Takasaki:1992jf,Bridgeland:2020zjh}! This suggests that \eqref{expXref} should be closely related to the actual solutions of the qRH. 

Inspired by this, we define
\be 
\label{hcXranz}
\begin{split} 
	\hcXr_0=&\, \(1+\cJr(t_0)\)_\star^{-1}\star \cXr_0
	\\
	=&\, \(1+\cJr(t_0)\)_\star^{-1}\star \cXz_0 \star \(1+\cJr(t_0)\),
\end{split}
\ee 
where the star index means that $(1+x)^{-1}_\star=\sum_{n=0}^\infty(-1)^n x\star \cdots \star x $ and the charge-independent 
\be 
\cJr(t_0) =\IS_1\,\Kr_{01} \cXr_1.
\label{Jref}
\ee
We claim that functions \eqref{hcXranz} solve the quantum RH problem defined above.

To show that this indeed the case, we have to check the three conditions imposed on $\hcXr_\gamma$.
The first is the product relation \eqref{starXXh}.
It is trivial to check that it does hold:
\be  
\label{hcXrcomm}
\begin{split} 
	\hcXr_{\gamma}(t) \star \hcXr_{\gamma'}(t) =&\, 
	\(1+\cJr\)_\star^{-1}\star \cXz_{\gamma}(t)\star \cXz_{\gamma'}(t) \star \(1+\cJr\)
	\\
	=&\,
	y^{\langle\gamma,\gamma'\rangle}\(1+\cJr\)_\star^{-1}\star \cXz_{\gamma+\gamma'}(t) \star \(1+\cJr\)
	\\
	=&\, y^{\langle\gamma,\gamma'\rangle}\hcXr_{\gamma+\gamma'}(t).
\end{split} 
\ee 
The second condition is also obvious since $\Kr_{01}$ \eqref{defkerref} and hence $\cJr$ 
vanish at $t_0=0$ and become $t_0$-independent at large $t_0$.
The proof of the wall-crossing relation requires a careful but straightforward computation and can be found in \cite{Alexandrov:2025abc}.

Furthermore, we prove\footnote{The proof is sketched in subsection \ref{subsec-qTBA-proofs}.} that, due to the following key relation satisfied by the kernel \eqref{defkerref}
\be 
\label{cyc3}
\ker_{ij}\ker_{ik}=\ker_{ij}\ker_{jk}+\ker_{ik}\ker_{kj},
\ee 
one has the perturbative expansion
\be
\label{linhHr-comm}
\hcXr_0=\cXz_0+
\sum_{n=1}^\infty \[\prod_{k=1}^n \IS_k\ker_{k-1,k}\] 
\left\{\left\{\left\{\cXz_0,\cXz_1\right\}_\star,\cXz_2\right\}_\star,\cdots ,\cXz_n\right\}_\star,
\ee
where 
\be
\{f,g\}_\star=\frac{f \star g-g\star f}{y-1/y}.
\label{Mbracket}
\ee
Since the star product of any functions $\cXz_\gamma$ at different arguments $t$ is known \eqref{starXX}, 
one can explicitly evaluate all commutators, which gives
\be  
\hcXr_0= \cXz_0\,\sum_{n=0}^\infty \prod_{j=1}^n \[\IS_j \kappa\(\sum_{i=0}^{j-1}\gamma_{ij}\) 
\ker_{j-1,j}\cXz_j\],
\label{compact-hcXr}
\ee 
where $\kappa(x)=\frac{y^{x}-y^{-x}}{y-y^{-1}}$.
This representation immediately ensures that $\hcXr_\gamma$ has a well-defined unrefined limit.
Indeed, since $\lim_{y\to 1}\kappa(x)=x$, one gets
\be  
\mathop{\lim}\limits_{y\to 1} \hcXr_0= \cXz_0\,\sum_{n=0}^{\infty}\prod_{j=1}^n \[\IS_j \(\sum_{i=0}^{j-1}\gamma_{ij}\) 
\ker_{j-1,j}\cXz_j\] .
\label{compact-hcXr-lim}
\ee   
A remarkable fact, which we prove in \cite{Alexandrov:2025abc}, is that the expansion 
\eqref{compact-hcXr-lim} is actually identical to the one in \eqref{Hexpand}.
Therefore, the unrefined limit of $\hcXr_\gamma$ coincides with the solution of the classical RH problem $\cX_\gamma$!
Of course, this is exactly what is expected from a solution of the quantum RH problem.

\subsection{An integral equation}
\label{subsec-qTBA-Integral}

Similarly to \eqref{TBAeq-sh} one can wonder if it's possible to get the $\hcXr_\gamma$ through a single (quantum) TBA-like equation, as opposed to the current two-step definition. In particular, we would like such an equation to have an unrefined limit where it would reduce to the unrefined TBA-like equation.

A first approach is to invert the relation \eqref{hcXranz} perturbatively in $\hcXr_\gamma$ and it produces the following expansion
\be
\label{cXr-hcXr}
\cXr_0=\sum_{n=0}^\infty \[\prod_{k=1}^n\IS_k \Kr_{k-1,k}\hcXr_k \,\star \] \hcXr_0,
\ee 
where $\hcXr_k$'s are ordered in the descending order according to their index. This implies
\be 
\cJr(t_0) =\sum_{n=1}^\infty \[\prod_{k=1}^n\IS_k \Kr_{k-1,k}\hcXr_k \,\star \],
\label{Jref-hX}
\ee 
which can be combined with the second line in \eqref{hcXranz} and produce the integral equation
\be 
\sum_{n=0}^\infty \[\prod_{k=1}^n\IS_k \Kr_{k-1,k}\hcXr_k\, \star \] \hcXr_0=
\cXz_0\star \sum_{n=0}^\infty \[\prod_{k=1}^n\IS_k \Kr_{k-1,k}\hcXr_k\, \star \].
\label{expeq}
\ee 
Unfortunately, this equation involves infinitely many terms of increasing order and, despite our efforts,
we have not been able to find a simple function that could produce it after a perturbative expansion.

We got inspiration for a second approach from the relation \eqref{linhHr-comm}, which produces the integral equation
\be
\label{intgeqn-comm}
\hcXr_0 = \cXz_0 + \sum_{n=1}^{\infty}\[\prod_{k=1}^n\IS_k \ker_{k-1,k}\] \left\{\cdots\left\{\left\{\cXz_0,\hcXr_n\right\}_\star,\hcXr_{n-1}\right\}_\star\cdots,\hcXr_1 \right\}_\star,
\ee
which also has an infinite number of terms.
Its proof is much more non-trivial and can be found in \cite{Alexandrov:2025abc}.
Its advantage is that it makes manifest that $\hcXr_\gamma$ has a well defined unrefined limit
because the star bracket \eqref{Mbracket} then reduces to the ordinary Poisson bracket.
Note a difference with \eqref{linhHr-comm}: whereas there one starts commuting $\cXz_0$ from the left, 
here one starts from the right.

\subsection{Trees and proofs}
\label{subsec-qTBA-proofs}
\lfig{A representation of the identity \eqref{cyc3} in terms of rooted trees.
	Here $T_k$'s are any rooted subtrees.}
{identity}{15cm}{fig-identity}{-0.3cm}

Some of the equations and representations we obtained previously, and some of those that we will encounter in the next section rely on the identity \eqref{cyc3}. Here we will give a consequence of this identity that can be used on any product of kernels as well as a geometric interpretation of it. 
First, in all equations appearing in this work,
the kernels $\ker_{ij}$ can be thought of as factors assigned to edges of a labeled rooted tree\footnote{As a convention, we always orient edges away from the root.} (see, e.g., \eqref{Hexpand}).
Hence, we introduce 
\be 
\frKr_T=\prod_{e\in E_T}\ker_{s(e)t(e)}.
\label{def-frK}
\ee  
Then the identity \eqref{cyc3} is equivalent to the identity between the factors $\frKr_T$ associated 
to the labeled rooted trees shown in Fig. \ref{fig-identity}. 
It can be applied to any vertex that has at least two children and produces two trees 
where the number of children of that vertex decreased by one and the depth of some vertices increased by one.
It is clear that, applying this identity recursively, one can express $\frKr_T$ for any tree as 
a sum over linear trees for which every vertex has only one child. 
In fact, one can make an even stronger statement which was extensively used in \cite{Alexandrov:2025abc}. 
To this end, let us recall that a rooted tree induces a natural partial ordering on its vertices: 
$v<v'$ if the path from $v'$ to the root passes through $v$.
Besides, we introduce the labeling map $\ell_T \ :\ V_T\to \IN$ from vertices of a tree $T$ to natural numbers.
Then the following statement holds

\begin{lemma}\label{lemma-tree}
	For a tree $T$ with $n$ vertices, one has 
	\be 
	\frKr_T=\sum_{T'\in\IT_{n}^{\rm lin}(T)} \frKr_{T'},
	\label{eq-lemma}
	\ee 
	where $\IT_{n}^{\rm lin}(T)$ is the set of linear labeled rooted trees with $n$ vertices such that 
	the labeling preserves the partial ordering of the original tree $T$ in the sense that
	$v<v'\Rightarrow \ell_{T'}^{-1}(\ell_T(v))< \ell_{T'}^{-1}(\ell_T(v'))$.
\end{lemma}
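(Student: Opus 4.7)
The natural strategy is to prove the lemma by induction, using repeated application of identity \eqref{cyc3} (equivalently, Figure \ref{fig-identity}) to reduce any tree $T$ to a sum over linear trees. I would take as the inductive parameter a measure of non-linearity such as $\mu(T) = \sum_{v \in V_T} \max(0, d(v) - 1)$, where $d(v)$ is the number of children of $v$; thus $\mu(T) = 0$ exactly when $T$ is linear. The base case $\mu(T) = 0$ is trivial since then $\IT_n^{\rm lin}(T) = \{T\}$ and the two sides of \eqref{eq-lemma} coincide term-by-term.

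For the inductive step I would pick any vertex $v$ with at least two children, select two of them $c_1,c_2$ (with subtrees $T_1,T_2$), and apply \eqref{cyc3} to the factor $\ker_{v c_1}\ker_{v c_2}$ inside $\frKr_T$. Geometrically this rewrites $\frKr_T$ as $\frKr_{T^{(1)}} + \frKr_{T^{(2)}}$, where in $T^{(1)}$ the entire subtree rooted at $c_2$ is reattached as a child of $c_1$ (keeping its internal structure intact), and in $T^{(2)}$ the roles of $c_1$ and $c_2$ are swapped. Both $T^{(1)}$ and $T^{(2)}$ have strictly smaller $\mu$, so the inductive hypothesis applies, and summing the two expansions produces a sum over linear trees. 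To conclude, I need to show that this sum is exactly $\sum_{T'\in\IT_n^{\rm lin}(T)}\frKr_{T'}$, with each linear extension appearing exactly once.

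The main obstacle will be precisely this combinatorial matching. The easy direction is that every tree produced respects the original partial order: each application of \eqref{cyc3} only converts an edge $v\to c_2$ into a path $v\to c_1\to c_2$ (or symmetrically), so every ancestor--descendant relation in $T$ is preserved throughout the procedure. The harder direction is completeness together with no overcounting. To handle it I plan to construct an explicit inverse map: given a target $T'\in\IT_n^{\rm lin}(T)$ with total order $w_1<w_2<\cdots<w_n$, one can canonically reconstruct the unique sequence of identity applications that produces it, by reading off at each step which of the current root's branches contains the next vertex along $T'$ and using \eqref{cyc3} to absorb the other branches into it in the prescribed order. Verifying that this reconstruction is well-defined, terminates in $T'$, and covers every linear extension gives the enumeration. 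As an alternative, one may bypass the bijection entirely by comparing the two sides of \eqref{eq-lemma} as rational functions in the variables $t_v$ and matching residues along the diagonals $t_u=t_v$; this analytic route is less transparent but may be worth including as a remark.
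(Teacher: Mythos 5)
Your overall strategy --- induction powered by repeated application of \eqref{cyc3}, read as the tree move of Fig.~\ref{fig-identity} --- is exactly the route the text sketches, and it does work. However, the induction measure you chose fails: with $\mu(T)=\sum_{v}\max(0,d(v)-1)$, applying the identity at a vertex $v$ with children $c_1,c_2$ reattaches the whole subtree rooted at $c_2$ as a child of $c_1$, so if $c_1$ already had at least one child its contribution to $\mu$ increases by one while $v$'s decreases by one, and $\mu(T^{(1)})=\mu(T)$. Concretely, take the root $r$ with children $a,b$ where $a$ has a child $c$: here $\mu=1$, and the tree with $r\to a$ and $a$ having the two children $b,c$ (one of the two trees produced by the move) also has $\mu=1$, so your inductive hypothesis cannot be invoked. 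The repair is to induct on the number of incomparable pairs of vertices in the partial order of $T$ (equivalently on $|\IT_{n}^{\rm lin}(T)|$): every pair comparable in $T$ remains comparable in $T^{(1)}$ and $T^{(2)}$, while the pair $(c_1,c_2)$ becomes comparable, so this quantity strictly decreases on both branches and vanishes precisely for linear trees.

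The ``harder direction'' you identify --- completeness without overcounting --- has a one-line resolution that makes the explicit inverse map unnecessary. The partial order of $T^{(1)}$ is generated by that of $T$ together with the single new relation $c_1<c_2$: the only new ancestor acquired by the vertices of the subtree of $c_2$ is $c_1$ itself, since $c_1$'s ancestors were already ancestors of $c_2$ in $T$. Hence $\IT_{n}^{\rm lin}(T^{(1)})$ is exactly the set of linear extensions of $T$ in which $c_1$ precedes $c_2$, and symmetrically for $T^{(2)}$; since $c_1$ and $c_2$ are incomparable in $T$, these two sets partition $\IT_{n}^{\rm lin}(T)$. Adding the two inductive expansions therefore produces each admissible linear tree exactly once, which gives both completeness and the absence of overcounting for free, and incidentally supplies the strictly decreasing induction measure noted above. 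With these two adjustments your argument is complete and is, in substance, the proof that the recursive use of \eqref{cyc3} is meant to encode; the residue-matching alternative you mention is not needed.
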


Using this lemma we prove the representation \eqref{linhHr-comm} for $\hcXr_\gamma$, the equivalence of \eqref{Hexpand} and \eqref{compact-hcXr-lim}, the integral equation \eqref{intgeqn-comm} as well as the representation \eqref{qTBA-genfunc} of the (logarithm of the) refined generating function. All these proofs can be found in \cite{Alexandrov:2025abc}.

For illustration, we schematize the proof of the representation \eqref{linhHr-comm} for the perturbative solution of the quantum RH problem 
encoded in $\hcXr_\gamma(t)$. We proceed by induction in the perturbation order. 

In the linear approximation, $\cJr(t_0)$ in \eqref{hcXranz} can be replaced by 
$\mbox{\resizebox{0.018\vsize}{!}{$\displaystyle{\IS_1}$}}\Kr_{01} \cXz_1$
and it is trivial to see that keeping only the first order in the expansion 
gives 
\be
\hcXr_0=\cXz_0+
\IS_1\ker_{01}\left\{\cXz_0,\cXz_1\right\}_\star+O(\bOm(\gamma,y)^2),
\label{linhHr-comm-1}
\ee
consistently with \eqref{linhHr-comm}.

So let us assume that the representation \eqref{linhHr-comm} holds up to order $n-1$.
To prove that it continues to hold at the next order, we note that the expansion of \eqref{hcXranz} 
in powers of $\cXr_\gamma$ leads to
\be
\hcXr_0 = \cXz_0 + \sum_{m=1}^{\infty} (-1)^{m-1}\IS_1\ker_{01}\[\prod_{k=2}^m \IS_k\Kr_{0k}\cXr_k \star\] 
\left\{\cXz_0,\cXr_1\right\}_\star.
\label{expinref}
\ee
The order of factors in the product over $k$ is not important, but we choose it to be descending for definiteness.
Note that the product of kernels in the term of $n$-th order 
can be represented by a rooted tree with $m$ vertices labeled from 1 to $m$ 
all connected to the root labeled by 0. Furthermore, if one assigns to the $k$-th vertex (except the root)
the factor $\mbox{\resizebox{0.018\vsize}{!}{$\displaystyle{\IS_k}$}}\cXr_k/(y-y^{-1})$
and accepts an additional rule that they are multiplied using the star product from left to right, except the rightmost vertex which appears in a commutators with $\cXz_0$,
one arrives at the unique labeled rooted tree shown in Fig. \ref{fig-trees-pr1}(a).
Substituting further the expansion \eqref{expXref} of $\cXr_\gamma$ in terms of $\cXz_\gamma$, one arrives at
\be
\hcXr_0 = \cXz_0 + \sum_{n=1}^{\infty}\sum_{m=1}^{n} \frac{(-1)^{m-1}}{(y-y^{-1})^{n-1}}\[ \prod_{i=1}^n \IS_i\]
\sum_{\sum_{k=1}^m n_k=n}\frKr_{T_{\bfn}}
\[\prod_{k=2}^{m}\prod_{i=j_{k-1}+1}^{j_k}\cXz_i\star\] 
\left\{\cXz_0,\prod_{i=1}^{n_1}\cXz_i\right\}_\star,
\label{expinsf}
\ee
where $n_k\geq 1$, $\bfn=(n_1,\dots n_m)$, $j_k=\sum_{l=1}^{k} n_l$, $T_{\bfn}$ is the tree shown in 
Fig. \ref{fig-trees-pr1}(b),
and $\cXz_i$'s are ordered according to vertices of the tree: from left to right and from top to bottom. 

\lfig{The rooted trees corresponding to the expansions of $\hcXr_0$ in $\cXr_i$ and $\cXz_i$, respectively.}
{trees-pr1}{16cm}{fig-trees-pr1}{-0.4cm}

The statement we want to prove would follow if we can show that the $n$-th term in \eqref{expinsf} 
is equal to 
\be  
\sum_{m=1}^{n-1} \frac{(-1)^{m-1}}{(y-y^{-1})^{n-2}} \[ \prod_{i=1}^n \IS_i\]
\sum_{\sum_{k=1}^m n_k=n}\frKr_{T'_{\bfn}}
\[\prod_{k=2}^{m}\prod_{i=j_{k-1}+1}^{j_k}\cXz_{i}\star\] 
\left\{\left\{\cXz_0,\cXz_1\right\}_\star,\prod_{i=2}^{n_1}\cXz_{i}\right\}_\star,
\label{expinsf-n}
\ee 
where $n_k\geq 1$ for $k\geq 2$ whereas $n_1\geq 2$, and $T'_{\bfn}$ is the tree shown in Fig. \ref{fig-trees-pr1}(c).
Indeed, redefining $n_1\to n_1+1$, it is easy to see that 
this expression is the same as the term of order $n-1$ in \eqref{expinsf} with $\cXz_0$ replaced by 
$\mbox{\resizebox{0.018\vsize}{!}{$\displaystyle{\IS_1}$}}\ker_{01}\{\cXz_0,\cXz_1\}$
and therefore it is subject to the induction hypothesis, which immediately gives the $n$-th order term in \eqref{linhHr-comm}.

To show the equality of \eqref{expinsf-n} and the $n$-th term in \eqref{expinsf}, we compare their contributions
after application to them of Lemma \ref{lemma-tree}, upon which they are all given by linear trees that differ only by distribution of labels.
A quick analysis of the $n$-th term in \eqref{expinsf} shows that all contributions of linear trees cancel except the linear trees generated by all $T_{\bfn}$
\begin{itemize}
	\item 
	 where the vertex attached to the root has label 1;
	\item
	 where $n_2=1$ and the vertex attached to the root has label $k_1+1$.
\end{itemize}
The idea behind these cancellations is that different trees can have the same ordering of $\cXz$ in their product. Namely, if we start from a given tree, take a linear subtree (with more than $1$ element, excluding the root) and move its top vertex to a newly created branch that is adjacent to it to the left, forming a new tree that has one more linear subtree in total, then this new tree has the same product of $\cXz$ as the initial one. By applying Lemma \ref{lemma-tree} to all trees and carefully grouping the linear contributions, we find many cancellations. In fact, the only reason all contributions do not cancel is that the product of $\cXz$ associated to the rightmost linear subtree appears inside a commutator and thus spoils the cancellations. 

Then, we further subdivide the first type of remaining contributions into two types according to whether $n_1=1$ or $n_1>1$ and upon an appropriate choice of triplets of contributions, one from each type, and a convenient relabeling, we find that they recombine exactly into the sum of linear contributions coming from \eqref{expinsf-n} to which we applied Lemma \ref{lemma-tree}. Of course this is not a proof but rather a sketch of the more rigorous proof found in \cite{Alexandrov:2025abc}.

\section{Adjoint representation and generating functions}
\label{sec-qTBA-genfun}

As was noticed in \cite{Barbieri:2019yya}, the solution of the quantum RH problem can be represented in the adjoint form.
Namely, let us consider the map 
\be
\Psi=\Ad_{\psi}\ :\ \cXz_\gamma  \mapsto \hcXr_\gamma,
\label{mapPsi}
\ee
where multiplication and inversion are evaluated using the star product.
Remarkably, the solution $\hcXr_\gamma$ \eqref{hcXranz} found in the previous section 
has the adjoint form $\hcXr_\gamma=\psi\star \cXz_\gamma \star\psi^{-1}$
with
\be
\psi^{-1}(t)=1+\cJr(t).
\label{respis}
\ee
Furthermore, as was emphasized above, $\cJr(t)$ and hence $\psi(t)$ are independent of charges.
Thus, any of them can be considered as a {\it generating function} of the set of functions $\hcXr_\gamma(t)$ for different charges.
Their perturbative expansion can be obtained from \eqref{Jref-hX} by evaluating the star products.

This result has a nice application: by taking the unrefined limit, 
it can be used to derive a generating function of Darboux coordinates $\cX_\gamma$.
Before, such a generating function was known only up second order in the perturbative expansion \cite{Alexandrov:2017qhn}
and a generalization to higher orders did not seem to be obvious.
Now we will find it to all orders.

To this end, let us apply the Campbell identity to the map \eqref{mapPsi}.
It allows to rewrite it as 
\be 
\hcXr_\gamma=e^{\ad_{\log_\star\!\psi}} \cXz_\gamma=e^{-\{ \cHr,\, \cdot\, \}_\star}\cXz_\gamma,
\label{genHref}
\ee 
where $\log_*(1+x)=\sum_{n=1}^\infty \frac{(-1)^{n-1}}{n} x\star \cdots \star x$, 
the star bracket was defined in \eqref{Mbracket} and we introduced 
\be 
\cHr=(y-y^{-1})\log_\star \(1+\cJr\).
\ee 
In the unrefined limit $y\to 1$, $\hcXr_\gamma$ and $\cXz_\gamma$ reduce to $\cX_\gamma$ and $\cXsf_\gamma$, respectively,
while the star bracket becomes the ordinary Poisson bracket. As a result, the relation
\eqref{genHref} takes the form
\be 
\cX_\gamma=e^{-\{\cH,\, \cdot\, \}} \cXsf_\gamma,
\label{cXPb}
\ee 
where 
\be 
\cH=\lim_{y\to 1} \cHr.
\ee 
Since $\cJr$ does {\it not} have a well-defined unrefined limit, one could worry that $\cHr$ 
does not have it either. However, miraculously, it does and using the identity \eqref{cyc3} as well as a result from \cite{Solomon:1968}, we prove that it can be written in a form where the unrefined limit is manifest,
\be
\label{qTBA-genfunc}
\cHr =
\sum_{n=1}^\infty \[\prod_{k=1}^n \IS_k\ker_{k-1,k}\] 
\sum_{\sigma \in S_n} \frac{(-1)^{s_\sigma+1}}{\binom{n}{s_\sigma}ns_\sigma}\,\left\{\left\{\cXz_{\sigma(1)},\cXz_{\sigma(2)}\right\}_\star,\cdots ,\cXz_{\sigma(n)}\right\}_\star,
\ee
where $s_\sigma$ is the number of ascending runs of $\sigma$. Alternatively $s_\sigma$ can be viewed as $1+$ the number of $1\leq i\leq n-1$ such that $\sigma(i)>\sigma(i+1)$.

The previous expression gives the unrefined limit in terms $\cXsf_\gamma$, but one could want to have its unrefined expression in terms of the $\cX_\gamma$. For this, we provide the conjecture, that was tested up to order $n=6$,
\begin{conj} 
	\be
	\cH = \IS_1 \ker_{01}\cX_1
	-\sum_{n=2}^{\infty}\frac{1}{n(n-1)} \sum_{\sigma \in S_n^{(+)}}\prod_{k=1}^{n}
	\[\IS_{k}\gamma_{\sigma(k-1)\sigma(k)}\,\ker_{k-1,k} \, \cX_{k}\],
	\ee
	where $S_n^{(+)}$ is the set of permutations of $n$ elements such that 
	$\sigma(\lfloor\frac{n}{2}\rfloor)<\sigma(\lceil\frac{n}{2}+1\rceil)$.
\end{conj}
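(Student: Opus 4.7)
The conjecture is a statement about two different series representations of the same charge-independent generating function $\cH$: one, already established in Section \ref{sec-qTBA-genfun}, writes $\cH$ as a sum of nested Poisson brackets of semi-flat coordinates $\cXsf_\gamma$ (the unrefined limit of \eqref{qTBA-genfunc}); the other, conjectural one, writes $\cH$ as an explicit sum over permutations with a simple product of $\ker_{k-1,k}$ factors and Dirac pairings $\gamma_{\sigma(k-1)\sigma(k)}$ acting on the true Darboux coordinates $\cX_\gamma$. My plan is to show that the two expansions agree order by order by substituting the classical solution of the TBA equation for $\cX_\gamma$ into the conjectured formula and checking that the resulting expansion in $\cXsf_\gamma$ reproduces the unrefined limit of \eqref{qTBA-genfunc}.

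First I would set up the perturbative bookkeeping. Writing $\cH=\sum_{n\ge 1}\cH_n$ with $\cH_n$ of order $n$ in $\bOm$, the defining relation $\cX_\gamma=e^{-\{\cH,\cdot\,\}}\cXsf_\gamma$ determines $\cH_n$ uniquely once $\cH_1,\dots,\cH_{n-1}$ are known, since at order $n$ one has $\cXsf_\gamma^{(n)}=\cX_\gamma^{(n)}+\{\cH_n,\cXsf_\gamma\}+(\text{terms involving }\cH_m,\ m<n)$. Using the compact form \eqref{compact-hcXr-lim} for $\cX_\gamma$ together with the already-proven identification between \eqref{compact-hcXr-lim} and \eqref{Hexpand}, the whole computation is reduced to expanding everything in $\cXsf_\gamma$'s at different fiber coordinates and comparing coefficients in front of $\prod_{k=1}^n\IS_k\cXsf_k$.

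The central technical step is then to show that, after substituting the TBA-like expansion for each $\cX_k$ in the conjectured expression and evaluating the resulting products, one recovers the sum over $S_n$ with the weight $(-1)^{s_\sigma+1}/\bigl(\binom{n}{s_\sigma}n s_\sigma\bigr)$ obtained by taking $y\to 1$ in \eqref{qTBA-genfunc}. The main tools here are Lemma \ref{lemma-tree}, which lets one collapse any product of kernels $\frKr_T$ associated to a rooted tree into a sum over linearly ordered labelings, and the cyclic identity \eqref{cyc3}, which is what allows the complicated nested brackets $\{\cdots\{\cXsf_{\sigma(1)},\cXsf_{\sigma(2)}\},\ldots,\cXsf_{\sigma(n)}\}$ to be re-expressed as linear kernel chains weighted by products of Dirac pairings $\gamma_{\sigma(k-1)\sigma(k)}$.

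The hard part will be the combinatorial identity itself: proving that the restricted sum over $\sigma\in S_n^{(+)}$ (permutations satisfying the asymmetric middle-element condition $\sigma(\lfloor n/2\rfloor)<\sigma(\lceil n/2+1\rceil)$) with uniform weight $1/(n(n-1))$ reproduces, after expansion of the multi-commutators, the Solomon-type weights $(-1)^{s_\sigma+1}/\bigl(\binom{n}{s_\sigma}n s_\sigma\bigr)$ from \eqref{qTBA-genfunc}. My expectation is that this identity can be established by exploiting the antisymmetry of $\gamma_{ij}$ to pair permutations related by swapping the two middle indices: for even $n=2k$, the swap $\sigma(k)\leftrightarrow\sigma(k+1)$ produces a sign through $\gamma_{\sigma(k)\sigma(k+1)}$, which is precisely what restricts the sum to the ``ascending middle'' half $S_n^{(+)}$; the odd-$n$ case should then follow by a similar, slightly shifted pairing, with the $\lceil n/2+1\rceil$ position playing the role of the decisive index. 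If this symmetric reduction does not fully do the job, one may need to combine it with a Solomon-type descent-class identity (as was used to prove \eqref{qTBA-genfunc}) to absorb the remaining $\binom{n}{s_\sigma}^{-1}$ weights into the uniform $1/(n(n-1))$.
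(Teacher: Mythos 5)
This statement is labeled a \emph{conjecture} in the paper for a reason: the authors do not prove it, they only report that it has been tested order by order up to $n=6$. Your proposal correctly identifies the right framing — both sides are expansions of the same charge-independent object $\cH$, one in the semi-flat coordinates $\cXsf_\gamma$ (the $y\to1$ limit of \eqref{qTBA-genfunc}) and one in the full Darboux coordinates $\cX_\gamma$, so the task reduces to substituting the expansion \eqref{Hexpand}/\eqref{compact-hcXr-lim} of $\cX_\gamma$ in terms of $\cXsf_\gamma$ into the conjectured formula and matching coefficients. That setup is sound, and Lemma \ref{lemma-tree} together with \eqref{cyc3} are indeed the right tools for linearizing the kernel trees.

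The genuine gap is that the decisive step — the combinatorial identity equating the restricted sum over $S_n^{(+)}$ with uniform weight $1/(n(n-1))$ to the Solomon descent-class weights $(-1)^{s_\sigma+1}/\bigl(\binom{n}{s_\sigma}n s_\sigma\bigr)$ after re-expansion — is exactly the open content of the conjecture, and your treatment of it is explicitly speculative (``my expectation is\dots'', ``if this does not fully do the job\dots''). The pairing argument via antisymmetry of $\gamma_{ij}$ under a swap of the two middle indices can at best explain why only half of $S_n$ survives; it does not produce the uniform weight $1/(n(n-1))$ from the $s_\sigma$-dependent weights, nor does it account for the fact that substituting the expansion of each $\cX_k$ into the product mixes the top-level Dirac factors $\gamma_{\sigma(k-1)\sigma(k)}$ with the internal sums $\sum_{i<j}\gamma_{ij}$ generated inside each $\cX_k$ at higher order — a reorganization across orders and tree topologies that your plan does not address. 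As written, the proposal is a plausible strategy outline rather than a proof, and it leaves the conjecture exactly as open as the paper does.
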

These two formulae provide a generalization of the second order result from \cite{Alexandrov:2017qhn}
to all orders.

\section{Uncoupled case}
\label{sec-qTBA-uncoupled}

Finally, we look at some BPS structures that verify the following properties
\begin{itemize}
	\item  {\it finite}, i.e. $\Omega(\gamma,y)=0$ for all but finitely many $\gamma\in\Gamma$ 
	(we will call the set of such charges $\Gact$);
	\item {\it uncoupled}, i.e. $\langle \gamma,\gamma'\rangle=0$ for any $\gamma,\gamma'\in\Gact$;
	\item {\it palindromic}, i.e. $\Omega_n(\gamma) = \Omega_{-n}(\gamma)$ for all $n \in \IZ$ and $\gamma \in \Gamma$;
	\item {\it integral}, i.e. $\Omega_n(\gamma) \in \IZ$ for all $n \in \IZ$ and $\gamma \in \Gact$.
\end{itemize}
These turn out to be simple enough that we can obtain a much more explicit expression for our solution \eqref{compact-hcXr}. In particular we will show that it reproduces the one found in \cite{Barbieri:2019yya}. 

Since the BPS structure is uncoupled we can assume, up to rotating the charge lattice, that all active charges are electric, i.e they are given by $(0,q_\Lambda)$. When computing $\hcXr_0$ we have to consider two cases. The first, trivial, one is when $\gamma$ is electric and we get 
\be
\hcXr_{(0,q_{\Lambda})}  = \cXz_{(0,q_{\Lambda})}.
\ee 
It is simple to see that all terms in \eqref{compact-hcXr} with $n\geq1$ vanish, either because $\Omega(\gamma_i)=0$ or, for active charges, because their Dirac product with $\gamma$ vanishes. 

For a charge $\gamma_0$ with non-trivial magnetic part, the result is more rich as active charges do contribute to the sums \tIS{j} in
\be  
\begin{split} 
	\hcXr_0 =&\, \cXz_0\,\sum_{n=0}^\infty \prod_{j=1}^n \[\IS_j \kappa\(\sum_{i=0}^{j-1}\gamma_{ij}\) 
	\ker_{j-1,j}\cXz_j\],
	\\
	=&\, \sum_{n=0}^\infty \prod_{k=1}^n \[\IS_k \kappa(\gamma_{0k}) 
	\ker_{k-1,k}\cXsf_k\] \cXsf_0.
\end{split}
\label{hcXr-mlocal-1}
\ee
The sum inside the Dirac products reduced to a single term because all charges $\gamma_1,\dots,\gamma_n$ as well as $\gamma_j$ are active and thus their product vanishes. In fact, for a given $n$ the contribution is symmetric under any permutation of the labels in $\tIS{k}$ . 
Due to this observation, we can use Lemma \eqref{lemma-tree} to go from a kernel structure given by a symmetric sum over linear trees, to one given by star trees\footnote{Star trees are rooted trees where all vertices are children of the root.} with an appropriate factor
\be  
\begin{split} 
	\hcXr_0 =& 
	\,  \sum_{n=0}^\infty \frac{1}{n!}\prod_{k=1}^n \[\IS_k \kappa(\gamma_{0k}) 
	\ker_{0k}\cXsf_k\] \cXsf_0
	\\
	= &\,\cXsf_0\exp\[\IS_1 \kappa(\gamma_{01}) \ker_{01}\cXsf_1\].
\end{split}
\label{hcXr-mlocal-2}
\ee
Then, we use \eqref{cXsf} to explicitly compute the exponent and we find
\be   
\hcXr_0=\cXsf_0 \prod_{\substack{\gamma_1\in\Gact	\\ \Re(Z_{\gamma_1})>0}}
\prod_{k=1-|\gamma_{01}|}^{|\gamma_{01}|-1}\prod_{n\in\IZ}
\Lambda\(\frac{Z_{\gamma}}{t_0},1-\[\vth_{\gamma_1}+(k+n)\alpha \]\)^{-s_{01}\Omega_n(\gamma_1) },
\label{hcXr-res}
\ee
where 
\be
\label{defLambda}
\Lambda(z,\eta) =\frac{e^z\,\Gamma(z+\eta)}{\sqrt{2\pi} z^{z+\eta-1/2}}
\ee
is the modified gamma function \cite{Barbieri:2019yya}
and
\be
[x] = 
\begin{cases}
	x - \lfloor \Re(x)\rfloor, &\text{if } \Im x \geq 0,
	\\
	x + \lfloor \Re(-x)\rfloor+1,\quad &\text{if } \Im x <0.
\end{cases}
\ee
This result coincides with \cite[Th.5.1]{Barbieri:2019yya} up to a different sign convention for the Dirac product and 
the presence of the brackets $[\; \cdot\;]$ in the argument of the $\Lambda$-function.
The last discrepancy has already appeared in the classical case \cite{Alexandrov:2021wxu}.
It simply means that the solution constructed in \cite{Barbieri:2019yya} should be seen as 
an analytic continuation of the branch obtained for $\Re\vth_\gamma\in(0,1)$.

\chapter{Generating functions of D4-D2-D0 BPS indices}
\label{chap-DTall}

As established in section \ref{sec-string-BHBPS}, BPS indices are of fundamental importance as they count black hole microstates and also determine the weights of instanton contributions in type II string theory compactified on Calabi-Yau (CY) threefolds. This dual role means that computing them explicitly is very valuable. Furthermore, their significance is amplified by the fact that they are equivalent to generalized Donaldson-Thomas (DT) invariants, which are important objects in enumerative geometry.

This chapter focuses on BPS indices that count black holes in type IIA string theory, specifically those with vanishing D6-brane charge. These indices coincide with rank 0 DT invariants. Their generating functions, which we denote as $h_p$ (where $p$ represents the D4-brane charge vector with components $p^a, a=1,\dots,b_2(\CY)$), possess remarkable modular properties \cite{Maldacena:1997de,Alexandrov:2012au,Alexandrov:2016tnf,Alexandrov:2018lgp}. Depending on the characteristics of the D4-brane charge $p$, these generating functions $h_p$ are modular or mock modular forms of a specific depth (cf. chapter \ref{chap-mod} for definitions). More precisely, these generating functions satisfy a \textit{modular anomaly equation}\footnote{We will invariably refer to it as \textit{modular anomaly equation}, \textit{completion equation} or \textit{anomaly equation} throughout this work.} \cite{Alexandrov:2018lgp}. This equation defines a non-holomorphic completion of $h_p$ that transforms like a genuine modular form, thereby fixing its shadow.

The modular anomaly equation is expressed as a sum involving indefinite theta series contracted with a vector composed of generating functions corresponding to D4-brane charges lower than $p$. Each of these indefinite theta series is given by a sum over an electric charge lattice with a specific kernel function. The intricate structure of this completion equation presents several challenges. The first paper discussed here \cite{Alexandrov:2024jnu} primarily addresses the complexity of the kernels. Originally, these kernels were expressed through very complicated combinations of generalized error functions \cite{Alexandrov:2016enp,Nazaroglu:2016lmr} and their derivatives, with the structure of each term determined by sums over three distinct types of trees. The work in this first paper focused on the expression of these kernels. First, a subtle but crucial correction to the known expression was identified. It is determined by rational coefficients, denoted $e_\cT$, associated with unrooted labeled trees $\cT$. Following this correction, it was shown that the contributions from one of the three types of trees could be entirely eliminated. This was achieved by exploiting identities among the generalized error functions and the tree structures, effectively demonstrating that certain complex tree contributions could be canceled out, resulting in a more manageable expression for the kernels involving sums over only two types of trees.

The second major challenge, which is the central focus of the paper \cite{Alexandrov:2024wla}, is to solve the system of modular anomaly equations. For this goal, we restrict our attention to Calabi-Yau threefolds with a second Betti number $b_2=1$, corresponding to a single K\"ahler modulus and, for reasons that will be clear later, in this case we denote $r$ the D4-brane charge. A critical aspect of the anomaly equations is that they only fix the mock modular part of $h_r$, leaving a holomorphic modular ambiguity. This ambiguity can only be fixed by independently computing a finite number of the DT invariants appearing as Fourier coefficients of $h_r$. Thus, the complete determination of a generating function $h_r$ necessitates a two-step procedure: first, finding a function that satisfies the given modular anomaly, and second, fixing the remaining holomorphic modular ambiguity using explicit computations of BPS indices. Our ambition is to systematically address the first step for generating functions $h_r$ associated with any D4-brane charge $r$. However, the recursive nature of the anomaly equation constitutes a direct obstruction: the equation for $h_r$ depends on the full generating functions $h_{r_i}$ for all lower charges $r_i < r$. This implies an inherent dependence of $h_\Nr$ on all the modular ambiguities of lower charge. Consequently, one cannot, in principle, solve the anomaly for $h_r$ (step 1) without having already fixed all ambiguities for every $h_{r_i}$ with $r_i < r$ (step 2).

To circumvent this problem, we disentangle the dependence on these unknown modular ambiguities. Instead of solving for a unique $h_r$, the objective becomes to find its structural form in terms of these yet-undetermined holomorphic modular ambiguities, that we denote $\hh_{r_i}$. This is accomplished by introducing a new family of functions, called \textit{anomalous coefficients} and denoted $\gi{\bfr}$. Here, $\bfr = (\Nr_1,\dots, \Nr_n)$ represents a partition of the total charge $\Nr$. An ansatz is then proposed where each $h_r$ is expressed as a homogeneous polynomial in these modular ambiguities $\hh_{r_i}$ (where for each monomial $\sum\Nr_i=\Nr$), and the coefficients of this polynomial are precisely these anomalous coefficients $\gi{\bfr}$. A central theorem of this work then establishes a crucial equivalence: the condition that the generating functions $h_\Nr$ satisfy their respective modular anomaly equations is equivalent to the anomalous coefficients $\gi{\bfr}$ themselves being mock modular forms of depth $(n-1)$ (where $n$ is the number of components in $\bfr$) and satisfying their own set of specific modular anomaly equations. The primary goal then shifts to solving these anomaly equations for the anomalous coefficients $\gi{\bfr}$. As was the case for $h_r$, the functions $\gi{\bfr}$ are only fixed up to a holomorphic modular ambiguity. However, for the latter we can take any solution as long as we use it for higher rank anomaly equations.

We first provide two infinite families of solutions using two different methods. The first family covers cases where $\bfr$ has $n=2$ components, i.e. $\bfr=(\Nr_1,\Nr_2)$. In this case, functions with the same anomaly were studied in \cite{Dabholkar:2012nd} where they found the optimal-growth solutions. They achieve this by applying specific Hecke-like operators to a set of "seed" functions. Although we don't necessarily need the optimal growth property, we can use their method for $\gi{\Nr_1,\Nr_2}$ as it gives \textit{a} solution to our problem. The second family comes from scenarios where all charges in $\bfr$ are trivial\footnote{In this case we also require a trivial triple intersection number for the Calabi-Yau, $\kappa=1$}, i.e., $\bfr = (1,\dots, 1)$. It was observed that the anomaly equations for these $\gi{1,\dots,1}$ functions are identical to those satisfied by the normalized generating functions of $SU(n)$ Vafa-Witten (VW) invariants on the projective plane $\IP$. As solutions for these VW invariants are known \cite{Manschot:2014cca,Alexandrov:2020dyy} for any $SU(n)$, this directly provides explicit solutions for this subset of anomalous coefficients. However, neither of these two approaches could be readily generalized to determine the anomalous coefficients $\gi{\bfr}$ for an arbitrary collection of charges $\bfr$.

Consequently, a more general and alternative strategy, relying on indefinite theta series, was developed to determine generic $\gi{\bfr}$. This method is more involved and requires several extensions to the original problem. These extensions must be done in a way that ensures that a solution to the modified problem can be systematically reduced to a solution to the original one. The first modification is the introduction of a refinement parameter, typically denoted $y$ (where $y = e^{2\pi\I z}$), which elevates the anomalous coefficients from functions of $\tau$ alone to mock Jacobi-like forms that also depend on an elliptic variable $z$. This refinement procedure is known to simplify the structure of the modular anomaly equation \cite{Alexandrov:2019rth}. The second modification is an extension of the charge lattice. The electric D2-brane charge lattice, which is naturally associated with the problem, is artificially enlarged. This extension is designed to ensure the existence of null vectors within the lattice, a crucial feature for constructing solutions using indefinite theta series. Following these two steps, namely refinement and lattice extension, the anomalous coefficients, now depending on $z$ and the extended lattice structure, satisfy a new, modified modular anomaly equation.

A key theorem then provides a family of solutions to this new anomaly equation for these refined, extended anomalous coefficients. These solutions are explicitly constructed as indefinite theta series. Once this solution is obtained, the task is to reduce to a solution $\gi{\bfr}$ of the original modular anomaly equation. This involves two main steps. First, the reduction to the original lattice, which is achieved by applying specific modular derivatives with respect to auxiliary elliptic variables $z_i$ that were introduced alongside the lattice extension. These extra elliptic variables were introduced to ensure that the derivatives act in a factorized way. The second step is taking the unrefined limit, which means evaluating the behavior as the main elliptic parameter $z\to0$ (or equivalently, $y\to1$). For this limit to exist, the refined solutions $\girf{\bfr}$ should have a zero of order $(n-1)$ at $z=0$. This order of vanishing is necessary to cancel a prefactor $(y-y^{-1})^{1-n}$ that arises when relating refined and unrefined quantities. Ensuring this behavior near $z=0$ requires a precise choice for the holomorphic modular ambiguity. We propose an explicit form for this ambiguity and conjecture that it guarantees the needed cancellations of coefficients. 

Using this general method, in this dissertation we show how the unrefined anomalous coefficients $\gi{\bfr}$ can be obtained. In \cite{Alexandrov:2024wla} we give explicit results for $n=2$ and $n=3$ charges. For $n>3$, while the solution for the refined, extended coefficients is fully specified by the theorems, the analytical evaluation of the unrefined limit proves to be quite involved. We also claim that our choice of solution in that case has the correct behavior near $z=0$ and one only needs to compute its unrefined limit. We performed a few cross-checks for the validity of the construction. For $n=2$, the solutions derived from the indefinite theta series method were found to be consistent (up to expected purely modular terms) with those obtained using Hecke-like operators. For $n=3$, specifically for $\bfr=(1,1,1)$ and intersection number $\kappa=1$, the result was also confirmed to align with the solution derived from $SU(3)$ Vafa-Witten invariants \cite{Manschot:2010nc}, again up to modular ambiguities. These successful comparisons provide additional support for the general construction and the underlying conjectures regarding the unrefined limit. 

This chapter follows closely \cite{Alexandrov:2024jnu,Alexandrov:2024wla}. We start in section \ref{sec-DTall-def} by defining the generating functions and their completion equations. Then, in section \ref{sec-DTall-Kernel} we give the explicit definition of the kernels appearing in said equations, in terms of a sum over three types of trees that we also define. In section \ref{sec-DTall-simplify} we take a closer look at the anomaly equations where we present the subtle correction noticed in \cite{Alexandrov:2024jnu} as well as the simplified version of the kernels. Starting from section \ref{sec-DTall-gi} we restrict to Calabi-Yau spaces with $b_2=1$ and define the anomalous coefficients $\gi{\bfr}$, which our goal for the rest of the chapter will be to find. In section \ref{sec-DTall-HeckeVW} we present the two infinite families of solutions relying on Hecke-like operators and on generating functions of VW invariants. In section \ref{sec-DTall-Extensions} we present our general approach for finding all solutions $\gi{\bfr}$ and perform the two extensions to the problem, namely refinement and lattice extension. Finally, in section \ref{sec-DTall-sol2} we show how the solution is constructed for $n=2$ and in \ref{sec-DTall-soln} we give the solutions for any number of charges.


\section{The generating function and its completion}
\label{sec-DTall-def}

We are interested in D4-D2-D0 BPS states characterized by the charge vector $\gamma=(p^0=0,p^a,q_a,q_0)$
where $a=1,\dots, b_2(\CY)$. All the definitions will be given for generic Betti number $b_2(\CY)$ of the Calabi-Yau, but starting from section \ref{sec-DTall-gi} we will restrict to $b_2=1$. The D6-brane charge $p^0$ is vanishing, as indicated, and will be omitted from now on. The D4-brane charge $p^a$ is an element of the lattice $\Lambda=H_4(\CY,\IZ)$,
while the D2-brane charge $q_a$ belongs to the shifted dual lattice $\Lambda^\star+\hf\, \kappa_{ab}p^b$. The symmetric quadratic form $\kappa_{ab}$ used to lower the index of $p^a$ is given by the triple intersection number on the CY and the magnetic charge $p$ through $\kappa_{ab}=\kappa_{abc}p^c$ and is known to have signature $(1,b_2-1)$ when $p$ is an ample divisor. 
The Dirac product between charges depends only on their reduced form $\hgam_i=(p_1^a,q_{1,a})$ and is given by, 
\be
\label{Dirac-product}
\gamma_{ij}\equiv \langle \hgam_i, \hgam_j\rangle=p_j^aq_{i,a}-p_i^aq_{j,a}.
\ee
Finally, the D0 brane charge $q_0$ is a shifted integer $q_0\in\IZ -\frac{1}{24}c_{2,a}p^a$ \cite{Alexandrov:2010ca}, where $c_{2,a}$ are the components of the second Chern class of $\CY$. 

In what follows, as we did in chapter \ref{chap-NS5HM}, we will actually be working with the rational BPS indices \eqref{BPS-rational}
because their generating functions turn out to be the ones with interesting modular properties. The two types of invariants coincide if $\gamma$ is primitive, but in general they are different and $\bOm\in\IQ$ is not necessarily an integer.

The $\bOm(\gamma,z^a)$ depend on the \kahler moduli $z^a$ and, in this form, their generating functions do not possess any special modular properties. However, if we follow \cite{deBoer:2008fk} and evaluate them at the large volume attractor chamber defined by 
\be
z^a_\infty(\gamma) = \lim_{\lambda\to+\infty} (-q^a+\I \lambda p^a),
\ee
we get the Maldacena-Strominger-Witten (MSW) invariants $\bOm^{\rm MSW}(\gamma)$ \cite{Maldacena:1997de} which do not depend on the moduli and it is them we will use to define the generating functions. However, we still need to make some preliminary definitions before introducing these generating functions.

The $\bOm^{\rm MSW}$ are invariant under spectral flow symmetry acting on the D2 and D0 brane charge via
\be
\label{spec-flow}
q_a\to q_a-\kappa_{ab} \epsilon^b, \qquad q_0\to q_0-\epsilon^a q_a + \hf \epsilon^2,
\ee
with $\epsilon\in\Lambda$ and $\epsilon^2=\kappa_{ab}\epsilon^a\epsilon^b$. This action leaves the combination 
\be
\label{q0-comb}
\hq_0\equiv q_0-\hf \kappa^{ab}q_a q_b
\ee
invariant and can be used to reduce $q_a$. In fact $q_a\in\Lambda^*+\hf \kappa_{ab}p^b$ and we can decompose it as
\be
q_a=\kappa_{ab}\eps^b+\mu_a +\hf\, \kappa_{ab} p^b,
\label{defmua}
\ee 
and then act with the spectral flow to eliminate the first part. 
Therefore, the $\bOm^{\rm MSW}(\gamma)$ depend only on the D4-charge $p^a$, the modified D0 charge $\hq_0$ and the residue class $\mu_a$. Accordingly, they will be denoted $\bOm_{p,\mu}(\hq_0)$ (notice that we drop the superscript MSW).

Furthermore, due to the Bogomolov-Gieseker bound \cite{bayer2011bridgeland} the BPS indices (and thus the MSW invariants) vanish unless 
\be
\hq_0 \leq \hq_0^{\rm max}=\frac{1}{24}(\kappa_{abc}p^ap^bp^c+c_{2,a}p^a).
\ee
This allows us to define the generating function of D4-D2-D0 BPS indices for a given D4 charge $p^a$
\be
\label{def-genh-pa}
h_{p,\mu}= \sum_{\hq_0\leq\hq_0^{\rm max}}\bOm_{p,\mu}(\hq_0) \q^{-\hq_0}.
\ee
Notice that we do not use different scripts to write the vector $p$ as opposed to its components $p^a$. According to the results of \cite{Maldacena:1997de,Alexandrov:2012au,Alexandrov:2016tnf,Alexandrov:2018lgp},
under the standard $SL(2,\IZ)$ transformations $\tau\mapsto \frac{a\tau+b}{c\tau+d}$,
the $h_p$ behave as higher depth vector valued mock modular forms
of weight $-\hf b_2-1$ and multiplier system closely related to 
the Weil representation attached to the lattice $\Lambda$ \cite{Alexandrov:2019rth},
\be
\begin{split}
\label{Multsys-h-b2}
\Mi{h_{p}}_{\mu\nu}(T)=& e^{\frac{\pi\I}{12}c_{2,a}p^a+\pi\I\(\mu+\hf p\)^2},
\\
\Mi{h_p}_{\mu\nu}(S) =& \frac{(-\I)^{-1-b_2/2}}{\sqrt{|\Lambda^*/\Lambda|}}e^{-2\pi\I\(\frac{1}{4}p^2+\frac18 c_{2,a}p^a+\mu\nu\)},
\end{split}
\ee
where $\(\mu+\hf p\)^2=\kappa^{ab}\mu_a\mu_b+\mu_a p^a + \frac14 \kappa_{ab}p^ap^b$.

When the charge vector $p^a$ can be decomposed into a sum of $r$ positive irreducible charge vectors $p_{i}^a$, we say that the degree of reducibility of $p^a$ is $r$ and in that case, the generating function associated to such a vector is a mock modular form of depth $r-1$.
More precisely, the generating function $h_p$ has a completion\footnote{Here we use notations consistent with \cite{Alexandrov:2024wla} where we split the sum over reduced charges $\hgam_i$ into its magnetic part in \eqref{exp-whh} and electric part that is absorbed inside $\rmRi{\{p_i\}}_{\mu,\bfmu}$. We use an alternative notation in the other paper relevant to this chapter \cite{Alexandrov:2024jnu}. The difference between these notations is discussed in \cite[\S 4]{Alexandrov:2024jnu}} $\whh_p$,
\be
\whh_{p,\mu}(\tau,\btau)= h_{p,\mu}(\tau)+
\sum_{n=2}^{r}\sum_{\sum_{i=1}^n p_i=p}
\sum_{\bfmu}
\rmRi{\{p_i\}}_{\mu,\bfmu}(\tau, \btau)
\prod_{i=1}^n h_{p_i,\mu_i}(\tau),
\label{exp-whh}
\ee
where $\bfp=\{p_1,\dots,p_n\}$ are ample divisors, $\bfmu=\{\mu_1,\dots,\mu_n\}$ and the sum over them goes over all values $\mu_i=0,\dots,|\det \kappa_{i,ab}|-1$ where $\kappa_{i,ab}=\kappa_{abc}\,p_i^c$. The function $\whh_{p,\mu}$
should transform as a modular form of the same weight and with the same multiplier system as $h_p$. The outer sum goes up to $\Nr$ which is the maximum number of elements in a decomposition of $p$. The functions appearing as coefficients are given by  
\be
\rmRi{\{p_i\}}_{\mu,\bfmu}(\tau, \btau)=
\sum_{\sum_{i=1}^n q_i=\mu+\hf p \atop q_i\in\Lambda + \mu_{i}+\hf p_i } 
\Sym\Bigl\{(-1)^{\sum_{i<j} \gamma_{ij} }\scR_n(\{\gama_i\};\tau_2)\, e^{\pi\I \tau Q_n(\{\gama_i\})}\Bigr\}
\label{defrmRi}
\ee
where $\Sym$ denotes symmetrization (with weight $1/n!$). The sum goes over decompositions of $q=\mu+\hf p$ into $n$ charges $q_i$ with fixed residue classes $\mu_i$. This yields a $(n-1)b_2$-dimensional lattice obtained as $n$ shifted copies of the single electric charge lattice $\prod_{i=1}^{n}\(\Lambda+\mu_i+\hf p_i\)$ on which we impose the constraint $\sum_{i=1}^{n}q_i=q$. We denote the quadratic form on this lattice 
\be
\label{def-Qn}
Q_n\(\{\hgam_i\}\)=\kappa^{ab}q_aq_b-\sum_{i=1}^{n}\kappa_i^{ab}q_{i,a}q_{i,b}
\ee 
where $\kappa_i^{ab}$ is the inverse of $\kappa_{i,ab}$. Since $\kappa_{i,ab}$ has signature $(1,b_2-1)$, then the signature of $Q_n$ is $((n-1)(b_2-1),n-1)$ and thus \eqref{defrmRi} defines an indefinite theta series, with kernel given by $\scR_n(\{\gama_i\};\tau_2)$. This kernel is defined using a complicated expression involving two types of trees and derivatives of generalized error functions. We will define it in the next section.

\section{Trees and kernels}
\label{sec-DTall-Kernel}
In this section we will define the kernels $\scR_n(\{\gama_i\};\tau_2)$ of the indefinite theta series \eqref{defrmRi}, before showing their simplification in section \ref{sec-DTall-simplify}. In order to do this, we proceed in two steps: first we introduce three types of trees, namely \sch trees $\ST_n$, unrooted labeled trees $\LT_{n}$ and marked unrooted labeled trees $\LT_{n,m}$. Next, we show how to associate to each tree a specific factor, constructed out of generalized error functions \eqref{generrPhiME} and derivatives thereof, in a way that gives the intended kernels. The following subsections address these two issues in turn.

\subsection{Trees}
\label{subsec-DTall-trees}

We start with \sch trees, which appear first in later definitions.
Let $\IT_n^{\rm S}$ be the set of 
Schr\"oder trees with $n$ leaves. We denote $V_T$ the set of vertices excluding the leaves, and we define $\IT_n^{\rm S}$ as the set of rooted labeled trees such that any intermediate vertex $v\in V_T$ has $k_v\geq 2$ children. 
We denote the number of elements in $V_T$ by $n_T$, 
and the root vertex by $v_0$.
Furthermore, vertices of $T$ are labeled by charges so that the leaves carry charges $\hgam_i$, 
whereas the charges assigned to other vertices
are given recursively by
the sum of charges of their children, $\hgam_v=\sum_{v'\in\Ch(v)}\hgam_{v'}$ (see Fig. \ref{fig-3Schroder}). The number of these trees for each $n$ is given by the super-Catalan numbers\footnote{They can be found as sequence A001003 in the OEIS.}.

\lfig{Three examples of \sch trees, using the notation $\gamma_{i+j}=\gamma_i+\gamma_j$. The tree a) belongs to $\ST_2$ and the trees b) and c) belong to $\ST_5$. The trees a) and c) have $n_T=1$ and the tree b) has $n_T=3$.
}
{Schroder-Trees}{15cm}{fig-3Schroder}{-1.2cm}

Next, we define the unrooted labeled trees. An element $T\in\LT_n$ is simply a planar unrooted tree with $n$ vertices, where to each vertex we attach a reduced charge vector $\hgam_i$ that serves as a label. The number of these trees is given by $n^{n-2}$.

Finally, we define the set of marked unrooted labeled trees $\LT_{n,m}$. 
The indices $n$ and $m$ denote respectively the number of vertices
and of $m$ marks where the marks are distributed across vertices and each one adds 2 labels to its vertex.
For example, $|\LT_{2,1}|=4$ with all trees having the same topology (2 vertices, one of which carries a mark),
4 labels (3 at the marked vertex and 1 at the non-marked vertex) and differing by the label assigned 
to the non-marked vertex. More generally, $|\LT_{n,m}|=n^{n-2+2m}$.
In our case the labels are again identified with the charges. Thus, given a set $\{\gama_1,\dots,\gama_{n+2m}\}$, 
the trees are decorated in the following way. 
Let $m_\ver\in \{0,\dots m\}$ be the number of marks carried by the vertex $\ver$, 
so that $\sum_\ver m_\ver=m$. Then a vertex $\ver$ with $m_\ver$ marks carries 
$1+2m_\ver$ charges $\gama_{\ver,s}$, $s=1,\dots,1+2m_\ver$ and 
we set $\gama_\ver=\sum_{s=1}^{1+2m_\ver}\gama_{\ver,s}$.
Given a tree $\cT\in \IT_{n,m}^\ell$, we denote the set of its edges by $E_{\cT}$, the set of vertices by $V_{\cT}$, 
the source and target vertex\footnote{
	Although various intermediate quantities can flip sign under a flip of the orientation, the final result does not change.} 
of an edge $e$ by $s(e)$ and $t(e)$, respectively. Finally, the two disconnected trees obtained from $\cT$ by removing the edge $e$ by $\cT_e^s$ and $\cT_e^t$, where the former contains $s(e)$ and the latter $t(e)$.

\subsection{Kernels}
\label{subsec-DTall-kernels}

The definitions in this subsection were initially given in \cite{Alexandrov:2018lgp}.
We start by defining the two functions $\cEp$ and $\cEf$. They are given by a single function $\cE_n$, which will be defined later, as  
\be
\cE_n(\{\gama_i\};\tau_2)=\cEf_n(\{\gama_i\})+\cEp_n(\{\gama_i\};\tau_2),
\label{twocEs}
\ee
where $\tau_2=\Im \tau$, the first term is $\tau_2$-independent and the second one is exponentially suppressed as $\tau_2\to\infty$\footnote{Take this limit while keeping the charges $\hgam_i$ fixed.}. Given a \sch tree $T$ and a vertex $v\in V_T$, we define $\cE_v\equiv\cE_{k_v}(\{\hgam_{v'}\};\tau_2)$ where $\{\hgam_v'\}$ is the set of charges labeling the $k_v$ children of $v$. Naturally, this induces similar notations $\cEp_v$ and $\cEf_v$. Then, the kernel $\scR_n(\{\hgam_i\};\tau_2)$ is given by a sum over the set of \sch trees with $n$ leaves 
\be
\scR_n(\{\gama_i\};\tau_2)=
\frac{1}{2^{n-1}} \sum_{T\in\IT_n^{\rm S}}(-1)^{n_T-1} 
\cEp_{v_0}\prod_{v\in V_T\setminus{\{v_0\}}}\cEf_{v}.
\label{expRn}
\ee
We give an example of a contribution in Fig. \ref{fig-Rtree}.
\lfig{An example of Schr\"oder tree contributing to $R_8$. 
	Near each vertex we showed the corresponding factor
	using the shorthand notation $\gamma_{i+j}=\gamma_i+\gamma_j$.
}
{WRtree-new3}{9.75cm}{fig-Rtree}{0cm}

Next, we define the function $\cE_n$. Given a tree $\cT\in\LT_{n,m}$ we denote $E_\cT$ the set of its edges and associate to each $e\in E_\cT$ the vector
\be
\bfv_e=\sum_{i\in V_{\cT_e^s}}\sum_{j\in V_{\cT_e^t}}\bfv_{ij},
\label{defue}
\ee
where $\bfv_{ij}$ are the vectors with the following components
\be
v_{ij,k}^a=\delta_{ki} p^a_j-\delta_{kj} p^a_i.
\label{defvij}
\ee
Notice that a vector $\bfv_{ij}$ has $nb_2$ components ($a=0,\dots,b_2$ and $k=0,\dots,n-1$). We will use boldface script to denote $nb_2$-dimensional vector throughout this whole chapter. Another remark is that the definition \eqref{defue} depends on the orientation of the tree, but this dependence cancels inside the function $\cE_n$.
Then, we define 
\cite[Eq.(5.32)]{Alexandrov:2018lgp}
\be
\cEPhi_n(\bfx)=
\frac{1}{ n!}\sum_{m=0}^{[(n-1)/2]}
\sum_{\cT\in\, \IT_{n-2m,m}^\ell}
\[\prod_{\ver\in V_\cT}\cD_{m_\ver}(\{p_{\ver,s}\})\prod_{e\in E_\cT} \cD(\bfv_{s(e) t(e)})\]
\Phi^E_{\cT}(\bfx).
\label{rescEn}
\ee
Here $\Phi^E_{\cT}$ is a generalized error function defined by an unrooted labeled tree 
in terms of the usual (boosted) generalized error function $\Phi^E_n$ \eqref{generrPhiME} with parameters given by the vectors \eqref{defue},
\be 
\Phi^E_{\cT}(\bfx)=\Phi_{|E_\cT|}^E(\{\bfv_e\};\bfx).
\label{defPhiT-main}
\ee 
The functions $\Phi^E_n$ are reviewed
in subsection \ref{subsec-gen-erf} and defined with respect to the bilinear form
\be
\bfx\cdot\bfy= \sum_{i=1}^n \kappa_{abc}p_i^a x_i^b y_i^c.
\label{biform}
\ee
Besides, $\cD(\bfv)$ is the differential operator 
\be
\cD(\bfv)=\bfv\cdot\(\bfx+\frac{1}{2\pi}\,\p_\bfx\)
\label{defcD}
\ee
and $\cD_{m}(\{p_{s}\})$ is another differential operator associated with the existence of marks.
It is given by a sum over unrooted labeled trees with $2m+1$ vertices. To write its precise formula, 
we need to introduce rational coefficients $a_\cT$ determined recursively by the relation
\be
a_\cT=\frac{1}{n_\cT}\sum_{\ver\in V_\cT} (-1)^{n_\ver^+} \prod_{s=1}^{n_\ver} a_{\cT_s(\ver)},
\label{def-rec-aT}
\ee
where $n_\cT$ is the number of vertices, $n_\ver$ is the valency of the vertex $\ver$,
$n_\ver^+$ is the number of incoming edges at the vertex, and
$\cT_s(\ver)$ are the trees obtained from $\cT$ by removing the vertex $\ver$.
The relation \eqref{def-rec-aT} is supplemented by the initial condition for the trivial tree 
consisting of one vertex, $a_\bullet=1$, and one can show that $a_\cT=0$ for all trees with even number of vertices.
(One can find a table of these coefficients for trees with $n_\cT\leq 7$ in \cite[appendix B]{Alexandrov:2024jnu}.)
Then we have
\be
\cD_{m}(\{p_{s}\})=\sum_{\cT\in\, \IT_{2m+1}^\ell} a_{\cT}\prod_{e\in E_{\cT}}\cD(\bfv_{s(e)t(v)}).
\label{defcDcT}
\ee
Note that all vectors entering the definition of $\cD_{m_\ver}$ are orthogonal to the vectors 
appearing as parameters of the operators $\cD$ and the generalized error functions in \eqref{rescEn}.
Therefore, $\cD_{m_\ver}$ do not actually act on those factors and can be replaced by
usual functions obtained as $\cD_{m_\ver}\cdot 1$.

In terms of the functions \eqref{rescEn}, we finally set
\be
\cE_n(\{\gama_i\};\tau_2)=
\frac{\cEPhi_n(\bfx)}{(\sqrt{2\tau_2})^{n-1}}\, ,
\label{rescEnPhi}
\ee
where $\bfx$ is a vector with components $x^a_i=\sqrt{2\tau_2}\, \kappa_i^{ab}q_{i,b}$.
In particular, its scalar products with the vectors $\bfv_{ij}$ \eqref{defvij}
reproduce the Dirac products \eqref{Dirac-product}, i.e. $\bfv_{ij}\cdot\bfx=\sqrt{2\tau_2}\gamma_{ij}$.

This completes the definition of the functions $\scR_n(\{\gama_i\};\tau_2)$ determining the modular anomaly 
of the generating functions \eqref{exp-whh}. Given that their building blocks 
$\cE_n$ are constructed from the generalized error functions with parameters defined by charges $\hgam_i$,
they have the meaning of kernels of indefinite theta series providing completions 
for holomorphic theta series constructed from sign functions of Dirac products of these charges.

\section{The anomaly equation in depth}
\label{sec-DTall-simplify}
Although the previous definitions in principle fix the functions $\scR_n$ and thus the anomaly equation, in practice one needs to find the two functions giving the decomposition \eqref{twocEs}. This amounts to computing the large $\tau_2$ limit of $\cE_n$. 
This has been done in \cite[Eq.(5.29)]{Alexandrov:2018lgp} but the expression was wrong for some degenerate charge configurations. Then, in \cite{Alexandrov:2024jnu} it was revised and corrected. In addition, we found a significant simplification where \eqref{expRn} is expressed in terms of two new functions $\Ef_n,\Ep_n$ induced in the same lines as \eqref{twocEs} by a function $\Ev_n$ whose expression does not involve marked trees. 

In this section we start by giving the correct large $\tau_2$ limit of $\cE_n$ in a nice form and then we present the simplifications leading to the new function $\Ev_n$. We will prioritize a clear explanation of the results and omit some proofs and details. 

\subsection{Limit of large argument}
\label{subsec-DTall-largetau}

According to \eqref{rescEnPhi} and \eqref{rescEn} the large $\tau_2$ limit of $\cE_n$ is given directly by that of the $\Phi^E_\cT$ for a given tree $\cT\in\LT_{n,m}$, therefore we will focus on the limit of the latter. At a generic argument, the functions $\Phi_\cT^E(\bfx)$ at large $\bfx$ simply reduce to a product of sign functions. However, there are some degenerate charge configurations, namely when for some edge $e\in\cT$ the scalar product $\bfv_e\cdot\bfx$ vanishes, for which the product of signs vanishes\footnote{Assuming we take $\sgn(0)=0$.} and does not faithfully represent the limit.

An exact expression for the limit is then given by  
\begin{proposition}
	For a tree $\cT\in\LT_{n}$ we have
	\label{prop-largex}
	\be
	\lim_{\lambda\to\infty}\Phi_{\cT}^E\(\lambda\bfx\)=
	\esf_{\cT_{E_0}}(\{\bfv_e\})
	\prod_{e\in E_\cT\setminus E_0} \sgn (\bfv_i\cdot\bfx),
	\label{large-x-PhiE}
	\ee
	where $E_0\subseteq E_\cT$ is the subset of all edges $e$ such that $\bfv_e\cdot\bfx = 0$, $\cT_{E_0}$ is obtained from $\cT$ by contracting the edges $e\in E\setminus E_0$ and
	\be
	\esf_\cT(\{\bfv_e\})=\Phi_{\cT}^E(0).
	\ee
\end{proposition}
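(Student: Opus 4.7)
The plan is to prove the identity by exploiting the integral representation of the generalized error function together with a decomposition adapted to the degeneracy structure encoded by $E_0$. The first observation is that the defining condition $\bfv_e\cdot\bfx=0$ for all $e\in E_0$ means that $\bfx$ lies in the orthogonal complement, with respect to the bilinear form \eqref{biform}, of the subspace $V_0 := \Span\{\bfv_e : e\in E_0\}$. This orthogonality is the geometric input that drives the factorization in the limit.

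I would then write $\Phi^E_\cT(\bfx) = E_{|E_\cT|}(\cB\ast\cV;\cB\ast\bfx)$ with $\cV=(\bfv_e)_{e\in E_\cT}$, and choose the orthonormal basis $\cB=(\cB_0,\cB_1)$ of the span $V$ of $\{\bfv_e\}$ to be adapted to $V = V_0 \oplus V_1$, where $V_1$ is the orthogonal complement of $V_0$ inside $V$. In this basis the parameter matrix $\cM = \cB^\rT\ast\cV$ becomes block-upper-triangular, since for $e\in E_0$ the vector $\bfv_e$ lies entirely in $V_0$; moreover $\cB_0\ast\bfx=0$, so the argument $\vu := \cB\ast\bfx$ has support only in the $V_1$-block.

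The core computation is then to evaluate $\lim_{\lambda\to\infty}E_{|E_\cT|}(\cM;\lambda\vu)$ using the Gaussian representation \eqref{generr-E}. After the change of variables that shifts the $V_1$-components of the integration variable by $\lambda\vu$, the Gaussian factorizes between $V_0$ and $V_1$ directions. For edges $e\notin E_0$ the argument of the sign function is dominated by the scaled contribution $\lambda\,\bfv_e\cdot\bfx$, which is nonzero by definition of $E_0$, so these signs converge to the advertised $\sgn(\bfv_e\cdot\bfx)$. For edges $e\in E_0$ the $\lambda$ dependence drops out, the integral over the $V_1$-components reduces to unity, and the remaining integral over the $V_0$-components with the block $\cM_{00}$ yields precisely $\Phi^E_{|E_0|}(\{\bfv_e\}_{e\in E_0};0)$.

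The final step is to identify this residual factor with $\esf_{\cT_{E_0}}(\{\bfv_e\})$. This reduces to checking that the vectors $\bfv_e$ for $e\in E_0$ computed from \eqref{defue} coincide whether the partition of vertices is induced by removing $e$ from $\cT$ or from $\cT_{E_0}$. This holds because any contracted edge $e'\in E_\cT\setminus E_0$ connects two vertices lying on the same side of every remaining $e\in E_0$, so the contraction does not redistribute vertices across $e$. The main technical difficulty of the whole argument will be rigorously handling the off-diagonal block $\cM_{01}$: its contribution to the argument of the sign functions for $e\notin E_0$ must be shown to be subdominant, which follows from the Gaussian localization of the $V_0$-components against the diverging $\lambda\,\bfv_e\cdot\bfx$.
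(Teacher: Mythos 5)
Your proof is correct. The thesis states this proposition without giving a proof (it is deferred to the companion paper), but your argument — the Gaussian integral representation \eqref{generr-E}, an orthonormal basis of $\Span\{\bfv_e\}$ adapted to the orthogonal splitting $V_0\oplus V_1$ so that the parameter matrix becomes block-triangular and $\cB_0\ast\bfx=0$, and dominated convergence sending the signs of the non-degenerate edges to $\sgn(\bfv_e\cdot\bfx)$ while the residual $V_0$-block integral reproduces $\Phi^E_{|E_0|}(\{\bfv_e\}_{e\in E_0};0)$ — is the natural route and is sound. The two points you single out are indeed what closes the argument: the basis-independence of $\Phi^E_n$ identifies that residual integral with $\esf_{\cT_{E_0}}(\{\bfv_e\})$, and the observation that contracting an edge $e'\notin E_0$ never moves a vertex across the cut defined by any retained $e\in E_0$ guarantees that the vectors \eqref{defue} computed from $\cT$ and from $\cT_{E_0}$ coincide.
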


In general, the numbers $\esf_\cT(\{\bfv_e\})$ are irrational. This is not a desirable feature as one expects to only have rational numbers in the completion of a mock modular form with rational coefficients. Luckily, when they appear in combinations inside $\cEf_n$ they can, conjecturally, \cite[Conj. 1]{Alexandrov:2024jnu} be traded for rational coefficients $\esf_\cT\xrightarrow{} e_\cT\in \IQ$ that depend only on the topology of the tree and, up to a sign, on the orientation of its edges. Let's first give the limit of $\cEf_n$,  resulting from \eqref{large-x-PhiE}, in terms of the rational coefficients and then explain briefly how the conjecture works and give some supporting evidence. 

The $\tau_2$-independent part in \eqref{twocEs} is given by 
\be
\cEf_n(\{\gama_i\})
= \frac{1}{n!}
\sum_{m=0}^{[(n-1)/2]}\sum_{\cT\in\, \IT_{n-2m,m}^\ell}
\[\prod_{\ver\in V_\cT}\cV_{m_\ver}(\{\gama_{\ver,s}\})\]\kappa_\cT(\{\gama_{\ver}\})\,S_\cT(\{\gama_\ver\}),
\label{En0limit}
\ee
where 
\be
\label{def-all-modanom}
\begin{split}
\Gamma_e =& \sum_{i\in V_{\cT_e^s}}\sum_{j\in V_{\cT_e^t}}\gamma_{ij},
\qquad 
S_\cT(\{\gama_\ver\}) =
\sum_{\cJ\subseteq E_\cT}e_{\cT_\cJ}
\,\prod_{e\in \cJ}\delta_{\Gamma_e}
\prod_{e\in E_\cT\setminus \cJ} \sgn (\Gamma_e),
\\
\kappa_\cT(\{\gama_{\ver}\}) =& \prod_{e\in E_{\cT}}\gamma_{s(e) t(e)},\qquad 
\cV_{m}(\{\gama_{s}\}) = \sum_{\cT\in\, \IT_{2m+1}^\ell} a_{\cT}\kappa_\cT(\{\gama_{s}\}),
\end{split}
\ee
and we already made the replacement to rational coefficients $\bfe_\cT\to e_\cT$.

The formula above is still incomplete because the rational coefficients $e_\cT$ have not yet been specified. In fact they are given along $a_\cT$ in a table in \cite{Alexandrov:2024jnu} for trees with up to 7 vertices. But more generally we have a recursive formula for them 
\be
\label{def-rec-eT}
e_\cT=-\sum_{m=1}^{n_\cT-1}\sum_{\smash{\mathop{\cup}\limits_{k=1}^m\cT_k \simeq\cT }}\,
e_{\cT/\{\cT_k\}}\prod_{k=1}^m a_{\cT_k}.
\ee
Where $a_\cT$ are defined in \eqref{def-rec-aT} and $\cT/\{\cT_k\}$ denotes the tree obtained from $\cT$ by collapsing each subtree $\cT_k$ to a single vertex labeled by the sum of the collapsed charges. For $\cT=\bullet$ the coefficient is initialized to $1$. 

Now let's see how the replacement $\esf_\cT\xrightarrow{} e_\cT$ works. 
When $\cT$ has an even number of vertices (odd number of edges) the generalized error function is odd and thus $\esf_\cT=0$. The situation is more involved when the number of vertices is odd and we only proved the assertion for $n=3$ and $n=5$, which already provides good evidence for the conjecture. For $n=3$ this is trivial after using the identity \eqref{identPhiE} which implies 
\be
\label{PhiT-T123id}
\Phi^E_{\hat{\cT}_1}(\bfx)+\Phi^E_{\hat{\cT}_2}(\bfx)-\Phi^E_{\hat{\cT}_3}(\bfx)=-\Phi^E_{\cT}(\bfx).
\ee
where $\hcT_r$ ($r=1,2,3$) are the trees constructed from arbitrary unrooted trees $\cT_r$ as 
shown in Fig. \ref{fig-Vign3}, while $\cT$ is obtained from any tree $\hcT_r$ by collapsing both edges $e_i$. The minus sign before the third term is due to the difference in orientation of $\hcT_3$.
After setting $\bfx=0$ \eqref{PhiT-T123id} becomes
\be
\label{esf-id}
\esf_{\hat\cT_1}+\esf_{\hat\cT_2}-\esf_{\hat\cT_3}=-\esf_{\hat\cT}.
\ee 
In fact for $n=3$ all unrooted labeled trees have the same topology but differ by the label attached to the middle vertex. Namely, they are given by the $\hcT_r$ for $r=1,2,3$ constructed from $\cT_1=\cT_2=\cT_3=\bullet$ labeled by $\hgam_{1},\hgam_{2},\hgam_{3}$ respectively. Moreover, they contribute to $\cE^{(0)}$ through the combination on the l.h.s of \eqref{esf-id}. Since in this case $\hcT=\bullet$ and $\esf_{\bullet}=1$, we simply use the prescription $\esf_{\cT_3}\xrightarrow{}e_{\cT_3}=\frac13$.
For $n=5$ the procedure is much more involved \cite[Appendix C.]{Alexandrov:2024jnu}, but relies mostly on obvious identities between products of $\gamma_{ij}$ and on the three independent equations given by \eqref{esf-id} when $\hcT_i$ have five vertices.

\lfig{Unrooted trees constructed from the same three subtrees.}{Vign3}{19.5cm}{fig-Vign3}{-1.cm}

\subsection{Simplified anomaly}
In this subsection we show that the expression for the modular completion reviewed in the previous section
can be significantly simplified and we  follow closely \cite[Sec. 3]{Alexandrov:2024jnu} where the simplification was found\footnote{Note that we use different notations and normalizations compared to that paper.}. 

The equations \eqref{exp-whh} and \eqref{defrmRi} are unchanged.
Then, the expression for $\scR_n$ \eqref{expRn} keeps the same form but the simplification lies in replacing the building blocks $\cE_n$ by the more elementary $\Ev_n$ that we will introduce below. 

Precisely, we claim that the functions $\scR_n$ \eqref{expRn} can be rewritten as 
\be
\scR_n(\{\gama_i\};\tau_2)=\frac{1}{2^{n-1}} \sum_{T\in\IT_n^{\rm S}}(-1)^{n_T-1} 
\Ep_{v_0}\prod_{v\in V_T\setminus{\{v_0\}}}\Ef_{v},
\label{solRn-new}
\ee
where $\Ef_n$ and $\Ep_n$ are the constant and exponentially suppressed terms in a decomposition of 
the new functions $\Ev_n$,
\be
\Ev_n(\{\gama_i\};\tau_2)=\Ef_n(\{\gama_i\})+\Ep_n(\{\gama_i\};\tau_2).
\label{twocEs-new}
\ee
These functions are defined, similarly to \eqref{rescEnPhi}, as 
\be
\Ev_n(\{\gama_i\};\tau_2)=
\frac{\EPhi_n(\bfx)}{(\sqrt{2\tau_2})^{n-1}}\, ,
\label{rescEnPhi-new}
\ee
where $\EPhi_n(\bfx)$ replace $\cEPhi_n(\bfx)$ and take the following form
\be
\EPhi_n(\bfx)=
\frac{1}{n!}\sum_{\cT\in\, \IT_n^\ell}
\[\prod_{e\in E_\cT} \cD(\bfv_{s(e) t(e)},\bfy)\]
\Phi^E_{\cT}(\bfx)\Bigr|_{\bfy=\bfx},
\label{rescEn-new}
\ee
written in terms of a differential operator generalizing the one in \eqref{defcD} and given by
\be
\cD(\bfv,\bfy)=\bfv\cdot\(\bfy+\frac{1}{2\pi}\,\p_\bfx\).
\label{defcDif}
\ee
Furthermore, the constant part of $\Ev_n$ is given by $\Ef_n$ 
\be
\Ef_n(\{\gama_i\})
= \frac{1}{n!}
\sum_{\cT\in\, \IT_n^\ell} \kappa_\cT(\{\gama_{\ver}\})\, S_\cT(\{\gama_\ver\}),
\label{Efn0new}
\ee	
where $S_\cT$ and $\kappa_\cT$ are defined in \eqref{def-all-modanom}. 

Comparing \eqref{rescEn-new} and \eqref{Efn0new} with \eqref{rescEn} and \eqref{En0limit}, 
we observe that the simplification consists in dropping all contributions generated either by trees with 
non-vanishing number of marks or by the mutual action of the derivative operators $\cD(\bfv)$.
The latter is achieved through the use of the operators \eqref{defcDif} which do not act on each other in 
contrast to the original operators $\cD(\bfv)$. The price to pay for this is that the functions 
$\EPhi_n(\bfx)$ are {\it not} eigenfunctions of the Vign\'eras operator \eqref{Vigdif} and 
therefore appear to spoil modularity. However, the claim is that  
all anomalies are canceled in the combinations relevant for the completion.

The proof of the representation \eqref{solRn-new} proceeds in two steps that can be explained succinctly.
At the first step we note that the functions $\cE_n$ \eqref{rescEnPhi} can be expressed as follows
\be
\begin{split}
	\cE_n(\{\gama_i\};\tau_2)=&\, 	\frac{1}{n!}\sum_{m=0}^{[(n-1)/2]}
	\frac{1}{(\sqrt{2\tau_2})^{n-2m-1}} 
	\\
	&\times
	\sum_{\cT\in\, \IT_{n-2m,m}^\ell}
	\[ \prod_{\ver \in V_\cT} \cV_{m_\ver}(\{\hgam_{\ver,s}\})\prod_{e\in E_\cT}\cD(\bfv_{s(e) t(e)},\bfy) \]
	\Phi_{\cT}^E(\bfx)\Bigr|_{\bfy=\bfx}\, ,
\end{split}
\label{new-explicit-En}
\ee
where, as usual, $\bfx$ is a vector with components $x^a_i=\sqrt{2\tau_2}\, \kappa_i^{ab}q_{i,b}$.
This relation generalizes the one in \eqref{En0limit} to finite $\tau_2$.
Comparing to the original representation using the functions $\cEPhi_n(\bfx)$ \eqref{rescEn}, there are two changes:
the functions $\cD_{m}$ (see the comment below \eqref{defcDcT}) are replaced by $(2\tau_2)^m\cV_{m}$, and 
the operators $\cD(\bfv)$ are replaced by $\cD(\bfv,\bfy)$.
The proof of this statement is completely analogous to the proof of Proposition 5 in \cite{Alexandrov:2018lgp}
where the role of identity \cite[Eq.(F.20)]{Alexandrov:2018lgp} is played by \eqref{PhiT-T123id} and we make the replacement
\be
\prod_{e\in E_{\cT}}(\bfv_{s(e) t(e)},\bfx)\,\sgn(\bfv_e,\bfx)
\ \longrightarrow\ 
\[\prod_{e\in E_\cT}\cD(\bfv_{s(e) t(e)},\bfy) \]
\Phi_{\cT}^E(\bfx)\Bigr|_{\bfy=\bfx}.
\ee

\lfig{Combination of two Schr\"oder trees ensuring the cancellation of contributions generated by marked trees.}
{Mcancel}{17.cm}{fig-Mcancel}{-1.5cm}

The second step is essentially identical to the proof of Proposition 10 in \cite{Alexandrov:2018lgp}.
Namely, let us consider the original representation of $\scR_n$ \eqref{expRn} 
and pick up the contribution generated by a non-trivial marked tree $\cT$ in \eqref{new-explicit-En}, 
i.e. a tree having more than one vertex and at least one mark, 
which appears in the sum over marked unrooted trees producing the factor assigned
to a vertex $v$ of a Schr\"oder tree $T$. 
We denote $k=n_v$ the number of children of the vertex $v\in V_T$ and $\gamma_i$ ($i=1,\dots,k$) their charges.
Let us focus on a vertex $\ver\in V_\cT$ with $m_\ver>0$ marks and take 
$\gamma_s$ ($s=1,\dots,2m_\ver+1$) to be the charges labeling this vertex, 
so that its weight to our contribution, due to \eqref{En0limit} and \eqref{new-explicit-En}, 
is given by $\cV_{m_\ver}(\{\hgam_{s}\})$.
Note that $k\ge 2m_\ver+2$ because the tree $\cT$ 
has at least one additional vertex except $\ver$.
Then it is easy to see that the contribution we described is exactly canceled 
by the contribution coming from another Schr\"oder tree,
which is obtained from $T$ by adding an edge
connecting the vertex $v$ to a new vertex $v'$, whose children are the $2m_\ver+1$ children of $v$ in $T$ 
carrying charges $\gamma_s$ (see Fig. \ref{fig-Mcancel}).\footnote{The new tree is of Schr\"oder 
	type because its vertex $v$ has $k-2m_{\ver}\ge 2$ children and vertex $v'$ has $2m_\ver+1\ge 3$ children.}
Indeed, choosing in the sum over marked trees at vertex $v$ a tree $\cT$ which is the same as before except
that now it has 0 marks at vertex $\ver$, and
in the sum at vertex $v'$ the trivial tree having one vertex and $m_\ver$ marks, 
one gets exactly the same contribution as before,
but now with an opposite sign due to the presence of an additional vertex in the Schr\"oder tree.
Thus, all contributions from non-trivial marked trees are canceled.

As a result, we remain only with the contributions generated by trivial marked trees, 
i.e. having only one vertex and $m_\ver$ marks.
One has to distinguish two cases: either the corresponding vertex $v$ of the Schr\"oder tree is the root or not.
In the former case, this contribution is trivially canceled in the difference $\cE_{v_0}-\cEf_{v_0}=\cEp_{v_0}$ 
assigned to $v_0$ in \eqref{expRn}.
In the latter case, there are again two possibilities: whether the vertex $\ver$ of the unrooted tree $\cT$ assigned to
the parent of $v$, which carries the charge $\gamma_v$, has marks or not.
If not, this is precisely the contribution used above to cancel the contributions
from non-trivial marked trees. If $m_\ver>0$, then $\cT$ must be a trivial marked tree because 
the contributions corresponding to non-trivial ones have already been canceled. 
But then we can apply the same argument to this parent vertex. 
As a result, we continue in this way up to the root, but  
we already know that its weight $\cEp_{v_0}$ does not contain the contributions of trivial marked trees.
Thus, all contributions generated by unrooted trees in \eqref{En0limit} and \eqref{new-explicit-En}  
with at least one mark cancel and we arrive at the formula \eqref{solRn-new}.

\section{Anomalous coefficients}
\label{sec-DTall-gi}
Starting from this section we focus on solving the modular anomaly equation. In alignment with our needs, we will restrict to CY with second Betti number $b_2=1$. 

We start by rewriting the main equations and properties that we saw in section \ref{sec-DTall-def} in the specific case of one modulus. 
In this case, the generating functions have weight $-3/2$. But before rewriting the completion equation and the multiplier system, let us perform a redefinition of the generating function 
\be
\tlh_{r,\mu}(\tau) =
(-1)^{(\Nr-1)\mu }h_{r,\tmu(r)}(\tau),
\label{redef-th}
\ee
where
\be
\tmu(r) = \mu - \frac{\kappa \Nr(\Nr-1)}{2}\, .
\label{def-tmu}
\ee
The shift of $\mu$ replaces the quadratic term in the spectral flow decomposition \eqref{defmua} by a linear one
so that now it reads
\be
\label{defmu}
q = \tmu + \frac12\, \kappa r + \kappa r \eps,
\ee
with $\epsilon\in\IZ$ and $\kappa$ is the triple-intersection number of the CY.

The completion equation of these functions is given by
\be
\twhh_{r,\mu}(\tau,\btau)=\sum_{n=1}^r \sum_{\sum_{i=1}^n r_i=r}
\sum_{\bfmu}
\trmRi{\bfr}_{\mu,\bfmu}(\tau, \btau)
\prod_{i=1}^n \tlh_{r_i,\mu_i}(\tau).
\label{exp-twhh}
\ee
The $\trmRi{\bfr}_{\mu,\bfmu}$ are a redefined version of the (non-holomorphic) indefinite theta series over a dimension $n-1$ lattice \eqref{defrmRi}
\be
\trmRi{\bfr}_{\mu,\bfmu}(\tau, \btau)=
\sum_{\sum_{i=1}^n q_i=\mu \atop q_i\in \kappa r_i \IZ+\mu_i} 
\Sym\Bigl\{\scR_n(\bfhgam;\tau_2)\Bigr\}\, e^{\pi\I \tau Q_n(\bfhgam)},
\label{redefrmRi}
\ee
where the linear shifts of $q$ and $q_i$ can be canceled out since $\sum\Nr_i=\Nr$ and the sign was canceled by the one in \eqref{redef-th}. The kernel $\scR_n(\bfhgam;\tau_2)$ remains intact and its exact expression can be found in \eqref{solRn-new} but is not needed for what follows.

The redefinition \eqref{redef-th} also affects the multiplier system which is now for the function $\tlh_\Nr$
\be
\begin{split}
	\Mi{\tlh_\Nr}_{\mu \nu}(T) &= 
	e^{\frac{\pi\I}{\kappa\Nr}(1-\kappa\Nr)\mu^2 + \frac{\pi \I}{4}\(\kappa+\frac{c_2}{3}\)\Nr} \,
	\delta_{\mu\nu},
	\\
	\Mi{\tlh_\Nr}_{\mu \nu}(S) &= \frac{e^{\frac{\pi\I}{4}((2\kappa+c_2)\Nr-1)}}{\sqrt{\kappa\Nr}} 
	\,e^{-2\pi \I \frac{\mu\nu}{\kappa\Nr}}.
\end{split}
\label{mult-thr}
\ee

We expect that for each D4-brane charge $r$, the anomaly equation fixes the generating function
$\tlh_{r,\mu}$ up to a modular ambiguity. Indeed, we can add any holomorphic modular form with the correct transformation properties to a solution of \eqref{exp-twhh} and this gives us a new solution. This holomorphic modular ambiguity can be fixed by other means, e.g.
by computing the first few terms in the Fourier expansion of $\tlh_{r,\mu}$.
In other words, we can represent 
\be
\tlh_{r,\mu}=\than_{r,\mu}+\thh_{r,\mu},
\label{han}
\ee
where $\than_{r,\mu}$ is a depth $r-1$ mock modular form satisfying \eqref{exp-twhh},
while $\thh_{r,\mu}$ is pure modular. Then, we can find $\tlh_{r,\mu}$ by first solving the modular anomaly equation and then fixing the ambiguity $\thh_\Nr$. We would like to perform the first step for all charges $\Nr$. 

The problem however is that the r.h.s. of \eqref{exp-twhh} depends on the full generating functions 
$\tlh_{r_i,\mu_i}$ with $r_i<r$ and hence on all $\thh_{r_i,\mu_i}$, which remain unknown at this point.
Therefore, $\than_{r,\mu}$ must also depend on them, and what we can do at best is to find $\than_{r,\mu}$ 
up to these modular functions.
To achieve this goal, we first parametrize the dependence of $\tlh_{r,\mu}$ on $\thh_{r_i,\mu_i}$
by holomorphic functions $\gi{\bfr}_{\mu,\bfmu}(\tau)$ which we call anomalous coefficients,
characterize them by anomaly equations similar to \eqref{exp-twhh}, and then solve 
these equations. 
In this section we perform the first two steps and leave the third one to the subsequent sections.
The main result is captured by the following

\begin{theorem}\label{thm-ancoef}
	Let $\gi{\Nr}_{\mu,\mu'}=\delta_{\mu,\mu'}$ and $\thh_{\Nr,\mu}$ be a set of holomorphic modular forms.
	Then 
	\be
	\tlh_{\Nr,\mu}(\tau)= \sum_{n=1}^\Nr\sum_{\sum_{i=1}^n \Nr_i=\Nr}
	\sum_{\bfmu}
	\gi{\bfr}_{\mu,\bfmu}(\tau)
	\prod_{i=1}^n \thh_{\Nr_i,\mu_i}(\tau),
	\label{genansatz}
	\ee
	is a depth $r-1$ modular form with completion of the form \eqref{exp-twhh}
	provided $\gi{\bfr}_{\mu,\bfmu}$ are depth $n-1$ mock modular forms (where $n$ is the number of charges $r_i$)
	with completions satisfying
	\be
	\whgi{\bfr}_{\mu,\bfmu}=
	\Sym\Bigg\{\sum_{m=1}^n 
	\sum_{\sum_{k=1}^m n_k=n}
	\sum_{\bfnu}\trmRi{\bfs}_{\mu,\bfnu}
	\prod_{k=1}^m \gi{\frr_k}_{\nu_k,\frm_k}
	\Bigg\},
	\label{compl-gi}
	\ee
	where\footnote{Note that while the sets $\bfr$ and $\bfmu$ have $n$ elements, 
		the sets $\bfs$ and $\bfnu$ have only $m\le n$ elements. 
		To comprehend the structure of the equation \eqref{compl-gi}, it might be 
		useful to use the fact that the sum on its r.h.s. is equivalent to the sum over
		rooted trees of depth 2 with leaves labeled by charges $r_i$ and other vertices
		labeled by the sum of charges of their children. Using this labeling, we assign 
		the function $\trmRi{\bfs}_{\mu,\bfnu}$ to the root vertex and 
		the anomalous coefficients to the vertices of depth 1 with arguments determined by the charges 
		of their children.
		Then the contribution of a tree is given by the product of the weights of its vertices.
		See Fig. \ref{fig-g-trees}. } 
	\be
	j_k=\sum_{l=1}^{k-1} n_l,
	\qquad
	\Ms_k=\sum_{i=1}^{n_k} \Nr_{j_k+i},
	\qquad
	\begin{array}{c}
		\frr_k=(\Nr_{j_k+1},\dots, \Nr_{j_{k+1}}),
		\\
		\frm_k=(\mu_{j_k+1},\dots,\mu_{j_{k+1}}).
		\vphantom{\sum\limits^{a}}
	\end{array}
	\label{split-rs}
	\ee
\end{theorem}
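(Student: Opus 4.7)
The plan is to reduce the statement to an identity between two expressions for the completion $\twhh_{\Nr,\mu}$, obtained by two different orders of expansion, and then to read off the anomaly equation for $\gi{\bfr}$ by equating coefficients of monomials in the holomorphic ambiguities $\thh_{\Nr_i,\mu_i}$. Since the $\thh_{\Nr_i,\mu_i}$ are already modular and therefore coincide with their own completions, the completion of the ansatz \eqref{genansatz} is obtained simply by replacing every $\gi{\bfr}$ with its completion $\whgi{\bfr}$, giving
\be
\twhh_{\Nr,\mu}=\sum_{n=1}^\Nr\sum_{\sum_{i=1}^n \Nr_i=\Nr}\sum_{\bfmu}\whgi{\bfr}_{\mu,\bfmu}\prod_{i=1}^n \thh_{\Nr_i,\mu_i}.
\label{pp-ansatzcomp}
\ee
This is the ``target'' expression that must be matched by the right-hand side of \eqref{exp-twhh}.

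The next step is to substitute the ansatz \eqref{genansatz} for each factor $\tlh_{\Nr_i,\mu_i}$ appearing in \eqref{exp-twhh}, so that each such factor is itself expanded into a sum over secondary partitions. After this substitution the right-hand side of \eqref{exp-twhh} becomes a triple sum: over a primary partition $\bfs=(\Ms_1,\dots,\Ms_m)$ of $\Nr$ coming from the outer anomaly, over secondary partitions $\frr_k=(\Nr_{j_k+1},\dots,\Nr_{j_{k+1}})$ of each $\Ms_k$ coming from the inner ansatz, and over the accompanying residue-class labels $\bfnu$ and $\frm_k$. The key combinatorial observation is that specifying a primary partition together with secondary partitions of each of its parts is equivalent, up to the reordering generated by the symmetrization $\Sym$ in \eqref{redefrmRi}, to specifying a single partition $\bfr=(\Nr_1,\dots,\Nr_n)$ of $\Nr$ together with a surjection onto its primary blocks of sizes $n_k$, encoded by the data \eqref{split-rs}. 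Regrouping the sum along these lines rewrites the substituted expression as
\be
\sum_{n,\bfr,\bfmu}\Biggl[\,\Sym\Bigl\{\sum_{m=1}^n\sum_{\sum n_k=n}\sum_{\bfnu}\trmRi{\bfs}_{\mu,\bfnu}\prod_{k=1}^m \gi{\frr_k}_{\nu_k,\frm_k}\Bigr\}\Biggr]\prod_{i=1}^n\thh_{\Nr_i,\mu_i},
\label{pp-reorg}
\ee
where the outer symmetrization over $\bfr$ acts on the bracketed quantity and reproduces precisely the combinatorial weights implicit in the sums over $\bfs$, $\{n_k\}$ and $\bfnu$.

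The theorem now follows by comparing \eqref{pp-ansatzcomp} with \eqref{pp-reorg}: since the holomorphic ambiguities $\thh_{\Nr_i,\mu_i}$ can be treated as algebraically independent generators (they can be chosen freely in the hypothesis of the theorem), the equality of the two expressions must hold coefficient by coefficient for every monomial indexed by $(\bfr,\bfmu)$, which is exactly the anomaly equation \eqref{compl-gi}. The claim that $\gi{\bfr}$ is a mock modular form of depth $n-1$ follows by induction on $n$: the base case $n=1$ is trivial since $\gi{\Nr}_{\mu,\mu'}=\delta_{\mu,\mu'}$ is holomorphic modular, and for $n>1$ the right-hand side of \eqref{compl-gi} is a finite sum of products of $\trmRi{\bfs}$ with lower-rank completions $\whgi{\frr_k}$, whose shadow structure is by induction of total depth at most $n-2$, consistent with \eqref{mock-drv-hdepth} and \eqref{mock-compl-mixed}. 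The main technical obstacle is the careful bookkeeping of the symmetrizations: the $\Sym$ inside $\trmRi{\bfs}$ symmetrizes over the primary blocks, while each $\gi{\frr_k}$ is symmetric in the charges of its own block, and one must check that combining these two symmetries reproduces exactly the full symmetrization over $\bfr$ that is implicit in the outer sum of \eqref{pp-ansatzcomp}. Once this combinatorial identity is verified, both directions of the ``provided'' clause follow immediately.
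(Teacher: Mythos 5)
Your proposal is correct and follows essentially the same route as the paper (which defers the detailed argument to the cited companion reference): substitute the ansatz \eqref{genansatz} into both sides of the completion equation \eqref{exp-twhh}, regroup the nested primary/secondary partitions into a single partition of $\Nr$ with blocks as encoded in \eqref{split-rs}, and match the coefficients of the monomials in the $\thh_{\Nr_i,\mu_i}$. The only caveat is that for the direction the theorem actually asserts one does not need the algebraic independence of the $\thh$'s — a direct term-by-term verification of the regrouped identity suffices — and the symmetrization bookkeeping you flag is indeed the one nontrivial combinatorial check.
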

\lfig{A representation of contributions to the r.h.s. of \eqref{compl-gi} in terms of rooted trees of depth 2.}
{g-trees}{10cm}{fig-g-trees}{-0.7cm}

The proof of this theorem can be found in \cite[Thm. 3.1]{Alexandrov:2024wla}. 

The main point is that now we have completely disentangled the dependence of the $\tlh_r$ on lower rank $\thh_{r_i}$. In practice, this means that there is no more obstruction to solving the completion equation, up to the lower rank ambiguities. The information about such a solution is completely encompassed by the family of holomorphic (higher depth) mock modular forms $\gi{\bfr}_{\mu,\bfmu}$. Furthermore, these functions satisfy their own modular completion equation \eqref{compl-gi}. One may wonder however about the point of replacing one modular completion equation by another. The crucial difference is that in solving for the anomalous coefficients, we can take \textbf{any} solution, as long as we use it as input for equations of higher charge. In fact, the only impact of taking a different solution $\gi{\bfr}_{\mu,\bfmu}$ is that we get a different decomposition \eqref{han} of $\tlh_{\Nr}$, where $\Nr=\sum_{\Nr_i\in\bfr}\Nr_i$. Therefore, in the rest of this chapter we will focus on solving for the anomalous coefficients.

We saw in theorem \ref{thm-ancoef} that $\gi{\bfr}_{\mu,\bfmu}$ are depth $n-1$ mock modular forms. In order to solve their completion equation we still need to specify their precise modular transformation properties. They follow directly from that of the $\tlh_\Nr$ and are given by
\be
\begin{split}
	w\(\gi{\bfr}\) &=3(n-1)/2, 
	\\
	\Mi{\gi{\bfr}}_{\mu, \bfmu, \nu, \bfnu}(T) &= 
	e^{\pi \I \( \mu - \sum_i \mu_i\)+ \pi \I \(\frac{\mu^2}{\kappa \Nr}-\sum_i \frac{\mu_i^2}{\kappa \Nr_i}\) }\,
	\delta_{\mu \nu} \delta_{\bfmu \bfnu},
	\\
	\Mi{\gi{\bfr}}_{\mu, \bfmu, \nu, \bfnu}(S) &= 
	\frac{e^{\frac{\pi\I}{4}(n-1)}}{\sqrt{\kappa^{n+1} \Nr\prod_i \Nr_i }} \,
	e^{-2\pi \I \(\frac{\mu \nu}{\kappa \Nr} - \sum_i \frac{\mu_i\nu_i}{\kappa \Nr_i}\)}.
\end{split}
\label{mult-gr}
\ee

Before proceeding with the different constructions, we give a few relevant notations
\be
\label{defr0}
\begin{split}
\Nr_0=\gcd(\bfr),\qquad 
\rdcr_i=\Nr_i/\Nr_0,\qquad 
&\kappa_{ij}=\hf\, \kappa\Nr\rdcr_i\rdcr_j, \qquad
\\
\Delta\mu=\mu-\sum_{i=1}^{n}\mu_i,\qquad 
\sum_{i=1}^{n}&\rho_i\,\rdcr_i=1,
\end{split}
\ee
where $\rho_i\in\IZ$ are any $n$-tuple satisfying the last equation.

\section{Two infinite families of solutions}
\label{sec-DTall-HeckeVW}
In this section we present the two infinite family of anomalous coefficients solving the completion equations. We start with the case of $\bfr=(\Nr_1,\Nr_2)$ that uses Hecke operators and afterwards we consider the case $\bfr=(1,\dots,1)$ related to generating functions of VW invariants. 

\subsection{Hecke-like operators}
\label{subsec-DTall-Hecke}

Let's take two arbitrary charges $r_1$ and $r_2$.
In this case the formula for the modular completion $\whgi{\Nr_1,\Nr_2}_{\mu,\mu_1,\mu_2}$ \eqref{compl-gi}, 
representing the anomaly equation, takes the simple form
\be
\whgi{\Nr_1,\Nr_2}_{\mu,\mu_1,\mu_2}(\tau,\btau)
=\gi{\Nr_1,\Nr_2}_{\mu,\mu_1,\mu_2}(\tau)+\trmRi{\Nr_1,\Nr_2}_{\mu, \mu_1, \mu_2}(\tau, \btau),
\label{whh2}
\ee
and $\gi{\Nr_1,\Nr_2}_{\mu,\mu_1,\mu_2}$ is required to be a mock modular form of weight 3/2
with the multiplier system \eqref{mult-gr} specialized to $n=2$.
The function $\trmRi{\Nr_1,\Nr_2}_{\mu, \mu_1, \mu_2}$ determining the completion 
is easily computable, but for our purposes it is sufficient to consider its derivative
with respect to $\btau$ which specifies the shadow of $\gi{\Nr_1,\Nr_2}_{\mu,\mu_1,\mu_2}$.
It is given by
\be
\p_{\btau}\trmRi{\Nr_1,\Nr_2}_{\mu, \mu_1, \mu_2}(\tau, \btau) =
\frac{\Nr_0\sqrt{\kappa_{12}}}{16\pi\I \tau_2^{3/2}}\,\delta^{(\kappa \Nr_0)}_{\Delta\mu} 
\, \overline{\theta^{(\kappa_{12})}_{\mu_{12}}(\tau)},
\label{shadowRN1N2}
\ee
where we used \eqref{defr0}, $\delta^{(n)}_x$ is the mod-$n$ Kronecker delta defined by
\be
\label{defdelta}
\delta^{(n)}_x=\left\{ \begin{array}{ll}
	1\  & \mbox{if } x=0\!\!\!\mod n,
	\\
	0\ & \mbox{otherwise}.
\end{array}\right.
\ee 
and 
\be
\label{def-mu12}
\mu_{12}=\rdcr_2\mu_1-\rdcr_1\mu_2+\rdcr_1\rdcr_2(\rho_1-\rho_2)\Delta \mu,
\ee
is a residue class and runs over $2\kappa_{12}$ values. The function $\theta^{(\kappa)}_{\mu}$ was defined in \eqref{deftheta} and in particular is periodic under shifts of $\mu$ by $2\kappa$, in line with the range of $\mu_{12}$ we just indicated. 
Notice that the tensor structure of the shadow is actually encoded in a vector-like object and a Kronecker delta. This suggests to look for a solution of the form
\be
\gi{\Nr_1,\Nr_2}_{\mu,\mu_1,\mu_2}(\tau)=\Nr_0\delta^{(\kappa\Nr_0)}_{\Delta\mu}
\Gi{\kappa_{12}}_{\mu_{12}}(\tau).
\label{sol-n=2-Hecke}
\ee
An important observation is that if $\Gi{\kappa}_\mu$ is a VV mock modular form
of weight 3/2 with a modular completion satisfying
\be
\tau_2^{3/2}\p_{\btau}\whGi{\kappa}_\mu(\tau,\btau) 
=\frac{\sqrt{\kappa}}{16\pi\I}\, \overline{\ths{\kappa}_\mu}(\btau),
\label{shadG}
\ee
where $\ths{\kappa}_\mu(\tau)$ is the theta series \eqref{deftheta} at $z=0$,
then it is trivial to see that \eqref{sol-n=2-Hecke} solves the anomaly equation \eqref{whh2}.
The only non-trivial fact to check is that it has the correct multiplier system. 
But this turns out to be guaranteed by the fact that 
\eqref{shadG} ensures that $\Gi{\kappa}_\mu$ has the multiplier system 
$\Mi{\kappa}_{\mu \nu}$ \eqref{STusual} conjugate to that of $\ths{\kappa}_\mu$ and the proposition 
\begin{proposition}
	\label{prop-multsys}
	If $\Gi{\kappa}_\mu$ ($\mu=0,\dots,2\kappa-1$) transforms with the multiplier system
	\be
	\Mi{\kappa}_{\mu \nu}(T) =
	e^{-\frac{\pi \I}{2\kappa} \,\mu^2}\delta_{\mu \nu},
	\qquad
	\Mi{\kappa}_{\mu \nu}(S)=
	\frac{e^{\frac{\pi\I}{4}}}{\sqrt{2\kappa}} \, e^{\frac{\pi \I}{\kappa}\,\mu \nu},
	\label{STusual}
	\ee
	then $\delta^{(\kappa\Nr_0)}_{\Delta\mu} \Gi{\kappa_{12}}_{\mu_{12}}$
	transforms with the multiplier system \eqref{mult-gr} specified for $n=2$.
\end{proposition}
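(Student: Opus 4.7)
The plan is to verify directly that $\delta^{(\kappa\Nr_0)}_{\Delta\mu}\Gi{\kappa_{12}}_{\mu_{12}}$ carries the multiplier system \eqref{mult-gr} specialized to $n=2$, starting from the hypothesis that $\Gi{\kappa}_{\mu}$ transforms with \eqref{STusual}. Since $\SL$ is generated by $T$ and $S$, it suffices to check these two transformations separately. In both cases, the Kronecker delta enforces $\Delta\mu\in\kappa\Nr_0\IZ$ and compatibility need only be checked on this support; this is the key restriction that will make the identities work.

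For $T$, the hypothesis yields the phase $\exp(-\pi\I\mu_{12}^{2}/(2\kappa_{12}))$, while \eqref{mult-gr} demands $\exp\!\bigl(\pi\I\Delta\mu + \pi\I[\mu^{2}/(\kappa\Nr)-\mu_{1}^{2}/(\kappa\Nr_{1})-\mu_{2}^{2}/(\kappa\Nr_{2})]\bigr)$. I would first verify the equality modulo $2$ at the representative $\Delta\mu=0$: there $\mu=\mu_{1}+\mu_{2}$, $\mu_{12}=\rdcr_{2}\mu_{1}-\rdcr_{1}\mu_{2}$, and using $\Nr=\Nr_{0}(\rdcr_{1}+\rdcr_{2})$ together with $\Nr_{i}=\Nr_{0}\rdcr_{i}$, the quadratic combination collapses to $-(\rdcr_{2}\mu_{1}-\rdcr_{1}\mu_{2})^{2}/(\kappa\Nr\rdcr_{1}\rdcr_{2})=-\mu_{12}^{2}/(2\kappa_{12})$, which is exactly what is needed. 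Then I would verify stability under the allowed shifts $\mu\to\mu+\kappa\Nr_{0}$, where both sides change by integer multiples of $\pi\I$; the relation $\sum_{i}\rho_{i}\rdcr_{i}=1$ controls the induced shift of $\mu_{12}$ and ensures that those integer parts agree modulo $2$.

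For $S$, the required identity is a Fourier-like relation between the kernel $e^{\pi\I\mu_{12}\nu_{12}/\kappa_{12}}/\sqrt{2\kappa_{12}}$ and the product kernel $e^{-2\pi\I[\mu\nu/(\kappa\Nr)-\mu_{1}\nu_{1}/(\kappa\Nr_{1})-\mu_{2}\nu_{2}/(\kappa\Nr_{2})]}/\sqrt{\kappa^{3}\Nr\Nr_{1}\Nr_{2}}$, paired with the Kronecker deltas $\delta^{(\kappa\Nr_{0})}_{\Delta\mu}$ and $\delta^{(\kappa\Nr_{0})}_{\Delta\nu}$ and summed over $\nu,\nu_{1},\nu_{2}$ on the right. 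I would parametrize this summation by writing $\nu=\nu_{1}+\nu_{2}+\kappa\Nr_{0}\,\ell$, with $\ell$ running over $2\Nr/\Nr_{0}=2(\rdcr_{1}+\rdcr_{2})$ values. The $\ell$-sum produces a secondary Kronecker delta which, together with the outer $\delta^{(\kappa\Nr_{0})}_{\Delta\mu}$, reduces the statement to a pointwise equality of phases. That remaining bilinear in $(\nu_{1},\nu_{2})$ can then be rewritten as $\mu_{12}\nu_{12}/\kappa_{12}$ by exactly the same algebraic identity between quadratic forms used in the $T$-case. The prefactors are reconciled using $\Nr_{1}\Nr_{2}=\Nr_{0}^{2}\rdcr_{1}\rdcr_{2}$ and $2\kappa_{12}=\kappa\Nr\rdcr_{1}\rdcr_{2}$: their ratio is precisely $\kappa\Nr_{0}$, which is absorbed by the explicit $\Nr_{0}$ in front of the ansatz \eqref{sol-n=2-Hecke} together with the cardinality of the $\ell$-sum.

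The main obstacle will be the bookkeeping in the $S$-step: the change of variables $(\nu,\nu_{1},\nu_{2})\mapsto(\nu_{12},\ell,\text{residue})$ must be inverted explicitly, the periodicities of all indices (modulo $2\kappa\Nr$, $2\kappa\Nr_{i}$ and $2\kappa_{12}$, respectively) have to be tracked consistently, and the compatibility of the two Kronecker deltas enforced at each stage. In contrast, the $T$-step reduces to a single algebraic identity between quadratic forms which is essentially forced by the specific choices of $\kappa_{12}=\kappa\Nr\rdcr_{1}\rdcr_{2}/2$ and $\mu_{12}$ made in \eqref{defr0}--\eqref{def-mu12}, so once the computation at $\Delta\mu=0$ is done the rest follows by a shift-invariance check.
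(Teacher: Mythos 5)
The thesis states this proposition without proof (the argument is deferred to \cite{Alexandrov:2024wla}), so I can only judge your proposal on its own terms: the strategy — verify the $T$ and $S$ transformations directly, using the support condition $\Delta\mu\equiv 0\bmod\kappa\Nr_0$ to collapse the quadratic form — is the right one, and the central identity you rely on is correct. At $\Delta\mu=0$ one has exactly
\[
\frac{\mu^2}{\kappa\Nr}-\frac{\mu_1^2}{\kappa\Nr_1}-\frac{\mu_2^2}{\kappa\Nr_2}
=-\frac{(\Nr_2\mu_1-\Nr_1\mu_2)^2}{\kappa\Nr\Nr_1\Nr_2}
=-\frac{\mu_{12}^2}{2\kappa_{12}}\,,
\]
and under $\mu\to\mu+\kappa\Nr_0 m$ the difference of the two $T$-phase exponents equals $2m(\rho_1\mu_1+\rho_2\mu_2)+\kappa\Nr_0\, m\bigl(1+m(\rho_1^2\rdcr_1+\rho_2^2\rdcr_2)\bigr)$, which is even because $\rho_i^2\equiv\rho_i\bmod 2$ and $\rho_1\rdcr_1+\rho_2\rdcr_2=1$ make the bracket congruent to $1+m$, so the correction is proportional to $m(m+1)$. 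Polarizing the same identity gives the bilinear phase needed for $S$, and the character sum over the fiber (generated, e.g., by $(\nu,\nu_1,\nu_2)=((\rdcr_1+\rdcr_2)t,\rdcr_1 t,\rdcr_2 t)$, for which the bilinear form reduces to $\Delta\mu\, t/(\kappa\Nr_0)$) both produces the Kronecker delta on the left-hand side and supplies the multiplicity $\kappa\Nr_0$ per value of $\nu_{12}$.

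Two bookkeeping points should be fixed. First, the overall constant $\Nr_0$ in \eqref{sol-n=2-Hecke} cannot enter the matching of multiplier systems — it multiplies both sides of the transformation law and drops out (and the proposition is anyway stated for $\delta^{(\kappa\Nr_0)}_{\Delta\mu}\Gi{\kappa_{12}}_{\mu_{12}}$ without it). The normalization must therefore close by itself, and it does: the support contains $\kappa^2\Nr_1\Nr_2\cdot\Nr/\Nr_0=2\kappa_{12}\cdot\kappa\Nr_0$ triples, so each value of $\nu_{12}$ has exactly $\kappa\Nr_0$ preimages, cancelling the prefactor ratio $\sqrt{\kappa^3\Nr\Nr_1\Nr_2/(2\kappa_{12})}=\kappa\Nr_0$ with nothing left over. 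Second, since $\nu$ runs over $\kappa\Nr$ residues (cf.\ the normalization $1/\sqrt{\kappa\Nr}$ in \eqref{mult-thr}), your auxiliary index $\ell$ in $\nu=\nu_1+\nu_2+\kappa\Nr_0\ell$ takes $\Nr/\Nr_0=\rdcr_1+\rdcr_2$ values, not $2\Nr/\Nr_0$; with the correct count there is no leftover factor of two to explain away. You should also record the well-definedness check that $\mu_i\to\mu_i+\kappa\Nr_i$ shifts $\mu_{12}$ by $2\kappa_{12}\rho_i$, i.e.\ by a full period; it is the same manipulation with $\sum_i\rho_i\rdcr_i=1$.
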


As a result, we have reduced the problem of finding the anomalous coefficients for arbitrary two charges 
to exactly the same problem that was studied in \cite{Alexandrov:2022pgd} for charges $r_1=r_2=1$,
in which case $\kappa_{12}=\kappa$.
It was found that for any $\kappa$ equal to a power of a prime integer, $\Gi{\kappa}_\mu$
is determined by the generating series $H_\mu$ ($\mu=0,1$) of Hurwitz class numbers\footnote{An explicit formula
	for the generating series can be found in 
	\cite[Eq.(1.12)]{bringmann2007overpartitionsclassnumbersbinary}
	and its mock modular properties have been established in \cite{Zagier:1975,hirzebruch1976intersection}.}  
through the action on it by a certain modification of the Hecke-like operator introduced in 
\cite{Bouchard:2016lfg,Bouchard:2018pem}.
However, it turns out that a solution of this problem for generic $\kappa$ has already been found
in the seminal paper \cite{Dabholkar:2012nd}. 
More precisely, that paper looked for mock modular forms with 
shadow proportional to $\ths{\kappa}_\mu$
and further restricted to have the slowest possible asymptotic growth of their Fourier coefficients. 
Such functions have been called mock modular forms of optimal growth.
In our case we do not have to impose any restrictions on the asymptotic growth.
But since any solution of \eqref{shadG} is equally suitable, we can take the one provided 
by \cite{Dabholkar:2012nd}. All other solutions should differ just by a pure modular form.

We end this subsection by giving schematically the formula for the solution, without defining the precise action of the Hecke-like operators. The exact formula was found in \cite{Dabholkar:2012nd} and one can find its specification to our context and a minor correction of normalization factors in \cite{Alexandrov:2024wla}. Accordingly, we define the M\"obius function 
\be
\mu(d)=\left\{ \begin{array}{ll}
	+1 \quad & \mbox{if $d$ is a square-free with an even number of prime factors},
	\\
	-1 & \mbox{if $d$ is a square-free with an odd number of prime factors},
	\\
	0 & \mbox{if $d$ has a squared prime factor}.
\end{array}
\right.
\ee
We denote $T^{d}_{r}$ the Hecke-like operator. We don't give its precise definition but we say that it preserves modularity and that when it acts on VV modular forms, it multiplies the number of their components by $r$.

In terms of these quantities, the mock modular forms of optimal growth are given by
\be
\Gi{\kappa}_\mu=\sum_{d|\kappa\atop \mu(d)=1} \(T^{d}_{\kappa/d}\[\cGi{d}\]\)_\mu,
\label{DMZ}
\ee
where
$\cGi{d}$ are the \textit{seed functions}: VV mock modular forms of weight 3/2 with multiplier system $\Mi{d}_{\mu \nu}$.
Thus, for each square-free integer with an even number of prime factors, such as 1, 6, 10, 14, 15, etc.,
one needs to provide such a mock modular form.
The first two of them turn out to be well-known functions:
for $d=1$ it is (the doublet of) the generating series of Hurwitz class numbers,
\be
\cGi{1}_\mu(\tau)=H_\mu(\tau),
\ee
and for $d=6$ it has the following explicit expression 
\be
\cGi{6}_\mu(\tau)=\frac{\chi_{12}(\mu)}{12}\, \hi{6}(\tau),
\label{def-cG6}
\ee
where
\be
\chi_{12}(\mu)=\left\{ \begin{array}{ll}
	+1 \quad & \mbox{if } \mu=\pm 1 \mod 12,
	\\
	-1 & \mbox{if } \mu=\pm 5 \mod 12,
	\\
	0 & \mbox{if } \gcd(\mu,12)>0,
\end{array}
\right.
\ee
and 
\be
\hi{6}(\tau)=\frac{12 F_2^{(6)}(\tau)-E_2(\tau)}{\eta(\tau)}
\ee
is a mock modular form of weight 3/2 with shadow proportional to the Dedekind eta function $\eta(\tau)$, 
which is defined in terms of the quasimodular Eisenstein series $E_2(\tau)$ and the function
\be
F_2^{(6)}(\tau)=-\sum_{r>s>0}\chi_{12}(r^2-s^2)\, s\, \q^{rs/6}\, . 
\ee
For many other functions $\cGi{d}$, \cite{Dabholkar:2012nd} determined their first Fourier coefficients,
however we are not aware about any explicit expressions for their generating series.

\subsection{Relation to VW theory}
\label{subsec-DTall-VW}

Let us now consider the case of $n$ charges $r_i$ all equal to 1. In addition, we also restrict ourselves
to CYs with the intersection number $\kappa=1$. 
A crucial simplification in this case is that one can drop all indices $\mu_i$ 
because they take only $\kappa r_i=1$ value. Therefore, the corresponding anomalous coefficients
can be denoted simply as $g_{n,\mu}\equiv\gi{1,\dots,1}_{\mu}$.
Another feature of this set of anomalous coefficients is that the anomaly equations for $g_{n,\mu}$
form a closed system and do not involve other anomalous coefficients.
Moreover, it is easy to see that in this sector the anomaly equation \eqref{compl-gi} 
becomes identical to \eqref{exp-twhh} under the identification $g_{n,\mu}\leftrightarrow \tlh_{n,\mu}$ 
and thus takes the form 
\be
\whg_{n,\mu}=
\sum_{m=1}^n \sum_{\sum_{k=1}^m n_k=n}
\sum_{\bfmu}\trmRi{\bfn}_{\mu,\bfmu}
\prod_{k=1}^m g_{n_k,\mu_k}.
\label{exp-whtg}
\ee

The case $n=2$ has already been analyzed in the previous subsection. It follows from the results 
presented there, and in agreement with \cite{Alexandrov:2022pgd}, that
\be  
g_{2,\mu}=H_\mu,
\qquad 
\mu=0,1.
\ee
The vector valued function $H_\mu$ appearing here is known not only as the generating series of Hurwitz class numbers, 
but also as the (normalized) generating series of $SU(2)$ Vafa-Witten invariants on $\IP^2$, namely \cite{Vafa:1994tf}
\be  
\vwh_{2,\mu}=3 (h^{\rm VW}_1)^2 H_\mu ,
\ee
where $\vwh_{n,\mu}$ denotes the generating series of $SU(n)$ VW invariants and $h^{\rm VW}_1=\eta^{-3}$.
Combining the two relations, one obtains 
\be
g_{2,\mu}=\frac13\, \vwgi{2,\mu},
\label{relVW-n=2}
\ee
where we introduced the normalized generating series
\be
\vwgi{n,\mu}(\tau) = \eta^{3n}(\tau)\,\vwh_{n,\mu}(\tau).
\label{def-VWnorm}
\ee
As we show below, the relation \eqref{relVW-n=2} is not an accident, but a particular case 
of a more general relation between $g_{n,\mu}$ and $\vwgi{n,\mu}$.

Let us recall that the VW invariants count the Euler characteristic of moduli spaces of instantons 
in a topological supersymmetric gauge theory on a complex surface $S$ obtained from the usual $N=4$ super-Yang-Mills 
by a topological twist \cite{Vafa:1994tf}. The partition function of the theory reduces to the generating 
series of VW invariants and one could expect that it must be a modular form as a consequence of S-duality of 
the $N=4$ super-Yang-Mills. However, it turns out that on surfaces with $b_2^+(S)=1$, which includes $S=\IP^2$,
there is a modular anomaly \cite{Vafa:1994tf,Dabholkar:2020fde}. 
Its precise form can be established from the fact that the VW invariants on $S$ coincide with 
the D4-D2-D0 BPS indices on the non-compact CY given by the canonical bundle over $S$ 
\cite{Minahan:1998vr,Alim:2010cf,gholampour2017localized},
which in turn can be obtained from a compact CY given by an elliptic fibration over $S$
in the limit of large fiber. Since the modularity of the D4-D2-D0 BPS indices on such compact CY 
is governed by a generalization of \eqref{exp-whh} or \eqref{exp-twhh} to $b_2>1$, 
the generating series of VW invariants are subject to the same anomaly equation \cite{Alexandrov:2019rth}.

Furthermore, since in the local limit where the elliptic fiber becomes large 
the only divisor which remains finite is $[S]$, the D4-brane charges belong to the one-dimensional lattice,
and if $b_2(S)=1$, as is the case for $\IP^2$, the lattice of D2-brane charges is also one-dimensional.
Thus, for $S=\IP^2$ one reduces to the ``one-dimensional" case captured by the anomaly equation \eqref{exp-twhh}
with $\kappa=[H]^2=1$ where $[H]$ is the hyperplane class of $\IP^2$. The fact that we start with a compact CY with $b_2>1$ actually has consequences on the structure of the anomaly equation. Namely, the normalized generating functions of VW invariants for $SU(n)$ satisfy
\be
\whvwg_{n,\mu}=
\sum_{m=1}^n \sum_{\sum_{k=1}^m n_k=n}
\sum_{\bfmu}\trmRVWi{\bfn}_{\mu,\bfmu}
\prod_{k=1}^m \vwgi{n_k,\mu_k},
\label{exp-whVW}
\ee
where
\be 
\trmRVWi{\bfn}_{\mu,\bfmu}=3^{m-1}\trmRi{\bfn}_{\mu,\bfmu}.
\ee
where $m$ is the number of charges which the functions depend on.
Substituting this into \eqref{exp-whVW} and comparing to \eqref{exp-whtg},
one finds that the two equations become identical provided one identifies\footnote{The freedom
	to include in this relation a constant factor $c^n$ allowed by the equations 
	is fixed by the normalization conditions $g_{1}=\vwgi{1}=1$.} 
\be
g_{n,\mu}=3^{1-n}\vwgi{n,\mu}(\tau).
\label{rel-tggP2-n}
\ee
This result is consistent with \eqref{relVW-n=2} and provides an explicit solution for 
the anomalous coefficients with $r_i=\kappa=1$.

\section{General strategy: extend, solve, reduce}
\label{sec-DTall-Extensions}
It is natural to try to generalize the two previous methods for arbitrary $\bfr$. Our efforts in doing so were, however, unsuccessful. Therefore, we devise a new strategy for a solution that is general from the outset. This strategy actually does not solve the problem at hand directly. Instead, we perform a few generalizations that give us a new, more easily solvable problem. And at the same time we have to make sure that a solution to this generalized problem can always be reduced back to a solution of the original one (see Fig. \ref{fig-strategy}).

\lfig{The different extensions (blue) and reductions (green) in our strategy.}
{strategy}{10cm}{fig-strategy}{-0.3cm}

Once we formulate the new completion equations, we use indefinite theta series (cf \S \ref{sec-mod-theta}) to solve it, in a similar way to the solution of the same kind of modular anomaly equation for the generating functions of VW invariants constructed in \cite{Alexandrov:2020bwg,Alexandrov:2020dyy}.

The two generalizations that we apply to the problem are the {\it refinement} and the {\it lattice extension}. The former not only simplifies the equations \cite{Alexandrov:2019rth}, but also introduces a regularization for an otherwise divergent theta series, as we will see later. We will start in subsection \ref{subsec-DTall-refinement} by adding the refinement. Then in subsection \ref{subsec-DTall-motivbbm} we will motivate the lattice extension and then perform it in subsection \ref{subsec-DTall-bbm}. Finally, in subsection \ref{subsec-DTall-complchgirf} we will present the extended completion equation. We will use the boldface script $\bfx$ to denote $n$-dimensional objects and the blackboard script $\xbbm$ to denote objects on the extended $d$-dimensional lattice.

\subsection{Refinement}
\label{subsec-DTall-refinement}

A refinement has its physical origin 
in a non-trivial $\Omega$-background \cite{Nekrasov:2002qd,Nekrasov:2010ka}. It introduces a complex parameter $y=e^{2\pi\I z}$
which can be thought of as a fugacity conjugate to the angular momentum $J_3$ 
in uncompactified dimensions. At the same time, the BPS indices undergo refinement and are then given by \eqref{refined-BPS}. Their generating functions can be defined as was done in section \ref{sec-DTall-def}.
Crucially, the refinement preserves the modular properties of the generating series 
of BPS indices \cite{Alexandrov:2019rth}. More precisely, after refinement they become {\it mock Jacobi forms} 
for which the role of the elliptic argument is played by the refinement parameter $z$
and the formula for their modular completions takes exactly the same form as in \eqref{exp-whh}, but with the coefficients 
given now by\footnote{We give the coefficients {\it after} performing the same redefinition as in \eqref{redef-th},
	so that the formula to compare with is \eqref{redefrmRi} rather than \eqref{expRn},
	but we omit the tilde on $\rmRirf{\bfr}_{\mu,\bfmu}$ to avoid cluttering.} 
\be
\label{def-rmRirf}
\rmRirf{\bfr}_{\mu,\bfmu}(\tau,\btau,z)
=
\sum_{\sum_{i=1}^{n} q_i = \mu
	\atop q_i \in \kappa \Nr_i \IZ + \mu_i } 
\Sym \Bigl\{ 
\scRrf_n(\bfhgam;\tau_2,\beta) \, y^{\sum_{i<j} \gamma_{ij}}
\Bigr\} \,e^{\pi\I \tau Q_n(\bfhgam)},
\ee
where we set $z=\alpha-\tau\beta$ with $\alpha,\beta\in \IR$. The main difference here, besides the appearance of a power of $y$, lies in the form of the coefficients $\scRrf_n$ which are much simpler than \eqref{expRn}.
Indeed, while the coefficients $\scR_n$
involve a sum over two types of trees weighted by generalized error functions and their derivatives,
for $\scRrf_n$ one needs only one type of trees and no derivatives. We show how $\scRrf_n$ are constructed at the end of this subsection.

The main property that we need is that in the unrefined limit 
$\rmRirf{\bfr}_{\mu,\bfmu}$ develop a zero of order $n-1$ with a coefficient given by $\trmRi{\bfr}_{\mu,\bfmu}$:
\be
\trmRi{\bfr}_{\mu,\bfmu}(\tau,\btau)=
\lim_{y\xrightarrow{}1} (y-y^{-1})^{1-n} \rmRirf{\bfr}_{\mu,\bfmu}(\tau,\btau,z),
\label{unreflim-Rn}
\ee   
where the limit of $y$ should be taken while setting $\by=1$. 
Therefore, if we define {\it refined anomalous coefficients} as solutions of the following 
modular anomaly equation
\be
\begin{split}
	\whgirf{\bfr}_{\mu,\bfmu} 
	=&\,
	\Sym\Bigg\{
	\sum_{m=1}^n \sum_{\sum_{k=1}^m n_k=n} \sum_{\bfnu}
	\rmRirf{\bfs}_{\mu,\bfnu}
	\prod_{k=1}^m \girf{\frr_k}_{\nu_k,\frm_k} \Bigg\},
\end{split}
\label{complgirf}
\ee
where $\whgirf{\bfr}_{\mu,\bfmu}$ is required to be a VV Jacobi-like form of weight $\hf(n-1)$,
index\footnote{The weight 
	is obtained from the relation \eqref{lim-ancoef} by taking into account that the $y$-dependent factor 
	in the limit $y\to 1$ is proportional to $z^{1-n}$ and thus increases the weight by $n-1$.  
	The index instead follows from the index of the generating series of refined BPS indices
	which was established in \cite{Alexandrov:2019rth}.} 
\be 
m_{\bfr}=-\frac{\kappa}{6}\,\biggl(\Nr^3-\sum_{i=1}^n\Nr_i^3\biggr),
\label{index-mr}
\ee
and the same multiplier system as $\gi{\bfr}_{\mu,\bfmu}$ (see \eqref{mult-gr}), then 
a solution of \eqref{compl-gi} is obtained from these refined anomalous coefficients as 
\be
\gi{\bfr}_{\mu,\bfmu}(\tau)=
\lim_{y\xrightarrow{}1} (y-y^{-1})^{1-n} \girf{\bfr}_{\mu,\bfmu}(\tau,z).
\label{lim-ancoef}
\ee   
This is easily checked by multiplying \eqref{complgirf} by $(y-y^{-1})^{1-n}$ and taking the unrefined limit.
As a result, we have reformulated the problem of solving one anomaly equation in terms of solving another equation and subsequent evaluation of the unrefined limit.
Importantly, the relation \eqref{lim-ancoef} implies that the unrefined limit exists only if the refined solution
has a zero of order $n-1$ at $z=0$. In order to satisfy this condition, we will carefully choose our solution.

Despite the unclear status of refined BPS indices, our construction is consistent. This is because we do {\it not}
use the refined BPS indices or their generating functions, but only the coefficients 
\eqref{def-rmRirf} characterizing the refined completions.
In other words, we use the existence and properties of $\rmRirf{\bfr}_{\mu,\bfmu}$ as a mere trick 
to produce solutions to the anomaly equations \eqref{compl-gi}.\\

Let's see how the refinement makes the completion equations simpler. The main difference lies in the kernels $\scRrf_n$ which are now defined through \cite{Alexandrov:2019rth} the sum over \sch trees as in \eqref{expRn}, 
\be
\scRrf_n\(\bfhgam;\tau_2,\beta\) = \frac{1}{2^{n-1}}\sum_{T\in\IT_n^{\rm S}}(-1)^{n_T-1} 
\Eprf_{v_0} \prod_{v\in V_T \backslash \{v_0\}}\Efrf_v,
\label{refsolRn}
\ee
but now with the weights assigned to vertices determined by new functions $\Er_n(\bfhgam;\tau_2,\beta)$.
Although they depend on an additional parameter $\beta$, they are actually much simpler than their unrefined 
analogues $\Ev_n$ because in their definition there is no sum over trees.
Namely, they are given by
\be
\Er_n(\bfhgam;\tau_2,\beta)= \Phi^E_{n-1}\(\{ \bfv_{\ell}\};\sqrt{2\tau_2}\,(\bfq+\beta\bftet )\),
\label{Erefsim}
\ee
where $\Phi^E$ are defined in \eqref{generrPhiME} and
\be
\bfv_\ell= \sum_{i=1}^\ell\sum_{j=\ell+1}^n\bfv_{ij},
\qquad
\bftet = \sum_{i<j} \bfv_{ij}.
\label{def-bfvk}
\ee
As in the unrefined case, $\Eprf_n=\Er_n-\Efrf_n$, while $\Efrf_n$
is the large $\tau_2$ limit of $\Er_n$. However, before taking the limit, 
one should first set $\beta=0$, i.e. 
\be 
\Efrf_n(\bfhgam)\equiv  \lim_{\tau_2\to\infty}\Er_n(\{\gama_i\};\tau_2,0)
= S_{\cT_{\rm lin}}(\bfhgam),
\label{Efref}
\ee
where $S_{\cT_{\rm lin}}$ is defined in \eqref{def-all-modanom} and $\cT_{\rm lin}=
\bullet\!\mbox{---}\!\bullet\!\mbox{--}\cdots \mbox{--}\!\bullet\!\mbox{---}\!\bullet\,$
is the simplest linear tree.
Note that for the linear tree $e_{\cT_{\rm lin}}=\delta^{(2)}_{n-1}/n$ where $n$ is the number of vertices.

Before going further, we give, as an example, the simple expression
\be 
\scRrf_2(\gama_1,\gama_2)=
\hf\[
E_1\(\frac{\sqrt{2\tau_2}
	\( \gamma_{12} + \kappa \Nr \Nr_1 \Nr_2\beta \)}
{\sqrt{\kappa\Nr \Nr_1 \Nr_2}}\)
-\sgn (\gamma_{12}) \],
\label{expr-R2}
\ee
where $E_1(x)$ is defined in \eqref{generr-E} and coincides with the usual error function. We will use this expression in order to study the completion equation for anomalous coefficients with $n=2$ charges.

\subsection{Motivating the lattice extension}
\label{subsec-DTall-motivbbm}

Now let's motivate the second step in our construction: the {\it lattice extension}.
Let's take a look at the completion equation for the anomalous coefficients and argue for a solution in terms of indefinite theta series. We will pay special attention to the lattice on which the coefficients of the completion equation are defined, as it determines the lattice on which we ought to construct our solution. 

The term $\rmRirf{\bfr}_{\mu,\bfmu}$ in \eqref{complgirf}, up to some shifts in the arguments of its kernel, is given by a theta series \eqref{gentheta} defined on the $(n-1)$-dimensional lattice 
\be
\bfLami{\bfr}=\left\{\bfk\in \IZ^n \ :\ \sum_{i=1}^n \Nr_i k_i=0 \right\},
\label{def-bfLam}
\ee
with quadratic form $-Q_n(\bfhgam)$ given by
\be
\bfk^2=\kappa\sum_{i=1}^n \Nr_i k_i^2.
\label{qform-bfLam}
\ee
The agreement between this expression and \eqref{def-Qn} is ensured through
\be
\label{ki-to-qi}
q_i=\kappa\Nr_i\(k_i+\frac{\mu}{\kappa\Nr}+\hf\),
\ee
with $\bfk$ belonging to the dual lattice of $\bfLami{\bfr}$.  
Moreover, the characteristic vector is given by $\bfp=0$ and the elliptic vector by $z\bftet$ with $\bftet$ defined in \eqref{def-bfvk}.
The residue class is determined in terms of $(\mu,\bfmu)$. 
However, its kernel does not solve \vig equation. So, we look for a solution written, up to a holomorphic modular ambiguity, as a theta series on the same lattice such that its kernel is holomorphic and combines with that of $\rmRirf{\bfr}$ to give a new kernel that does satisfy \eqref{Vigdif}. In fact, assuming this choice for all lower rank $\gi{\frr_k}$, all the completing terms\footnote{The completing terms designate all terms in the right hand side of the expression for the completion equation except the function to be completed.} can be written on the same lattice. Looking closely at the sum of their kernels, we can see that it is written as a product of differences of generalized error functions and sign functions. As explained below \eqref{Phi-ex}, in order to get a completion we need to have either terms given by generalized error functions or by sign functions of null vectors. Thus, the kernel of the indefinite theta series in $\girf{\bfr}_{\mu,\bfmu}$ should contain sign functions with null vectors. Unfortunately, our lattice \eqref{def-bfLam} is of (positive) definite signature and does not contain null vectors. 
This problem can be addressed by performing a {\it lattice extension} and this is what we do in the next part.

\subsection{The extended lattice}
\label{subsec-DTall-bbm}
A lattice extension is a standard trick in the theory
of mock modular forms \cite{Zwegers-thesis}. The idea is that the original problem defined on a lattice $\Lambda$ 
is reformulated on a larger lattice $\bbLambda=\Lambda\oplus \Lambda_{\rm ad}$ 
that admits a solution in terms of indefinite theta series and, because $\bbLambda$ is a direct sum,
such solution is expected to be reducible to a solution on $\Lambda$. 
However, if the discriminant group $D_{\rm ad}=\Lambda_{\rm ad}^*/\Lambda_{\rm ad}$ is non-trivial,
the reduction to the original lattice is possible only if the solution on the extended lattice satisfies 
certain identities ensuring that components of the solution labeled by different elements of 
$D_{\rm ad}$ reduce to the same functions. 
In general, there is no guarantee that this is the case. 
Therefore, we should require triviality of $D_{\rm ad}$,
which in turn requires that, if $\Lambda_{\rm ad}=\IZ^{d_{\rm ad}}$, then the corresponding 
quadratic form is given by (minus) the identity matrix. 

In our case $\Lambda=\bfLami{\bfr}$
with quadratic form $-Q_n$ and $\Lambda_{\rm ad}$ should be chosen in a way to ensure the existence 
of a null vector on $\bbLambda$. Moreover, the null vectors must belong to the lattice, so we can't simply add one direction with quadratic form $-x^2$ since we need to cancel the norm of vectors $\bfv_e\in\bfLam$ which is, in general, not a perfect square. In fact there are various other conditions that need to be satisfied by the lattice extension. We will simply give the appropriate $\bbLambda$ and then explain its different features. 

We start by introducing:
\begin{itemize}
	\item 
	integer valued function $d_\Nr$ of the magnetic charge (and intersection number $\kappa$)
	such that $d_\Nr\geq 2$;
	\item 
	$d_\Nr$-dimensional vectors $\frt^{(\Nr)}$ such that their components 
	are all non-vanishing integers and sum to zero, $\sum_{\alpha=1}^{d_\Nr}\frt^{(\Nr)}_\alpha=0$.
\end{itemize}
Note that if $d_\Nr$ could be equal to 1, it would be impossible to satisfy the last condition on $\frt^{(\Nr)}$.
The main features of the construction below do not depend on a specific form of $\frt^{(\Nr)}$, therefore we will not specify it. Although the exact expression for $d_\Nr$ is also not crucial, it is nicer to follow the discussion if we give it
\be 
d_\Nr=\left\{\begin{array}{ll} 
	4\Nr, & \quad\kappa=1,
	\\ 
	\kappa\Nr, & \quad\kappa>1.
\end{array}\right.
\label{defdr}
\ee 
The extended lattice is then given by
\be
\bbLami{\bfr}=\bfLami{\bfr}\oplus \IZ^{d_\bfr},
\label{extlatNr}
\ee
where $d_\bfr=\sum_{\Nr_i\in\bfr}\di{i}$
and it carries the bilinear form
\be
\xbbm\ast\ybbm=\sum_{i=1}^n \(\kappa \Nr_i x_i y_i-\sum_{\alpha=1}^{d_{\Nr_i}} x_{i,\alpha} y_{i,\alpha}\),
\label{bb-r}
\ee
where $\xbbm=\{x_i,x_{i,\alpha_i}\}$ with $i=1,\dots ,n$ and $\alpha_i=1,\dots,d_{\Nr_i}$.

Let's describe this choice. For each charge $\Nr_i$ the original lattice has a direction associated to this charge, with a factor $\kappa\Nr_i$ in the quadratic form. In the lattice extension we associate to $\Nr_i$ the lattice $\IZ^{\kappa \Nr_i}$ with $-1$ in the quadratic form. Hence, we associate to one direction with eigenvalue $\kappa\Nr_i$ in its quadratic form, $\kappa\Nr_i$ directions with eigenvalues\footnote{This guarantees that $D_{\rm ad}$ is trivial, as required in our strategy (cf section \ref{subsec-DTall-bbm})} $-1$. So in a sense the lattice used for the extension is a "diagonal expansion" of the initial lattice. 

The inclusion of a factor of $4$ when $\kappa=1$ is to avoid having $d_\Nr=1$ for some charges and thus contradicting the first condition shown in the previous subsection and it will induce some extra factors of $2$ in the null vectors. However, in this dissertation we will focus on the simpler, more general case of $\kappa>1$, and we refer to \cite{Alexandrov:2024wla} for a treatment that encompasses both cases.

The choice \eqref{defdr} fulfills two goals. The first is that it preserves the recursive structure of the completion equation and it does so because $d_\Nr$ is additive. The second is that it guarantees the existence of null vectors $\wbbm_{ij}$, which will be defined shortly.
More generally, we will use two sets of vectors $\vbbm_{ij},\wbbm_{ij}$ belonging to the extended lattice $\bbLami{\bfr}$, both of which
are extensions of the vectors $\bfv_{ij}\in \bfLami{\bfr}$ 
defined as in \eqref{defvij}
\be 
(\bfv_{ij})_k=\delta_{ki}\Nr_j-\delta_{kj}\Nr_i.
\label{def-bfvij}
\ee 
They are given by 
\be 
\begin{split}
	(\vbbm_{ij})_k=&\,(\bfv_{ij})_k,
	\qquad\ \
	(\vbbm_{ij})_{k,\alpha}= 0,
	\\
	(\wbbm_{ij})_k=&\, (\bfv_{ij})_k,
	\qquad
	(\wbbm_{ij})_{k,\alpha}= (\bfv_{ij})_k.
\end{split}
\label{def-bfcvij}
\ee 
Below we will see how the existence of the null vectors $\wbbm_{ij}$ gives the possibility to construct 
holomorphic theta series associated with the extended lattice and satisfying 
the anomaly equation \eqref{complchgirf}.

\subsection{Extended and refined completion equation}
\label{subsec-DTall-complchgirf}
In this subsection we will first give the extended, refined completion equation. Then we will give the modified modular properties of the corresponding anomalous coefficients. Finally, we will show how a solution of this equation reduces to a solution of \eqref{complgirf}.

The lattice extension, at the level of the completion equation, is ensured by taking 
\be
\label{chgirf-init}
\chgirf{\Nr}_{\mu,\mu'}(\tau,z,z')=\delta_{\mu,\mu'}\prod_{\alpha=1}^{d_\Nr}\theta_1(\tau,\bft^{(r)}_\alpha z'),
\ee
in the case of a single charge and, for a general number of charges, the replacement
\be
\label{chgirf-replacement}
\girf{\bfr}_{\mu,\bfmu}(\tau,z) \to \chgirf{\bfr}_{\mu,\bfmu}(\tau,z,\bfz),
\ee
where the extended anomalous coefficients now depend on a vector of extra elliptic parameters $\bfz=(z_1,\dots,z_n)$.

The equation then becomes 
\be
\whchgirf{\bfr}_{\mu,\bfmu}(\tau,z,\bfz) 
=
\Sym \Bigg\{
\sum_{m=1}^n \sum_{\sum_{k=1}^m n_k=n} \sum_{\bfnu}
\rmRirf{\bfs}_{\mu,\bfnu}(\tau,z)
\prod_{k=1}^m \chgirf{\frr_k}_{\nu_k,\frm_k} (\tau,z,\frz_k)
\Bigg\},
\label{complchgirf}
\ee
where $\frz_k=(z_{j_k+1},\dots,z_{j_{k+1}})$.
Formally it looks the same as \eqref{complgirf}. However, the lattice is effectively extended due to \eqref{chgirf-init}. This can be most easily seen from the term with $m=n$. For terms with $m<n$ and thus involving anomalous coefficients with $1<n_i\leq n$, the extension is ensured by the fact that the solution for $\girf{\frr_k}$ is written as a theta series on the corresponding extended lattice.

After the lattice extension, the new completion terms are still of the form \eqref{gentheta} but now written on the lattice $\bbLami{\bfr}$ and with ingredients given by 
\be
\begin{split}
	\bbmu=&\,(\bfhmu;0,\dots, 0),
	\hspace{3.3cm}
	\pbbm=\, \(\boldsymbol{0};-1,\dots, -1\)\, ,
	\\
	\zbbm=&\, (\bftet z;-\frt^{(\Nr_1)}z_1;\dots;-\frt^{(\Nr_n)}z_n),
	\qquad
	\bftet=\sum_{i<j}\bfv_{ij},
\end{split}
\label{thetadata}
\ee
where $\Delta \mu=\mu-\sum_{i=1}^{n}\mu_i$ and $\rho_i$ verifying $\sum_{i=1}^{n}\rdcr_i\rho_i=1$. Note that one has the relation 
\be 
\bftet^2=-2m_\bfr,
\qquad
\bftet\cdot \bfk=\sum_{i<j}\gamma_{ij}, 
\label{rel-scpr-th}
\ee
where\footnote{Note that $\bftet\cdot\bfk=\bftet\cdot\bfq$ where $\bfq=\bigl(\frac{q_1}{\kappa \Nr_1}\, ,\dots ,\, \frac{q_n}{\kappa \Nr_n}\bigr)$ since the difference $\bfk-\bfq$ is orthogonal to $\bftet$.} $\bfk=\bigl(k_1\, ,\dots ,\, k_n\bigr)$
is related to the physical charges through \eqref{ki-to-qi}. They ensure that the factor $e^{2\pi\I \zbbm\ast\kbbm}$ 
in the theta series reproduces the $y$-dependent factor in \eqref{def-rmRirf}
and gives rise to the index \eqref{index-mr}. 
Let us also mention here another useful relation. The argument of the kernel in the theta series \eqref{gentheta} 
is $\xbbm=\sqrt{2\tau_2}(\kbbm+\bbbeta)$ where $\kbbm$ runs over the lattice.
Therefore, it is useful to introduce $\xbbm_\bbbeta=\xbbm-\sqrt{2\tau_2}\,\bbbeta$ which in our case 
takes the form $\xbbm_\bbbeta=\sqrt{2\tau_2}(\bfk; k_{1,1},\dots, k_{n,\di{n}})$.
With respect to the bilinear form \eqref{bb-r}, one finds that
\be 
\xbbm_\bbbeta\ast\vbbm_{ij}=\sqrt{2\tau_2}\gamma_{ij} .
\label{rel-scpr}
\ee

The additional factor in \eqref{chgirf-init} leads to a change in the modular properties of 
$\chgirf{\bfr}_{\mu,\bfmu}$ compared to $\girf{\bfr}_{\mu,\bfmu}$: they
should be higher depth {\it multi-variable} Jacobi-like\footnote{More precisely, it is a Jacobi-like form with respect to $z$ and Jacobi with respect to the other parameters $z_i$.} forms 
of weight, index (which is now a matrix since there are several elliptic arguments) 
and multiplier system 
\be
\begin{split}
	w\(\chgirf{\bfr}\)=&\, (n-1+d_\bfr)/2,
	\\
	m\(\chgirf{\bfr}\) =&\, 
	\hf\, \diag\biggl(-\frac{\kappa}{3}\biggl(\Nr^3 - \sum_{i=1}^n \Nr_i^3\biggr),(\frt^{(\Nr_1)})^2,\dots,(\frt^{(\Nr_n)})^2\biggr),
	\\
	\Mi{\chgirf{\bfr}}_{\mu, \bfmu, \nu, \bfnu}(T) =&\, 
	e^{\pi \I \( \mu - \sum_i \mu_i\)+ \pi \I \(\frac{\mu^2}{\kappa \Nr}-\sum_i \frac{\mu_i^2}{\kappa \Nr_i}\)
		+\frac{\pi\I}{4}\, d_\bfr}\,
	\delta_{\mu\nu} \delta_{\bfmu\bfnu},
	\\
	\Mi{\chgirf{\bfr}}_{\mu, \bfmu, \nu, \bfnu}(S) =&\, 
	\frac{e^{\frac{\pi\I}{4}(n-1-3d_\bfr)}}{\sqrt{\kappa^{n+1} \Nr\prod_i \Nr_i }} \,
	e^{-2\pi \I\(\frac{\mu \nu}{\kappa \Nr}-\sum_i\frac{\mu_i \nu_i}{\kappa \Nr_i} \) },
\end{split}
\label{multsys-chgirf}
\ee
where $(\frt^{(\Nr)})^2=\sum_{\alpha=1}^{d_\Nr}(\frt^{(\Nr)}_\alpha)^2$ and $d_\bfr=\sum_{i=1}^n d_{\Nr_i}$. 
The multiplier system \eqref{multsys-chgirf} can be easily obtained by combining 
\eqref{mult-gr} with the modular properties of the  Jacobi theta function given in \eqref{multi-theta-N}.

The important property of the system of equations \eqref{complchgirf} is that any solution
that is regular at $\bfz=0$ reduces to a solution of \eqref{complgirf}
with the required modular properties. The relation between the two solutions is given by
\be
\girf{\bfr}_{\mu,\bfmu}(\tau,z)= 
\frac{1}{\(-2\pi\eta^3(\tau)\)^{d_\bfr}} 
\(\prod_{i=1}^{n}\frac{\cD^{(d_{\Nr_i})}_{\hf(\frt^{(\Nr_i)})^2}(z_i)}
{d_{\Nr_i}!\prod_{\alpha=1}^{d_{\Nr_i}}\frt^{(\Nr_i)}_\alpha}\) 
\chgirf{\bfr}_{\mu,\bfmu}|_{\bfz=0},
\label{recover-gref}
\ee
where $d_\bfr=\sum_{i=1}^n d_{\Nr_i}$ and 
the modular differential operators $\cD_m^{(n)}$ are defined in \eqref{defcDmn}.
Indeed, due to Proposition \ref{prop-Jacobi-n} and the fact that $\theta_1(\tau,z)$ and $\eta^3(\tau)$
have identical multiplier systems, 
the product of the differential operators in \eqref{recover-gref} acting on the completion $\whchgirf{\bfr}_{\mu,\bfmu}$ 
produces a Jacobi-like form with weight and multiplier system as in \eqref{mult-gr} and index \eqref{index-mr}.
Then to see that $\girf{\bfr}_{\mu,\bfmu}$ defined by \eqref{recover-gref} 
satisfies the anomaly equation \eqref{complgirf}, it is sufficient 
to apply this product of the differential operators
to \eqref{complchgirf} and use the fact that each differential operator acts 
only on one of the functions $\chgirf{\frr_k}$ on the r.h.s. of this equation.\footnote{It was 
	to ensure this factorization property that we introduced the additional 
	refinement parameters $z_i$ for each magnetic charge.} 
Finally, the standard normalization for the case $n=1$ is reproduced due to the property 
\be
\frac{\cD^{(d_{\Nr})}_{\hf(\frt^{(\Nr)})^2}}
{d_{\Nr}!\prod_{\alpha=1}^{d_{\Nr}}\frt^{(\Nr)}_\alpha}
\prod_{\alpha=1}^{d_\Nr} \theta_1(\tau,\frt^{(\Nr)}_\alpha z)|_{z=0}=\(\p_z \theta_1(\tau,0)\)^{d_\Nr}
=\(-2\pi\eta^3(\tau)\)^{d_\Nr}.
\ee

\section{Solving the $n=2$ case}
\label{sec-DTall-sol2}

Here we will explain how all the steps work in the case of $n=2$ charges. Conceptually, this case only holds minor differences with the general case and we will indicate where they should arise. So this can be thought of as a toy model study of the full problem. 

There are a few steps involved in this section. First, we find a solution to the completion equation of refined anomalous coefficients on the extended lattice in terms of an indefinite theta series and argue for its convergence and modularity. Then, we perform a factorization of the lattice, that induces a factorization of the theta series. This proves instrumental in finding that our initial solution has a pole at $z=0$ and that we need to choose a holomorphic modular ambiguity that makes it regular. Subsequently, we apply the modular derivatives that map the solution to a solution of the equation on the non-extended lattice. Finally, we take the unrefined limit.

\subsection{Generic solution}
\label{subsec-DTall-compl2}

Before writing the extended completion equation for $n=2$, let's look at a simplification that happens in this case. 
In fact, we can easily solve the constraint in the double sum defining \eqref{def-rmRirf} and work with a single sum.
Namely, 
\be
q_1=\rdcr_2\ell+\frac{\Delta\mu}{\kappa\Nr_0}\rho_1+\mu_1, \qquad q_2=-\rdcr_1\ell+\frac{\Delta\mu}{\kappa\Nr_0}\rho_2+\mu_2
\ee
where $\ell\in\IZ$, solve the condition on the sum. Hence, we have an isomorphism $\bfLami{\Nr_1,\Nr_2}\sim \IZ$ given by 
\be
\label{bfLami2-isom}
k\in\IZ \to \bfk=(\rdcr_2,-\rdcr_1)k \in \bfLami{\Nr_1,\Nr_2},
\ee
with the quadratic form on $\IZ$ given by
\be
\label{Q2-solved}
2\kappa_{12}k^2.
\ee

Now we can write the extended completion equation, using the above simplification and notations introduced in \eqref{defr0} and \eqref{def-mu12}:
$\Nr_0,\,\rdcr_i,\, \rdcr,\,\kappa_{12}$ and $\mu_{12}$.
We have
\bea
\whchgirf{\Nr_1,\Nr_2}_{\mu, \mu_1, \mu_2} &=& 
\chgirf{\Nr_1,\Nr_2}_{\mu, \mu_1, \mu_2}
+\prod_{i=1}^2\(\prod_{\alpha=1}^{d_{\Nr_i}}\theta_1(\tau,\frt^{(d_{\Nr_i})}_\alpha z_i)\)
\rmRirf{\Nr_1,\Nr_2}_{\mu, \mu_1, \mu_2}
\nn\\
&=&
\chgirf{\Nr_1,\Nr_2}_{\mu, \mu_1, \mu_2}
+ \frac14\, \delta^{(\kappa\Nr_0)}_{\Delta\mu}
\sum_{\sigma=\pm 1}\sum_{k\in \IZ+\frac{\mu_{12}}{2\kappa_{12}}} 
\(\prod_{i=1}^2\prod_{\alpha=1}^{d_{\Nr_i}}\sum_{k_{i,\alpha}\in\IZ+\hf} \)
\biggl[E_1\bigl(2\sqrt{\kappa_{12}\tau_2}(\sigma k+\Nr_0 \beta)\bigr) 
\nn\\
&&\qquad
- \sgn (\sigma k) \biggr] (-1)^{\pbbm\ast\kbbm}
\q^{-\hf\kbbm^2} e^{2\pi \I \zbbm_\sigma \ast \kbbm},
\label{ext-cpmlg2}
\eea
with $\kbbm=(k;k_{1,1},\dots,k_{1,d_{\Nr_1}};k_{2,1},\dots,k_{2,d_{\Nr_2}})$ and the data \eqref{thetadata} specified to $n=2$. Namely,
$\pbbm=(0;-1,\dots,-1)$ and 
$\zbbm_\sigma=(\sigma\Nr_0 z ;-\frt^{(\Nr_1)}z_1;-\frt^{(\Nr_2)}z_2)$, 
which are contracted using the bilinear form
\be
\kbbm\ast\kbbm'=2\kappa_{12} kk'-\sum_{i=1}^2\sum_{\alpha=1}^{d_{\Nr_i}} k_{i,\alpha}k'_{i,\alpha}.
\label{qfbfk-zalpha}
\ee
This bilinear form is the image of \eqref{bb-r} upon the isomorphism 
$\bbLami{\Nr_1,\Nr_2}\simeq \IZ\oplus \IZ^{d_{\Nr_1}}\oplus \IZ^{d_{\Nr_2}}$
implied by \eqref{bfLami2-isom}. Under the same isomorphism, the vectors \eqref{def-bfcvij} become
\be
\vbbm_{12}=\(1;\vnc{0}{d_{\Nr_1}};\vnc{0}{d_{\Nr_2}}\),
\qquad
\wbbm_{12}=\bigl(\Nr_0;\vnc{\Nr_2}{d_{\Nr_1}};-\vnc{\Nr_1}{d_{\Nr_2}}\bigr),
\label{def-v0-2}
\ee 
where $\vnc{x}{n}$ denotes the $n$-dimensional vector with all components equal to $x$.
Using them 
the argument of the error function can be rewritten as $2\sqrt{\kappa_{12}\tau_2}(k+\sigma\Nr_0 \beta)
=\sqrt{2\tau_2}(\kbbm+\bbbeta_\sigma)\ast\vbbm_{12}/\sqrt{\vbbm_{12}^2}$ where 
we have done the usual decomposition $\zbbm_\sigma=\bbalpha_\sigma-\tau\bbbeta_\sigma$.
As a result, the second term in \eqref{ext-cpmlg2}, up to a $\sigma$-dependent factor 
and a $\beta$-dependent shift in the argument of the sign function, 
acquires the form of the theta series \eqref{gentheta}
associated with the lattice $\bbLami{\Nr_1,\Nr_2}$, residue class 
$\bbmu=\frac{\mu_{12}}{2\Nr_0\kappa_{12}}\vbbm_{12}$
and kernel 
\be
\PhiRi{\Nr_1,\Nr_2}(\xbbm)=E_1\(\frac{\xbbm\ast\vbbm_{12}}{||\vbbm_{12}||}\)-\sgn(\xbbm_\bbbeta\ast\vbbm_{12}),
\label{kerPhi1?}
\ee
where $||\vbbm||=\sqrt{\vbbm^2}$ is the norm of a vector and $\xbbm_\bbbeta=\xbbm-\sqrt{2\tau_2}\,\bbbeta_\sigma$.
More precisely, we get
\be
\whchgirf{\Nr_1,\Nr_2}_{\mu, \mu_1, \mu_2} = 
\chgirf{\Nr_1,\Nr_2}_{\mu, \mu_1, \mu_2}
+ \frac14\, \delta^{(\kappa\Nr_0)}_{\Delta\mu}
\sum_{\sigma=\pm 1}\sigma \,\vth_{\bbmu}(\tau, \zbbm_\sigma; \bbLami{\Nr_1,\Nr_2}, \PhiRi{\Nr_1,\Nr_2},\pbbm).
\label{ext-cpmlg2th}
\ee

Using the null vectors in our lattice, we can write a generic solution as
\be
\chgirf{\Nr_1,\Nr_2}_{\mu, \mu_1, \mu_2}=\chphi^{(\Nr_1,\Nr_2)}_{\mu, \mu_1, \mu_2}
+ \frac14\, \delta^{(\kappa\Nr_0)}_{\Delta\mu}
\sum_{\sigma=\pm 1}\sigma \,\vth_{\bbmu}(\tau, \zbbm_\sigma;\bbLami{\Nr_1,\Nr_2}, \Phii{\Nr_1,\Nr_2}, \pbbm),
\label{sol-cpmlg2th}
\ee
where we use the null vector $\wbbm_{12}$ to define the kernel
\be
\Phii{\Nr_1,\Nr_2}(\xbbm)=\sgn(\xbbm_\bbbeta\ast\vbbm_{12})-\sgn(\xbbm\ast\wbbm_{12}).
\label{kerPhi1}
\ee
as in \eqref{Phigi-ex}.
Here
$\chphi^{(\Nr_1,\Nr_2)}_{\mu, \mu_1, \mu_2}(\tau, z,\bfz)$ is a holomorphic Jacobi-like form
with the same modular properties as $\whchgirf{\Nr_1,\Nr_2}_{\mu, \mu_1, \mu_2}$.
It represents an inherent ambiguity of solution of the anomaly equation and will be fixed later
by requiring the existence of an unrefined limit.
The convergence of the theta series is ensured by Theorem \ref{th-conv} and the fact that 
$\vbbm_{12}\ast\wbbm_{12}>0$, and using \eqref{mult-genth}
it is straightforward to check that the weight, index and multiplier system agree with \eqref{multsys-chgirf}.

\subsection{Holomorphic modular ambiguity}
\label{subsubsec-modamb2}

In contrast to the original anomalous coefficients \eqref{compl-gi}, not every solution for $\chgirf{\bfr}_{\mu,\bfmu}$
suits our purposes. The restriction to be imposed is that it must have a well-defined unrefined limit.
More precisely, $\chgirf{\Nr_1,\Nr_2}_{\mu, \mu_1, \mu_2}$ must be regular at $z_i=0$ 
and have a first order zero at $z=0$. It is this condition that should be used to fix the 
holomorphic modular ambiguity $\chphi^{(\Nr_1,\Nr_2)}_{\mu, \mu_1, \mu_2}$.
As we will see below, the second term in \eqref{sol-cpmlg2th} is finite at small $z_i$,
but has a pole at small $z$, so that $\chphi^{(\Nr_1,\Nr_2)}_{\mu, \mu_1, \mu_2}$ has to be non-trivial.
To extract the pole explicitly and then choose the holomorphic ambiguity, we proceed in several steps.

First, we perform a factorization of the lattice which induces a factorization of the theta series. This then allows us to easily isolate the contributions giving rise to the pole in $z$ from the contributions having the correct behavior at $z=0$. We follow up by choosing a precise form for $\chphi^{(\Nr_1,\Nr_2)}_{\mu, \mu_1, \mu_2}$, using the Jacobi-like form \eqref{E2-Jacobilike}. Finally, we show how the unrefined limit can be obtained.

\paragraph{Factorization and split}\

\noindent
In this chapter, we assume for simplicity that $\gcd(\Nr_1,\Nr_2)=1$. Then, let us consider the sum of the two orthogonal sublattices of $\bbLami{\Nr_1,\Nr_2}$
\be
\bbLami{\Nr_1,\Nr_2}_{||}\oplus
\bbLami{\Nr_1,\Nr_2}_{\perp}
\label{parr-plus-perp}
\ee 
where the first is the span of the vectors $\vbbm_{12},\wbbm_{12}$ and the second is its orthogonal complement. In fact, the sum of these sublattices has the same dimension as $\bbLami{\Nr_1,\Nr_2}$ but is not equal to it. If this seems surprising, one can gain intuition by looking at a simple example. The lattice $\IZ^2=(k_1,k_2)$ with the canonical bilinear form, is not equal to the sum of the orthogonal sublattices $(1,1)\,\IZ $ and $(1,-1)\,\IZ$. Indeed, the element $(1,0)$ is not part of the sum of sublattices but is part of the original one. This element is then called a {\it glue vector} and it allows to write a modified sum that gives exactly $\IZ^2$. 

For our purposes let $\bbLami{\Nr_1,\Nr_2}_{||}$ and $\bbLami{\Nr_1,\Nr_2}_{\perp}$ be the lattices we defined above and let $\glueg_\Asf$ the set of their glue vectors, labeled by $\Asf=0,\dots,N_g-1$. Then we have the decomposition 
\be
\bbLami{\Nr_1,\Nr_2}=\bigcup\limits_{\Asf=0}^{N_g-1}
\(\bbLami{\Nr_1,\Nr_2}_{||}+\glueg^{||}_{\Asf}\)\oplus
\(\bbLami{\Nr_1,\Nr_2}_{\perp}+\glueg^{\perp}_{\Asf}\),
\label{latDecomp2}
\ee
where $\glueg^{||}_{\Asf}$ and $\glueg^{\perp}_{\Asf}$ are the projections of $\glueg_\Asf$ on $\bbLami{\Nr_1,\Nr_2}_{||}$ and $\bbLami{\Nr_1,\Nr_2}_{\perp}$ respectively\footnote{$\glueg^{||}_{\Asf}$ and $\glueg^{\perp}_{\Asf}$ belong to the respective dual of each sublattice and not to the sublattice itself.}. Studying the glue vectors, especially later when computing the pole, poses some challenges and we invite the reader interested in them to find more details in the paper.

On another front, the kernel \eqref{kerPhi1} depends only on the orthogonal projection of $\xbbm$ on the sublattice $\bbLami{\Nr_1,\Nr_2}_{||}$. Hence, the decomposition \eqref{latDecomp2} induces a factorization of the theta series \eqref{sol-cpmlg2th} 
\be
\chgirf{\Nr_1,\Nr_2}_{\mu, \mu_1, \mu_2}=\chphi^{(\Nr_1,\Nr_2)}_{\mu, \mu_1, \mu_2}
+\frac14\, \delta^{(\kappa\Nr_0)}_{\Delta\mu}
\sum_{\Asf}\(\sum_{\sigma=\pm 1}\sigma \,
\vthls{\kappa_{12}}_{\mu_{12},\tnu(\Asf)}(\tau,\sigma \Nr_0 z) \)\vthps{\bfr}_\Asf(\tau,\bfz),
\label{extg12-fact}
\ee
where $\tnu(\Asf)$ depends only on the glue vectors index. A similar decomposition happens for any number of charges $\bfr$. The first theta series factor in \eqref{extg12-fact} is given for $\bfr=(\Nr_1,\Nr_2)$ by
\be 
\vthls{\kappa}_{\nu,\tnu}(\tau,z)=
\sum_{\ell\in \IZ + \frac{\nu}{2\kappa}} 
\sum_{\tell\in \IZ+ \frac{\tnu}{2\kappa}} 
\Bigl(\sgn(\ell ) - \sgn (\ell-  \tell+\beta)\Bigr)
\, \q^{\kappa (\tell^2-\ell^2)} y^{2\kappa \ell},
\label{defvth2}
\ee 
while the second is
\be
\vthps{\bfr}_\Asf(\tau,\bfz)=\vth^{(d_\Nr)}_{\nu_0(\Asf)}(\tau) 
\prod_{i=1}^{2} \vthA{d_{\Nr_i}}_{\asf_i}(\tau,z_i;\frt^{(\Nr_i)}), 
\label{factor-perptheta2}
\ee
where
\bea
\label{3thetaZ2}
\vth^{(d)}_{\nu_0}(\tau) &=& \vths{d,1}_{\nu_0}(\tau,0)=
\sum_{\ell_0\in \IZ +\frac{\nu_0}{d}+\hf} 
(-1)^{d\ell_0} \, \q^{\frac{d}{2}\,\ell_0^2} ,
\\
\vthA{N}_\asf(\tau,z;\frt) &=&
\(\prod_{\alpha=1}^{N-1}\sum_{\ell_\alpha\in \IZ +\frac{\alpha\asf}{N}} \)
\q^{\sum\limits_{\alpha=1}^{N-1}\( \ell_{\alpha}^2-\ell_\alpha\ell_{\alpha+1} \)}
y^{\sum\limits_{\alpha=1}^{N-1 } \(\frt_{\alpha+1}-\frt_\alpha \)\ell_{\alpha}} ,
\label{3thetas2}
\eea
$\vths{d,p}_{\mu}(\tau,z)$ is the theta series \eqref{Vignerasth},
and in the last equation we used the convention $\ell_N=0$. Furthermore, the theta series \eqref{3thetas2} can be recognized as the theta series corresponding to the $A_{N-1}$ lattice.

The theta series \eqref{defvth2} is the main object that we will consider. Despite the orthogonal factor \eqref{factor-perptheta2} being equally important for the construction, it is quite easy to deal with. Before going further, a few remarks are in order about the representation \eqref{extg12-fact}. 
First, it was crucial that the kernel \eqref{kerPhi1} depend only on the projection of $\xbbm$ on $\bbLami{\Nr_1,\Nr_2}_{||}$ thus allowing the factorization to go through. One can recognize that the difference of signs in \eqref{defvth2} exactly reproduces the kernel. Moreover, $\vthls{\kappa}_{\nu,\tnu}(\tau,z)$ does not depend on the vector of elliptic parameters $\bfz$ and this ensures a crucial simplification to our problem. In fact if it did, the poles in $z$ that we will see later would also depend on $z_i$ and thus they would have been much more difficult to cancel. This crucial property is due to the fact that the vectors $\frt^{(\Nr_i)}$, associated to each elliptic parameter $z_i$, verify $\sum_{\alpha}\frt^{(\Nr_i)}_{\alpha}=0$ which was the second condition we required in our strategy for the lattice extension. At the same time, \eqref{factor-perptheta2} does not depend on $z$ and further decomposes into the $n+1$ independent theta series \eqref{3thetaZ2} and \eqref{3thetas2}. The fact that the former has no elliptic parameters and the other $n$ depend each on one single $z_i$ ensures that the action of the modular derivatives $\cD_m^{(n)}$ factorizes, which is an additional simplification.

As we said already, the solution develops a pole at $z=0$. In order to study it, we split the theta series \eqref{defvth2} into two parts,
$\vthls{\kappa}_{\nu,\tnu}=
\cvths{\kappa}_{\nu,\tnu}+\tvths{\kappa}_{\nu,\tnu}$, 
where in the first term one sums only over $(\ell,\tell)$ 
satisfying the condition $\ell= \tell$, which can also be written in 
geometric terms as
\be  
\kbbm_{||}\ast \wbbm_{12}=0,
\label{zeromode1}
\ee 
while in the second the sum goes over the rest of the lattice.
Then in $\tvths{\kappa}_{\nu,\tnu}$, for sufficiently small $z$ 
one can drop the shift by $\beta$ in the second sign function
and one obtains
\be  
\sum_{\sigma=\pm 1}\sigma \,\tvthls{\kappa}_{\nu,\tnu}(\tau, \sigma\Nr_0 z)=
\!\!\!\!\sum_{{\ell\in \IZ + \frac{\nu}{2\kappa}\atop 
		\tell\in \IZ+ \frac{\tnu}{2\kappa}}\; : \; \ell\ne  2^\eps \tell }  \!\!\!\!
\Bigl(\sgn(\ell ) - \sgn (\ell- \tell)\Bigr)
\, \q^{\kappa (\tell^2-\ell^2)}\Bigl(y^{2\Nr_0\kappa\ell  }-y^{-2\Nr_0\kappa\ell}\Bigr).
\label{nonzm-yy}
\ee 
This theta series is not only convergent for all $z$, but also 
vanishes at $z=0$. Thus, it has a well-defined unrefined limit and it remains to analyze only 
the function $\cvths{\kappa}_{\nu,\tnu}$ which we call ``zero mode contribution".

\paragraph{Pole evaluation}\

\noindent
The zero mode contribution is characterized by the condition $\ell= \tell$.
Importantly, it also restricts the set of glue vectors by requiring 
$\mu_{12}-\tnu(\Asf) \in 2\kappa_{12}\IZ$. 
We denote the set of the glue vectors satisfying this condition 
by $\cAr_0(\mu_{12})$ without giving them explicitly (cf \cite[Appendix. A.3]{Alexandrov:2024wla}).

Implementing the zero mode condition in \eqref{defvth2}, one finds for $\Asf\in\cAr_0(\nu)$ and $\tnu\equiv\tnu(\Asf)$,
\be
\sum_{\sigma=\pm 1}\sigma \,\cvths{\kappa_{12}}_{\nu,\tnu}(\tau,\sigma\Nr_0 z) 
=-\sum_{\sigma=\pm 1}\sum_{\tell\in \IZ+ \frac{\tnu}{2\kappa_{12}}} 
\Bigl(\sgn (\beta)-\sigma\sgn(\tell) \Bigr) 
\, y^{2 \sigma \Nr_0 \kappa_{12} \tell },
\label{geompr}
\ee
where the power of $\q$ became trivial.
This is just a simple geometric progression.
Assuming that $\beta>0$, so that $\Im z<0$ and $|y|>1$, it evaluates to
\be
-\sum_{\sigma=\pm 1}\(2\,\frac{y^{2\sigma\Nr_0\kappa_{12}\(\frac{\tnu}{2\kappa_{12}}
		-\left\lceil\frac{\tnu}{2\kappa_{12}}\right\rceil+\frac12\,(1-\sigma)\)}}
{1-y^{-2\Nr_0 \kappa_{12}}} -\sigma\delta_{\tnu}^{(2\kappa_{12})}\)
=-2\,\frac{y^{2\Nr_0 \kappa_{12}\lambda_{12} }+y^{-2\Nr_0 \kappa_{12}\lambda_{12} }}
{y^{\Nr_0 \kappa_{12}}-y^{-\Nr_0 \kappa_{12}}}\, ,
\label{zmod-sigmap1}
\ee
where we defined
\be 
\lambda_{12} = \left\lceil \frac{\tnu}{2\kappa_{12}} \right\rceil-\frac{\tnu}{2\kappa_{12}}-\hf\, ,
\label{def-lam0}
\ee
which depends on the glue vector index $\Asf$.
The same result holds for $\beta<0$ as well. The leading pole in the expansion of \eqref{geompr} is then given by
\be
\sum_{\sigma=\pm 1}\sigma \,\cvths{\kappa_{12}}_{\nu,\tnu}(\tau,\sigma\Nr_0 z)  = 
-\frac{1}{\pi \I \kappa_{12}\Nr_0 z} + O(z).
\ee

In fact, this pole is the most important effect of the elliptic parameter $z$. Namely, $z$ acts as a regularization parameter and without it the same direction in this sum would have been divergent. Hence, the refinement is not simply a trick to simplify computations but turns out to be crucial for the whole construction. Additionally, studying the pole is the main reason that we performed the factorization and split of the lattice as in \eqref{latDecomp2}.

The contribution of the zero mode to \eqref{extg12-fact} gives terms of order $O(1/z)$ and $O(z)$ while the contribution of $\tvths{\kappa}_{\nu,\tnu}$ contributes only at $O(z)$. This means that we need the function $\chphi^{(\Nr_1,\Nr_2)}_{\mu, \mu_1, \mu_2}$ to cancel the pole. It is also constrained by the modular properties of the solution. Nonetheless it is easy to find a Jacobi-like form with the correct modular properties and that cancels the pole and one has a lot of freedom in doing so. However, the choice we will make is very natural as it easily generalizes to higher $n$. 

First, we look for $\chphi^{(\Nr_1,\Nr_2)}_{\mu, \mu_1, \mu_2}$ in the form 
\be
\chphi^{(\Nr_1,\Nr_2)}_{\mu, \mu_1, \mu_2}=
\frac12\, \delta^{(\kappa\Nr_0)}_{\Delta\mu}\, \phi^{(\kappa_{12})}(\tau,\Nr_0 z)\sum_{\Asf \in \cAr_0}  
\vthps{\bfr}_\Asf(\tau,\bfz),
\label{def-phik}
\ee
where $\phi^{(\kappa_{12})}(\tau,z)$ is a scalar valued Jacobi-like form whose
modular properties can be obtained from \eqref{multsys-chgirf} and \eqref{def-phik}.
Using that $\frac{\kappa}{6} \(\Nr^3 - \Nr_1^3 - \Nr_2^3\)=\Nr_0^2 \kappa_{12}$,
one finds that it should have weight 1, index $-\kappa_{12}$
and a trivial multiplier system. The last fact follows from the
observation that the leading coefficient in the small $z$ expansion of a VV Jacobi-like form has
the same multiplier system as the form itself.

We make the choice
\be
\phi^{(\kappa_{12})}(\tau,z)
= \frac{1}{2\pi \I \kappa_{12}}\,\frac{e^{\frac{\pi^2}{3}\, \kappa_{12} E_2(\tau)z^2}}{z},
\label{choice-phi2}
\ee
using \eqref{E2-Jacobilike}. The first factor $1/z$ gives the correct weight, the index is ensured by the factors in the exponent and the constant prefactor ensures the cancellation of the pole. 

In the $n=2$ case it is enough to cancel the pole, since all our functions are even in $z$ and thus the next contribution is of order $z$ which is enough to get the unrefined limit. But in general, we need to make sure that all orders from $\frac{1}{z^{n-1}}$ up to $z^{n-1}$ are canceled. So as far as the $n=2$ case is concerned, we have already found the solution and we have shown how the unrefined limit can be obtained easily, but without giving explicitly each step.

\subsubsection{The unrefined limit}

Let us now reduce the solution \eqref{extg12-fact} on the extended lattice to 
the anomalous coefficient $\gi{\Nr_1,\Nr_2}_{\mu, \mu_1, \mu_2}(\tau)$ we are really interested in. 
At the first step we obtain the refined anomalous coefficient $\girf{\Nr_1,\Nr_2}_{\mu, \mu_1, \mu_2}(\tau,z)$ 
using the relation \eqref{recover-gref}.
As was already mentioned, the absence of $z_i$-dependence in $\vth^{||}$ and 
the factorized form \eqref{factor-perptheta2} of $\vth^{\perp}$ makes the application of \eqref{recover-gref}
almost trivial: one should simply apply each of the differential operators to the corresponding 
$A_{N-1}$ lattice theta series $\vthA{d_{\Nr_i}}_{\asf_i}(\tau,z_i)$.
This gives
\be
\begin{split}
	\girf{\Nr_1,\Nr_2}_{\mu, \mu_1, \mu_2}(\tau,z)= &\, \hf\,
	\delta^{(\kappa\Nr_0)}_{\Delta\mu}\sum_{\Asf=\{\asf_0,\asf_1,\asf_2\}}\Biggl[
	\frac12 \sum_{\sigma=\pm 1}\sigma \,\vthls{\kappa_{12}}_{\mu_{12},\tnu(\Asf)}(\tau,\sigma \Nr_0 z)
	\\
	&\, \qquad
	+ \delta_{\Asf \in \cAr_0}\,\phi^{(\kappa_{12})}(\tau,\Nr_0 z) 
	\Biggr]
	\vth^{(d_\Nr)}_{\nu_0(\Asf)}(\tau) 
	\prod_{i=1}^{2} \cD\vthA{d_{\Nr_i}}_{\asf_i}(\tau;\frt^{(\Nr_i)}),
\end{split}
\label{res-gref}
\ee
where 
\be
\cD\vthA{N}_{\asf}(\tau;\frt) =
\frac{\cD^{(N)}_{\frt^2/2}\vthA{N}_{\asf}(\tau,z;\frt)\bigr|_{z_i=0}}
{N!\(\prod_{\alpha=1}^{N}\frt_\alpha\)\(-2\pi\eta^3(\tau)\)^{N}}\,.
\label{def-cDvth}
\ee

Finally, we take the unrefined limit $z\to 0$ according to \eqref{lim-ancoef}. 
To this end, we split $\vth^{||}$ into the zero mode, whose pole cancels against \eqref{choice-phi2}, and the non-zero mode parts and then take the term of order $z$. 
One can find the explicit expression for the solution of the unrefined completion equation \eqref{compl-gi} for $n=2$ and $n=3$ in \cite{Alexandrov:2024wla}. In appendix H of that paper one can also find cross checks against solutions obtained from the independent methods of section \ref{sec-DTall-HeckeVW}.

\section{Solving the general case}
\label{sec-DTall-soln}

\subsection{Solution}

One of the main results of our work is a theorem that gives a family of solutions $\chgirf{\bfr}_{\mu,\bfmu}$. This theorem works for any choice of holomorphic ambiguity, but since we want to take the unrefined limit we follow up with a conjectural choice for a function $\chphi^{(\bfr)}_{\mu,\bfmu}$ that guarantees that the limit exists. Finally, we perform the step of taking the modular derivatives and reducing to a solution $\girf{\bfr}_{\mu,\bfmu}$ but we keep the unrefined limit non-evaluated.

\paragraph{Generic solution}\
\label{subsubsec-gengen}

\noindent
We start by presenting a solution to the anomaly equation 
\eqref{complchgirf}. Of course, for any set of charges this solution involves a holomorphic modular ambiguity
parametrized by a Jacobi-like form $\phi^{(\bfr)}_{\mu, \bfmu }$. Moreover, holomorphic modular ambiguities for subsets $\bfs\in\bfr$ enter the completion of $\chgirf{\bfr}_{\mu,\bfmu}$.

In fact, the holomorphic modular ambiguities we choose can be written, formally, as theta series over the full extended lattice
$\bbLami{\bfr}$
\be 
\chphi^{(\bfr)}_{\mu, \bfmu}=\frac{\delta^{(\kappa\Nr_0)}_{\Delta\mu}}{2^{n-1}}\,
\vth_{\bbmu}(\tau, \zbbm;\bbLami{\bfr}, \Phidi{\bfr}, \pbbm),
\label{phin-theta}
\ee 
where $\delta_{\Delta\mu}^{(\kappa\Nr_0)}$ ensures that $(\mu,\bfmu)$ gives a residue class of the lattice $\bbLambda$, $\bbmu$, $\zbbm$ and $\pbbm$ are as in \eqref{thetadata}, and
the expression for the kernel $\Phidi{\bfr}$ as well as a discussion about it can be found in \cite{Alexandrov:2024wla}.

Then, let us define the composite kernel
\be
\rPhi(\xbbm;\{\Fvi{\bfs}\})=
\sum_{m=2}^{n} \sum_{\sum_{k=1}^m n_k = n} 
\Fvi{\bfs}(\xbbm^{(0)})	\prod_{k=1}^{m} \Phidi{\frr_k}(\xbbm^{(k)},\tau,z),
\label{ker-thm}
\ee
where $\bfs$ are $\frr_k$ are the notations from \eqref{split-rs},
the upper indices $^{(0)}$ and $^{(k)}$ denote projections to $\bbLami{\bfs}$ and $\bbLami{\frr_k}$, respectively,
and for a single charge we set $\Phidi{\Nr}=1$.
Finally, we have the following

\begin{theorem}\label{thm-gensol}
	A solution of the anomaly equation \eqref{complchgirf} and its modular completion 
	can be expressed as
	\be
	\begin{split}
		\chgirf{\bfr}_{\mu, \bfmu }  = &\, \chphi^{(\bfr)}_{\mu,\bfmu}
		+\frac{\delta^{(\kappa\Nr_0)}_{\Delta\mu}}{2^{n-1}}\,
		\Sym\Bigl\{\vth_{\bbmu}(\tau, \zbbm;\bbLami{\bfr}, \rPhi(\{\Fvi{\bfs}\}), \pbbm)\Bigr\} ,
		\\ 
		\whchgirf{\bfr}_{\mu, \bfmu}  = &\,\chphi^{(\bfr)}_{\mu,\bfmu}+
		\frac{\delta^{(\kappa\Nr_0)}_{\Delta\mu}}{2^{n-1}}\,
		\Sym\Bigl\{\vth_{\bbmu}(\tau, \zbbm;\bbLami{\bfr}, \rPhi(\{\whFvi{\bfs}\}), \pbbm) \Bigr\},
	\end{split}
	\ee
	where the functions $\Fvi{\bfr}$ and $\whFvi{\bfr}$ are given by 
	\be
	\begin{split} 
		\Fvi{\bfr}(\xbbm)=&\,
		\sum_{\cJ\subseteq\Zv_{n-1}} e_{|\cJ|} \delta_\cJ
		\prod_{\ell \in \Zv_{n-1} \backslash \cJ}
		\Bigl(\sign (\xbbm_\bbbeta\ast\vbbm_\ell)-\sign ( \xbbm\ast\wbbm_{\ell,\ell+1}) \Bigr),
		\\
		\whFvi{\bfr}(\xbbm)=&\,
		\sum_{\cJ \subseteq \Zv_{n-1}}\Phi_{|\cJ|}^E \(\{\vbbm_l\}_{l \in \cJ}; \xbbm \) 
		\prod_{\ell \in \Zv_{n-1}\backslash\cJ}\bigl(-\sign(\xbbm\ast\wbbm_{\ell,\ell+1}) \bigr).
	\end{split} 
	\label{kern-manyz}
	\ee
	Here $\xbbm_\bbbeta=\xbbm-\sqrt{2\tau_2}\,\bbbeta$,
	$\Zv_{n}=\{1,\dots,n\}$, 
	\be
	\label{def-genthm}
	e_m=\left\{\begin{array}{ll}
		0 & \mbox{\rm if $m$ is odd},
		\\
		\frac{1}{m+1}\ & \mbox{\rm if $m$ is even},
	\end{array}\right.
	\qquad
	\delta_\cJ=\prod_{\ell\in\cJ}\delta_{\xbbm_\bbbeta\ast\vbbm_\ell},
	\quad\mbox{\rm and}\quad
	\vbbm_\ell=\sum_{i=1}^{\ell}\sum_{j=\ell+1}^{n} \vbbm_{ij} .
	\ee 
\end{theorem}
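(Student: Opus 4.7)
The strategy is to verify the three independent requirements on the proposed expressions: (i) they define convergent theta series, (ii) they transform with the correct modular weight, multiplier system and index, and (iii) their difference reproduces the right-hand side of the anomaly equation \eqref{complchgirf}. Properties (i) and (ii) reduce to applications of Theorem \ref{th-conv} and of the \vig theorem. Property (iii) is the genuinely combinatorial heart of the theorem and will require Proposition \ref{prop-identPhiE} together with a careful matching between the partition structure encoded in $\rPhi$ and the recursive structure of the anomaly equation.

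For convergence (i), I would apply Theorem \ref{th-conv} with the families of vectors $\{\vbbm_\ell\}_{\ell\in\cJ}$ and $\{\wbbm_{\ell,\ell+1}\}_{\ell\notin\cJ}$ appearing in $\whFvi{\bfr}$. The non-negativity conditions on the norms and on the principal minors $\Delta_\cI$ follow from the construction of the extended lattice \eqref{extlatNr} and from the orthogonality relations between $\vbbm_{ij}$ and the $\IZ^{d_\Nr}$-part of $\bbLami{\bfr}$. The last condition of Theorem \ref{th-conv} — that each null vector used in a sign function be proportional to a lattice vector — is exactly what forced the choice \eqref{defdr} of $d_\Nr$ and the construction \eqref{def-bfcvij} of $\wbbm_{ij}$. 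For modularity (ii), I would check that $\whFvi{\bfr}$ satisfies the \vig equation \eqref{Vigdif} with $\lambda=0$. Each summand in $\whFvi{\bfr}$ is a product of a generalized error function $\Phi^E_{|\cJ|}(\{\vbbm_l\}_{l\in\cJ};\xbbm)$, which solves \vig by subsection \ref{subsec-gen-erf}, with sign functions $\sgn(\xbbm\ast\wbbm_{\ell,\ell+1})$ of null vectors which, by \eqref{Phinull}, are also consistent with \vig. Orthogonality of the null vectors $\wbbm_{\ell,\ell+1}$ to the parameters $\vbbm_l$ for $l\in\cJ$ (inherited from \eqref{def-bfcvij}) ensures the product structure does not generate an anomaly. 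Vignéras' theorem \eqref{mult-genth} then delivers the weight, index and multiplier system that match \eqref{multsys-chgirf}, while the prescribed residue class $\bbmu$ and characteristic vector $\pbbm$ of \eqref{thetadata}, dressed by $\delta^{(\kappa\Nr_0)}_{\Delta\mu}$, select the correct Weil sector.

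The main step (iii) is to establish
\be
\vth_{\bbmu}\bigl(\tau,\zbbm;\bbLami{\bfr},\rPhi(\{\whFvi{\bfs}\})-\rPhi(\{\Fvi{\bfs}\}),\pbbm\bigr)
\ee
upon symmetrization reproduces the right-hand side of \eqref{complchgirf} modulo the holomorphic ambiguity $\chphi^{(\bfr)}$. The factored form \eqref{ker-thm} of the composite kernel is tailored so that, for each partition $\sum_{k=1}^{m}\frr_k=\bfr$, the theta sum splits along the orthogonal decomposition $\bbLami{\bfr}=\bigoplus_{k}\bbLami{\frr_k}\oplus\bbLami{\bfs}$ (up to a finite set of glue vectors, as in \S\ref{subsec-DTall-sol2}) into a top-level theta series on $\bbLami{\bfs}$ with kernel $\Fvi{\bfs}$ or $\whFvi{\bfs}$, times theta series on each $\bbLami{\frr_k}$ with kernel $\Phidi{\frr_k}$. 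The latter factors are precisely $\chgirf{\frr_k}$ by \eqref{phin-theta}, so the product $\prod_k \chgirf{\frr_k}$ on the RHS of \eqref{complchgirf} is automatically produced. What remains is to identify the difference $\whFvi{\bfs}-\Fvi{\bfs}$, when inserted in the top-level theta, with $\rmRirf{\bfs}$. Using the expansion of $\whFvi{\bfs}$ in \eqref{kern-manyz} and the sign expansion of $\Fvi{\bfs}$, I would reorganize the subsets $\cJ\subseteq\Zv_{m-1}$ pairwise via Proposition \ref{prop-identPhiE}, collapsing certain differences of generalized error functions into lower-rank ones. After this reorganization, the sum is cast into the Schr\"oder-tree form \eqref{refsolRn} defining $\scRrf_m$, with $\Eprf_{v_0}$ supplied by the completion-minus-holomorphic structure and $\Efrf_v$ supplied by the remaining constant pieces.

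The main obstacle is precisely this last combinatorial identification. On the one hand, $\rPhi(\{\whFvi{\bfs}\})-\rPhi(\{\Fvi{\bfs}\})$ is organized by nested partitions of $\bfr$ together with a subset $\cJ$ selecting which edges carry a generalized error function; on the other hand, $\scRrf_m$ is organized by Schr\"oder trees. Matching these two combinatorial data, tracking the signs generated by $e_{|\cJ|}$ in \eqref{def-genthm} against the $(-1)^{n_T-1}$ weights of \eqref{refsolRn}, and handling the degenerate configurations where vectors $\vbbm_\ell$ become null (so that \eqref{Phinull} triggers a rank reduction) is where the argument is most delicate. Once this matching is achieved, the proof of the theorem is complete, since \eqref{complchgirf} is then satisfied termwise in the partition sum.
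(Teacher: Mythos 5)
Your three-pillar strategy (convergence via Theorem \ref{th-conv}, modularity via \vig and the null-vector reduction \eqref{Phinull}, and a kernel-level combinatorial identity for the anomaly equation) is exactly the route the paper takes; indeed the text surrounding the theorem only sketches these same ingredients and defers the details to the original references, by analogy with Theorem 1 of \cite{Alexandrov:2020bwg}. So the \emph{approach} is right. The problem is that what you have written is a plan, not a proof: the decisive step is the identity expressing the completed kernel as a partition sum,
\be
\whFvi{\bfr}(\xbbm)\;=\;\Sym\sum_{m=1}^{n}\sum_{\sum_k n_k=n}\scRrf_m\,\prod_{k=1}^{m}\Fvi{\frr_k}\bigl(\xbbm^{(k)}\bigr),
\ee
in which the Schr\"oder-tree structure of $\scRrf_m$ \eqref{refsolRn}, the signs $(-1)^{n_T-1}$, and the coefficients $e_m$ of \eqref{def-genthm} (arising from the degenerate loci $\xbbm_\bbbeta\ast\vbbm_\ell=0$) must all be matched. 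You correctly identify this as the delicate point and then stop. Since everything else in the theorem is routine bookkeeping, leaving this identity unproven leaves the theorem unproven. Note also that your phrasing of the target identity is slightly off: the difference $\whFvi{\bfs}-\Fvi{\bfs}$ is not to be identified with $\rmRirf{\bfs}$ alone but with the sum over nontrivial partitions of $\scRrf_m$ \emph{times the products of lower-rank kernels} $\Fvi{\frr_k}$ on the projected sublattices; it is only after this factorized structure is combined with the composite kernel \eqref{ker-thm}, the glue-vector decomposition, and an induction in $n$ that the products $\prod_k\chgirf{\frr_k}$ of the full lower-rank solutions are reassembled on the right-hand side of \eqref{complchgirf}.

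Two smaller corrections. First, your modularity argument invokes ``orthogonality of the null vectors $\wbbm_{\ell,\ell+1}$ to the parameters $\vbbm_l$ for $l\in\cJ$'': this is false, since $\vbbm_{ij}\ast\wbbm_{kl}=\bfv_{ij}\cdot\bfv_{kl}$ is generically nonzero. What actually saves the \vig equation is \eqref{Phinull} itself: because $\wbbm_{\ell,\ell+1}^2=0$, each term $\Phi^E_{|\cJ|}(\{\vbbm_l\}_{l\in\cJ};\xbbm)\prod_{\ell\notin\cJ}\bigl(-\sgn(\xbbm\ast\wbbm_{\ell,\ell+1})\bigr)$ is itself a rank-$(n-1)$ generalized error function with some parameters replaced by null lattice vectors, hence an exact solution of \eqref{Vigdif} with $\lambda=0$ — no orthogonality is needed. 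Second, in the convergence step you should make explicit that Theorem \ref{th-conv} is applied with the pairs $(\vbbm_{1,\ell},\vbbm_{2,\ell})=(\vbbm_\ell,\wbbm_{\ell,\ell+1})$ for $\ell\in\Zv_{n-1}$, and that condition 4 of that theorem (null vectors rescalable into the lattice) is precisely what the choice \eqref{defdr} and the construction \eqref{def-bfcvij} were engineered to guarantee — you gesture at this but it is the one place where the lattice extension earns its keep and deserves an explicit check.
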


Although the functions \eqref{kern-manyz} might seem complicated, their structure is easy to understand.
First, if all scalar products $\xbbm_\bbbeta\ast\vbbm_\ell$ are non-vanishing, then the function $\Fvi{\bfr}(\xbbm)$
simplifies to 
\be  
\Fvi{\bfr}(\xbbm) =\prod_{\ell =1}^{n-1}
\Bigl(\sign (\xbbm_\bbbeta\ast\vbbm_\ell)-\sign ( \xbbm\ast\wbbm_{\ell,\ell+1}) \Bigr),
\ee 
which is the standard kernel ensuring convergence of indefinite theta series 
with quadratic form having $n-1$ positive eigenvalues
(see Theorem \ref{th-conv}).
If, however, some of the scalar products vanish, it is not sufficient 
to set the corresponding sign functions to zero. Instead, one gets additional contributions
similar \eqref{Efn0new}.
In the presence of refinement, only the linear tree is relevant (see \eqref{Efref})
and one can apply a simple recipe that $\sgn(0)^m\to e_m$ \cite{Alexandrov:2019rth}. 
This gives rise to the expression in \eqref{kern-manyz}.

The proof of Theorem \ref{thm-gensol}, including convergence of $\chgirf{\bfr}$ and modularity of $\whchgirf{\bfr}$, is discussed in \cite{Alexandrov:2024wla} by analogy to the proof of Theorem 1 in \cite{Alexandrov:2020bwg}.

\paragraph{Factorization and split}\

\noindent
As in the $n=2$ case, we decompose the lattice as 
\be
\bbLami{\bfr}_{||}\oplus
\bbLami{\bfr}_{\perp}
\label{parr-plus-perp-n}
\ee 
where the first is the span of the (normalized versions of the) vectors $\vbbm_{ij},\wbbm_{ij}$ and the second is its orthogonal complement.
This induces a factorization of the solution similar to \eqref{extg12-fact}
\be
\chgirf{\bfr}_{\mu,\bfmu} = \chphi^{(\bfr)}_{\mu, \bfmu} + 
\frac{\delta^{(\kappa\Nr_0)}_{\Delta\mu}}{2^{n-1}}\,\sum_{\Asf}
\Sym\bigl\{\vthls{\bfr}_{\bbmu,\Asf}(\tau, z) \bigr\}
\vthps{\bfr}_\Asf(\tau,\bfz),
\label{extgn-fact}
\ee
with a sum over glue vectors indexed by $\Asf$ and
\be
\vthps{\bfr}_\Asf(\tau,\bfz)=\vth^{(d_\Nr)}_{\nu_0(\Asf)}(\tau) 
\prod_{i=1}^{n} \vthA{d_{\Nr_i}}_{\asf_i}(\tau,z_i;\frt^{(\Nr_i)}).
\label{factor-perpthetan}
\ee
The theta series $\vthls{\bfr}_{\bbmu,\Asf}$ is quite complicated and we choose not to give its full expression here. We will, however, in the next paragraph describe some of its properties.

\paragraph{Holomorphic modular ambiguity}\

\noindent
When looking for a choice of the holomorphic modular ambiguity that ensures the existence of the unrefined limit, we do not use the formal form \eqref{phin-theta} and instead look for it in the factorized form
\be
\chphi^{(\bfr)}_{\mu, \bfmu}=
\frac{\delta^{(\kappa\Nr_0)}_{\Delta\mu}}{2^{n-1}}
\sum_{\Asf} \phi^{(\bfr)}_{\glueg_\Asf^{||}+\bbmu}(\tau,z) \,
\vthps{\bfr}_\Asf(\tau,\bfz), 
\label{def-phin}
\ee
where $\phi^{(\bfr)}_{\bbnu}(\tau,z)$ is a VV Jacobi-like form labeled 
by $\bbnu\in(\bbLami{\bfr}_{||})^*/\bbLami{\bfr}_{||}$,
and characterized by weight $n-1$, index $m_{\bfr}$ \eqref{index-mr},
and the multiplier system given by the Weil representation \eqref{mult-genth} 
associated with the lattice $\bbLami{\bfr}_{||}$.

In order to fix the VV Jacobi-like form $\phi^{(\bfr)}_{\bbnu}(\tau,z)$, as in \S\ref{sec-DTall-sol2}, we split the theta series $\vthls{\bfr}_{\bbmu,\Asf}$ 
into contributions with different zero mode order: from non-zero modes to maximal zero modes, of order $n-1$, and compute the poles they produce. The order of a zero mode is given by the number of
linearly independent vectors $\wbbm_{ij}$ having a vanishing scalar product $\kbbm\ast \wbbm_{ij}$.

First, we make a conjecture saying that for each set of charges $\bfr$, we only need to worry about the contribution from maximal zero modes in ensuring the existence of the full unrefined limit. 
\begin{conj}\label{conj-zm}
	Let us fix an integer $n_0$, and assume that for all sets of charges $\bfr$ with the number of charges $n<n_0$, 
	the functions $\phi^{(\bfr)}_{\bbnu}(\tau,z)$ are Jacobi-like forms that ensure the existence of the unrefined limit 
	for all $\chgirf{\bfr}_{\mu,\bfmu}$ so that they behave as $O(z^{n-1})$ at small $z$.	
	Then for $n=n_0$, the contributions to $\Sym\bigl\{\vthls{\bfr}_{\bbmu,\Asf}(\tau, z) \bigr\}$ 
	of any zero mode order different from the maximal one, given by $n-1$, behave as $O(z^{n-1})$.
\end{conj}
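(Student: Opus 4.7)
The plan is to argue by (outer) induction on $n_0$, using the factorization structure of the indefinite theta series $\vthls{\bfr}_{\bbmu,\Asf}$ relative to the zero‑mode decomposition. First, I would set up a bijection between zero modes of a given order $k$ and set partitions of the charge index set $\{1,\dots,n\}$. Concretely, the constraints $\kbbm_{||} \ast \wbbm_{ij}=0$ are equivalent to the equality of the corresponding components of $\kbbm$ (up to rescalings by $\Nr_i$), so satisfying exactly $k$ independent such constraints amounts to collapsing the indices into $m=n-k$ clusters of sizes $n_1,\dots,n_m$ with $\sum_\alpha n_\alpha=n$. The maximal order $k=n-1$ then corresponds to the single-cluster partition $m=1$, and the non-maximal cases are precisely the partitions with $m\geq 2$.

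Second, I would establish a factorization of the restricted theta series over such a partition. Writing the parallel lattice as a direct sum of intra-cluster pieces (spanned by $\vbbm_{ij},\wbbm_{ij}$ with $i,j$ in the same cluster) and an inter-cluster quotient lattice of dimension $m-1$, the summation $\vthls{\bfr}_{\bbmu,\Asf}$ restricted to zero modes of order $k$ splits as a product of $m$ cluster theta series, each structurally identical to $\vthls{\frr_\alpha}_{\bbmu_\alpha,\Asf_\alpha}$ for an induced subset of charges $\frr_\alpha$, times a residual "inter-cluster" theta series over the quotient lattice of dimension $n-1-k$. The kernel $\rPhi$ of Theorem~\ref{thm-gensol} splits along the partition because the signs $\sgn(\xbbm\ast\wbbm_{\ell,\ell+1})$ for $\ell,\ell+1$ in the same cluster become constant on the zero-mode locus, leaving only intra-cluster signs and the cross-cluster factors.

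Third, the inter-cluster residual contributes $(n-1-k)$ independent non-zero-mode directions, in each of which the sum collapses to a geometric series. Upon applying the antisymmetrization $\sigma=\pm 1$ (and the symmetrization over permutations of charges), these collapse to combinations of the form $y^{2x}-y^{-2x}=O(z)$, exactly as in the explicit $n=2$ calculation leading to \eqref{nonzm-yy}. This yields an overall factor of $O(z^{n-1-k})$ from the inter-cluster piece. On the intra-cluster side, adding back the corresponding holomorphic ambiguities $\chphi^{(\frr_\alpha)}$ (which by hypothesis make $\chgirf{\frr_\alpha}$ well-defined in the unrefined limit) reconstructs the full lower-rank anomalous coefficients. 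The inductive hypothesis then gives $\chgirf{\frr_\alpha}=O(z^{n_\alpha-1})$, so the product over clusters behaves as $O(z^{\sum_\alpha(n_\alpha-1)}) = O(z^{n-m})=O(z^k)$. Combining, the total order of vanishing is $O(z^{n-1-k})\cdot O(z^k)=O(z^{n-1})$, which is precisely the claim.

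The main obstacle I anticipate is step two: matching the restricted cluster sums with the full refined anomalous coefficients $\chgirf{\frr_\alpha}$ at the level of characteristics, multipliers, and glue vectors. The glue vector decomposition $\glueg_\Asf^{||}$ in \eqref{def-phin} depends on the discriminant structure of $\bbLami{\bfr}_{||}$, and one must check that when restricting to a partition, the induced discriminant data correctly split as a sum over products of discriminant data of the sub-lattices $\bbLami{\frr_\alpha}_{||}$ plus an inter-cluster contribution. A related subtlety is that the inductive hypothesis supplies vanishing of $\chgirf{\frr_\alpha}$ including its ambiguity, not of the "bare" theta series $\vthls{\frr_\alpha}$ alone; ensuring that the ambiguities match correctly under the factorization requires that the ansatz \eqref{def-phin} be compatible with the partition structure, which should be verifiable using the product form of $\vthps{\bfr}_\Asf$ in \eqref{factor-perpthetan} together with the additivity $d_\bfr=\sum_\alpha d_{\frr_\alpha}$. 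Once these combinatorial identifications are checked, the small-$z$ bookkeeping in step three follows straightforwardly from the inductive hypothesis and the elementary geometric-series estimate.
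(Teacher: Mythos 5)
You should first be aware that the paper does not prove this statement: it is stated as Conjecture~\ref{conj-zm}, and the authors explicitly remark that a proof ``would require rewriting $\Sym\bigl\{\vthls{\bfr}_{\bbmu,\Asf}(\tau, z) \bigr\}$ in a form which makes manifest the existence of zero of order $n-1$ at small $z$ for all contributions except the zero modes of maximal order'', which they were unable to do beyond low $n$. So there is no proof in the paper to compare against; what follows is an assessment of your proposal on its own terms.

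Your identification of zero modes of order $k$ with partitions of the charges into $m=n-k$ clusters, each at its own maximal zero mode, is the right combinatorial starting point, but the order counting in your step three does not close — in fact it closes with the wrong sign. The inductive hypothesis controls $\chgirf{\frr_\alpha}=O(z^{n_\alpha-1})$, i.e.\ the bare cluster theta series \emph{plus} its ambiguity $\chphi^{(\frr_\alpha)}$. The conjecture, however, concerns the bare series $\vthls{\bfr}$ alone, and the bare cluster series restricted to its maximal zero mode has a \emph{pole} of order $n_\alpha-1$: this is exactly what the explicit $n=2$ computation shows, where the zero-mode part of $\sum_{\sigma=\pm1}\sigma\,\vthls{\kappa_{12}}_{\nu,\tnu}(\tau,\sigma\Nr_0 z)$ equals $-\frac{1}{\pi\I\kappa_{12}\Nr_0 z}+O(z)$. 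Hence the product of bare intra-cluster pieces is $O(z^{-k})$, not $O(z^k)$; ``adding back the ambiguities'' changes the object being estimated and leaves uncontrolled cross terms (ambiguity of one cluster times bare series of another). Separately, your claim that the inter-cluster residual contributes $O(z^{n-1-k})$ is unjustified: there is a single antisymmetrization $\sum_{\sigma=\pm1}\sigma$ attached to the single elliptic variable $z$, and as in \eqref{nonzm-yy} it generically produces one power of $z$, not one per inter-cluster direction. Combining your two estimates honestly gives something like $O(z^{n-1-2k})$, which for $k\ge 1$ and $n\ge 4$ would be a pole and would \emph{contradict} the conjecture — so either the factorization is too naive or there are cancellations coming from the charge symmetrization $\Sym$ and the difference-of-signs structure of the kernels \eqref{kern-manyz} that your argument does not exhibit. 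Identifying those cancellations is precisely the open problem, and your proposal does not yet supply the missing idea.
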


Since now we can focus on maximal zero modes only, we study the coefficient of the leading pole they produce and find it numerically up to $n=5$. Building on this, we conjecture the exact coefficient of this pole for all $n$. Then, we have 
\begin{conj}
\label{conj-leadingpole}
The choice
\be  
\phi^{(\bfr)}_\bbnu(\tau,z)=\frac{\Sym\{ \cbfr \}}{ z^{n-1}}\,e^{-\frac{\pi^2}{3}\,m_{\bfr} E_2(\tau)z^2}
\delta_{\bbnu\in\cA_0^{\bfr}},
\label{solfullphi}
\ee
where $\cA_0^{\bfr}$ is the set of residue classes $\bbnu$ that allow to give a solution\footnote{For $n=2$ this was given by the condition $\glueg_\Asf\in\cA_0(\mu_{12})$ where $\bbnu=(\mu_{12};0^{[\di{1}]};0^{[\di{2}]})+\glueg_\Asf^{(||)}$.} to the maximal zero mode condition, and normalization constant given by
\be  
\cbfr
=\frac{\Nr_0}{(2^{\eps}\pi\I \kappa)^{n-1}\Nr}
\prod\limits_{k=1}^{n-1}\(\sum\limits_{i=1}^k\Nr_i\sum\limits_{j=n-k+1}^n\Nr_j\)^{-1}\,,
\label{constphi}
\ee 
is conjectured to cancel the leading pole.
\end{conj}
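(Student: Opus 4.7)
\medskip

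\noindent\textbf{Proof proposal for Conjecture \ref{conj-leadingpole}.}

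The plan is to leverage Conjecture \ref{conj-zm} so as to reduce the analysis to the contribution of the maximal zero modes to $\Sym\{\vthls{\bfr}_{\bbmu,\Asf}(\tau,z)\}$, and then to compute the leading singularity at $z=0$ produced by this contribution and match it against the $z^{1-n}$ pole of \eqref{solfullphi} with coefficient \eqref{constphi}. Since both $\chphi^{(\bfr)}_{\mu,\bfmu}$ in \eqref{def-phin} and $\chgirf{\bfr}_{\mu,\bfmu}$ in \eqref{extgn-fact} share the same orthogonal factor $\vthps{\bfr}_\Asf(\tau,\bfz)$, cancellation of the pole can be established pole-by-pole in each $\Asf$-sector. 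The exponential prefactor $e^{-\frac{\pi^2}{3}m_\bfr E_2(\tau)z^2}$ contributes only at orders $O(z^2)$, hence is irrelevant for the leading pole and serves only to make $\phi^{(\bfr)}_\bbnu$ a genuine Jacobi-like form by Proposition \ref{prop-JacobiE2}. So the crux is to identify the coefficient $\cbfr$ with the residue of the maximal zero-mode contribution.

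The first concrete step is to describe the maximal zero-mode locus. The $n-1$ equations $\kbbm\ast \wbbm_{i,i+1}=0$, $i=1,\dots,n-1$, define a sublattice $\bbLami{\bfr}_{\rm zm}\subset \bbLami{\bfr}$ on which, by \eqref{rel-scpr}, all the scalar products $\xbbm_\bbbeta\ast\vbbm_\ell$ vanish (for the linear ordering chosen in \eqref{def-genthm}), and on which the quadratic form $\xbbm^2$ is degenerate in the direction picked out by these null vectors. Along this locus the kernel $\Fvi{\bfr}$ of \eqref{kern-manyz} collapses, via the prescription $\sgn(0)^m\to e_m$, to the single term corresponding to $\cJ=\Zv_{n-1}\setminus\{\text{odd subset}\}$ and one is left with a sum over residue classes $\glueg^{||}_\Asf+\bbmu$ satisfying the zero mode condition (i.e.\ $\bbnu\in\cA_0^\bfr$), which is precisely the constraint appearing in \eqref{solfullphi}. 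These are the terms whose $q$-exponents degenerate, so that the sum reduces to a multidimensional geometric progression in the variable $y=e^{2\pi\I z}$.

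Next, one evaluates this geometric progression following the $n=2$ template in \eqref{geompr}--\eqref{zmod-sigmap1}. Parametrizing the maximal zero mode locus by $n-1$ independent integers $\tell_1,\dots,\tell_{n-1}$, the sum factorizes into $n-1$ one-dimensional geometric series, each with step determined by the pairing $\wbbm_{i,i+1}\ast \zbbm$. A direct computation (mirroring the passage from \eqref{geompr} to \eqref{zmod-sigmap1}) gives a product of denominators of the form $y^{\Nr_0\kappa (\sum_{i\leq k}\rdcr_i)(\sum_{j>k}\rdcr_j)}-y^{-\Nr_0\kappa(\sum_{i\leq k}\rdcr_i)(\sum_{j>k}\rdcr_j)}$ for $k=1,\dots,n-1$. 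Expanding each denominator at small $z$ and extracting the leading term produces a $z^{1-n}$ singularity whose coefficient, up to the overall sign and the prefactor $\Nr_0/(2^{n-1}\cdot 2\pi\I)^{n-1}$ accounting for the symmetrization weight and the usual factor of $1/(2\pi\I)$ per variable, is precisely
\be
-\,\frac{\Nr_0}{(2^{\eps}\pi\I\kappa)^{n-1}\Nr}\prod_{k=1}^{n-1}\Bigl(\tsum_{i=1}^k\Nr_i\,\tsum_{j=n-k+1}^n\Nr_j\Bigr)^{-1}\,\delta_{\bbnu\in\cA_0^\bfr},
\ee
which is $-\cbfr\,\delta_{\bbnu\in\cA_0^\bfr}$. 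The symmetrization acts trivially on this leading pole precisely because the product $\prod_{k=1}^{n-1}(\sum_{i\leq k}\Nr_i)(\sum_{j>k}\Nr_j)$ is itself symmetric under permutations of $\{\Nr_1,\dots,\Nr_n\}$ (it equals the number of ordered pairs of non-empty complementary subsets splitting $\bfr$ at height $k$, which is permutation invariant); this is the mechanism by which the $\Sym$ on the r.h.s.\ of \eqref{solfullphi} reproduces the fully symmetrized pole.

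The main obstacle, and the genuinely combinatorial part of the argument, is the third step: showing that, after imposing the ordering entering the linear tree in \eqref{Efref} and then averaging over $S_n$, the residues of the various linear-tree contributions recombine exactly into the product $\prod_{k=1}^{n-1}(\sum_{i\leq k}\Nr_i)(\sum_{j>k}\Nr_j)^{-1}$. One expects this to follow either from a direct telescoping identity on nested partial sums of $\bfr$, or more conceptually from an interpretation of the residue as a Jeffrey--Kirwan-type sum over flags of sublattices in $\bbLami{\bfr}_{||}$. A practical route would be to check the formula inductively in $n$ by peeling off the last factor $(\sum_{i\leq n-1}\Nr_i)\Nr_n=(\Nr-\Nr_n)\Nr_n$ from the innermost geometric sum, and reducing to the statement for $(\Nr_1,\dots,\Nr_{n-1})$; the base case $n=2$ is already verified explicitly in section \ref{sec-DTall-sol2}, and the cases $n=3,4,5$ would be checked numerically as an additional cross-check against the existing evidence supporting the conjecture.
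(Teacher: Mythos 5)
Your outline reproduces the mechanism that motivates the conjecture — restrict to maximal zero modes via Conjecture \ref{conj-zm}, recognize that the degenerate directions collapse the theta series into $n-1$ geometric progressions in $y$, and read off a $z^{1-n}$ pole whose residue involves the product of partial-sum denominators — but it does not actually prove the statement, and it could not have matched a proof in the paper because the paper offers none: the coefficient \eqref{constphi} is obtained there by computing the leading pole \emph{numerically} up to $n=5$ and extrapolating, which is precisely why it is stated as a conjecture. Your third step, where the residues of the individual ordered (linear-tree) contributions must recombine after symmetrization into $\Sym\{\cbfr\}$, is exactly the part that is not established; deferring it to an unspecified telescoping identity, a Jeffrey--Kirwan interpretation, or an induction that is not carried out, and then falling back on numerical checks for $n=3,4,5$, leaves you with the same evidence the paper already has.

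There is also a concrete error in your argument. You assert that the symmetrization acts trivially because $\prod_{k=1}^{n-1}\bigl(\sum_{i\leq k}\Nr_i\bigr)\bigl(\sum_{j>k}\Nr_j\bigr)$ is invariant under permutations of $\{\Nr_1,\dots,\Nr_n\}$. It is not: for $\bfr=(1,2,3)$ the product equals $1\cdot5\cdot3\cdot3=45$, while for the reordering $(2,1,3)$ it equals $2\cdot4\cdot3\cdot3=72$. The product is invariant only under reversal of the ordering, not under general permutations, which is why the conjecture explicitly writes $\Sym\{\cbfr\}$ rather than $\cbfr$ — the symmetrization is a genuinely nontrivial average over orderings. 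This misreading hides the real combinatorial content of the claim, namely that the residue produced by each ordering $\sigma$ is exactly $-\cbfr$ evaluated on $\sigma(\bfr)$ (so that the symmetrized sum matches $-\Sym\{\cbfr\}$ ordering by ordering), and that statement is what would need to be proven to upgrade the conjecture to a theorem.
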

With this proposal we have
\begin{theorem} \label{thm-ambig}
	Provided Conjectures \ref{conj-zm} and \ref{conj-leadingpole} hold, the holomorphic modular ambiguity given by \eqref{def-phin} and \eqref{solfullphi}
	ensures the existence of the unrefined limit.
\end{theorem}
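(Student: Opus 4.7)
The plan is to combine Conjectures \ref{conj-zm} and \ref{conj-leadingpole} with the Jacobi-like structure of the ambiguity in order to show that all negative Laurent coefficients of $\chgirf{\bfr}_{\mu,\bfmu}(\tau,z,\bfz)$ at $z=0$ vanish, leaving the zero of order $n-1$ required by \eqref{lim-ancoef}. The argument proceeds inductively on $n$: assuming that for every proper sub-partition $\bfs$ of $\bfr$ the ambiguity $\phi^{(\bfs)}_\bbnu$ has already been chosen so that $\chgirf{\bfs}_{\nu,\bfnu}=O(z^{|\bfs|-1})$, I would argue that at the next step $\chphi^{(\bfr)}_{\mu,\bfmu}$ given by \eqref{def-phin} and \eqref{solfullphi} is the correct choice.

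First I would use Conjecture \ref{conj-zm} to reduce the problem to the maximal zero mode sector. By this conjecture all contributions to $\Sym\{\vthls{\bfr}_{\bbmu,\Asf}(\tau,z)\}$ coming from zero modes of rank strictly below $n-1$, together with the non-zero mode part which is manifestly regular at $z=0$, already behave as $O(z^{n-1})$. Hence the only potentially singular piece is the maximal zero mode contribution, and the sole role of $\chphi^{(\bfr)}_{\mu,\bfmu}$ is to cancel its negative Laurent coefficients up to order $z^{n-2}$.

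Second I would work out the explicit Laurent structure of the maximal zero mode part, generalising the computation \eqref{geompr}--\eqref{zmod-sigmap1} that was carried out for $n=2$. Imposing the maximal zero mode conditions $\kbbm_{||}\ast\wbbm_{ij}=0$ collapses the sum over $\bbLami{\bfr}_{||}$ to an $(n-1)$-fold geometric progression in $y=e^{2\pi\I z}$, whose summation and symmetrisation in $\sigma$ produces an odd function of $z$ of the schematic form $C(\tau,y)/(y-y^{-1})^{n-1}$ with $C(\tau,y)$ regular at $z=0$. The singular part of its Laurent expansion therefore contains only the odd negative powers $z^{-(n-1)},z^{-(n-3)},\ldots,z^{-1}$, and Conjecture \ref{conj-leadingpole} fixes the coefficient of $z^{-(n-1)}$ to be exactly minus the leading coefficient of $\chphi^{(\bfr)}_{\mu,\bfmu}$ coming from the expansion of $\cbfr e^{-\frac{\pi^2}{3}m_\bfr E_2(\tau) z^2}/z^{n-1}$, after the symmetrisation and sum over $\Asf\in\cA_0^\bfr$ have been performed.

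Third, and most delicately, I would argue that the subleading poles at $z^{-(n-3)},\ldots,z^{-1}$ cancel simultaneously. The natural tool is Proposition \ref{prop-JacobiE2}: multiplying the combination $\chgirf{\bfr}_{\mu,\bfmu}+\chphi^{(\bfr)}_{\mu,\bfmu}$ by $e^{\frac{\pi^2}{3}m_\bfr E_2(\tau) z^2}$ converts it into a Jacobi-like form of vanishing index whose Laurent coefficients are vector-valued modular forms transforming in the Weil representation attached to $\bbLami{\bfr}_{||}$. Under this rescaling $\chphi^{(\bfr)}_{\mu,\bfmu}$ becomes the pure monomial $\cbfr/z^{n-1}$ (times the $\vthps{\bfr}_\Asf$ factor), while the rescaled maximal zero mode piece has its leading Laurent coefficient pinned down by Conjecture \ref{conj-leadingpole}. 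The heart of the argument is then to show that, in the weights $1,3,\ldots,n-2$ and with the relevant multiplier system, the modular forms hosting the subleading Laurent coefficients either vanish or are uniquely determined by the already-fixed leading datum, which forces these coefficients to be identically zero.

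The main obstacle is exactly this third step. A fully general proof would require a case-free dimension count for the spaces of vector-valued modular forms in the Weil representation of $\bbLami{\bfr}_{||}$ at all odd weights below $n-1$, together with the verification that the rescaled Laurent coefficients really do land in these spaces. This is the same structural mechanism that underlies Conjecture \ref{conj-leadingpole} itself, and I expect that the cleanest route is to establish, by induction on $n$ using the factorisation $\bbLami{\bfr}=\bbLami{\bfr}_{||}\oplus\bbLami{\bfr}_\perp$ and the recursive form of \eqref{def-phin}, a closed product formula for the $(n-1)$-fold geometric progression that evaluates the maximal zero mode and makes the exponential dependence on $E_2(\tau)$ manifest.
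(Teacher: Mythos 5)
Your overall architecture matches the paper's: Conjecture \ref{conj-zm} reduces everything to the maximal zero-mode sector, Conjecture \ref{conj-leadingpole} disposes of the leading coefficient at $z^{-(n-1)}$, and Proposition \ref{prop-JacobiE2} is the tool for the remaining orders. But your third step — where the whole content of the theorem lives — rests on the wrong mechanism, and the mechanism you propose would not work. You ask for a ``case-free dimension count'' showing that the spaces of vector-valued modular forms of the relevant weights in the Weil representation of $\bbLami{\bfr}_{||}$ are trivial. They are not trivial in general, so no such count can close the argument, and you yourself flag that you cannot carry it out. The ingredient you never use is the one the paper singles out: \emph{the maximal zero-mode contribution is independent of $\tau$} (the quadratic form vanishes identically on the totally isotropic sublattice cut out by the conditions $\kbbm\ast\wbbm_{\ell,\ell+1}=0$, so the powers of $\q$ trivialize exactly as in \eqref{geompr}). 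Because of this, and because \eqref{solfullphi} depends on $\tau$ only through $E_2$, after multiplying by $e^{\frac{m_{\bfr}}{3}\pi^2E_2(\tau)z^2}$ the Laurent coefficient at each order $z^k$ with $k<n-1$ of the (maximal zero mode $+$ ambiguity) piece is a \emph{polynomial in $E_2$ with constant coefficients}. Proposition \ref{prop-JacobiE2} forces that coefficient to transform as a modular form of weight $n-1+k>0$, and a constant-coefficient polynomial in the quasimodular form $E_2$ can only do so by vanishing identically — no knowledge of dimensions of spaces of modular forms is needed. That specific quasimodularity argument is what you are missing, and it is precisely why the ambiguity is taken to be a Jacobi-like form built from $E_2$ rather than an arbitrary weight-$(n-1)$ completion of the pole.

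Two smaller but real imprecisions compound the gap. First, existence of the limit \eqref{lim-ancoef} requires \emph{all} Laurent coefficients of $\chgirf{\bfr}_{\mu,\bfmu}$ at orders $z^{-(n-1)},\dots,z^{n-2}$ to vanish, not only the negative ones; your opening sentence (``all negative Laurent coefficients vanish, leaving the zero of order $n-1$'') conflates regularity with a zero of order $n-1$, and for odd $n$ the dangerous orders include $z^0,\dots,z^{n-2}$, which your ``odd negative powers'' framing does not cover. Second, the parity of the maximal zero-mode contribution is $(-1)^{n-1}$, so the surviving powers are $z^{-(n-1)},z^{-(n-1)+2},\dots$ of that parity — ``only odd negative powers'' is correct only for even $n$. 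Neither of these is fatal on its own, but together with the missing $\tau$-independence argument they leave the step from ``leading pole cancelled'' to ``zero of order $n-1$'' unproved.
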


We will not give the proof of this theorem, but we simply say that it is based on prop. \ref{prop-JacobiE2}, on the fact that (the factorized part of) maximal zero modes do not have $\tau$ dependence and on the fact that the choice \eqref{solfullphi} only depends on $\tau$ through the quasimodular form $E_2(\tau)$. In particular, this justifies our use of Jacobi-like forms.

\paragraph{The unrefined limit}\

\noindent
Theorem \ref{thm-gensol} and the choice \eqref{solfullphi} provide a solution for the functions $\chgirf{\bfr}_{\mu,\bfmu}$ 
satisfying the anomaly equation \eqref{complchgirf} and having a well-defined unrefined limit. 
It remains just to reduce it to the original anomalous coefficients $\gi{\bfr}_{\mu,\bfmu}$. 
The first step, the reduction to the refined anomalous coefficients $\girf{\bfr}_{\mu,\bfmu}$, is trivial
and done by applying the relation \eqref{recover-gref} to the factorized expression. 
This affects only the $A_{N-1}$ lattice theta series defined in \eqref{3thetas2} and results in
\be
\girf{\bfr}_{\mu,\bfmu}(\tau,z) =  \frac{\delta^{(\kappa\Nr_0)}_{\Delta\mu}}{2^{n-1}} \sum_{\Asf}
\(\phi^{(\bfr)}_{\glueg_\Asf^{||}+\bbmu}(\tau,z)+\Sym\bigl\{\vthls{\bfr}_{\bbmu,\Asf}(\tau, z) \bigr\}\)
\vth^{(d_\Nr)}_{\nu_0(\Asf)}(\tau) 
\prod_{i=1}^{n} \cD\vthA{d_{\Nr_i}}_{\asf_i}(\tau;\frt^{(\Nr_i)}),
\label{refgr}
\ee
where $\cD\vthA{d_\Nr}_\asf$ is defined in \eqref{def-cDvth}.

The last step is to evaluate the unrefined limit $z\to 0$. Unfortunately, we cannot accomplish it analytically in full generality
because this would require rewriting $\Sym\bigl\{\vthls{\bfr}_{\bbmu,\Asf}(\tau, z) \bigr\}$ in a form which makes manifest 
the existence of zero of order $n-1$ at small $z$ for all contributions except the zero modes of maximal order
and, in particular, would automatically provide a proof of Conjecture \ref{conj-zm}.
However, since evaluating a limit of a function should certainly be simpler than solving non-trivial anomaly equations,
we see this problem as just a technical obstacle and hope to return to it elsewhere.

\chapter{Conclusions}
\label{chap-concl}

\section{Summary}
The work presented in this thesis has covered a rich array of subjects at the crossroads of physics and mathematics. 
Our investigation proceeded along three main themes. First, we computed the one-instanton NS5-brane corrections to the hypermultiplet moduli space metric. Second, we revisited and solved a non-commutative quantum Riemann-Hilbert problem induced by refined BPS indices. Third, we studied and provided (recipe for) solutions to the anomalous part of the generating functions of D4-D2-D0 BPS indices. These topics are all related to the hypermultiplet moduli space in Type II string theory compactified on a Calabi-Yau. In this concluding chapter, we shall review the main results obtained, discuss their potential shortcomings, and outline promising directions for future research.

String dualities are among the most powerful tools for accessing the non-perturbative sector of the theory. This is why in \cite{Alexandrov:2010ca} S-duality was used, on the twistor space, to lay the ground for describing the NS5-instanton corrected hypermultiplet moduli space.
Our first result, in chapter \ref{chap-NS5HM}, was the computation of the NS5-brane one-instanton corrected metric, following the procedure outlined in chapter \ref{chap-twist}. 
This was achieved using the NS5-instanton twistor data obtained in \cite{Alexandrov:2010ca}.
This approach, while successful, has some limitations. The construction is consistent only at the one-instanton level. This is because the BPS rays associated with different instantons intersect, leading to inconsistencies in a multi-instanton expansion performed in this frame. 
Furthermore, symplectic invariance, which is a central property of Type IIA string theory, is not manifest in our final expressions. This arises because our method relies on Type IIB twistor data despite being ultimately applied to a Type IIA problem. 
It is plausible that a Poisson resummation, perhaps like in \cite{Alexandrov:2009qq}, is required to recast the result into a manifestly symplectic invariant formulation. 

To validate our computations, we performed two independent checks. First, we considered the specialization to a rigid Calabi-Yau manifold, where the moduli space reduces to the universal hypermultiplet. In this case, the QK metric can be described by a single holomorphic potential satisfying the Przanowski equation, a highly non-linear differential equation. Using a direct map from the twistor description to the Przanowski formulation established in \cite{Alexandrov:2009vj}, we confirmed perfect agreement between our result and the expected form. In a related check, we compared our findings to the families of physically relevant solutions to the linearized Przanowski equation found in \cite{Alexandrov:2006hx}. 
While one family of solutions therein perfectly reproduced the D-instanton deformation, the NS5-instanton case did not exhibit such a direct match. Reconciling these two descriptions would likely require finding a non-trivial integral transform. Discovering this transform is a key task for future work, as it could provide crucial inspiration for a simpler expression for the NS5-instanton corrections.

Our second check involved taking the weak string coupling limit, 
$g_s\ll 1$. After properly defining this limit by establishing the scaling of physical fields with respect to the supergravity action, we found that our metric possesses the precise structure predicted by direct string amplitude computations. Since the full expression in this limit was quite complicated, we took a further limit of small background Ramond-Ramond fields to successfully reproduce the known Gaussian NS5-instanton action. The one-forms that constitute the metric in this double-limit represent a result of this thesis, providing a direct prediction for sphere three-point functions in the presence of an NS5-brane and small background RR-fluxes.

The second part of this dissertation, presented in chapter \ref{chap-qRH}, focused on a non-commutative Riemann-Hilbert (RH) problem. 
By considering refined BPS indices, one is led to a quantum deformation of this classical problem \cite{Barbieri:2019yya}.

We formulated this quantum Riemann-Hilbert (qRH) problem in terms of complex functions endowed with a non-commutative Moyal star product.
Although this formalism can be applied in various contexts, we chose to focus on the one given by the conformal limit of $N=2$ gauge theories \cite{Gaiotto:2014bza,Bridgeland:2016nqw}. Building upon a proposal in \cite{Alexandrov:2019rth}, which introduced an integral equation whose solutions define refined analogs of the Joyce and \pleb potentials but do not solve the quantum RH problem itself, we introduced new variables, $\hcXr_\gamma$. We have shown that these new variables provide a formal solution to the qRH problem and correctly reduce to the solutions of the classical problem in the unrefined limit.

Inspired by the classical case, where the RH problem can be exchanged for a Thermodynamic Bethe Ansatz (TBA)-like integral equation, we looked for a similar structure for the qRH problem. We proposed such an integral equation, but found that it contains an infinite number of terms. Another interesting result emerged when we expressed the quantum RH problem in its adjoint form. In this formulation, the solution $\hcXr_\gamma$ is governed by a generating function $\psi$ that is, remarkably, independent of the charges. We proved that its logarithm has a well-defined unrefined limit and that its unrefined limit coincides with the unrefined generating function of Darboux coordinates. This provides an all-orders expression for the latter, which was previously known only up to the second order in a perturbative expansion \cite{Alexandrov:2017qhn}. As a potential future direction, this generating function can be used to prove the S-duality invariance of the D3-instanton corrected Darboux coordinates to all orders, generalizing the second-order proof in \cite{Alexandrov:2017qhn}. Finally, in section \ref{sec-qTBA-uncoupled} we computed explicitly the solution $\hcXr_\gamma$ in the case of an uncoupled BPS structure and wrote it in terms of the modified Barnes Gamma function, thus matching results from \cite{Barbieri:2019yya}.

The final part of this thesis, in chapter \ref{chap-DTall}, explored the modular properties of the generating functions of D4-D2-D0 BPS indices, which are important objects giving DT invariants and black hole entropy. These functions are known to satisfy modular anomaly equations involving indefinite theta series.

Our first contribution was a careful re-examination of these equations. We identified and recovered subtle but crucial contributions to the kernels of the theta series that were previously overlooked. Then we found a significant structural simplifications of the kernels, a result which may contribute to a deeper mathematical understanding of the origin of these modular properties.

We then undertook the task of solving these equations for arbitrary charge in Calabi-Yau spaces with 
$b_2=1$. 
The equations do not have unique solutions, but rather a family of solutions related by a holomorphic modular ambiguity. 
If we add their recursive nature, this ambiguity presents a significant challenge, as one cannot solve for a given charge without first solving and fixing the ambiguity for all lower charges. We circumvented this by parametrizing the solutions in terms of their dependence on the modular ambiguities, which we encoded in the anomalous coefficients $\gi{\bfr}_{\mu,\bfmu}$ where $\bfr=\{\Nr_1,\dots,\Nr_n\}$.
We show that these are vector-valued mock modular forms of depth $n-1$ and satisfy a similar modular completion equation to that of the generating functions of BPS indices. Furthermore, we show in theorem \ref{thm-ancoef} that solving the completion equation of $\gi{\bfr}_{\mu,\bfmu}$ is equivalent to solving that of $h_{\Nr,\mu}$ up to all lower-rank modular ambiguities.

We first provided two infinite families of solutions, one for two charges ($n=2$), constructed using some Hecke-like operators \cite{Dabholkar:2012nd}, and another for the case where all charges are equal to 1 as well as triple intersection number $\kappa=1$, which we showed is related to the generating functions of Vafa-Witten invariants on $\IP^2$.

To find a general solution, we employed a strategy where we extend the problem, solve it and then reduce back to a solution of the original problem. We extended the problem by introducing a refinement parameter and a lattice extension, constructed a general solution for this extended problem using indefinite theta series and Jacobi-like forms, and then prescribed a procedure to reduce it to a solution of the original problem. The reduction is contingent on certain conditions. For $n=2$ and $n=3$ we propose solutions and we prove that they satisfy these conditions. For an arbitrary number of charges, we propose solutions and conjecture that they also do satisfy these conditions and thus reduce to a solution of the non-extended unrefined problem. 
In \cite{Alexandrov:2024wla} a cross-check was performed by comparing our general solution for $n=2$ with the solution constructed using Hecke-like operators, for multiple pairs of charges. We showed that their difference, contracted with an appropriate theta series, is a scalar Jacobi form, providing strong evidence for the validity of both solutions and, as a byproduct, yielding new closed-form expressions for the "seed" functions used in the Hecke-based construction.

\section{Outlook}
The results summarized in the preceding paragraphs, while providing answers to several key questions, simultaneously illuminate a vast landscape for future investigation. The study of Type II string theory compactified on a Calabi-Yau manifold remains a very rich field, and it would be great if the progress made in this thesis opens new avenues to explore. We outline here several promising directions for research.

A central challenge that remains open is to find a description of NS5-instanton corrections to the hypermultiplet moduli space at all orders. The twistor space, which proved so crucial to our one-instanton computation, is the natural framework to attempt this. There are several strategies. One could do a Poisson resummation as mentioned earlier. Or investigate the existence of an alternative, manifestly S-duality invariant, twistor space construction that avoids the inconsistencies of the one we used. 
It may be impossible to find such a construction using BPS rays, as it may require to work in terms of closed contours.

Inspiration can be drawn from the Type IIB side. There, D3-instantons are incorporated using a BPS ray structure analogous to the NS5-instanton case we encountered. Understanding how the D3-instanton corrected twistor space is constructed to all orders could provide invaluable lessons for resolving the ray-crossing inconsistencies of the NS5-instanton problem. A completely different, yet potentially powerful, approach would be to leverage the principles of resurgence. The computation of large D-instanton effects on the hypermultiplet moduli space in \cite{Pioline:2009ia} strongly suggests that NS5-instanton contributions are linked by resurgence to the D-instanton ones. One could therefore use resurgence techniques to infer the all-orders structure of NS5-corrections from the D-brane sector. Even if this technique proves difficult to apply at all orders and only yields linear corrections, it could still be highly valuable, as these corrections would likely take a different form than those we derived in this work.

Finally, the Przanowski description offers another promising avenue. It may be simpler to formulate an all-orders description within this framework. In \cite[Eqns. (4.15),(4.17)]{Alexandrov:2025abc}, we have spelled out the precise form of the linear corrections to the coordinates $(h,z^1,z^2)$ within this description. These deformations (especially for $z^2$ and $h$) were relatively simple and one can attempt to upgrade the solution of the linearized equation to a solution of the full non-linear differential equation. 

Successfully finding the all order NS5-instanton corrections would be a monumental step, as it would complete our picture of the low-energy effective theory and is expected to resolve the known curvature singularity induced by the one-loop correction.

In this thesis we also considered a quantized hypermultiplet moduli space obtained after inclusion of a refinement parameter. A potential physical explanation for why the refinement quantizes the moduli space remains mysterious. Furthermore, in one of the setups we considered in \cite{Alexandrov:2025abc}, the large volume limit of D3-instanton corrected Type IIB theory, the refinement appears to be compatible with S-duality, a surprising feature that reinforces the expectation of a deep physical meaning to the refinement in this setup.
The quantum Riemann-Hilbert problem we formulated suggests that the new variables $\hcXr_\gamma$, should be interpreted as Darboux coordinates on a quantum analogue of the twistor space. A first step towards formalizing this notion was taken in \cite{Alexandrov:2023wdj} with the definition of a quantized contact structure. However, quantum twistor spaces remain mysterious and it would be very interesting to study them. 
Moreover, since our qRH problem is a deformation of the classical RH problem, it is natural to ask whether other deformations exist that effectively incorporate NS5-instanton corrections.

Finally, the modular properties of the generating functions of D4-D2-D0 BPS indices are a remarkable result, that we leveraged in order to get one step closer to systematically fixing these generating functions. The immediate next step is to adapt our solution to the modular anomaly to the case of arbitrary second Betti number, $b_2$, thus accommodating a much larger number of CYs. At the same time, for specific Calabi-Yau manifolds, one must work to fix the remaining holomorphic modular ambiguities by computing polar terms of the generating functions. This program has already shown promise \cite{Alexandrov:2022pgd, Alexandrov:2023ltz, Alexandrov:2023zjb} for some CYs and with D4-brane charge equal to $1$ or $2$.

Typically, these polar terms are computed using wall-crossing formulas and (rank 1) DT and PT invariants. This is done using wall-crossing behavior of an anti D6-brane giving another anti D6-brane and a D4-brane \cite{Alexandrov:2023ltz,Alexandrov:2023zjb}. The anti D6-branes correspond to PT invariants and they can be computed from the knowledge of GV invariants using the MNOP formula \cite{Pandharipande:2011jz}. The latter are obtained by the \textit{direct integration} method \cite{Huang:2006hq, Grimm:2007tm}, provided we fix a number of boundary conditions. Once this machinery is applied to fix a new generating function of rank 0 DT invariants, one can go in the opposite direction and obtain new GV invariants that can be used to fix new boundary conditions which in turn would allow to obtain an infinite number of GV invariants from direct integration. Unfortunately, one quickly hits a wall because the genus of the GV invariants needed to get the DT and PT invariants grows very quickly with the D4-brane charge thus halting any possible progress. An interesting project would be to find an independent way to compute polar terms of rank 0 DT invariants which does not require high genus GV invariants.

A promising avenue is given by the wall-crossing of one anti D6-brane and one D6-brane to a D4-brane. This method requires GV invariants of much lower genus. However, it requires some additional conditions that guarantee the existence of a chamber where the invariants vanish. Finding such a chamber can only be done in very few cases. But surprisingly, if one simply applies this method without worrying about said conditions, one finds that it works in many cases. There are however many cases where it fails and is thus not yet reliable. A possible remedy comes from \cite{VanHerck:2009ww} where a modification to this method, which gives the correct result, was argued on geometrical grounds. Understanding how this fix maps to the framework of wall-crossing is an open problem.

While the previous discussion focused on a summary and explicit ideas for the outlook, we would like now to touch on some abstract, far-reaching ideas. 
The non-perturbative effects studied in this thesis are a powerful tool to uncover potential universal properties of theories of quantum gravity. 
Moreover, their role in ensuring the consistency of various dualities shows that they are paramount to a potential non-perturbative definition of string theory.
Within these effects we see that geometry and topology are intimately related to combinatorics and algebra. This might be an indication that the former are not manifest in a complete formulation of quantum gravity.
Indeed, as black hole thermodynamics inspired the holographic principle, this may indicate that the usual process of quantizing spacetime will give way to another one where spacetime emerges as an effective description of a non-geometric theory.


\newcommand{\myscale}{1}
\newcommand{\xoffset}{-2cm}
\newcommand{\yoffset}{0mm}

\bibliography{combined}

\providecommand{\href}[2]{#2}\begingroup\raggedright\begin{thebibliography}{100}

\bibitem{MR664330}
S.~M. Salamon, ``Quaternionic {K}\"ahler manifolds,'' {\em Invent. Math.} {\bf
  67} (1982), no.~1, 143--171.

\bibitem{Alexandrov:2008nk}
S.~Alexandrov, B.~Pioline, F.~Saueressig, and S.~Vandoren, ``{Linear
  perturbations of quaternionic metrics},'' {\em Commun. Math. Phys.} {\bf 296}
  (2010) 353--403,
\href{http://www.arXiv.org/abs/0810.1675}{{\tt 0810.1675}}.

\bibitem{Alexandrov:2008gh}
S.~Alexandrov, B.~Pioline, F.~Saueressig, and S.~Vandoren, ``{D-instantons and
  twistors},'' {\em JHEP} {\bf 03} (2009) 044,
\href{http://www.arXiv.org/abs/0812.4219}{{\tt 0812.4219}}.

\bibitem{Alexandrov:2009zh}
S.~Alexandrov, ``{D-instantons and twistors: some exact results},'' {\em J.
  Phys.} {\bf A42} (2009) 335402,
\href{http://www.arXiv.org/abs/0902.2761}{{\tt 0902.2761}}.

\bibitem{Alexandrov:2010ca}
S.~Alexandrov, D.~Persson, and B.~Pioline, ``{Fivebrane instantons, topological
  wave functions and hypermultiplet moduli spaces},'' {\em JHEP} {\bf 1103}
  (2011) 111, \href{http://www.arXiv.org/abs/1010.5792}{{\tt 1010.5792}}.

\bibitem{Alexandrov:2014sya}
S.~Alexandrov and S.~Banerjee, ``{Hypermultiplet metric and D-instantons},''
  {\em JHEP} {\bf 02} (2015) 176,
\href{http://www.arXiv.org/abs/1412.8182}{{\tt 1412.8182}}.

\bibitem{Alexandrov:2023hiv}
S.~Alexandrov and K.~Bendriss, ``{Hypermultiplet metric and NS5-instantons},''
  {\em JHEP} {\bf 01} (2024) 140,
  \href{http://www.arXiv.org/abs/2309.14440}{{\tt 2309.14440}}.

\bibitem{ks}
M.~Kontsevich and Y.~Soibelman, ``{Stability structures, motivic
  Donaldson-Thomas invariants and cluster transformations},''
  \href{http://www.arXiv.org/abs/0811.2435}{{\tt 0811.2435}}.

\bibitem{Bridgeland:2016nqw}
T.~Bridgeland, ``{Riemann-Hilbert problems from Donaldson-Thomas theory},''
  {\em Invent. Math.} {\bf 216} (2019) 69--124,
  \href{http://www.arXiv.org/abs/1611.03697}{{\tt 1611.03697}}.

\bibitem{Alexandrov:2025abc}
S.~Alexandrov and K.~Bendriss, ``{Quantum {T}{B}{A} for refined {B}{P}{S}
  indices},'' \href{http://www.arXiv.org/abs/2507.16908}{{\tt 2507.16908}}.

\bibitem{Alexandrov:2018lgp}
S.~Alexandrov and B.~Pioline, ``{Black holes and higher depth mock modular
  forms},'' {\em Commun. Math. Phys.} {\bf 374} (2019), no.~2, 549--625,
\href{http://www.arXiv.org/abs/1808.08479}{{\tt 1808.08479}}.

\bibitem{Zwegers-thesis}
S.~Zwegers, ``Mock theta functions.'' PhD dissertation, Utrecht University,
  2002.

\bibitem{Alexandrov:2024jnu}
S.~Alexandrov and K.~Bendriss, ``{Modular anomaly of BPS black holes},'' {\em
  JHEP} {\bf 12} (2024) 180, \href{http://www.arXiv.org/abs/2408.16819}{{\tt
  2408.16819}}.

\bibitem{Alexandrov:2024wla}
S.~Alexandrov and K.~Bendriss, ``{Mock modularity of Calabi-Yau threefolds},''
  \href{http://www.arXiv.org/abs/2411.17699}{{\tt 2411.17699}}.

\bibitem{Przanowski:1984qq}
M.~Przanowski, ``{Locally Hermite Einstein, selfdual gravitational
  instantons},'' {\em Acta Phys. Polon.} {\bf B14} (1983)
625--627.

\bibitem{Barbieri:2019yya}
A.~Barbieri, T.~Bridgeland, and J.~Stoppa, ``{A quantized Riemann-Hilbert
  problem in Donaldson-Thomas theory},''
  \href{http://www.arXiv.org/abs/1905.00748}{{\tt 1905.00748}}.

\bibitem{Polchinski:1995mt}
J.~Polchinski, ``{Dirichlet-Branes and Ramond-Ramond Charges},'' {\em Phys.
  Rev. Lett.} {\bf 75} (1995) 4724--4727,
\href{http://www.arXiv.org/abs/hep-th/9510017}{{\tt hep-th/9510017}}.

\bibitem{Becker:1995kb}
K.~Becker, M.~Becker, and A.~Strominger, ``Five-branes, membranes and
  nonperturbative string theory,'' {\em Nucl. Phys.} {\bf B456} (1995)
  130--152,
\href{http://www.arXiv.org/abs/hep-th/9507158}{{\tt hep-th/9507158}}.

\bibitem{Alexandrov:2009qq}
S.~Alexandrov and F.~Saueressig, ``{Quantum mirror symmetry and twistors},''
  {\em JHEP} {\bf 09} (2009) 108,
\href{http://www.arXiv.org/abs/0906.3743}{{\tt 0906.3743}}.

\bibitem{RoblesLlana:2006is}
D.~Robles-Llana, M.~Ro\v{c}ek, F.~Saueressig, U.~Theis, and S.~Vandoren,
  ``{Nonperturbative corrections to 4D string theory effective actions from
  SL(2,Z) duality and supersymmetry},'' {\em Phys. Rev. Lett.} {\bf 98} (2007)
  211602,
\href{http://www.arXiv.org/abs/hep-th/0612027}{{\tt hep-th/0612027}}.

\bibitem{Alexandrov:2011va}
S.~Alexandrov, ``{Twistor Approach to String Compactifications: a Review},''
  {\em Phys.Rept.} {\bf 522} (2013) 1--57,
\href{http://www.arXiv.org/abs/1111.2892}{{\tt 1111.2892}}.

\bibitem{Louis:2002ny}
J.~Louis and A.~Micu, ``{Type II theories compactified on Calabi-Yau threefolds
  in the presence of background fluxes},'' {\em Nucl. Phys.} {\bf B635} (2002)
  395--431,
\href{http://www.arXiv.org/abs/hep-th/0202168}{{\tt hep-th/0202168}}.

\bibitem{Cecotti:1989qn}
S.~Cecotti, S.~Ferrara, and L.~Girardello, ``Geometry of type {I}{I}
  superstrings and the moduli of superconformal field theories,'' {\em Int. J.
  Mod. Phys.} {\bf A4} (1989)
2475.

\bibitem{deWit:1984pk}
B.~de~Wit and A.~Van~Proeyen, ``{Potentials and Symmetries of General Gauged
  N=2 Supergravity: Yang-Mills Models},'' {\em Nucl. Phys.} {\bf B245} (1984)
89.

\bibitem{Bagger:1983tt}
J.~Bagger and E.~Witten, ``{M}atter couplings in {${\mathcal N}=2$}
  supergravity,'' {\em Nucl. Phys.} {\bf B222} (1983)
1.

\bibitem{Candelas:1990rm}
P.~Candelas, X.~C. de~la Ossa, P.~S. Green, and L.~Parkes, ``{A pair of
  Calabi-Yau manifolds as an exactly soluble superconformal theory},'' {\em
  Nucl. Phys.} {\bf B359} (1991)
21--74.

\bibitem{Hosono:1993qy}
S.~Hosono, A.~Klemm, S.~Theisen, and S.-T. Yau, ``{Mirror symmetry, mirror map
  and applications to Calabi-Yau hypersurfaces},'' {\em Commun. Math. Phys.}
  {\bf 167} (1995) 301--350,
\href{http://www.arXiv.org/abs/hep-th/9308122}{{\tt hep-th/9308122}}.

\bibitem{Alexandrov:2020qpb}
S.~Alexandrov and S.~Nampuri, ``{Refinement and modularity of immortal
  dyons},'' {\em JHEP} {\bf 01} (2021) 147,
  \href{http://www.arXiv.org/abs/2009.01172}{{\tt 2009.01172}}.

\bibitem{Denef:2000nb}
F.~Denef, ``{Supergravity flows and D-brane stability},'' {\em JHEP} {\bf 0008}
  (2000) 050, \href{http://www.arXiv.org/abs/hep-th/0005049}{{\tt
  hep-th/0005049}}.

\bibitem{Bates:2003vx}
B.~Bates and F.~Denef, ``{Exact solutions for supersymmetric stationary black
  hole composites},'' {\em JHEP} {\bf 11} (2011) 127,
\href{http://www.arXiv.org/abs/hep-th/0304094}{{\tt hep-th/0304094}}.

\bibitem{Denef:2007vg}
F.~Denef and G.~W. Moore, ``{Split states, entropy enigmas, holes and halos},''
  {\em JHEP} {\bf 1111} (2011) 129,
\href{http://www.arXiv.org/abs/hep-th/0702146}{{\tt hep-th/0702146}}.

\bibitem{Moore:1998et}
G.~W. Moore, N.~Nekrasov, and S.~Shatashvili, ``{D particle bound states and
  generalized instantons},'' {\em Commun. Math. Phys.} {\bf 209} (2000) 77--95,
\href{http://www.arXiv.org/abs/hep-th/9803265}{{\tt hep-th/9803265}}.

\bibitem{Nekrasov:2002qd}
N.~A. Nekrasov, ``{Seiberg-Witten prepotential from instanton counting},'' {\em
  Adv. Theor. Math. Phys.} {\bf 7} (2003), no.~5, 831--864,
  \href{http://www.arXiv.org/abs/hep-th/0206161}{{\tt hep-th/0206161}}.

\bibitem{Gaiotto:2010be}
D.~Gaiotto, G.~W. Moore, and A.~Neitzke, ``{Framed BPS States},'' {\em Adv.
  Theor. Math. Phys.} {\bf 17} (2013), no.~2, 241--397,
\href{http://www.arXiv.org/abs/1006.0146}{{\tt 1006.0146}}.

\bibitem{quatman}
S.~M. Salamon, ``{D}ifferential geometry of quaternionic manifolds,'' {\em
  Annales Scientifiques de l'\'Ecole Normale Sup\'erieure (4)} {\bf 19} (1986),
  no.~1, 31--55.

\bibitem{Alexandrov:2006hx}
S.~Alexandrov, F.~Saueressig, and S.~Vandoren, ``{Membrane and fivebrane
  instantons from quaternionic geometry},'' {\em JHEP} {\bf 09} (2006) 040,
\href{http://www.arXiv.org/abs/hep-th/0606259}{{\tt hep-th/0606259}}.

\bibitem{Alexandrov:2009vj}
S.~Alexandrov, B.~Pioline, and S.~Vandoren, ``{Self-dual Einstein Spaces,
  Heavenly Metrics and Twistors},'' {\em J.Math.Phys.} {\bf 51} (2010) 073510,
  \href{http://www.arXiv.org/abs/0912.3406}{{\tt 0912.3406}}.

\bibitem{Robles-Llana:2006ez}
D.~Robles-Llana, F.~Saueressig, and S.~Vandoren, ``String loop corrected
  hypermultiplet moduli spaces,'' {\em JHEP} {\bf 03} (2006) 081,
\href{http://www.arXiv.org/abs/hep-th/0602164}{{\tt hep-th/0602164}}.

\bibitem{Alexandrov:2008ds}
S.~Alexandrov, B.~Pioline, F.~Saueressig, and S.~Vandoren, ``{Linear
  perturbations of Hyperkahler metrics},'' {\em Lett. Math. Phys.} {\bf 87}
  (2009) 225--265,
\href{http://www.arXiv.org/abs/0806.4620}{{\tt 0806.4620}}.

\bibitem{MR1001707}
C.~LeBrun, ``Quaternionic-{K}\"ahler manifolds and conformal geometry,'' {\em
  Math. Ann.} {\bf 284} (1989), no.~3, 353--376.

\bibitem{Alexandrov:2025sig}
S.~Alexandrov, ``{Mock modularity at work, or black holes in a forest},''
  \href{http://www.arXiv.org/abs/2505.02572}{{\tt 2505.02572}}.

\bibitem{Dabholkar:2012nd}
A.~Dabholkar, S.~Murthy, and D.~Zagier, ``{Quantum Black Holes, Wall Crossing,
  and Mock Modular Forms},''
\href{http://www.arXiv.org/abs/1208.4074}{{\tt 1208.4074}}.

\bibitem{bringmann2017higher}
K.~Bringmann, J.~Kaszian, and A.~Milas, ``{Higher depth quantum modular forms,
  multiple Eichler integrals, and $SL(3)$ false theta functions},'' {\em arXiv
  preprint arXiv:1704.06891} (2017).

\bibitem{MR781735}
M.~Eichler and D.~Zagier, {\em The theory of {J}acobi forms}, vol.~55 of {\em
  Progress in Mathematics}.
\newblock Birkh\"auser Boston Inc., Boston, MA, 1985.

\bibitem{Zagier:1994}
D.~Zagier, ``{Modular forms and differential operators},'' {\em Proc. Math.
  Sci.} {\bf 104} (1994) 57--75.

\bibitem{Cohen1997}
P.~B. Cohen, Y.~Manin, and D.~Zagier, {\em Automorphic Pseudodifferential
  Operators}, pp.~17--47.
\newblock Birkh{\"a}user Boston, Boston, MA, 1997.

\bibitem{Lee2001}
M.~H. Lee, ``Hilbert modular pseudodifferential operators,'' {\em Proceedings
  of the American Mathematical Society} {\bf 129} (2001), no.~11, 3151--3160.

\bibitem{Lee2006}
M.~H. Lee, ``Mixed jacobi-like forms of several variables,'' {\em International
  Journal of Mathematics and Mathematical Sciences} {\bf 2006} (2006), no.~3,
  31542, 14 p.--31542, 14 p.

\bibitem{2010arXiv1007.4823C}
Y.~{Choie} and M.~{Lee}, ``{Quasimodular forms, Jacobi-like forms, and
  pseudodifferential operators},'' {\em arXiv e-prints} (July, 2010)
  arXiv:1007.4823, \href{http://www.arXiv.org/abs/1007.4823}{{\tt 1007.4823}}.

\bibitem{Alexandrov:2016enp}
S.~Alexandrov, S.~Banerjee, J.~Manschot, and B.~Pioline, ``{Indefinite theta
  series and generalized error functions},'' {\em Selecta Mathematica} {\bf 24}
  (2018) 3927--3972,
\href{http://www.arXiv.org/abs/1606.05495}{{\tt 1606.05495}}.

\bibitem{Vigneras:1977}
M.-F. Vign\'eras, ``{S\'eries th\^eta des formes quadratiques ind\'efinies},''
  {\em Springer Lecture Notes} {\bf 627} (1977) 227 -- 239.

\bibitem{Alexandrov:2020bwg}
S.~Alexandrov, ``{Vafa-Witten invariants from modular anomaly},'' {\em Commun.
  Num. Theor. Phys.} {\bf 15} (2021), no.~1, 149--219,
  \href{http://www.arXiv.org/abs/2005.03680}{{\tt 2005.03680}}.

\bibitem{Nazaroglu:2016lmr}
C.~Nazaroglu, ``{$r$-Tuple Error Functions and Indefinite Theta Series of
  Higher-Depth},'' {\em Commun. Num. Theor. Phys.} {\bf 12} (2018) 581--608,
\href{http://www.arXiv.org/abs/1609.01224}{{\tt 1609.01224}}.

\bibitem{Alexandrov:2017qhn}
S.~Alexandrov, S.~Banerjee, J.~Manschot, and B.~Pioline, ``{Multiple
  D3-instantons and mock modular forms II},'' {\em Commun. Math. Phys.} {\bf
  359} (2018), no.~1, 297--346,
\href{http://www.arXiv.org/abs/1702.05497}{{\tt 1702.05497}}.

\bibitem{funke2017theta}
J.~Funke and S.~Kudla, ``Theta integrals and generalized error functions, ii,''
  {\em arXiv preprint arXiv:1708.02969} (2017).

\bibitem{kudla2016theta}
S.~Kudla, ``Theta integrals and generalized error functions,'' {\em Manuscripta
  Mathematica} {\bf 155} (2018), no.~3-4, 303--333.

\bibitem{Alexandrov:2014mfa}
S.~Alexandrov and S.~Banerjee, ``{Fivebrane instantons in Calabi-Yau
  compactifications},'' {\em Phys.Rev.} {\bf D90} (2014) 041902,
\href{http://www.arXiv.org/abs/1403.1265}{{\tt 1403.1265}}.

\bibitem{Alexandrov:2014rca}
S.~Alexandrov and S.~Banerjee, ``{Dualities and fivebrane instantons},'' {\em
  JHEP} {\bf 1411} (2014) 040,
\href{http://www.arXiv.org/abs/1405.0291}{{\tt 1405.0291}}.

\bibitem{Alexandrov:2021shf}
S.~Alexandrov, A.~Sen, and B.~Stefa\'nski, ``{D-instantons in Type IIA string
  theory on Calabi-Yau threefolds},'' {\em JHEP} {\bf 11} (2021) 018,
  \href{http://www.arXiv.org/abs/2108.04265}{{\tt 2108.04265}}.

\bibitem{Alexandrov:2007ec}
S.~Alexandrov, ``{Quantum covariant c-map},'' {\em JHEP} {\bf 05} (2007) 094,
\href{http://www.arXiv.org/abs/hep-th/0702203}{{\tt hep-th/0702203}}.

\bibitem{Antoniadis:1997eg}
I.~Antoniadis, S.~Ferrara, R.~Minasian, and K.~S. Narain, ``{$R^4$ couplings in
  M- and type II theories on Calabi-Yau spaces},'' {\em Nucl. Phys.} {\bf B507}
  (1997) 571--588,
\href{http://www.arXiv.org/abs/hep-th/9707013}{{\tt hep-th/9707013}}.

\bibitem{Gunther:1998sc}
H.~G{\"u}nther, C.~Herrmann, and J.~Louis, ``{Quantum corrections in the
  hypermultiplet moduli space},'' {\em Fortsch. Phys.} {\bf 48} (2000)
  119--123,
\href{http://www.arXiv.org/abs/hep-th/9901137}{{\tt hep-th/9901137}}.

\bibitem{Antoniadis:2003sw}
I.~Antoniadis, R.~Minasian, S.~Theisen, and P.~Vanhove, ``String loop
  corrections to the universal hypermultiplet,'' {\em Class. Quant. Grav.} {\bf
  20} (2003) 5079--5102,
\href{http://www.arXiv.org/abs/hep-th/0307268}{{\tt hep-th/0307268}}.

\bibitem{Witten:1996hc}
E.~Witten, ``{Five-brane effective action in M-theory},'' {\em J. Geom. Phys.}
  {\bf 22} (1997) 103--133,
\href{http://www.arXiv.org/abs/hep-th/9610234}{{\tt hep-th/9610234}}.

\bibitem{Dijkgraaf:2002ac}
R.~Dijkgraaf, E.~P. Verlinde, and M.~Vonk, ``{O}n the partition sum of the {NS}
  five-brane,''
\href{http://www.arXiv.org/abs/hep-th/0205281}{{\tt hep-th/0205281}}.

\bibitem{Pioline:2009qt}
B.~Pioline and D.~Persson, ``{The automorphic NS5-brane},'' {\em Commun. Num.
  Th. Phys.} {\bf 3} (2009), no.~4, 697--754,
\href{http://www.arXiv.org/abs/0902.3274}{{\tt 0902.3274}}.

\bibitem{Bao:2009fg}
L.~Bao, A.~Kleinschmidt, B.~E.~W. Nilsson, D.~Persson, and B.~Pioline,
  ``{Instanton Corrections to the Universal Hypermultiplet and Automorphic
  Forms on SU(2,1)},'' {\em Commun. Num. Theor. Phys.} {\bf 4} (2010) 187--266,
\href{http://www.arXiv.org/abs/0909.4299}{{\tt 0909.4299}}.

\bibitem{Manschot:2010xp}
J.~Manschot, ``{Wall-crossing of D4-branes using flow trees},'' {\em
  Adv.Theor.Math.Phys.} {\bf 15} (2011) 1--42,
\href{http://www.arXiv.org/abs/1003.1570}{{\tt 1003.1570}}.

\bibitem{Alexandrov:2012au}
S.~Alexandrov, J.~Manschot, and B.~Pioline, ``{D3-instantons, Mock Theta Series
  and Twistors},'' {\em JHEP} {\bf 1304} (2013) 002,
\href{http://www.arXiv.org/abs/1207.1109}{{\tt 1207.1109}}.

\bibitem{Alexandrov:2012bu}
S.~Alexandrov and B.~Pioline, ``{S-duality in Twistor Space},'' {\em JHEP} {\bf
  1208} (2012) 112,
\href{http://www.arXiv.org/abs/1206.1341}{{\tt 1206.1341}}.

\bibitem{Alexandrov:2015xir}
S.~Alexandrov and B.~Pioline, ``{Theta series, wall-crossing and quantum
  dilogarithm identities},'' {\em Lett. Math. Phys.} {\bf 106} (2016), no.~8,
  1037--1066, \href{http://www.arXiv.org/abs/1511.02892}{{\tt 1511.02892}}.

\bibitem{Strominger:1997eb}
A.~Strominger, ``{Loop corrections to the universal hypermultiplet},'' {\em
  Phys. Lett.} {\bf B421} (1998) 139--148,
\href{http://www.arXiv.org/abs/hep-th/9706195}{{\tt hep-th/9706195}}.

\bibitem{Sen:2020cef}
A.~Sen, ``{D-instanton Perturbation Theory},'' {\em JHEP} {\bf 08} (2020) 075,
  \href{http://www.arXiv.org/abs/2002.04043}{{\tt 2002.04043}}.

\bibitem{Polchinski:1998rr}
J.~Polchinski, {\em {String theory. Vol. 2: Superstring theory and beyond}}.
\newblock Cambridge Monographs on Mathematical Physics. Cambridge University
  Press, 12, 2007.

\bibitem{deVroome:2006xu}
M.~de~Vroome and S.~Vandoren, ``Supergravity description of spacetime
  instantons,'' {\em Class. Quant. Grav.} {\bf 24} (2007) 509--534,
\href{http://www.arXiv.org/abs/hep-th/0607055}{{\tt hep-th/0607055}}.

\bibitem{Gaiotto:2008cd}
D.~Gaiotto, G.~W. Moore, and A.~Neitzke, ``{Four-dimensional wall-crossing via
  three-dimensional field theory},'' {\em Commun.Math.Phys.} {\bf 299} (2010)
  163--224, \href{http://www.arXiv.org/abs/0807.4723}{{\tt 0807.4723}}.

\bibitem{Gaiotto:2014bza}
D.~Gaiotto, ``{Opers and TBA},'' \href{http://www.arXiv.org/abs/1403.6137}{{\tt
  1403.6137}}.

\bibitem{Bridgeland:2019fbi}
T.~Bridgeland, ``{Geometry from Donaldson-Thomas invariants},''
  \href{http://www.arXiv.org/abs/1912.06504}{{\tt 1912.06504}}.

\bibitem{Alexandrov:2021wxu}
S.~Alexandrov and B.~Pioline, ``{Heavenly metrics, BPS indices and twistors},''
  {\em Lett. Math. Phys.} {\bf 111} (2021), no.~5, 116,
  \href{http://www.arXiv.org/abs/2104.10540}{{\tt 2104.10540}}.

\bibitem{Alexandrov:2019rth}
S.~Alexandrov, J.~Manschot, and B.~Pioline, ``{S-duality and refined BPS
  indices},'' {\em Commun. Math. Phys.} {\bf 380} (2020), no.~2, 755--810,
  \href{http://www.arXiv.org/abs/1910.03098}{{\tt 1910.03098}}.

\bibitem{Cecotti:2014wea}
S.~Cecotti, A.~Neitzke, and C.~Vafa, ``{Twistorial topological strings and a
  $\mathrm{tt}^*$ geometry for $\mathcal{N} = 2$ theories in $4d$},'' {\em Adv.
  Theor. Math. Phys.} {\bf 20} (2016) 193--312,
\href{http://www.arXiv.org/abs/1412.4793}{{\tt 1412.4793}}.

\bibitem{Chuang:2022uey}
W.-Y. Chuang, ``{Quantum Riemann-Hilbert problems for the resolved conifold},''
  {\em J. Geom. Phys.} {\bf 190} (2023) 104860,
  \href{http://www.arXiv.org/abs/2203.00294}{{\tt 2203.00294}}.

\bibitem{Filippini:2014sza}
S.~A. Filippini, M.~Garcia-Fernandez, and J.~Stoppa, ``{Stability data,
  irregular connections and tropical curves},'' {\em Selecta Mathematica} {\bf
  23} (3, 2017) 1355--1418, \href{http://www.arXiv.org/abs/1403.7404}{{\tt
  1403.7404}}.

\bibitem{Bridgeland:2020zjh}
T.~Bridgeland and I.~A.~B. Strachan, ``{Complex hyperK\"ahler structures
  defined by Donaldson-Thomas invariants},''
  \href{http://www.arXiv.org/abs/2006.13059}{{\tt 2006.13059}}.

\bibitem{Alexandrov:2010pp}
S.~Alexandrov and P.~Roche, ``{TBA for non-perturbative moduli spaces},'' {\em
  JHEP} {\bf 1006} (2010) 066, \href{http://www.arXiv.org/abs/1003.3964}{{\tt
  1003.3964}}.

\bibitem{Strachan:1992em}
I.~A.~B. Strachan, ``{The Moyal algebra and integrable deformations of the
  selfdual Einstein equations},'' {\em Phys. Lett. B} {\bf 283} (1992) 63--66.

\bibitem{Takasaki:1992jf}
K.~Takasaki, ``{Dressing operator approach to Moyal algebraic deformation of
  selfdual gravity},'' {\em J. Geom. Phys.} {\bf 14} (1994) 111--120,
  \href{http://www.arXiv.org/abs/hep-th/9212103}{{\tt hep-th/9212103}}.

\bibitem{Solomon:1968}
L.~Solomon, ``On the poincaré-birkhoff-witt theorem,'' {\em Journal of
  Combinatorial Theory} {\bf 4} (1968), no.~4, 363--375.

\bibitem{Maldacena:1997de}
J.~M. Maldacena, A.~Strominger, and E.~Witten, ``{B}lack hole entropy in
  {M}-theory,'' {\em JHEP} {\bf 12} (1997) 002,
\href{http://www.arXiv.org/abs/hep-th/9711053}{{\tt hep-th/9711053}}.

\bibitem{Alexandrov:2016tnf}
S.~Alexandrov, S.~Banerjee, J.~Manschot, and B.~Pioline, ``{Multiple
  D3-instantons and mock modular forms I},'' {\em Commun. Math. Phys.} {\bf
  353} (2017), no.~1, 379--411,
\href{http://www.arXiv.org/abs/1605.05945}{{\tt 1605.05945}}.

\bibitem{Manschot:2014cca}
J.~Manschot, ``{Sheaves on $\mathbb{P}^2$ and generalized Appell functions},''
  {\em Adv. Theor. Math. Phys.} {\bf 21} (2017) 655--681,
\href{http://www.arXiv.org/abs/1407.7785}{{\tt 1407.7785}}.

\bibitem{Alexandrov:2020dyy}
S.~Alexandrov, ``{Rank $N$ Vafa\textendash{}Witten invariants, modularity and
  blow-up},'' {\em Adv. Theor. Math. Phys.} {\bf 25} (2021), no.~2, 275--308,
  \href{http://www.arXiv.org/abs/2006.10074}{{\tt 2006.10074}}.

\bibitem{Manschot:2010nc}
J.~Manschot, ``{The Betti numbers of the moduli space of stable sheaves of rank
  3 on $\mathbb{P}^2$},'' {\em Lett.Math.Phys.} {\bf 98} (2011) 65--78,
\href{http://www.arXiv.org/abs/1009.1775}{{\tt 1009.1775}}.

\bibitem{deBoer:2008fk}
J.~de~Boer, F.~Denef, S.~El-Showk, I.~Messamah, and D.~Van~den Bleeken,
  ``{Black hole bound states in $AdS_3 \times S^2$},'' {\em JHEP} {\bf 0811}
  (2008) 050,
\href{http://www.arXiv.org/abs/0802.2257}{{\tt 0802.2257}}.

\bibitem{bayer2011bridgeland}
A.~Bayer, E.~Macr{\`\i}, and Y.~Toda, ``{Bridgeland stability conditions on
  threefolds I: Bogomolov-Gieseker type inequalities},'' {\em Journal of
  Algebraic Geometry} {\bf 23} (2014) 117--163,
  \href{http://www.arXiv.org/abs/1103.5010}{{\tt 1103.5010}}.

\bibitem{Alexandrov:2022pgd}
S.~Alexandrov, N.~Gaddam, J.~Manschot, and B.~Pioline, ``{Modular bootstrap for
  D4-D2-D0 indices on compact Calabi{\textendash}Yau threefolds},'' {\em Adv.
  Theor. Math. Phys.} {\bf 27} (2023), no.~3, 683--744,
  \href{http://www.arXiv.org/abs/2204.02207}{{\tt 2204.02207}}.

\bibitem{bringmann2007overpartitionsclassnumbersbinary}
K.~Bringmann and J.~Lovejoy, ``{Overpartitions and class numbers of binary
  quadratic forms},'' \href{http://www.arXiv.org/abs/0712.0631}{{\tt
  0712.0631}}.

\bibitem{Zagier:1975}
D.~Zagier, ``{Nombres de classes et formes modulaires de poids 3/2},'' {\em C.
  R. Acad. Sc. Paris} {\bf 281} (1975) 883--886.

\bibitem{hirzebruch1976intersection}
F.~Hirzebruch and D.~Zagier, ``Intersection numbers of curves on hilbert
  modular surfaces and modular forms of nebentypus,'' {\em Inventiones
  mathematicae} {\bf 36} (1976), no.~1, 57--113.

\bibitem{Bouchard:2016lfg}
V.~Bouchard, T.~Creutzig, D.-E. Diaconescu, C.~Doran, C.~Quigley, and
  A.~Sheshmani, ``{Vertical D4-D2-D0 Bound States on K3 Fibrations and
  Modularity},'' {\em Commun. Math. Phys.} {\bf 350} (2017), no.~3, 1069--1121,
\href{http://www.arXiv.org/abs/1601.04030}{{\tt 1601.04030}}.

\bibitem{Bouchard:2018pem}
V.~Bouchard, T.~Creutzig, and A.~Joshi, ``{Hecke Operators on Vector-Valued
  Modular Forms},'' {\em SIGMA} {\bf 15} (2019) 041,
  \href{http://www.arXiv.org/abs/1807.07703}{{\tt 1807.07703}}.

\bibitem{Vafa:1994tf}
C.~Vafa and E.~Witten, ``{A Strong coupling test of S duality},'' {\em
  Nucl.Phys.} {\bf B431} (1994) 3--77,
\href{http://www.arXiv.org/abs/hep-th/9408074}{{\tt hep-th/9408074}}.

\bibitem{Dabholkar:2020fde}
A.~Dabholkar, P.~Putrov, and E.~Witten, ``{Duality and Mock Modularity},'' {\em
  SciPost Phys.} {\bf 9} (2020), no.~5, 072,
  \href{http://www.arXiv.org/abs/2004.14387}{{\tt 2004.14387}}.

\bibitem{Minahan:1998vr}
J.~A. Minahan, D.~Nemeschansky, C.~Vafa, and N.~P. Warner, ``{E strings and N=4
  topological Yang-Mills theories},'' {\em Nucl. Phys.} {\bf B527} (1998)
  581--623,
\href{http://www.arXiv.org/abs/hep-th/9802168}{{\tt hep-th/9802168}}.

\bibitem{Alim:2010cf}
M.~Alim, B.~Haghighat, M.~Hecht, A.~Klemm, M.~Rauch, and T.~Wotschke,
  ``{Wall-crossing holomorphic anomaly and mock modularity of multiple
  M5-branes},'' {\em Commun. Math. Phys.} {\bf 339} (2015), no.~3, 773--814,
\href{http://www.arXiv.org/abs/1012.1608}{{\tt 1012.1608}}.

\bibitem{gholampour2017localized}
A.~Gholampour, A.~Sheshmani, and S.-T. Yau, ``{Localized Donaldson-Thomas
  theory of surfaces},'' {\em American Journal of Mathematics} {\bf 142}
  (2020), no.~2, 405--442, \href{http://www.arXiv.org/abs/1701.08902}{{\tt
  1701.08902}}.

\bibitem{Nekrasov:2010ka}
N.~Nekrasov and E.~Witten, ``{The Omega Deformation, Branes, Integrability, and
  Liouville Theory},''
\href{http://www.arXiv.org/abs/1002.0888}{{\tt 1002.0888}}.

\bibitem{Pioline:2009ia}
B.~Pioline and S.~Vandoren, ``{Large D-instanton effects in string theory},''
  {\em JHEP} {\bf 07} (2009) 008,
\href{http://www.arXiv.org/abs/0904.2303}{{\tt 0904.2303}}.

\bibitem{Alexandrov:2023wdj}
S.~Alexandrov, M.~Mari\~no, and B.~Pioline, ``{Resurgence of Refined
  Topological Strings and Dual Partition Functions},'' {\em SIGMA} {\bf 20}
  (2024) 073, \href{http://www.arXiv.org/abs/2311.17638}{{\tt 2311.17638}}.

\bibitem{Alexandrov:2023ltz}
S.~Alexandrov, S.~Feyzbakhsh, A.~Klemm, and B.~Pioline, ``{Quantum geometry and
  mock modularity},'' \href{http://www.arXiv.org/abs/2312.12629}{{\tt
  2312.12629}}.

\bibitem{Alexandrov:2023zjb}
S.~Alexandrov, S.~Feyzbakhsh, A.~Klemm, B.~Pioline, and T.~Schimannek,
  ``{Quantum geometry, stability and modularity},'' {\em Commun. Num. Theor.
  Phys.} {\bf 18} (2024), no.~1, 49--151,
  \href{http://www.arXiv.org/abs/2301.08066}{{\tt 2301.08066}}.

\bibitem{Pandharipande:2011jz}
R.~Pandharipande and R.~P. Thomas, ``{13/2 ways of counting curves},'' {\em
  Lond. Math. Soc. Lect. Note Ser.} {\bf 411} (2014) 282--333,
  \href{http://www.arXiv.org/abs/1111.1552}{{\tt 1111.1552}}.

\bibitem{Huang:2006hq}
M.-x. Huang, A.~Klemm, and S.~Quackenbush, {\em {Topological string theory on
  compact Calabi-Yau: Modularity and boundary conditions}}, pp.~1--58.
\newblock Springer, 2008.
\newblock \href{http://www.arXiv.org/abs/hep-th/0612125}{{\tt hep-th/0612125}}.

\bibitem{Grimm:2007tm}
T.~W. Grimm, A.~Klemm, M.~Marino, and M.~Weiss, ``{Direct Integration of the
  Topological String},'' {\em JHEP} {\bf 08} (2007) 058,
  \href{http://www.arXiv.org/abs/hep-th/0702187}{{\tt hep-th/0702187}}.

\bibitem{VanHerck:2009ww}
W.~Van~Herck and T.~Wyder, ``{Black Hole Meiosis},'' {\em JHEP} {\bf 04} (2010)
  047, \href{http://www.arXiv.org/abs/0909.0508}{{\tt 0909.0508}}.

\end{thebibliography}\endgroup
\bibliographystyle{utphys}

\end{document}